\newtheorem{theorem}{Theorem}
\newtheorem{example}[theorem]{Example}
\newtheorem{lemma}[theorem]{Lemma}
\newtheorem{corollary}[theorem]{Corollary}
\newtheorem{proposition}[theorem]{Proposition}
\newcommand{\nb}[1]{\hbox to 0pt {\textcolor{red}{\bf !}}\marginpar{\parbox{16mm}{\raggedright\scriptsize \textcolor{red}{#1}}}}
\newcommand{\nz}[1]{\hbox to 0pt {\textcolor{blue}{\bf !}}\marginpar{\parbox{16mm}{\raggedright\scriptsize\raggedright\textcolor{blue}{#1}}}}
\newcommand{\meg}[1]{\hbox to 0pt {\textcolor{green}{\bf !}}\marginpar{\parbox{16mm}{\raggedright\scriptsize\raggedright\textcolor{green}{#1}}}}
\newcommand{\q}{{\boldsymbol{q}}}
\newcommand{\T}{{\cal T}}
\newcommand{\A}{{\cal A}}
\newcommand{\D}{{\cal D}}
\newcommand{\M}{{\cal M}}
\newcommand{\C}{{\cal C}}
\newcommand{\I}{{\cal I}}
\newcommand{\R}{\mathfrak{D}}
\newcommand{\can}{\C_{\T, \A}}
\newcommand{\ind}{\mathsf{ind}}
\newcommand{\dom}{\mathsf{dom}}
\newcommand{\degree}{\mathsf{deg}}
\newcommand{\restr}{\upharpoonright}
\newcommand{\dD}{\partial D}
\newcommand{\dDp}{\partial D'}
\newcommand{\dDpp}{\partial D''}
\newcommand{\rpred}{G}
\def\t{\mathfrak{t}}
\newcommand{\ti}{\t_\mathsf{i}}
\newcommand{\tr}{\t_\mathsf{r}}
\newcommand{\rew}{G}
\newcommand{\od}{\boldsymbol{d}}
\newcommand{\twi}{\boldsymbol{t}}
\newcommand{\nlf}{\boldsymbol{\ell}}
\newcommand{\omq}{{\ensuremath{\boldsymbol{Q}}}}
\newcommand{\cir}{\boldsymbol{C}}
\newcommand{\wid}{\mathsf{w}}
\newcommand{\dep}{\mathsf{d}}
\newcommand{\sdep}{\mathsf{sd}}
\newcommand{\LOGCFL}{\ensuremath{\mathsf{LOGCFL}}}
\newcommand{\OWL}{\textsl{OWL\,2}}
\newcommand{\OWLQL}{\textsl{OWL\,2\,QL}}
\newcommand{\FO}{\text{FO}}
\newcommand{\PE}{\text{PE}}
\newcommand{\NDL}{\text{NDL}}
\newcommand{\ACz}{{\ensuremath{\mathsf{AC}^0}}}
\newcommand{\NCone}{{\ensuremath{\mathsf{NC}^1}}}
\newcommand{\NP}{\ensuremath{\mathsf{NP}}}
\newcommand{\PTime}{\mathsf{P}}
\newcommand{\Ppoly}{\mathsf{P}/\mathsf{poly}}
\newcommand{\NLpoly}{\ensuremath{\mathsf{NL}}/\ensuremath{\mathsf{poly}}}
\newcommand{\NL}{\ensuremath{\mathsf{NL}}}
\newcommand{\vars}{\mathsf{var}}
\newcommand{\avec}[1]{\boldsymbol{#1}}
\newcommand{\gfmn}{\mathcal{G}}
\newcommand{\exs}{{\scriptscriptstyle\exists}}
\tikzset{bpoint/.style={circle,inner sep=0pt,minimum size=1.5mm,fill=black,draw=black},
	wpoint/.style={circle,inner sep=0pt,minimum size=1.5mm,fill=white,draw=black,thick}}
\newcommand{\nd}{t}
\newcommand{\tpd}{\avec{w}}
\newcommand{\tpr}{\avec{s}}
\newcommand{\rni}{\ensuremath{\boldsymbol{R}_\T}}
\newcommand{\twords}{\boldsymbol{W}_{\!\T}}
\newcommand{\sqset}{\mathfrak{Q}}
\newcommand{\pr}{\textit{pDepth-}\textsc{TreeOMQ}}
\newcommand{\hs}{$p$-{\sc HittingSet}}
\renewcommand{\P}{P_{+}}
\newcommand{\N}{P_{-}}
\newcommand{\V}{P_{0}}
\newcommand{\Fa}{\mathsf{f}}
\newcommand{\Tr}{\mathsf{t}}
\newcommand{\blpr}{\textit{pLeaves-}\textsc{TreeOMQ}}
\newcommand{\partclique}{{\sc PartitionedClique}}
\def\w{\mathsf{w}}
\newcommand{\bdObtwCQ}{\textmd{\textsf{OMQ}}(\od,\twi,\infty)}
\newcommand{\bdOblCQ}{\textmd{\textsf{OMQ}}(\od,1,\nlf)}
\newcommand{\blCQ}{\textmd{\textsf{OMQ}}(\infty,1,\nlf)}
\definecolor{RoyalBlue}{rgb}{0.0, 0.14, 0.4}
\definecolor{Green}     {rgb}{0.13,0.55,0.13}
\definecolor{OrangeRed}{rgb}{1.0, 0.27, 0.0}
\definecolor{Orange}{rgb}{1.0, 0.5, 0.0}
\definecolor{Violet}{rgb}{0.56, 0.27, 0.52}
\definecolor{Cerulean}{rgb}{0.0, 0.48, 0.65}
\definecolor{Yellow}{rgb}{1.0, 1.0, 0.0}
\definecolor{YellowOrange}{rgb}{1.0, 0.83, 0.0}
\begin{document}

\title{The Complexity of Ontology-Based Data Access with OWL\,2\,QL and Bounded Treewidth Queries}
\author{
Meghyn Bienvenu$^1$ \and Stanislav Kikot$^2$ \and Roman Kontchakov$^2$ \and Vladimir V. Podolskii$^3$ \and
Vladislav Ryzhikov$^4$ \and Michael Zakharyaschev$^2$\\[12pt]
 \multicolumn{1}{p{.7\textwidth}}{\centering\normalsize{%
 ${}^1$ CNRS \& University of Montpellier, France\\
%
${}^2$ Birkbeck, University of London, UK\\
%
${}^3$ Steklov Mathematical Institute \& National Research University\\Higher School of Economics, Moscow, Russia\\
${}^4$ Free University of Bozen-Bolzano, Italy
}}}
\date{}

\maketitle

\begin{abstract}
Our concern is the overhead of answering \OWLQL{} on\-to\-lo\-gy-mediated queries (OMQs) in ontology-based data access compared to evaluating their underlying tree-shaped and bounded treewidth conjunctive queries (CQs). We show that OMQs with bounded-depth onto\-logies have nonrecur\-sive datalog (NDL) rewritings that can be constructed and evaluated in \LOGCFL{} for combined complexity, even in \NL{} if their CQs are tree-shaped with a bounded number of leaves, and so incur no overhead in complexity-theoretic terms. For OMQs with arbitrary ontologies and bounded-leaf CQs,  NDL-rewritings are constructed and evaluated in \LOGCFL.
We show experimentally feasibility and scalability of our rewritings compared to previously proposed NDL-rewritings.
On the negative side, we prove that answering OMQs with tree-shaped CQs is not fixed-parameter trac\-table if the
ontology depth or  the number of leaves in the CQs is regarded as the parameter, and that answering OMQs with a fixed ontology (of infinite depth)  is \NP-complete for tree-shaped and \LOGCFL{} for bounded-leaf CQs. Moreover, we construct an ontology $\T$ (of infinite depth) such that answering OMQs $(\T,\q)$ with tree-shaped CQs $\q$ is W[1]-hard if the number of leaves in $\q$ is regarded as the parameter.
\\
\\
\textbf{Keywords}: Ontology-based data access; ontology-mediated query; query rewriting;  combined \& parameterised complexity.
\end{abstract}

\section{Introduction}

\begin{figure*}[t]%
\tikzset{cmplx/.style={draw,thick,rounded corners,inner sep=0mm}}%
\hspace*{-0.4em}\begin{tikzpicture}[xscale=0.5,yscale=0.6]
\draw[thick] (0.4,0.6) rectangle +(8.2,-6.2);
\node[rotate=90] at (-0.7,-3) {\scriptsize ontology depth};
\begin{scope}[ultra thin]
\draw (0.4,0) -- +(8.2,0); \node at (0,0) {\scriptsize 0};
\draw (0.4,-1) -- +(8.2,0); \node at (0,-1) {\scriptsize 1};
\draw (0.4,-2) -- +(8.2,0); \node at (0,-2) {\scriptsize 2};
\draw (0.4,-3) -- +(8.2,0); \node at (0,-3) {\scriptsize \dots};
\draw (0.4,-4) -- +(8.2,0); \node at (0,-4) {\scriptsize $\od$};
\draw (0.4,-5) -- +(8.2,0); \node at (0,-5) {\scriptsize $\infty$};
\draw (1,0.6) -- +(0,-6.2); \node at (1,1) {\scriptsize 2};
\draw (2,0.6) -- +(0,-6.2); \node at (2,1) {\scriptsize \dots};
\draw (3,0.6) -- +(0,-6.2); \node at (3,1) {\scriptsize $\nlf$};
\draw (4,0.6) -- +(0,-6.2); \node at (4,1) {\scriptsize $\infty$};
\draw (5,0.6) -- +(0,-6.2); \node at (5,1) {\scriptsize 2};
\draw (6,0.6) -- +(0,-6.2); \node at (6,1) {\scriptsize \dots};
\draw (7,0.6) -- +(0,-6.2); \node at (7,1) {\scriptsize $\twi$};
\draw (8,0.6) -- +(0,-6.2); \node at (8,1) {\scriptsize$\infty$};
\end{scope}
\draw[thin] (0.5,1.4) -- ++(0,0.3) -- ++(4,0) -- ++(0,-0.3);
\node at (2.5,1.5) {\scriptsize number of leaves};
\node at (6.5,1.8) {\scriptsize treewidth};
\node at (2.5,2) {\scriptsize trees};
\node [fill=gray!5,cmplx,fill opacity=0.9,fit={(3.4,-4.4) (0.6,0.4)}]  {\NL};
\node [fill=gray!40,cmplx,fill opacity=0.9,fit={(3.6,0.4) (7.4,-4.4)}]  {\LOGCFL};
\node [fill=black,cmplx,fit={(8.4,-5.4) (7.6,0.4)}]  {};
\node [fill=black,cmplx,fit={(3.6,-4.6) (8.4,-5.4)}]  {\hspace*{5em}\raisebox{-1ex}{\textcolor{white}{\NP}}};
\node [fill=gray!40,cmplx,fill opacity=0.9,fit={(0.6,-4.6) (3.4,-5.4)}]  {\raisebox{-1.5ex}{\LOGCFL}};
\node at (-0.5,1.7) {\footnotesize (a)};
\end{tikzpicture}%
\hfill
\begin{tikzpicture}[xscale=0.9,yscale=0.6]
\draw[thick] (0.4,0.6) rectangle +(8.2,-6.2);
\begin{scope}[ultra thin]
\draw (0.4,0) -- +(8.2,0); \node at (0.2,0) {\scriptsize 0};
\draw (0.4,-1) -- +(8.2,0); \node at (0.2,-1) {\scriptsize 1};
\draw (0.4,-2) -- +(8.2,0); \node at (0.2,-2) {\scriptsize 2};
\draw (0.4,-3) -- +(8.2,0); \node at (0.2,-3) {\scriptsize \dots};
\draw (0.4,-4) -- +(8.2,0); \node at (0.2,-4) {\scriptsize $\od$};
\draw (0.4,-5) -- +(8.2,0); \node at (0.2,-5) {\scriptsize $\infty$};
\draw (1,0.6) -- +(0,-6.2); \node at (1,1) {\scriptsize 2};
\draw (2,0.6) -- +(0,-6.2); \node at (2,1) {\scriptsize \dots};
\draw (3,0.6) -- +(0,-6.2); \node at (3,1) {\scriptsize $\nlf$};
\draw (4,0.6) -- +(0,-6.2); \node at (4,1) {\scriptsize $\infty$};
\draw (5,0.6) -- +(0,-6.2); \node at (5,1) {\scriptsize 2};
\draw (6,0.6) -- +(0,-6.2); \node at (6,1) {\scriptsize \dots};
\draw (7,0.6) -- +(0,-6.2); \node at (7,1) {\scriptsize $\twi$};
\draw (8,0.6) -- +(0,-6.2); \node at (8,1) {\scriptsize $\infty$};
\end{scope}
\draw[thin] (0.5,1.4) -- ++(0,0.3) -- ++(4,0) -- ++(0,-0.3);
\node at (2.5,1.5) {\scriptsize number of leaves};
\node at (6.5,1.8) {\scriptsize treewidth};
\node at (2.5,2) {\scriptsize trees {\tiny (treewidth 1)}};
\node [fill=gray!25,cmplx,fill opacity=0.9,fit={(0.545,-1.6) (3.455,-5.4)}]
{\raisebox{-6ex}{\begin{tabular}{c}poly NDL\\[0pt] no poly PE\\[0pt]
\footnotesize poly FO\\[-3pt]\scriptsize iff\\[-5pt]\scriptsize \NLpoly  $\,\subseteq\,$ \NCone\end{tabular}}};
\node [fill=gray!50,cmplx,fill opacity=0.9,fit={(3.545,-1.6) (7.455,-4.4)}]
{\raisebox{-6ex}{\begin{tabular}{c}poly NDL\\[0pt] no poly PE\\[0pt]
\footnotesize poly FO\ \  \scriptsize iff\hspace*{4em}\\[-4pt]\scriptsize\hspace*{3em}\textsf{LOGCFL/poly} $\!\subseteq\!$ \NCone\end{tabular}}};
\node [fill=black,cmplx,fit={(3.545,-4.6) (8.455,-5.4)}]
{\hspace*{-1.8em}\raisebox{-9pt}{\textcolor{white}{\bfseries\begin{tabular}{c}\small no poly NDL\,{\scriptsize\&}\,PE
\end{tabular}}}};
\node [fill=black,cmplx,fit={(7.545,-1.6) (8.455,-5.4)}]
{\raisebox{-7ex}{\textcolor{white}{\tabcolsep=0pt\small\bfseries\begin{tabular}{c}
poly\\FO\\\scriptsize iff\\[-1pt]{\textmd{\textsf{NP{\tiny\!/\!poly}}}}\\[-2ex]\rotatebox{-90}{$\subseteq$}\\[4pt]\NCone{}
\end{tabular}}}};
\node [fill=gray!5,cmplx,fill opacity=0.9,fit={(0.545,0.4) (8.455,-0.4)}]  {\raisebox{-1.5ex}{poly $\Pi_2$-PE}};
\node [fill=gray!5,cmplx,fill opacity=0.9,fit={(0.545,-0.6) (4.455,-1.4)}]  {\raisebox{-1.5ex}{poly $\Pi_4$-PE}};
\node [fill=gray!5,cmplx,fill opacity=0.9,fit={(4.545,-0.6) (7.455,-1.4)}]  {\raisebox{-1.5ex}{poly PE}};
\node [fill=gray!25,cmplx,fill opacity=0.9,fit={(7.545,-0.6) (8.455,-1.4)}] { };
\node[inner sep=0pt] (test) at (9.9,-1.5) {\begin{tabular}{c}poly NDL\\[0pt]no poly PE\\[0pt]\footnotesize poly FO\\[-3pt] \scriptsize iff\\[-5pt]\scriptsize\NLpoly  $\,\subseteq\!$ \NCone\end{tabular}};
\draw (8,-1) -- (test);
\node at (0,1.7) {\footnotesize (b)};
\end{tikzpicture}
\caption{OMQ answering in \OWLQL{} (a) combined complexity and (b) the size of rewritings.}
\label{pic:results}
\end{figure*}

The main aim of ontology-based data access (\!OBDA\!) \cite{PLCD*08,Lenzerini13} is to facilitate access to complex data for non-expert end-users.
The ontology, given by a logical theory $\T$, provides a unified conceptual view of one or more data sources, so
the users do not have to know the actual structure of the data and can formulate their queries in the vocabulary of the ontology, which is connected to the data schema by a mapping $\M$. The instance $\M(\D)$ obtained by applying $\M$ to a given dataset $\D$ is interpreted under the open-world assumption, and additional facts can be \emph{inferred} using the domain knowledge provided by the ontology. A certain answer to a query $\q(\avec{x})$ over $\D$ is any tuple of constants $\avec{a}$ such that $\T,\M(\D) \models \q(\avec{a})$.
OBDA is closely related to querying incomplete databases under (ontological) constraints, data integration \cite{DBLP:books/daglib/0029346}, and data exchange \cite{DBLP:books/cu/ArenasBLM2014}.

In the classical approach to OBDA~\cite{CDLLR07,PLCD*08}, the computation of certain answers is reduced to standard data\-base query evaluation: given an ontology-mediated query (OMQ) $\omq =(\T,\q(\avec{x}))$, one constructs a first-order (FO) query~$\q'(\avec{x})$, called a rewriting of $\omq$, such that, for all datasets $\mathcal{D}$ and mappings $\mathcal{M}$,
\begin{equation}\label{reduction}
\T,\mathcal{M}(\mathcal{D}) \models \q(\avec{a}) \quad \text{ iff } \quad \I_{\mathcal{M}(\mathcal{D})} \models \q'(\avec{a}),
\end{equation}
where $\I_{\mathcal{M}(\mathcal{D})}$ is the FO-structure
comprised of the atoms in $\mathcal{M}(\mathcal{D})$.
When the form of $\M$ is appropriately restricted (e.g., $\M$ is a GAV mapping), one can further unfold $\q'(\avec{x})$ using $\M$ to obtain an FO-query that can be evaluated directly over the original dataset $\D$ (so there is no need to materialise $\mathcal{M}(\mathcal{D})$).

For reduction \eqref{reduction} to hold for all OMQs, it is necessary to restrict the expressivity of~$\T$ and~$\q$. The {\sl DL-Lite} family of description logics \cite{CDLLR07} was specifically designed to ensure \eqref{reduction} for OMQs with conjunctive queries (CQs) $\q$.
Other ontology languages with this property
include linear and sticky tuple-generating dependencies (tgds)~\cite{DBLP:journals/ws/CaliGL12,DBLP:journals/ai/CaliGP12}, and the \OWLQL{} profile \cite{profiles} of the W3C-standardised Web Ontology Language \OWL{}, the focus of this work.
Like many other ontology languages, \OWLQL{} admits only unary and binary predicates,
but arbitrary relational instances can be queried due to the mapping.
Various types of FO-rewritings $\q'(\avec{x})$ have been developed and implemented for the preceding
languages~\cite{PLCD*08,DBLP:conf/dlog/Perez-UrbinaMH09,KR10our,DBLP:conf/kr/RosatiA10,DBLP:conf/cade/ChortarasTS11,DBLP:conf/aaai/EiterOSTX12,DBLP:conf/esws/Rosati12,DBLP:conf/kr/KikotKZ12,DBLP:conf/icde/GottlobOP11,kyrie2,DBLP:journals/semweb/KonigLMT15},
and a few mature OBDA systems have emerged, including pioneering MASTRO~\cite{DBLP:journals/semweb/CalvaneseGLLPRRRS11}, commercial Stardog~\cite{Perez-Urbina12} and Ultrawrap~\cite{DBLP:conf/semweb/SequedaAM14}, and the Optique platform~\cite{optique} with the
query answering engine Ontop~\cite{DBLP:conf/semweb/Rodriguez-MuroKZ13,DBLP:conf/semweb/KontchakovRRXZ14}.

Our concern here is the  overhead of OMQ an\-swering---i.e., checking whether the left-hand side of \eqref{reduction} holds---compared to evaluating the underlying CQs. At first sight,
there is no apparent difference between the two problems when viewed through the lens of computational complexity: OMQ answering is in $\ACz$ for data complexity by~\eqref{reduction} and \NP-complete for combined complexity~\cite{CDLLR07}, which in both cases corresponds to the complexity of evaluating CQs in the relational setting.
Further analysis revealed, however, that answering OMQs is already \NP-hard for combined complexity when the underlying CQs are tree-shaped (acyclic)~\cite{DBLP:conf/dlog/KikotKZ11}, which sharply contrasts with the well-known \LOGCFL-comple\-teness of evaluating bounded treewidth CQs~\cite{DBLP:conf/vldb/Yannakakis81,DBLP:journals/tcs/ChekuriR00,DBLP:conf/icalp/GottlobLS99}.
This surprising difference motivated a systematic investigation of the combined complexity of OMQ answering along two dimensions: (\emph{i}) the query topology (treewidth $\twi$ of CQs, and the number $\nlf$ of leaves in tree-shaped CQs), and (\emph{ii}) the existential depth $\od$ of ontologies (i.e., the length of the longest chain of labelled nulls in the chase on any data). The resulting landscape, displayed in Fig.~\ref{pic:results}~(a) (under the assumption that datasets are given as RDF graphs and $\M$ is the identity)~\cite{CDLLR07,DBLP:conf/dlog/KikotKZ11,LICS14,DBLP:conf/lics/BienvenuKP15}, indicates three tractable cases:
\begin{description}
\item[$\bdObtwCQ$:] ontologies of depth $\le \od$ coupled with CQs of treewidth $\le \twi$ (for fixed $\od,\twi$);
\item[$\bdOblCQ$:] ontologies of  depth $\le \od$  with tree-shaped CQs with $\le\nlf$ leaves (for fixed $\od, \nlf$);
\item[$\blCQ$:] ontologies of arbitrary depth and tree-shaped CQs with $\le\nlf$ leaves (for fixed $\nlf$).
\end{description}
Observe in particular that when the ontology depth is bounded by a fixed constant, the complexity of OMQ answering is precisely the same as for evaluating the underlying CQs.
If we place no restriction on the ontology, then tractability of tree-shaped queries can be recovered by bounding the number of leaves, but we have
\LOGCFL\  rather than the expected \NL.

While the results in Fig.~\ref{pic:results}(a) appear to answer the question of the additional cost incurred by adding an \OWLQL{} ontology, they only tell part of the story. Indeed, in the context of classical rewriting-based OBDA \cite{PLCD*08}, it is not the abstract complexity of OMQ answering that matters, but the cost  of computing and evaluating OMQ rewritings.
Fig.~\ref{pic:results}(b) summarises what is known  about the size of positive existential (PE), nonrecursive datalog (NDL) and FO-rewritings~\cite{DBLP:conf/icalp/KikotKPZ12,DBLP:journals/ai/GottlobKKPSZ14,LICS14,DBLP:conf/lics/BienvenuKP15}.
Thus, we see, for example, that PE-rewritings for OMQs from $\bdObtwCQ$
can be of super-polynomial size, and so are not computable and evaluable in polynomial time, even though Fig.~\ref{pic:results}(a) shows that such OMQs can be answered in \LOGCFL. The same concerns  $\bdOblCQ$ and $\blCQ$, which
can be answered in \NL{} and \LOGCFL, respectively, but do not enjoy polynomial-size PE-rewritings.
Moreover, our experiments show that standard rewriting engines exhibit exponential behaviour on OMQs drawn from $\textsf{OMQ}(1,1,2)$ lying in the intersection of the three tractable classes.

Our first aim is to show that the positive complexity results in Fig.~\ref{pic:results}(a) can in fact be achieved using query rewriting.
To this end, we develop \NDL-rewritings for the three tractable cases
that can be computed and evaluated by algorithms of optimal combined complexity.
In theory, such algorithms are known to be space efficient and highly parallelisable.
We demonstrate practical efficiency of our optimal NDL-rewritings by comparing them with the NDL-rewritings produced by Clipper~\cite{DBLP:conf/aaai/EiterOSTX12}, Presto~\cite{DBLP:conf/kr/RosatiA10} and Rapid~\cite{DBLP:conf/cade/ChortarasTS11}, using a sequence of OMQs from the class \textsf{OMQ}(1,1,2).

Our second aim is to understand the contribution of the ontology depth and the number of leaves in tree-shaped CQs to the complexity of OMQ answering. (As follows from Fig.~\ref{pic:results} (a), if these parameters are unbounded, this problem is harder than evaluating the underlying CQs unless $\LOGCFL = \NP$.) Unfortunately, it turns out that answering OMQs with ontologies of finite depth and tree-shaped CQs is not fixed-parameter tractable if either the ontology depth or the number of leaves in CQs is regarded as a parameter. More precisely, we prove that the problem is $W[2]$-hard in the former case and $W[1]$-hard in the latter. We also construct an ontology $\T$ (of infinite depth) such that answering OMQs $(\T,\q)$ with tree-shaped CQs $\q$ is W[1]-hard if the number of leaves in $\q$ is regarded as the parameter. These results suggest that the ontology depth and the number of leaves are inherently in the exponent of the size of the input
in any OMQ answering algorithm.

Finally, we revisit the \NP{}- and \LOGCFL-hardness results for OMQs with tree-shaped CQs.
The known \NP{} and \LOGCFL{} lower bounds
have been established using sequences $(\T_n,\q_n)$ of OMQs, where the depth of $\T_n$ grows with $n$~\cite{DBLP:conf/dlog/KikotKZ11,DBLP:conf/lics/BienvenuKP15}. One might thus hope to make answering OMQs with tree-shaped CQs easier by restricting the ontology signature, size, or even by fixing the whole ontology, which is very relevant
for applications as a typical OBDA scenario has users posing different queries over the same ontology. Our third main result is that this is not the case: we present ontologies $\T_\dag$ and $\T_\ddag$ of infinite depth such that
answering OMQs $(\T_\dag, \q)$ with tree-shaped $\q$ and $(\T_\ddag, \q)$ with linear $\q$ is \NP- and \LOGCFL-hard for query complexity, respectively.
We also show that no algorithm can construct FO-rewritings of the OMQs $(\T_\dag, \q)$ in polynomial time unless $\PTime = \NP$, even though polynomial-size FO-rewritings of these OMQs do exist.

The paper is organised as follows.
We begin in Section 2 by introducing the \OWLQL{} ontology language
and key notions like OMQ answering and query rewriting.
In Section 3, we first identify fragments of NDL which can be evaluated in
\LOGCFL\ or \NL, and then we use these results to develop NDL-rewritings of
optimal combined complexity for the three tractable cases.
Section 4 concerns the parameterised complexity of OMQ answering with tree-shaped CQs.
For ontologies of finite depth, we show $W[2]$-hardness (resp.\ $W[1]$-hardness) when
the ontology depth (resp.\ number of leaves) is taken as the parameter.
For the infinite depth case, we show in Section 5 that \NP{}-hardness applies even for a fixed ontology.
The final section of the paper presents preliminary experiments comparing our new rewritings to those
produced by existing rewriting engines and discusses possible directions for future work.


\section{Preliminaries}\label{sec:prelims}

An \OWLQL{} \emph{ontology} (\emph{TBox} in description logic), $\T$, is a finite set of sentences (\emph{axioms}) of the forms
\begin{align*}
& \forall x\, (\tau(x) \to \tau'(x)), &
& \forall x\, (\tau(x) \land \tau'(x) \to \bot ),\\
& \forall xy\, (\varrho(x,y) \to \varrho'(x,y)), &
& \forall xy\, (\varrho(x,y) \land \varrho'(x,y) \to \bot),\\
& \forall x\, \varrho(x,x), &
& \forall x\, (\varrho(x,x) \to \bot),
\end{align*}
where $\tau(x)$ and $\varrho(x,y)$
are defined, using unary predicates $A$ and binary predicates $P$, by the grammars
\begin{align*}
 \tau(x) \ &::= \ \top \ \mid \ A(x) \ \mid \ \exists y\,\varrho(x,y) , \\
\varrho(x,y) \ &::= \ \top \ \mid \ P(x,y) \ \mid \ P(y,x).
%
\end{align*}
When writing ontology axioms, we omit the universal quantifiers and 
denote by $\rni$ the set of binary predicates $P$ occurring in $\T$ and their inverses $P^-$, assuming that $P^{--} = P$.
For every $\varrho\in\rni$, we take a fresh unary predicate $A_\varrho$ and add $A_\varrho(x) \leftrightarrow \exists y\, \varrho(x,y)$ to $\T$
 (where, as usual, $\varphi \leftrightarrow\psi$ is  an abbreviation for $\varphi \rightarrow \psi$ and $\psi \rightarrow \varphi$).  The resulting ontology is said to be in \emph{normal form}, and we assume, without loss of generality, that all our ontologies are in normal form.

A \emph{data instance}, $\A$, is a finite set of unary or binary ground atoms (called an \emph{ABox} in description logic).
We denote by $\ind(\A)$ the set of individual constants in $\A$ and
write $\varrho(a,b)\in \A$ if $P(a,b)\in \A$ and $\varrho= P$, or $P(b,a)\in\A$ and $\varrho = P^-$. We say that $\A$ is \emph{complete} for an ontology $\T$ if $\T,\A \models S(\avec{a})$ implies $S(\avec{a}) \in \A$, for any ground atom $S(\avec{a})$ with $\avec{a} \subseteq \ind(\A)$.\!\footnote{If the meaning is clear from the context, we use set-theoretic notation for lists.}

A \emph{conjunctive query} (CQ) $\q(\avec{x})$ is a formula of the form $\exists \avec{y}\, \varphi(\avec{x}, \avec{y})$, where $\varphi$ is a conjunction of atoms $S(\avec{z})$ all of whose variables are among $\vars(\q) = \avec{x} \cup \avec{y}$.
We assume, without loss of generality, that CQs contain no constants. We often regard a CQ as the set of its atoms.
With every CQ $\q$, we associate its \emph{Gaifman graph} $\gfmn$ whose vertices are the variables of $\q$ and whose edges are the  pairs $\{u,v\}$ such that $P(u,v)\in\q$, for some $P$. We call $\q$  \emph{connected} if $\gfmn$ is connected, \emph{tree-shaped} if $\gfmn$ is a tree, and \emph{linear} if $\gfmn$ is a tree with two leaves.

An \emph{ontology-mediated query} (OMQ) is a pair $\omq(\avec{x}) = (\T,\q(\avec{x}))$, where $\T$ is an ontology and $\q(\avec{x})$ a CQ. A tuple $\avec{a} \subseteq \ind (\mathcal{A})$ is a \emph{certain answer} to $\omq(\avec{x})$ over a data instance $\mathcal{A}$ if  $\mathcal{I} \models \q(\avec{a})$ for all models $\mathcal{I}$ of $\T$ and $\A$; in this case we write \mbox{$\T,\A \models \q(\avec{a})$}. If $\avec{x} = \emptyset$, then a certain answer to $\omq$ over $\A$ is `yes' if $\T,\A \models \q$ and `no' otherwise. The \emph{OMQ answering problem} (for a class of OMQs) is to decide whether $\T,\A \models \q(\avec{a})$ holds, given an OMQ $\omq(\avec{x})$ (in the class), $\A$ and $\avec{a} \subseteq \ind(\A)$.
If $\T$, $\q(\avec{x})$, and $\A$ are regarded as input, we speak about \emph{combined complexity} of OMQ answering; if $\A$ and $\T$ are regarded as fixed, we speak about \emph{query complexity}.

Every consistent \emph{knowledge base} (KB) $(\T,\A)$ has a \emph{canonical model} (or \emph{chase} in database  theory)~\cite{Abitebouletal95} $\mathcal{C}_{\T,\A}$ with the property that \mbox{$\T,\A \models \q(\avec{a})$} iff $\mathcal{C}_{\T,\A} \models \q(\avec{a})$, for all CQs $\q(\avec{x})$ and $\avec{a} \subseteq \ind(\A)$. In our constructions, we use the following definition of $\C_{\T,\A}$, where without loss of generality we assume that $\T$ contains no binary predicates $P$ with $\T \models \forall xy \, P(x,y)$.
The domain, $\Delta^{\mathcal{C}_{\T,\A}}$, consists of $\ind(\A)$ and the \emph{witnesses} (or \emph{labelled nulls}) of the form $w = a \varrho_1 \dots \varrho_n$, for $n \geq 1$, such that
\begin{itemize}
\item[--] $a \in \ind(\A)$ and $\T, \A \models \exists y\, \varrho_1 (a,y)$;
\item[--] $\T\not\models \varrho_i(x,x)$, for $1 \le i \le n$;
\item[--] $\T \models \exists x\, \varrho_i(x,y) \to \exists z\, \varrho_{i+1}(y,z)$ but $\T \not \models \varrho_i(x,y) \to \varrho_{i+1}(y,x)$, for $1 \leq i < n$.
\end{itemize}
We denote by $\twords$ the set of words $\varrho_1 \dots \varrho_n \in \rni^*$ satisfying the last two conditions. Every $a \in \ind(\A)$ is interpreted in $\mathcal{C}_{\T,\A}$ by itself, and unary and binary predicates are interpreted as follows:
\begin{itemize}
\item[--] $\mathcal{C}_{\T,\A} \models A(u)$ iff either $u \in \ind(\A)$ and \mbox{$\T,\A \models A(u)$}, or $u = w\varrho$ with $\T \models \exists y\,\varrho(y,x) \to A(x)$;
\item[--] $\mathcal{C}_{\T,\A} \models P(u,v)$ iff one of the three conditions holds: (\emph{i}) $u,v \in \ind(\A)$ and $\T,\A \models P(u,v)$; (\emph{ii})~$u=v$ and $\T\models P(x,x)$; (\emph{iii}) $\T \models \varrho(x,y) \to P(x,y)$ and either $v = u\varrho$ or $u = v\varrho^-$.
\end{itemize}
We say that $\T$ is \emph{of depth} 0 if it does not contain any axioms with $\exists$ on the right-hand side, excepting the normalisation axioms\footnote{This somewhat awkward definition of depth 0 ontologies is due to the use of normalisation axioms, which may introduce unnecessary words on length 1 in $\twords$.}\!.
Otherwise, we say that $\T$ is \emph{of depth} $0 < \od< \infty$ if $\od$ is the maximum length of the words in $\twords$,
and it is \emph{of depth} $\infty$ if $\twords$ is infinite. 
(Note that the depth of $\T$ is computable in \NL;
cf.~\cite{DBLP:conf/icalp/GogaczM14,DBLP:conf/pods/CalauttiGP15} for related results on chase termination for tgds.)

An FO-formula $\q'(\avec{x})$, possibly with equality, is an \emph{FO-rewriting of an OMQ $\omq(\avec{x}) = (\T,\q(\avec{x}))$} if, for \emph{any} data instance $\A$ and any tuple $\avec{a} \subseteq \ind(\A)$,
\begin{equation}\label{def:rewriting}
\T, \A \models \q(\avec{a}) \qquad \text{iff} \qquad \I_\A \models \q'(\avec{a}),
\end{equation}
where $\I_\A$ is the FO-structure over the domain $\ind(\A)$ such that $\I_\A \models S(\avec{a})$ iff $S(\avec{a}) \in \A$, for any ground atom $S(\avec{a})$. If $\q'(\avec{x})$ is a positive existential formula, we call it a \emph{PE-rewriting of $\omq(\avec{x})$}. A PE-rewriting whose matrix is a $\mathsf{\Pi}_k$-formula (with respect to $\land$ and $\lor$) is called a $\mathsf{\Pi}_k$-\emph{rewriting}.
The size $|\q'|$ of $\q'$ is the number of symbols in it.

We also consider rewritings in the form of nonrecursive datalog  queries.
A \emph{datalog program}, $\Pi$, is a finite set of Horn clauses
$\forall \avec{z}\, (\gamma_0 \leftarrow \gamma_1 \land \dots \land \gamma_m)$,
where each $\gamma_i$ is an atom $Q(\avec{y})$ with $\avec{y} \subseteq  \avec{z}$ or an equality $(z = z')$ with $z,z'\in \avec{z}$. (As usual, we omit $\forall \avec{z}$ from clauses.) The atom $\gamma_0$ is the \emph{head} of the clause, and $\gamma_1,\dots,\gamma_m$ its  \emph{body}. All variables in the head must occur in the body, and $=$ can only occur in the body.
The predicates in the heads of clauses in $\Pi$ are \emph{IDB predicates},  the rest (including~$=$) \emph{EDB predicates}.
A predicate $Q$ \emph{depends} on $P$ in $\Pi$ if $\Pi$ has a clause with $Q$ in the head and $P$ in the body. $\Pi$ is a \emph{nonrecursive datalog} (NDL) \emph{program} if the (directed)  \emph{dependence graph} of the  dependence relation is acyclic.

An \emph{NDL query} is a pair $(\Pi,G(\avec{x}))$, where $\Pi$ is an NDL program and $G(\avec{x})$ a predicate. A tuple $\avec{a} \subseteq \ind(\A)$ is an \emph{answer to $(\Pi,G(\avec{x}))$ over} a data instance $\A$ if $G(\avec{a})$ holds in the first-order structure with domain $\ind(\A)$ obtained by closing $\A$ under the clauses in $\Pi$; in this case we write $\Pi, \A \models G(\avec{a})$. The problem of checking whether  $\avec{a}$ is an answer to $(\Pi,G(\avec{x}))$ over $\A$ is called the \emph{query evaluation problem}.
The \emph{depth} of $(\Pi,G(\avec{x}))$ is the length, $\dep(\Pi,G)$, of the longest directed path in the dependence graph for $\Pi$ starting from $G$.
NDL queries are \emph{equivalent} if they have exactly the same answers over any data instance.

An NDL query $(\Pi,G(\avec{x}))$ is an \emph{\NDL-rewriting of an OMQ $\omq(\avec{x}) = (\T,\q(\avec{x}))$ over complete data instances} in case $\T,\A\models \q(\avec{a})$ iff  $\Pi,\mathcal{A} \models G(\avec{a})$, for any complete $\A$ and any $\avec{a} \subseteq \ind (\mathcal{A})$.
Rewritings \emph{over arbitrary data instances} are defined by dropping the completeness condition.
Given an \NDL-rewriting $(\Pi, G(\avec{x}))$ of $\omq(\avec{x})$ over complete data instances, we denote by $\Pi^*$ the result of replacing each predicate $S$ in $\Pi$ with a fresh IDB predicate $S^*$ of the same arity and adding the clauses
\begin{align*}
A^*(x) & \leftarrow \tau(x), && \text{ if } \ \T \models \tau(x) \to A(x),\\
P^*(x,y) & \leftarrow \varrho(x,y), && \text{ if } \ \T \models \varrho(x,y) \to P(x,y),\\
P^*(x,x) & \leftarrow \top(x), &&\text{ if } \ \T\models P(x,x),
\end{align*}
where $\top(x)$ is an EDB predicate for the active domain \cite{DBLP:journals/ai/KaminskiNG16}.
Clearly, $(\Pi^*,G(\avec{x}))$ is an \NDL-rewriting  of $\omq(\avec{x})$ over arbitrary data instances and $|\Pi^*| \leq |\Pi| + |\T|^2$.

Finally, we remark that, without loss of generality, we can (and will) assume that our ontologies $\T$ do not contain $\bot$. Indeed, we can always incorporate into rewritings subqueries that check whether the left-hand side of an axiom with $\bot$ holds and output all tuples of constants if this is the case~\cite{DBLP:journals/ws/CaliGL12}.


%

\section{Optimal NDL-Rewritings\label{sec:3}}

To construct theoretically optimal NDL-rewritings for OMQs in the three tractable  classes, we first identify two types of \NDL{} queries whose 
evaluation problems are in \NL{} and \LOGCFL{} for combined complexity.


\subsection{NL and LOGCFL fragments of NDL}\label{sec3}

To simplify the analysis of non-Boolean NDL queries, it is convenient to regard certain variables 
as parameters to be instantiated with constants from the candidate answer.
Formally,  an NDL query $(\Pi,G(x_1,\dots,x_n))$ is called
\emph{ordered} if each of its IDB predicates $Q$ comes with fixed variables $x_{i_1},\dots,x_{i_k}$ ($1 \le i_1 < \dots < i_k \le n$), called the \emph{parameters of} $Q$,  such that
(\emph{i}) every occurrence of $Q$ in $\Pi$ is of
the form $Q(y_1,\dots,y_m,x_{i_1},\dots,x_{i_k})$,
(\emph{ii}) the parameters of $G$ are $x_1,\dots,x_n$,
and
(\emph{iii})
parameters of the head of every clause include all the parameters of the predicates in the body.
Observe that Boolean NDL queries are trivially ordered.
The \emph{width} $\wid(\Pi,G)$ of an ordered $(\Pi,G)$ is the maximal number of non-parameter variables in a clause of $\Pi$.
\begin{example}\em
The NDL query $(\Pi,G(x))$, where
\begin{equation*}
\Pi = \{\,G(x) \leftarrow R(x,y)\land Q(x), \ Q(x) \leftarrow R(y,x)\,\},
\end{equation*}
is ordered with parameter $x$ and width 1 (the conditions do not restrict the EDB predicate $R$).
Replacing $Q(x)$ by $Q(y)$ in the first clause yields a query that is not ordered in view of (\emph{i}). A further swap of $Q(x)$ in the second clause with $Q(y)$ would satisfy (\emph{i}) but not~(\emph{iii}).
\end{example}

As all the NDL-rewritings we construct are ordered, with their parameters being the answer variables, from now on we only consider ordered NDL queries.

Given an NDL query $(\Pi,G(\avec{x}))$, a data instance $\A$ and a tuple $\avec{a}$ with $|\avec{x}| = |\avec{a}|$, the \emph{$\avec{a}$-grounding  $\Pi_{\A}^{\avec{a}}$ of $\Pi$ on $\A$} is the set of ground clauses obtained by first replacing each parameter in $\Pi$ by the corresponding constant from $\avec{a}$,
and then performing the standard grounding~\cite{DBLP:journals/csur/DantsinEGV01} of $\Pi$ using the constants from $\A$. The size of $\Pi_{\A}^{\avec{a}}$ is bounded by $|\Pi|\cdot |\A|^{\wid(\Pi,G)}$, and so we can check whether $\Pi,\mathcal{A} \models G(\avec{a})$ holds in time $\textit{poly}(|\Pi|\cdot |\A|^{\wid(\Pi,G)})$.

\subsubsection{Linear NDL in NL}

An NDL program is \emph{linear}~\cite{Abitebouletal95} if the body of its every  clause contains at most one IDB predicate.
\begin{theorem}\label{linear-nl}
For any $\wid > 0$, evaluation of linear NDL queries of width $\le \wid$ is \NL-complete for combined complexity.
\end{theorem}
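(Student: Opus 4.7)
The plan is to establish matching \NL{} upper and lower bounds.

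For \NL-membership, I would run nondeterministic backward chaining in logarithmic space. Starting from the ground goal $G(\avec{a})$, the algorithm maintains a single ``current'' ground IDB atom. At each step it nondeterministically guesses a clause in $\Pi$ whose head unifies with the current atom together with constants from $\ind(\A)$ for the (at most $\wid$) non-parameter variables of the body, and then verifies that every EDB atom in the instantiated body holds in $\A$. Linearity of $\Pi$ ensures that the body contains at most one IDB atom, so no stack is required: this unique body IDB atom becomes the new current atom, while a clause with no IDB in the body signals success. Nonrecursiveness of $\Pi$ bounds the length of any successful derivation by $\dep(\Pi,G) \le |\Pi|$, which I track with a simple counter. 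The only other working memory is the current atom: its predicate name needs $O(\log |\Pi|)$ bits, its non-parameter arguments need $O(\wid \log |\A|)$ bits, and its parameter arguments are recoverable from the predicate signature together with the read-only input tuple $\avec{a}$, so they need not be stored explicitly. Because $\wid$ is a fixed constant, the total space is $O(\log(|\Pi|+|\A|))$.

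For \NL-hardness, I would give a logspace reduction from directed graph reachability (\textsc{STCon}). Given a graph $(V,E)$ with designated vertices $s,t$, I output the data instance $\A = \{\, E(v,w) \mid (v,w) \in E \,\}$, the answer tuple $(s,t)$, and the ordered linear NDL query with head $G(x_1,x_2)$ and auxiliary IDBs $\mathsf{Path}_0,\dots,\mathsf{Path}_{|V|}$, each carrying the single parameter $x_2$, together with the clauses
\begin{align*}
G(x_1,x_2) &\leftarrow \mathsf{Path}_{|V|}(x_1,x_2),\\
\mathsf{Path}_0(x_1,x_2) &\leftarrow (x_1 = x_2),\\
\mathsf{Path}_k(x_1,x_2) &\leftarrow \mathsf{Path}_{k-1}(x_1,x_2),\\
\mathsf{Path}_k(x_1,x_2) &\leftarrow E(x_1,u) \land \mathsf{Path}_{k-1}(u,x_2),
\end{align*}
for $1 \le k \le |V|$. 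A routine induction on $k$ shows that $\mathsf{Path}_k(s,t)$ holds over $\A$ iff there is a directed path from $s$ to $t$ of length at most~$k$. Every clause has at most one non-parameter variable ($u$), so the width is~$1$, and every body contains at most one IDB atom, so the program is linear. The reduction is clearly logspace-computable.

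The main obstacle I anticipate is a careful verification of the ordered-NDL conditions~(i)--(iii) in the reduction---in particular, that placing $x_2$ uniformly as the \emph{last} argument of every $\mathsf{Path}_k$-occurrence and passing the new vertex $u$ into the first (non-parameter) position yields a well-formed ordered query whose single parameter $x_2$ propagates correctly from head to body at every step---and, on the upper-bound side, the compact encoding of the current atom that keeps the working tape logarithmic despite the possibly large number $n = |\avec{x}|$ of query parameters. Once these bookkeeping points are settled, both directions go through in a standard way.
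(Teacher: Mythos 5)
Your \NL{} membership argument is, in substance, the paper's proof: the paper builds the polynomial-size grounding graph $\mathfrak{G}$ of $\Pi_{\A}^{\avec{a}}$ by a logspace transducer and then appeals to \NL{} reachability, while you run the same reachability search on the fly by nondeterministic backward chaining; your space accounting (predicate name, at most $\wid$ non-parameter constants, parameters read off $\avec{a}$, a depth counter bounded by $\dep(\Pi,G)$) is sound.

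The hardness half, which the paper leaves implicit, has a concrete problem at exactly the point you flagged. In your program the variable $x_1$ is an answer variable, hence a parameter of $G$, yet it also fills the non-parameter first position of $\mathsf{Path}_k$ in the recursive clauses. Under the semantics of ordered queries, parameters are replaced by the corresponding constants of the candidate answer before grounding, so every occurrence of $x_1$ would be frozen to $s$; then $\mathsf{Path}_k(s,x_2)\leftarrow E(s,u)\land \mathsf{Path}_{k-1}(u,x_2)$ can only use edges leaving $s$, no atom $\mathsf{Path}_{k-1}(c,t)$ with $c\neq s$ is ever derivable, and the query degenerates to the test $s=t$ rather than reachability. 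If instead you rename that argument to a fresh variable $y$, the reduction becomes correct, but the recursive clause then has two non-parameter variables ($y$ and $u$), so the width is $2$, not $1$: this proves \NL-hardness only for $\wid\ge 2$, whereas the theorem claims completeness for every $\wid>0$. For $\wid=1$ you need a different encoding, e.g.\ move the graph into the program: take $0$-ary IDB predicates $P_v^k$ for $v\in V$ and $0\le k\le |V|$, clauses $P_w^k\leftarrow P_v^{k-1}$ for every edge $(v,w)\in E$, $P_v^k\leftarrow P_v^{k-1}$, $P_s^0\leftarrow A(x)$ over the data instance $\{A(a)\}$, and goal $G\leftarrow P_t^{|V|}$; this program is linear, nonrecursive, of width $1$, logspace-computable, and derives $G$ iff $t$ is reachable from $s$, which closes the case $\wid=1$ and hence all $\wid>0$.
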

\begin{proof}
Let $(\Pi,G(\avec{x}))$ be a linear NDL query.
Deciding whether $\Pi, \A \models G(\avec{a})$ is reducible to finding a path to $G(\avec{a})$ from a certain set $X$ in the grounding graph $\mathfrak{G}$ constructed as follows. The vertices of $\mathfrak{G}$ are the IDB atoms of $\Pi_{\A}^{\avec{a}}$, and
$\mathfrak{G}$ has an edge from $Q(\avec{c})$ to $Q'(\avec{c}')$
iff $\Pi_\A^{\avec{a}}$ contains 
$Q'(\avec{c}')\leftarrow Q(\avec{c})\land S_1(\avec{c}_1)\land \dots \land S_k(\avec{c}_k)$
with $S_i(\avec{c}_i) \in \A$, for $1\le i \le k$ (we assume $\A$ contains all  $c=c$, for $c\in\ind(\A)$).
The set $X$ consists of all vertices $Q(\avec{c})$ with IDB predicates $Q$ being
of in-degree 0 in the dependency graph of $\Pi$ for which there is a clause
$Q(\avec{c}) \leftarrow S_1(\avec{c}_1)\land \dots \land S_k(\avec{c}_k)$ in $\Pi_\A^{\avec{a}}$
with $S_i(\avec{c}_i) \in \A$ \mbox{($1\le i \le k$)}.
Bounding the width of $(\Pi,G)$ ensures that
$\mathfrak{G}$ is of polynomial size and can be constructed by a deterministic Turing machine with read-only input, write-once output and logarithmic-size work tapes.
\end{proof}

The transformation $^*$ of NDL-rewritings over complete data instances into NDL-rewritings over arbitrary data instances does not preserve linearity. A more involved construction is given in the proof of the following:
\begin{lemma}\label{linear-arbitrary}
Fix any $\wid > 0$. There is an $\mathsf{L}^\NL$-transducer that, for any linear \NDL-rewriting $(\Pi, G(\avec{x}))$ of an OMQ $\omq(\avec{x})$ over complete data instances with $\wid(\Pi,G)\leq \wid$, computes a linear NDL-rewriting $(\Pi',G(\avec{x}))$ of $\omq(\avec{x})$ over arbitrary data instances such that $\wid(\Pi',G) \leq \wid+1$.
\end{lemma}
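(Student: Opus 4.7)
The plan is as follows. The standard transformation $\Pi \mapsto \Pi^\ast$ from Section~\ref{sec:prelims} fails to preserve linearity: replacing each EDB atom in a body $Q_1(\avec{u}_1) \land S_1(\avec{v}_1) \land \dots \land S_k(\avec{v}_k)$ by its starred IDB version yields $k{+}1$ IDB atoms in a single body as soon as $k\ge 1$. The remedy I propose is to inline the verification of each $S_j(\avec{v}_j) \in \T^+(\A)$ (where $\T^+(\A)$ denotes the set of atoms on $\ind(\A)$ entailed by $\T,\A$) directly into the derivation chain via continuation-passing auxiliary predicates, so that every emitted clause of $\Pi'$ contains at most one IDB atom in its body.

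For every clause $c\colon\ Q_0(\avec{u}_0) \leftarrow Q_1(\avec{u}_1) \land S_1(\avec{v}_1) \land \dots \land S_k(\avec{v}_k)$ of $\Pi$ (with $Q_1$ IDB and the $S_j$ EDB; base clauses are handled uniformly by dropping the $Q_1$ atom), I introduce fresh IDBs $\mathsf{Goal}_c^0,\dots,\mathsf{Goal}_c^k$ whose non-parameter arguments are the ``live'' variables at stage $i$, namely $\avec{u}_0$ together with the yet-to-be-checked $\avec{v}_{i+1},\dots,\avec{v}_k$. The chain is opened by the safe entry clause
\[
\mathsf{Goal}_c^0(\avec{u}_0,\avec{v}_1,\dots,\avec{v}_k) \leftarrow Q_1(\avec{u}_1) \land \top(\avec{v}_1) \land \dots \land \top(\avec{v}_k),
\]
using the active-domain EDB predicate $\top$ to keep the fresh variables safe, and it is closed by $Q_0(\avec{u}_0) \leftarrow \mathsf{Goal}_c^k(\avec{u}_0)$.

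The transition $\mathsf{Goal}_c^{i-1} \to \mathsf{Goal}_c^i$ has to certify $S_i(\avec{v}_i)\in\T^+(\A)$ without introducing a second IDB into any one body. For every predicate $T$ that an \OWLQL-axiom chain can reduce the verification of $S_i$ to, and every $(c,i)$, I introduce a predicate $\mathsf{Verify}_{T,c,i}$ whose arguments are those of $\mathsf{Goal}_c^{i-1}$ extended by the tuple $\avec{v}_i$. Its rules come in three linear shapes: an entry $\mathsf{Verify}_{S_i,c,i}(\ldots) \leftarrow \mathsf{Goal}_c^{i-1}(\ldots)$; a climb $\mathsf{Verify}_{T',c,i}(\ldots) \leftarrow \mathsf{Verify}_{T,c,i}(\ldots)$ for each one-step TBox reduction of $T$ to $T'$ (including the normalisation axiom $A_\varrho \leftrightarrow \exists y\,\varrho(x,y)$); and an exit $\mathsf{Goal}_c^i(\avec{u}_0,\avec{v}_{i+1},\dots,\avec{v}_k) \leftarrow \mathsf{Verify}_{T,c,i}(\ldots) \land T(\avec{v}_i)$, matching $T$ against the raw ABox. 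The only exit that introduces a genuinely new non-parameter variable is the one for $T = A_\varrho$, of the form $\mathsf{Goal}_c^i(\ldots) \leftarrow \mathsf{Verify}_{A_\varrho,c,i}(\ldots) \land \varrho(\avec{v}_i,z)$ with a fresh witness~$z$, and it is this single witness that accounts for the $+1$ in the width bound.

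For the $\mathsf{L}^\NL$ transducer, a deterministic log-space controller reads $\Pi$ one clause at a time and, for each $c$ and $i\in\{1,\dots,k\}$, emits the entry, climb, and exit rules above; the oracle is called only to decide entailments of the form $\T \models T \to T'$, which is \NL-reachability in the dependency graph of the axioms of~$\T$. Linearity of every emitted clause holds by direct inspection, and the non-parameter variables of every such clause are drawn from the $\le\wid$ non-parameter variables of the original $c$ plus at most one extra witness $z$, giving $\wid(\Pi',G)\le\wid+1$. Correctness is then by induction on derivation length: any derivation of $G(\avec{a})$ in $(\Pi,\T^+(\A))$ is unfolded by replacing each application of $c$ by its chain $\mathsf{Goal}_c^0,\dots,\mathsf{Goal}_c^k$ and each use of an atom $S_i(\avec{v}_i)\in\T^+(\A)$ by the corresponding $\mathsf{Verify}_{\cdot,c,i}$ sub-path, and the converse collapses such sub-paths back. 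The main obstacle is purely administrative: keeping track of exactly which of the original variables $\avec{v}_j$ remain ``live'' at each $\mathsf{Goal}_c^i$ and $\mathsf{Verify}_{T,c,i}$ so that the non-parameter arity of every emitted predicate stays within $\wid+1$ and no body ever contains two IDB atoms.
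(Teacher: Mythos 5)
Your proposal is correct and follows essentially the same route as the paper's proof: each clause is serialized into a chain of linear clauses that certify its EDB atoms one at a time against the atoms that $\T$-imply them, with the fresh existential witness needed for axioms $\exists y\,\varrho(y,x)\to A(x)$ accounting for the width increase by $1$, and the whole program is emitted by a logspace controller using an \NL{} oracle for the relevant ontology entailments, correctness being a routine unfolding/induction on derivations. The only differences are cosmetic: the paper collapses your $\mathsf{Verify}$-climb chains into a single step by letting the oracle compute the full set $\upsilon(E_i)$ of atoms implying $E_i$, and it threads the shared variables through the intermediate predicates rather than keeping them alive with $\top$-atoms.
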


We note that a possible increase of the width by 1 is due to
the `replacement' of unary atoms $A(z)$ by binary atoms $\varrho(y,z)$ whenever $\T\models \exists y \,\varrho(y,z) \to A(z)$.

\subsubsection{Skinny NDL in LOGCFL}

The complexity class \LOGCFL{} can be defined using
\emph{nondeterministic auxiliary pushdown automata} (NAuxPDAs)~\cite{DBLP:journals/jacm/Cook71}, which are nondeterministic Turing machines with an additional work tape constrained to operate as a pushdown store. Sudborough~\cite{sudborough78} proved that \LOGCFL{} coincides with the class of problems that are solved by NAuxPDAs  in logarithmic space and polynomial time (the space on the pushdown tape is not subject to the logarithmic  bound).
It is known that
\LOGCFL{} can equivalently be defined  in terms of logspace-uniform
families of semi-unbounded fan-in circuits (where \textsc{or}-gates have arbitrarily many inputs, and \textsc{and}-gates two inputs) of polynomial size and logarithmic depth. Moreover,  there is an algorithm that, given such a circuit $\cir$, computes the output using an NAuxPDA
in logarithmic space in the size of $\cir$ and exponential time in the depth of $\cir$~\cite[pp.~392--397]{DBLP:journals/jcss/Venkateswaran91}.

Similarly to the restriction on the circuits for \LOGCFL, we call an NDL query $(\Pi,G)$ \emph{skinny} if the body of any  clause in $\Pi$ has at most two atoms.
\begin{lemma}\label{prop:SkinnyNDLEvaluation}
For any skinny $(\Pi, G(\avec{x}))$ and any data instance~$\A$, query evaluation can be done by an NAuxPDA in space $\log |\Pi| + \wid(\Pi,G) \cdot \log |\A|$ and time $2^{O(\dep(\Pi,G))}$.
\end{lemma}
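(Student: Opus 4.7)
The plan is to reduce query evaluation to evaluating a logspace-uniform, semi-unbounded Boolean circuit and then invoke the NAuxPDA simulation of~\cite[pp.~392--397]{DBLP:journals/jcss/Venkateswaran91} cited immediately before the lemma. Take $\Pi_{\A}^{\avec{a}}$, the $\avec{a}$-grounding of $\Pi$ on $\A$; its size is at most $|\Pi|\cdot|\A|^{\wid(\Pi,G)}$. Build a circuit $\cir$ with one \textsc{or}-gate $g_{Q(\avec{c})}$ for each ground IDB atom $Q(\avec{c})$ occurring in $\Pi_{\A}^{\avec{a}}$, and one \textsc{and}-gate $h_C$ for every ground clause $C$ of $\Pi_{\A}^{\avec{a}}$. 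The gate $h_C$ is fed by the body atoms of $C$: each IDB body atom $Q'(\avec{c}')$ is wired to $g_{Q'(\avec{c}')}$, while each EDB body atom $S(\avec{c}')$ (including equalities $z=z'$) is wired to the constant $1$ if $S(\avec{c}')\in\A$ and to $0$ otherwise. All $h_C$ whose head is $Q(\avec{c})$ feed into $g_{Q(\avec{c})}$, and the output gate is $g_{G(\avec{a})}$. Skinniness makes every \textsc{and}-gate have fan-in at most $2$, while \textsc{or}-gates have unbounded fan-in, so $\cir$ is semi-unbounded.

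A routine induction on the height of proof trees shows that each $g_{Q(\avec{c})}$ evaluates to $1$ exactly when $\Pi,\A\models Q(\avec{c})$ with parameters bound to $\avec{a}$; hence the output of $\cir$ is $1$ iff $\Pi,\A\models G(\avec{a})$. For the depth bound, I would observe that any directed path in $\cir$ strictly alternates \textsc{or}- and \textsc{and}-gates, and on every such path the IDB predicate labelling a lower \textsc{or}-gate depends (via the \textsc{and}-gate between them) on the predicate labelling the upper \textsc{or}-gate. Hence the sequence of \textsc{or}-gates along a path projects to a directed walk in the dependence graph of $\Pi$ starting at $G$, so there are at most $\dep(\Pi,G)+1$ of them and the total circuit depth lies in $O(\dep(\Pi,G))$.

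Finally, $\cir$ is logspace-uniform in $(\Pi,\A,\avec{a})$: each gate is specified by (the index of) a clause of $\Pi$ together with an assignment to its non-parameter variables, requiring only $\log|\Pi|+\wid(\Pi,G)\cdot\log|\A|$ bits, and the wiring predicate is a local check on those descriptions. Applying the cited NAuxPDA simulation of Venkateswaran then evaluates $\cir$ in worktape space $O(\log|\cir|)=\log|\Pi|+\wid(\Pi,G)\cdot\log|\A|$ and time $2^{O(\dep(\Pi,G))}$, matching the claim. The main obstacle I anticipate is the depth analysis: one has to argue carefully that body atoms split cleanly into IDB (feeding an \textsc{or}-gate) and EDB (a circuit input), so that the strict \textsc{or}/\textsc{and} alternation---and therefore the translation of dependence depth into circuit depth---is genuinely preserved, and that the logspace-uniformity holds even though the grounding itself is of exponential size in $\wid(\Pi,G)$.
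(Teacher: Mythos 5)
Your proposal is correct and follows essentially the same route as the paper: ground the program, turn the grounding into a monotone semi-unbounded circuit (\textsc{or}-gates for derived atoms, fan-in-2 \textsc{and}-gates for skinny clause bodies, inputs fixed by $\A$), note the depth $O(\dep(\Pi,G))$ and size $O(|\Pi|\cdot|\A|^{\wid(\Pi,G)})$ bounds together with logspace computability of the circuit, and invoke the NAuxPDA simulation of Venkateswaran. Your depth argument via projecting circuit paths onto the dependence graph is just a slightly more explicit version of what the paper leaves implicit, so no gap remains.
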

\begin{proof}
Using the atoms of the grounding $\Pi_{\A}^{\avec{a}}$ as gates and inputs, we define a monotone Boolean circuit $\cir$ as follows:
its output is $G(\avec{a})$; for every atom $\gamma$ in the head of a clause in
 $\Pi_{\A}^{\avec{a}}$, we take an \textsc{or}-gate whose output is~$\gamma$ and inputs are the bodies of the clauses with head~$\gamma$; for every such body, we take an \textsc{and}-gate whose inputs are the atoms in the body. We set  input  $\gamma$ to 1 iff $\gamma \in \A$. Clearly, $\cir$ is a semi-unbounded fan-in circuit of depth $O(\dep(\Pi,G))$ with $O(|\Pi| \cdot\nolinebreak |\A|^{\wid(\Pi,G)})$ gates.
Having observed that our $\cir$ can be computed by a deterministic logspace Turing machine, we conclude that the query evaluation problem can be solved by an NAuxPDA in the required space and time.
\end{proof}

Observe that Lemma~\ref{prop:SkinnyNDLEvaluation} holds for NDL queries with any \emph{bounded} number of atoms, not only two. In the rewritings we propose in Sections~\ref{sec:boundedtw} and \ref{sec:boundedleaf}, however, the number of atoms in the clauses is not bounded by a constant.
We  require the following notion to generalise skinny programs.
A function $\nu$ from the predicate names in $\Pi$ to $\mathbb{N}$ is called a \emph{weight function for} an NDL query $(\Pi, G(\avec{x}))$ if
\begin{equation*}
\nu(Q) > 0 \quad\text{ and }\quad \nu(Q) \ \ge \ \nu(P_1) + \dots + \nu(P_k),
\end{equation*}
for any  clause $Q(\avec{z}) \leftarrow P_1(\avec{z}_1) \land \dots \land P_k(\avec{z}_k)$ in $\Pi$. Note that $\nu(P)$ can be $0$ for an EDB predicate $P$. To illustrate, we consider NDL queries with the  following dependency graphs:\\[3pt]
\centerline{%
\begin{tikzpicture}[nd/.style={circle, draw, thick, inner sep=0mm, minimum size=1.2mm},yscale=0.75,>=latex]
\node[nd] (a000) at (0,0) {};
\node[nd] (a001) at (0.4,0) {};
\node[nd] (a010) at (0.8,0) {};
\node[nd] (a011) at (1.2,0) {};
\node[nd] (a100) at (1.6,0) {};
\node[nd] (a101) at (2,0) {};
\node[nd] (a110) at (2.4,0) {};
\node[nd] (a111) at (2.8,0) {};
\node[nd] (a00) at (0.2,1) {};
\node[nd] (a01) at (1,1) {};
\node[nd] (a10) at (1.8,1) {};
\node[nd] (a11) at (2.6,1) {};
\node[nd] (a0) at (0.6,2) {};
\node[nd] (a1) at (2.2,2) {};
\node[nd] (a) at (1.4,3) {};
\begin{scope}[semithick]
\draw[<-] (a000) -- (a00);
\draw[<-] (a001) -- (a00);
\draw[<-] (a010) -- (a01);
\draw[<-] (a011) -- (a01);
\draw[<-] (a100) -- (a10);
\draw[<-] (a101) -- (a10);
\draw[<-] (a110) -- (a11);
\draw[<-] (a111) -- (a11);
\draw[<-] (a00) -- (a0);
\draw[<-] (a01) -- (a0);
\draw[<-] (a10) -- (a1);
\draw[<-] (a11) -- (a1);
\draw[<-] (a0) -- (a);
\draw[<-] (a1) -- (a);
\node[nd] (b00) at (5,0) {};
\node[nd] (b01) at (6,0) {};
\node[nd] (b10) at (5,1) {};
\node[nd] (b11) at (6,1) {};
\node[nd] (b20) at (5,2) {};
\node[nd] (b21) at (6,2) {};
\node[nd] (b30) at (5,3) {};
\node[nd] (b31) at (6,3) {};
\draw[->] (b10) -- (b00);
\draw[->] (b11) -- (b00);
\draw[->] (b10) -- (b01);
\draw[->] (b11) -- (b01);
\draw[->] (b20) -- (b10);
\draw[->] (b21) -- (b10);
\draw[->] (b20) -- (b11);
\draw[->] (b21) -- (b11);
\draw[->] (b30) -- (b20);
\draw[->] (b31) -- (b20);
\draw[->] (b30) -- (b21);
\draw[->] (b31) -- (b21);
\end{scope}
\end{tikzpicture}}\\[4pt]
The NDL on the left has a weight function bounded by the number of predicates, and so, such weight functions are linear in the size of the query; intuitively, this function corresponds to the number of directed paths from a vertex to the leaves. In contrast, any NDL query with the dependency graph on the right can only have a weight function whose values (numbers of paths) are exponential.
Also observe that linear NDL queries have weight functions bounded by $1$.

We show, using Huffman coding, that any NDL query $(\Pi,G(\avec{x}))$ can be transformed into an equivalent skinny NDL query whose depth increases linearly in addition to the logarithms of  the weight function and the number $\textsf{e}_\Pi$ of EDB predicates in a clause. We call the minimum  (over possible weight functions $\nu$) value of $2\dep(\Pi,G) + \log \nu (G) + \log \textsf{e}_\Pi$ the \emph{skinny depth} of $(\Pi,G)$ and denote it by $\sdep(\Pi,G)$.

\begin{lemma}\label{thm:NDLToSkinny}
Any NDL query $(\Pi, G(\avec{x}))$
is equivalent to a skinny NDL query  $(\Pi',G(\avec{x}))$ such that $|\Pi'| = O(|\Pi|^2)$, $\dep(\Pi',G) \le \sdep(\Pi,G)$, and $\wid(\Pi',G) \le \wid(\Pi,G)$.
\end{lemma}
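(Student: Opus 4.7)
The plan is to fix a weight function $\nu$ that witnesses the minimum in the definition of $\sdep(\Pi,G)$ and, for each clause of $\Pi$ whose body has more than two atoms, to replace it with a binary tree of sub-clauses obtained from a Huffman/Shannon prefix code on the body atoms, introducing fresh IDB predicates at internal nodes.

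Fix such an optimal $\nu$ and consider a clause $c$ of the form $Q(\avec{z})\leftarrow P_1(\avec{z}_1)\land\dots\land P_k(\avec{z}_k)\land E_1(\avec{y}_1)\land\dots\land E_m(\avec{y}_m)$, with $P_i$ IDB and $E_j$ EDB, and $m\le\textsf{e}_\Pi$. I would assign weight $\nu(P_i)$ to each IDB leaf and weight $\nu(Q)/\textsf{e}_\Pi$ to each EDB leaf; the weight-function property $\nu(Q)\ge\sum_i\nu(P_i)$ then ensures that the total leaf weight is $W\le 2\nu(Q)$. Kraft's inequality produces a binary tree with these atoms as leaves in which a leaf of weight $w$ has depth at most $\lceil\log_2(W/w)\rceil$ (Shannon's construction; Huffman's would be no worse). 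At each internal node $v$ of this tree I introduce a fresh IDB predicate $M_v$ whose arguments are the parameters of $Q$ together with all non-parameter variables appearing both in some atom of the subtree rooted at $v$ and somewhere outside this subtree (in $Q$ itself or in a sibling atom). For each internal $v$ with children $v_L,v_R$ I add the binary clause $M_v(\cdot)\leftarrow M_{v_L}(\cdot)\land M_{v_R}(\cdot)$, identifying $M_v$ with $Q$ at the root and with the corresponding body atom at each leaf. Doing this for every clause of $\Pi$ yields a skinny NDL program $\Pi'$, and a straightforward induction on the height of each tree shows that $M_v(\avec{c})$ is derivable in $\Pi'$ iff every leaf atom in the subtree of $v$ is satisfied under a suitable extension of $\avec{c}$, so $(\Pi',G(\avec{x}))$ is equivalent to $(\Pi,G(\avec{x}))$.

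Verifying the three quantitative bounds is then routine. For \emph{width}, the non-parameter variables of every fresh $M_v$ form a subset of those of $c$, so $\wid(\Pi',G)\le\wid(\Pi,G)$. For \emph{depth}, any directed path from $G$ in the dependence graph of $\Pi'$ visits original predicates $G=R_0,R_1,\dots,R_d$ with $d\le\dep(\Pi,G)$ (possibly with $R_d$ an EDB), and the segment from $R_{j-1}$ to $R_j$ has length at most $\log(\nu(R_{j-1})/\nu(R_j))+2$ when $R_j$ is IDB and at most $\log\textsf{e}_\Pi+2$ when $R_j$ is EDB; telescoping yields total length at most $\log\nu(G)+\log\textsf{e}_\Pi+2d\le\sdep(\Pi,G)$. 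For \emph{size}, each clause of size $|c|$ becomes $O(|c|)$ binary clauses each containing $O(|c|)$ symbols (every argument list is a sublist of the variables of $c$), whence $|\Pi'|=O\bigl(\sum_c|c|^2\bigr)=O(|\Pi|^2)$.

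The only delicate point is the choice of leaf weights for EDB atoms: too small and the Shannon depth of an EDB leaf blows up to $\log\nu(Q)$ instead of $\log\textsf{e}_\Pi$; too large and the total weight $W$ exceeds $2\nu(Q)$, inflating the depths of IDB leaves. Setting the EDB weights to $\nu(Q)/\textsf{e}_\Pi$ balances the two effects and keeps both logarithmic contributions within an additive $+2$ per step, which is exactly what accumulates into the $2\dep(\Pi,G)$ summand of $\sdep(\Pi,G)$ after telescoping over the $\le\dep(\Pi,G)$ original predicates along a path.
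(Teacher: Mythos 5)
Your proof is correct, and while it rests on the same core idea as the paper (binarising clause bodies by a weight-balanced tree of fresh IDB predicates whose arguments are the shared variables), the bookkeeping is genuinely different. The paper first splits every clause into its EDB and IDB components via two fresh predicates $Q_E,Q_I$ (this is where the $2\dep(\Pi,G)$ term comes from), binarises the EDB component by a balanced tree (the $\log \textsf{e}_\Pi$ term), and then binarises the IDB component by a Huffman tree with frequencies $\nu(P_i)/\nu(G)$, establishing $\dep(\Pi'_\dagger,G)\le \dep(\Pi_\dagger,G)+\log\nu(G)$ by induction on depth. You instead build a single prefix-code tree per clause over \emph{all} body atoms, giving each EDB atom the artificial weight $\nu(Q)/\textsf{e}_\Pi$ so that the total leaf weight stays below $2\nu(Q)$, and you replace the induction by a telescoping estimate along paths of the dependence graph, absorbing the ``$+2$ per original step'' into the $2\dep(\Pi,G)$ summand of $\sdep(\Pi,G)$; this yields the same bound $\dep(\Pi',G)\le 2\dep(\Pi,G)+\log\nu(G)+\log\textsf{e}_\Pi$, the same width preservation (all fresh predicates use only variables of the original clause) and the same $O(|\Pi|^2)$ size. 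What your route buys is a one-shot construction with no EDB/IDB pre-splitting and a per-leaf depth guarantee that comes directly from Kraft's inequality. One small caveat: your parenthetical ``Huffman's would be no worse'' is not true leaf-by-leaf --- for Fibonacci-like weights a Huffman leaf of weight $w$ can end up deeper than $\lceil\log(W/w)\rceil$ --- but your argument never uses this remark, since you explicitly invoke the Shannon/Kraft construction, which does guarantee the bound you need; in this one respect your argument is in fact on firmer ground than a literal reading of the per-leaf property asserted for Huffman trees in the paper's proof.
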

\begin{proof}
Let $\nu$ be a weight function such that $\sdep(\Pi,G) = 2\dep(\Pi,G) + \log \nu (G) + \log \textsf{e}_\Pi$.  Without loss of generality, we will assume that $\nu(E) = 0$, for EDB predicates $E$. First, we split  clauses into their  EDB and IDB components: each $Q(\avec{z}) \leftarrow \varphi(\avec{z}')$ is replaced by $Q(\avec{z}) \leftarrow Q_E(\avec{z}_E) \land Q_I(\avec{z}_E')$ and $Q_\alpha(\avec{z}_\alpha) \leftarrow \varphi_\alpha(\avec{z}_\alpha')$, for $\alpha \in \{ E, I\}$, where $Q_E$ and $Q_I$ are fresh predicates, and $\varphi_E(\avec{z}_E')$ and $\varphi_I(\avec{z}_I')$ are conjunctions of the EDB and IDB predicates in $\varphi$, respectively. The depth of the resulting NDL query $(\Pi_*,G(\avec{x}))$ is $2\dep(\Pi,G)$. Next, each clause $Q_E(\avec{z}_E) \leftarrow \varphi_E(\avec{z}_E')$ in $\Pi_*$ is replaced by $\leq \textsf{e}_\Pi - 1$ clauses with at most two atoms in the body, which results in an NDL query of depth not exceeding $2\dep (\Pi,G) + \log \textsf{e}_\Pi$. In the rest of the proof, we concentrate on the part $\Pi_\dagger$ of $\Pi_*$ comprising clauses that have predicates $Q$ and $Q_I$ in their heads (thus making the $Q_E$ EDB predicates). The weight function for $(\Pi_\dagger, G(\avec{x}))$ is obtained by extending $\nu$ as follows: we set $\nu(Q_I) = \nu(Q)$ and $\nu(Q_E) = 0$, for each $Q$.

Next, by induction on $\dep(\Pi_\dagger,G)$, we show that there is an equivalent skinny NDL query $(\Pi_\dagger',G(\avec{x}))$ of the required size and width and such that $\dep(\Pi'_\dagger,G) \leq \dep(\Pi_\dagger,G) + \log \nu(G)$. We take $\Pi'_\dagger=\Pi_\dagger$ if $\dep(\Pi_\dagger,G)=0$.
Otherwise, let  $\psi$ be a clause of the form $G(\avec{z}) \leftarrow P_1(\avec{z}_1) \land \dots \land
P_k(\avec{z}_k)$  in $\Pi_\dagger$, for $k > 2$.
Since, by the construction of $\Pi_\dagger$, if a clause has an EDB predicate, then $k = 2$. So,
the $P_i$ are IDB predicates and $\nu(G) \geq \nu(P_i) > 0$.
Suppose that, for each $i$ ($1\le i \le k$), we have an NDL query $(\Pi'_i,P_i)$ equivalent to $(\Pi_\dagger, P_i)$  with
\begin{equation}\label{eq:depPj}
\dep(\Pi'_i,P_i) ~\le~ \dep(\Pi_\dagger,P_i) + \log \nu(P_i) ~\le~ \dep(\Pi_\dagger,G) - 1 + \log \nu(P_i).
\end{equation}
Construct the Huffman tree \cite{huf52} for the alphabet $\{1,\dots,k\}$, where the frequency of $i$ is $\nu(P_i)/\nu(G)$. For example, for $\nu(G) = 39$, $\nu(P_1) = 15$, $\nu(P_2) = 7$, $\nu(P_3) = 6$, $\nu(P_4) = 6$ and $\nu(P_5) = 5$, we obtain the following tree:\\[5pt]
\centerline{%
\begin{tikzpicture}[nd/.style={circle,draw,color=white,draw=black,fill=black!70,inner sep=1pt,minimum size=5mm},yscale=1,xscale=1.4]\scriptsize
\node[nd,fill=gray!40,ultra thick,label=right:{\tiny 39}] (G) at (0,0) {\textcolor{black}{$\boldsymbol{g}$}};
\node[nd,label=left:{\tiny 15}] (P1) at (-1,-0.5) {$\boldsymbol{1}$};
\node[nd,fill=white,ultra thick,label=right:{\tiny 24}] (i2) at (1,-0.5) {};
\node[nd,fill=white,ultra thick,label=left:{\tiny 13}] (i3) at (0,-1) {};
\node[nd,fill=white,ultra thick,label=right:{\tiny 11}] (i4) at (2,-1) {};
\node[nd,label=left:{\tiny 7}] (P2) at (-0.75,-1.6) {$\boldsymbol{2}$};
\node[nd,label=left:{\tiny 6}] (P3) at (0.5,-1.6) {$\boldsymbol{3}$};
\node[nd,label=right:{\tiny 6}] (P4) at (1.5,-1.6) {$\boldsymbol{4}$};
\node[nd,label=right:{\tiny 5}] (P5) at (2.75,-1.6) {$\boldsymbol{5}$};
\begin{scope}[thick]
\draw (G) -- (P1);
\draw (G) -- (i2);
\draw (i2) -- (i3);
\draw (i2) -- (i4);
\draw (i3) -- (P2);
\draw (i3) -- (P3);
\draw (i4) -- (P4);
\draw (i4) -- (P5);
\end{scope}
\end{tikzpicture}
}
\\[5pt]
In general, the Huffman tree is a binary tree with $k$ leaves $1,\dots,k$, a root $g$ and $k-2$ internal nodes and such that the length of the path from $g$ to any leaf $i$ is bounded by $\lceil \log(\nu(G)/\nu(P_i))\rceil$. For each internal node $v$ of the tree, we take a predicate $P_v(\avec{z}_v)$, where $\avec{z}_v$ is the union of $\avec{z}_u$ for all descendants $u$ of~$v$; for the root~$g$, we take $P_g(\avec{z}_g) = G(\avec{z})$. Let $\Pi'_\psi$ be the extension of the union of the $\Pi'_i$ ($1\leq i \leq k$) with clauses $P_v(\avec{z}_{v}) \leftarrow P_{u_1}(\avec{z}_{u_1}) \land P_{u_2}(\avec{z}_{u_2})$, for each $v$ with immediate successors  $u_1$ and $u_2$.
The number of the new clauses is $k-1$.
By~\eqref{eq:depPj}, we  have:
\begin{multline*}
\dep(\Pi'_\psi,G) \le \max\nolimits_i \{ \lceil \log (\nu(G) / \nu(P_i)) \rceil + \dep(\Pi'_i,P_i) \}\\
\hspace*{4mm}\le
\max\nolimits_i \{ \log (\nu(G) / \nu(P_i)) + \dep(\Pi_\dagger,G) + \log \nu(P_i) \} \ \  = \ \
 \dep(\Pi_\dagger,G) + \log \nu(G).
\end{multline*}
Let $\Pi'_\dagger$ be the result of applying this transformation to
each clause in $\Pi_\dagger$  with head $G(\avec{z})$ and more than two atoms in the body.

Finally, we add to $\Pi_\dagger'$ the clauses with the $Q_E$ predicates and denote the result by $\Pi'$. It is readily seen that $(\Pi',G)$ is as required; in particular, $|\Pi'|= O(|\Pi|^2)$.
\end{proof}

We now use Lemmas~\ref{prop:SkinnyNDLEvaluation} and~\ref{thm:NDLToSkinny} to obtain the following:
\begin{theorem}\label{thm:LOGCFLdatalog}
For every $c > 0$ and $\wid > 0$, evaluation of NDL queries $(\Pi, G(\avec{x}))$
of width at most~$\wid$ and such that $\sdep(\Pi,G) \le c \log|\Pi|$
is in \LOGCFL{} for combined complexity.
\end{theorem}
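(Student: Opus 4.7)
The plan is to chain together Lemmas~\ref{thm:NDLToSkinny} and~\ref{prop:SkinnyNDLEvaluation} and then verify that the resulting NAuxPDA fits into the Sudborough characterisation of \LOGCFL{} recalled just before Lemma~\ref{prop:SkinnyNDLEvaluation}. Given an input NDL query $(\Pi, G(\avec{x}))$ of width $\le \wid$ with $\sdep(\Pi,G) \le c\log|\Pi|$, together with a data instance $\A$ and a candidate tuple $\avec{a}$, the first step is to apply Lemma~\ref{thm:NDLToSkinny} to obtain an equivalent skinny query $(\Pi', G(\avec{x}))$ with $|\Pi'| = O(|\Pi|^2)$, $\wid(\Pi',G) \le \wid$ and $\dep(\Pi',G) \le \sdep(\Pi,G) \le c\log|\Pi|$. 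The second step is to run the NAuxPDA supplied by Lemma~\ref{prop:SkinnyNDLEvaluation} on $(\Pi', G(\avec{x}))$, $\A$ and $\avec{a}$.

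It remains to check that the resources used by this NAuxPDA meet the \LOGCFL{} bound. For the work-tape space, Lemma~\ref{prop:SkinnyNDLEvaluation} gives
\begin{equation*}
\log|\Pi'| + \wid(\Pi',G)\cdot\log|\A| \;=\; O(\log|\Pi|) + \wid\cdot \log|\A| \;=\; O(\log(|\Pi|+|\A|)),
\end{equation*}
which is logarithmic in the combined input size. For the running time, we have
\begin{equation*}
2^{O(\dep(\Pi',G))} \;\le\; 2^{O(c\log|\Pi|)} \;=\; |\Pi|^{O(c)},
\end{equation*}
which is polynomial in the combined input size. By Sudborough's theorem, a problem solvable by an NAuxPDA in simultaneous logarithmic work-tape space and polynomial time is in \LOGCFL, so the theorem follows.

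One subtlety to attend to is that Lemma~\ref{thm:NDLToSkinny} requires choosing a weight function realising the minimum defining $\sdep(\Pi,G)$; I would note that the algorithm does not need to actually compute this optimal $\nu$ at evaluation time. Because $\sdep(\Pi,G) \le c\log|\Pi|$ is a property of the input query that is assumed as a promise (or can be witnessed by any sufficient weight function provided alongside), the construction of $\Pi'$ in Lemma~\ref{thm:NDLToSkinny} is a deterministic log-space pre-processing step that can be composed with the NAuxPDA stage without blowing up the overall resource bounds. The main obstacle in the write-up is really bookkeeping: making sure the $O(|\Pi|^2)$ blow-up from skinnification does not interfere with the logarithmic space bound (it does not, because $\log|\Pi'| = O(\log|\Pi|)$), and that the constants $c$ and $\wid$ hidden inside the $O$-notation depend only on the theorem's parameters, not on the input, so the class \LOGCFL{} membership is uniform.
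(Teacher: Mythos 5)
Your proof is correct and is essentially the paper's own proof: both apply Lemma~\ref{thm:NDLToSkinny} to pass to an equivalent skinny query $(\Pi',G)$ with $|\Pi'| = O(|\Pi|^2)$, $\wid(\Pi',G) \le \wid$ and $\dep(\Pi',G) \le \sdep(\Pi,G)$, then invoke Lemma~\ref{prop:SkinnyNDLEvaluation} and check that the resulting NAuxPDA runs in space $O(\log|\Pi| + \log|\A|)$ and time $|\Pi|^{O(1)}$, concluding by Sudborough's characterisation of \LOGCFL. The only thing you add is the closing remark on the constructibility of $\Pi'$, which the paper's two-line proof leaves entirely implicit; your point that it is a uniform pre-processing step with $\log|\Pi'| = O(\log|\Pi|)$ is reasonable, though you assert it is logspace-computable without argument (the Huffman-tree step is not obviously so), a detail the paper itself does not address either.
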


We say that a class of OMQs is \emph{skinny-reducible} if, for some fixed $c > 0$ and $\wid > 0$, there is an $\textsf{L}^{\smash{\LOGCFL}}$-transducer that, given any  OMQ $\omq(\avec{x})$ in the class, computes its NDL-rewriting $(\Pi,G(\avec{x}))$ over complete data instances
such that $\sdep(\Pi,G)\le c\log|\Pi|$ and $\wid(\Pi,G) \le \wid$.
Theorem~\ref{thm:LOGCFLdatalog} and the transformation $^*$ give the following:
\begin{corollary}\label{cor:weight}
For any skinny-reducible class, the OMQ answering problem is in \LOGCFL{} for combined complexity.
\end{corollary}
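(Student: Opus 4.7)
The plan is to chain three ingredients: the skinny-reducibility assumption (which supplies an NDL-rewriting over complete data instances with controlled parameters), the transformation $^*$ from the end of Section~\ref{sec:prelims} (to pass to arbitrary data instances), and Theorem~\ref{thm:LOGCFLdatalog} (which evaluates such rewritings in \LOGCFL). Given $\omq(\avec{x}) = (\T,\q(\avec{x}))$ in the class, a data instance $\A$ and $\avec{a}\subseteq\ind(\A)$, I first invoke the $\textsf{L}^{\LOGCFL}$-transducer guaranteed by skinny-reducibility to obtain an \NDL-rewriting $(\Pi,G(\avec{x}))$ over complete data instances with $\wid(\Pi,G)\le\wid$ and $\sdep(\Pi,G)\le c\log|\Pi|$. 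I then apply $^*$ to produce $(\Pi^*,G(\avec{x}))$, an \NDL-rewriting of $\omq$ over arbitrary data instances with $|\Pi^*|\le|\Pi|+|\T|^2$.

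The crucial step is to check that $(\Pi^*, G)$ still satisfies the hypotheses of Theorem~\ref{thm:LOGCFLdatalog}, up to constants. The additional clauses $A^*(x)\leftarrow\tau(x)$, $P^*(x,y)\leftarrow\varrho(x,y)$ and $P^*(x,x)\leftarrow\top(x)$ each involve at most two variables, so $\wid(\Pi^*,G)\le\max\{\wid,2\}$. They attach a single extra layer at the bottom of the dependence graph (from the renamed predicate $S^*$ to the EDB predicate $S$ appearing in the new defining clause), so $\dep(\Pi^*,G)\le\dep(\Pi,G)+1$. Any weight function $\nu$ witnessing $\sdep(\Pi,G)$ extends to $(\Pi^*,G)$ by setting $\nu(S^*)=\nu(S)$ and assigning weight $0$ to the new EDB predicates; the new clauses trivially satisfy the weight inequality and $\textsf{e}_{\Pi^*}\le 2$. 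Consequently, $\sdep(\Pi^*,G)\le\sdep(\Pi,G)+O(1)\le c'\log|\Pi^*|$ for a suitable constant $c'$, and Theorem~\ref{thm:LOGCFLdatalog} places the evaluation of $(\Pi^*,G(\avec{a}))$ on $\A$ in \LOGCFL.

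Finally, the two stages must be glued together. Producing $(\Pi^*,G)$ from $\omq$ stays in $\textsf{L}^{\LOGCFL}$, since the $^*$ step merely appends a list of $O(|\T|^2)$ fixed-shape clauses computable in logspace. Composing this rewriting stage with the \LOGCFL{} evaluation stage remains in \LOGCFL{} because \LOGCFL{} is closed under logspace Turing reductions (a consequence of its closure under complement due to Borodin--Cook--Dymond--Ruzzo--Tompa), whence $\textsf{L}^{\LOGCFL}=\LOGCFL$. The main subtlety I anticipate is in the middle step: making the bookkeeping for $\sdep$ additive rather than multiplicative, so that the logarithmic bound is preserved relative to $|\Pi^*|$ rather than $|\Pi|$; this hinges entirely on the fact that $^*$ perturbs $\dep$, the weight function, and $\textsf{e}_\Pi$ each by at most an additive constant.
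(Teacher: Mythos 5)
Your overall route is the one the paper intends (the paper offers no details beyond ``Theorem~\ref{thm:LOGCFLdatalog} and the transformation $^*$''), and your treatment of width, depth, and the composition of the $\mathsf{L}^{\smash{\LOGCFL}}$ rewriting stage with \LOGCFL{} evaluation is fine. The flaw is in the weight-function bookkeeping for $\Pi^*$. You claim that a witnessing $\nu$ for $\sdep(\Pi,G)$ extends by setting $\nu(S^*)=\nu(S)$ and giving the new EDB predicates weight $0$, and that ``the new clauses trivially satisfy the weight inequality.'' This fails: in $\Pi^*$ \emph{every} predicate of $\Pi$, including its EDB predicates, reappears as a starred predicate occurring in the head of one of the added clauses, so each such $S^*$ must receive weight $\ge 1$, whereas the witnessing $\nu$ may assign $0$ to EDB predicates. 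Bumping these weights to $1$ then breaks the inequality $\nu(Q^*)\ge\sum_i\nu(P_i^*)$ for the starred copies of the original clauses (a clause $Q\leftarrow P\land E_1\land\dots\land E_m$ with $\nu(Q)=\nu(P)$ now needs $\nu(Q^*)\ge\nu(P^*)+m$), so the increase must propagate up the dependence graph and is multiplicative, not additive: one gets roughly $\nu^*(Q^*)\le\nu(Q)\,(\mathsf{e}_\Pi\cdot\dep(\Pi,G)+1)$ by induction on depth. This still rescues your conclusion, because it only adds $\log\mathsf{e}_\Pi+\log(\dep(\Pi,G)+1)+O(1)=O(\log|\Pi|)$ to the skinny depth, so $\sdep(\Pi^*,G)\le c'\log|\Pi^*|$ for a larger constant $c'$ --- but that is exactly the non-trivial point, and as written your ``additive $O(1)$'' justification is wrong.

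A cleaner repair, which avoids any weight recomputation, is to reverse the order of the two transformations: first apply Lemma~\ref{thm:NDLToSkinny} to $(\Pi,G)$ to obtain an equivalent skinny query of depth $\le\sdep(\Pi,G)\le c\log|\Pi|$ and width $\le\wid$, and only then apply $^*$. Since the added clauses have one body atom and the starred copies of skinny clauses still have at most two, $^*$ preserves skinniness, increases the depth by $1$ and the width to at most $\max\{\wid,2\}$; Lemma~\ref{prop:SkinnyNDLEvaluation} then gives the NAuxPDA bounds directly, with no appeal to weight functions for $\Pi^*$. Either fix completes your argument; without one of them, the middle step does not go through.
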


In the following subsections, we will exploit the results obtained above to construct optimal \NDL-rewritings for the three classes of tractable OMQs. Appendix A.6 gives concrete examples of our rewritings.


\subsection{LOGCFL rewritings for OMQ({$\displaystyle\od,\, \twi,\, \infty$})}\label{sec:boundedtw}

Recall~(see, e.g.,~\cite{DBLP:series/txtcs/FlumG06}) that a \emph{tree decomposition} of an undirected graph $\gfmn=(V,E)$ is a pair $(T,\lambda)$, where $T$ is an (undirected) tree and $\lambda$ a function from the nodes of $T$ to $2^V$ such that
\begin{itemize}
\item[--] for every $v \in V$, there exists a node $\nd$ with $v \in \lambda(\nd)$;

\item[--] for every $e \in E$, there exists a node $\nd$ with $e \subseteq \lambda(\nd)$;

\item[--] for every $v \in V$, the nodes $\{\nd\mid v \in \lambda(\nd)\} $ induce a connected subgraph of~$T$ (called a \emph{subtree} of $T$).
\end{itemize}
We call the set $\lambda(\nd) \subseteq V$ a \emph{bag for} $\nd$. The \emph{width} of $(T, \lambda)$ is $\max_{\nd\in T} |\lambda(\nd)| - 1$. The \emph{treewidth of a graph} $\gfmn$ is the minimum width over all tree decompositions of $\gfmn$. The \emph{treewidth of a CQ} is the treewidth of its Gaifman graph.

\begin{example}\em\label{ex:rewriting:1}
Consider the CQ $\q(x_0, x_7)$ depicted below (black nodes represent answer variables):\\[10pt]
\centerline{%
\begin{tikzpicture}[>=latex,xscale=0.75]\scriptsize
\node[bpoint,label=below:{$x_0$}] (v0) at (0,0) {};
\node[wpoint,label=below:{$x_1$}] (v1) at (1.5,0) {};
\node[wpoint,label=below:{$x_2$}] (v2) at (3,0) {};
\node[wpoint,label=below:{$x_3$}] (v3) at (4.5,0) {};
\node[wpoint,label=below:{$x_4$}] (v4) at (6,0) {};
\node[wpoint,label=below:{$x_5$}] (v5) at (7.5,0) {};
\node[wpoint,label=below:{$x_6$}] (v6) at (9,0) {};
\node[bpoint,label=below:{$x_7$}] (v7) at (10.5,0) {};
\begin{scope}[semithick,shorten >= 1pt, shorten <= 1pt]\tiny
\draw[->] (v0) to node[above] {$R$} (v1);
\draw[->] (v1)to node[above] {$S$}  (v2);
\draw[->] (v2) to node[above] {$R$}  (v3);
\draw[->] (v3) to node[above] {$R$}  (v4);
\draw[->] (v4) to node[above] {$S$}  (v5);
\draw[->] (v5) to node[above] {$R$}  (v6);
\draw[->] (v6) to node[above] {$R$}  (v7);
\end{scope}
\end{tikzpicture}}\\[10pt]
Its natural tree decomposition of treewidth 1 is based on the chain  $T$ of 7~vertices shown as bags below:\\[10pt]
\centerline{%
\begin{tikzpicture}[>=latex,yscale=0.7,xscale=0.75]\scriptsize
\draw[rounded corners=3mm,fill=gray!7] (-0.7,-0.55) rectangle (9.7,1.55);
\foreach \x in {0,1.5,3,4.5,6,7.5,9} {
\draw[fill=gray!20,thin] (\x,0.5) ellipse (0.4 and 0.95);
}
\foreach \x in {0,1.5,3,4.5,6,7.5} {
\draw[thick] (\x+0.5,0.5) -- +(0.5,0);
}
\node[bpoint,label=below:{$x_0$}] (v0) at (0,0.15) {};
\node[wpoint,label=above:{$x_1$}] (v1p) at (0,0.85) {};
\node[wpoint,label=below:{$x_1$}] (v1) at (1.5,0.15) {};
\node[wpoint,label=above:{$x_2$}] (v2p) at (1.5,0.85) {};
\node[wpoint,label=below:{$x_2$}] (v2) at (3,0.15) {};
\node[wpoint,label=above:{$x_3$}] (v3p) at (3,0.85) {};
\node[wpoint,label=below:{$x_3$}] (v3) at (4.5,0.15) {};
\node[wpoint,label=above:{$x_4$}] (v4p) at (4.5,0.85) {};
\node[wpoint,label=below:{$x_4$}] (v4) at (6,0.15) {};
\node[wpoint,label=above:{$x_5$}] (v5p) at (6,0.85) {};
\node[wpoint,label=below:{$x_5$}] (v5) at (7.5,0.15) {};
\node[wpoint,label=above:{$x_6$}] (v6p) at (7.5,0.85) {};
\node[wpoint,label=below:{$x_6$}] (v6) at (9,0.15) {};
\node[bpoint,label=above:{$x_7$}] (v7p) at (9,0.85) {};
\begin{scope}[semithick,shorten >= 1pt, shorten <= 1pt]\tiny
\draw[->] (v0) to node[left] {$R$} (v1p);
\draw[->] (v1)to node[left] {$S$}  (v2p);
\draw[->] (v2) to node[left] {$R$}  (v3p);
\draw[->] (v3) to node[left] {$R$}  (v4p);
\draw[->] (v4) to node[left] {$S$}  (v5p);
\draw[->] (v5) to node[left] {$R$}  (v6p);
\draw[->] (v6) to node[left] {$R$}  (v7p);
\end{scope}
\end{tikzpicture}%
}%
\end{example}

In this section, we prove the following:
\begin{theorem}\label{thm:logcfl-1}
For any fixed $\od \ge 0$ and $\twi \ge 1$, the class \textup{$\bdObtwCQ$} is skinny-reducible.
\end{theorem}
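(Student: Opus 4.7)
The plan is to fix $\od \ge 0$, $\twi \ge 1$, take an OMQ $\omq(\avec{x}) = (\T, \q(\avec{x}))$ in $\bdObtwCQ$, and encode the existence of a homomorphism $h \colon \q \to \can$ with $h(\avec{x}) = \avec{a}$ by an ordered NDL program of constant width and logarithmic skinny-depth. Since $\T$ has depth $\le \od$, each CQ variable is sent either to some $a \in \ind(\A)$ or to a witness $a\varrho_1 \cdots \varrho_k$ with $k \le \od$ rooted at an ABox individual. The set of admissible witness words in $\twords$ is of size $\poly(|\T|)$ for fixed $\od$, so the possible witness shapes can be enumerated statically at rewriting time.

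The key step is to drive the rewriting by a \emph{balanced} tree decomposition. I would compute a tree decomposition $(T, \lambda)$ of $\q$ of width $\le \twi$ and depth $O(\log |\q|)$, which exists by classical balancing results for bounded-treewidth graphs and is producible by an $\mathsf{L}^{\LOGCFL}$ transducer via the known $\LOGCFL$ algorithm for bounded-treewidth decomposition. Root $T$ and convert it to binary form, still of logarithmic depth. For every bag $t$ and every \emph{type assignment} $\sigma_t$ specifying, for each non-answer variable $v \in \lambda(t)$, whether $v$ is mapped to an ABox individual or to a fixed witness word $w_v \in \twords$ of length $\le \od$, introduce a fresh IDB predicate $P_t^{\sigma_t}$ whose non-parameter arguments $\avec{z}_t$ are the ABox-individual roots of the variables in $\lambda(t)$ and whose parameters are the answer variables $\avec{x}$. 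For an internal bag $t$ with children $t_1, t_2$ and each compatible pair $(\sigma_{t_1}, \sigma_{t_2})$ agreeing with $\sigma_t$ on shared variables, emit the clause
\[
P_t^{\sigma_t}(\avec{z}_t, \avec{x}) \;\leftarrow\; \varphi_t^{\sigma_t}(\avec{z}_t, \avec{x}) \,\land\, P_{t_1}^{\sigma_{t_1}}(\avec{z}_{t_1}, \avec{x}) \,\land\, P_{t_2}^{\sigma_{t_2}}(\avec{z}_{t_2}, \avec{x}),
\]
where $\varphi_t^{\sigma_t}$ is a constant-size conjunction of ABox-level atoms (plus normalisation-axiom-derived unary atoms) certifying that the $\q$-atoms first seen at $t$ are realised in $\can$ under $\sigma_t$; leaves receive analogous base clauses. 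The goal $G(\avec{x})$ is $P_r^{\sigma_r}(\avec{x}, \avec{x})$ for the root $r$ of $T$ with the all-ABox type on the answer variables. Correctness follows by induction on $T$, the invariant being that $P_t^{\sigma_t}(\avec{a}_t, \avec{a})$ holds over a complete $\A$ iff the subquery attached to $t$ admits a homomorphism into $\can$ mapping $\lambda(t)$ as prescribed by $\sigma_t$ with ABox roots $\avec{a}_t$.

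To close, I would verify the skinny-reducibility parameters. Bags have size $\le \twi+1$ and each clause mentions at most three bags, so $\wid(\Pi, G)$ is bounded by $3(\twi+1)$, depending only on $\twi$. The dependency graph mirrors the binary decomposition tree, giving $\dep(\Pi, G) = O(\log |\q|)$; the natural path-counting weight function $\nu$ has $\nu(G) \le 2^{O(\log|\q|)} = \poly(|\q|)$, and each clause contains $O(1)$ EDB atoms, so $\sdep(\Pi, G) \le c \log |\Pi|$ for a suitable $c$. The main obstacle is the $\mathsf{L}^{\LOGCFL}$ transducer itself: the balanced decomposition step is where the $\LOGCFL$ oracle is genuinely needed, whereas the subsequent enumeration of type assignments $\sigma_t$, the consistency checks across shared variables, and the emission of the corresponding clauses are all straightforward logspace tasks. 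I expect the bookkeeping around witness types at separator variables --- ensuring that shared variables receive matching shapes across adjacent bags and that the constant-size conjunctions $\varphi_t^{\sigma_t}$ faithfully certify the $\q$-atoms deriving from $\T$-axioms --- to be the most delicate part of the write-up.
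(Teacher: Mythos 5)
Your construction follows the same overall strategy as the paper's: IDB predicates indexed by a bag (or piece) of a tree decomposition together with a type that records, for each interface variable, whether it goes to an ABox individual or to a witness word in $\twords$ of length $\le\od$; clauses that check local compatibility via a conjunction of ABox-level atoms (including the $A_\varrho$ atoms certifying that the witness exists); correctness by induction along the decomposition; and the same accounting of width, depth and weight function to get $\sdep(\Pi,G)=O(\log|\Pi|)$. The genuine difference is \emph{where} the logarithmic depth comes from. You obtain it by postulating a \emph{balanced} tree decomposition of depth $O(\log|\q|)$, whereas the paper keeps an arbitrary decomposition of width $\le\twi$ and realises the balancing inside the NDL program, by recursively choosing splitting nodes via Lemma~\ref{l:6.8}: each piece $D\in\R$ has degree at most $2$, so it is described by at most two boundary nodes, its boundary type has constant size, and both the piece and its splitting node can be recomputed locally in logspace. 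That degree-$\le 2$ invariant is exactly what makes the paper's $\mathsf{L}^{\LOGCFL}$-transducer claim routine.

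Two points in your write-up need repair. First, a minor one: you cannot have width $\le\twi$ \emph{and} depth $O(\log|\q|)$ simultaneously; classical balancing results trade depth for width (e.g.\ width $3\twi+2$). This is harmless for skinny-reducibility, since only some constant width bound is needed, but it should be stated correctly (and it slightly changes your width and type-count bounds, still polynomial for fixed $\od,\twi$). Second, and more seriously, the assertion that a balanced decomposition "is producible by an $\mathsf{L}^{\LOGCFL}$ transducer via the known $\LOGCFL$ algorithm" is precisely the step you have not proved: the algorithm the paper cites produces an unbalanced decomposition of width $\le\twi$, and converting it into a logarithmic-depth one within logarithmic workspace (after the oracle calls) is a nontrivial additional construction, not a "straightforward logspace task"; naively it requires a logarithmic-depth recursion whose state a logspace machine cannot store. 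Either invoke an explicit logspace balancing result for bounded-treewidth decompositions and argue it composes with the $\LOGCFL$ oracle, or do what the paper does and replace external balancing by the splitting-node recursion of Lemma~\ref{l:6.8}, whose pieces admit logspace descriptions. Also note that your claim that each $\varphi_t^{\sigma_t}$ has $O(1)$ EDB atoms is not quite right (many query atoms can live in one bag), but this does not matter because $\log\mathsf{e}_\Pi$ is already budgeted for in the definition of skinny depth.
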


In a nutshell, we split recursively a given CQ $\q$ into sub-CQs $\q_D$ based on subtrees $D$ of the tree decomposition  of $\q$, and combine their rewritings into a rewriting of $\q$. To guarantee compatibility of these rewritings, we use `boundary conditions' $\tpd$ that describe the types of points on the boundaries of the $\q_D$ and, for each possible boundary condition $\tpd$, we define recursively a fresh IDB predicate $\rpred^{\tpd}_D$. We now formalise the construction and illustrate it using the CQ from Example~\ref{ex:rewriting:1}.

Fix a connected CQ $\q(\avec{x})$ and a tree decomposition $(T, \lambda)$ of its Gaifman graph $\gfmn = (V,E)$.
Let $D$ be a subtree of $T$. The \emph{size} of $D$ is the number of nodes in it.
A node $\nd$ of $D$ is called \emph{boundary} if $T$ has an edge $\{\nd,\nd'\}$ with $\nd'\notin D$. The \emph{degree} $\degree(D)$ of $D$ is the number of its boundary nodes ($T$ itself  is the only subtree of $T$ of degree  $0$).  We say that a node $\nd$ \emph{splits} $D$ into subtrees $D_1,\dots,D_k$ if the $D_i$ partition $D$ without~$\nd$: each node of $D$ except $\nd$ belongs to exactly one $D_i$.
\begin{lemma}[\cite{DBLP:conf/lics/BienvenuKP15}]\label{l:6.8}
Let $D$ be a subtree of $T$ of size $n > 1$.
If $\degree(D) =2$, then there is a node $\nd$ splitting $D$ into subtrees of size $\leq n/2 $ and degree~$\leq 2$ and, possibly, one subtree of size $<n-1$ and degree~$1$.
If $\degree(D) \leq 1$, then there is $\nd$ splitting $D$ into subtrees of size $\leq n/2 $ and degree $\leq 2$.
\end{lemma}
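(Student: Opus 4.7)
My plan is to base both parts on the classical tree-centroid fact---every tree on $n$ vertices has a vertex whose removal leaves each component with $\le n/2$ vertices---and on one structural observation: if $\nd$ splits $D$ into pieces $D_1,\dots,D_k$, then the boundary vertices of any $D_i$ (with respect to $T$) are contained in the set of boundary vertices of $D$ that lie in $D_i$, together with the unique vertex $v_i\in D_i$ adjacent to $\nd$ in $D$. Indeed, inside a tree there are no edges between different $D_j$, so the only way a vertex of $D_i$ can have a neighbour outside $D_i$ is via $T\setminus D$ (making it a boundary of $D$) or via $\nd$. Thus $\degree(D_i)$ is bounded by the number of boundary vertices of $D$ lying in $D_i$, plus one.

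For the case $\degree(D)\le 1$, I would simply take $\nd$ to be a centroid of $D$. Every $D_i$ has size $\le n/2$, and since $D$ has at most one boundary vertex to distribute, at most one $D_i$ inherits it; by the observation above every $D_i$ then has $\degree(D_i)\le 2$, as required.

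For $\degree(D)=2$, a plain centroid is insufficient because it may leave both boundary vertices $b_1,b_2$ of $D$ in a single piece. Instead, I would work along the unique $b_1$-to-$b_2$ path $P = u_0 u_1 \cdots u_m$ in $D$ (with $b_1=u_0$, $b_2=u_m$) and decompose $D$ into branches $T_i$, where $T_i$ is the maximal subtree of $D$ that contains $u_i$ but no other $u_j$; then $|D|=\sum_i|T_i|$. Picking $i^\ast$ as the smallest index with $|T_0|+\cdots+|T_{i^\ast}|>n/2$ and setting $\nd=u_{i^\ast}$, the components of $D\setminus\{\nd\}$ split into the ``left'' piece $L=T_0\cup\cdots\cup T_{i^\ast-1}$ of size $\le n/2$ (containing $b_1$, degree $\le 2$), the ``right'' piece $R=T_{i^\ast+1}\cup\cdots\cup T_m$ of size $<n/2$ (containing $b_2$, degree $\le 2$), and the off-path components of $T_{i^\ast}\setminus\{\nd\}$, each of which has degree exactly $1$ since it contains neither $b_1$ nor $b_2$.

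The main obstacle, and the only delicate point, is that an individual off-path component can exceed $n/2$ in size. I would dispatch this by a pigeonhole argument: at most one off-path component $C$ can have size $>n/2$, because two such would already sum to more than $n$. Moreover, since $b_1\ne b_2$ forces at least one of $L,R$ to be non-empty, we have $|T_{i^\ast}|\le n-1$ and therefore $|C|\le |T_{i^\ast}|-1\le n-2 < n-1$. This single $C$ is precisely the one exceptional subtree ``of size $<n-1$ and degree $1$'' allowed by the statement, while all other components meet the size-$\le n/2$ and degree-$\le 2$ bounds, completing the argument.
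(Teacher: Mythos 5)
Your proof is correct. Note that the paper itself gives no proof of this lemma---it is imported by citation from the LICS'15 paper---so there is nothing in this text to compare against line by line; your argument stands as a sound, self-contained justification. The two ingredients you need both check out: the structural observation that a component $D_i$ of $D\setminus\{\nd\}$ can only acquire boundary nodes from the boundary nodes of $D$ it inherits plus the unique vertex adjacent to $\nd$ (there are no $T$-edges between distinct components, since $D$ is a subtree of a tree), and, in the $\degree(D)=2$ case, the prefix-sum choice of $u_{i^\ast}$ on the $b_1$--$b_2$ path, which makes the left piece have size $\le n/2$, the right piece size $<n/2$, and every off-path piece degree exactly $1$. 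The delicate point you flagged is also handled correctly: at most one off-path component can exceed $n/2$ by pigeonhole (their total size is $|T_{i^\ast}|-1<n$), and since $\degree(D)=2$ forces $b_1\neq b_2$, at least one of $L,R$ is nonempty, giving $|T_{i^\ast}|\le n-1$ and hence the exceptional component has size at most $n-2<n-1$, exactly as the statement requires.
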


In Example~\ref{ex:rewriting:1}, $t$ splits $T$ into $D_1$ and $D_2$ as follows:\\[10pt]
\centerline{%
\begin{tikzpicture}[>=latex,yscale=0.75,xscale=0.75]\scriptsize
\draw[rounded corners=3mm,fill=gray!7] (-0.9,-0.6) rectangle (9.9,1.6);
\draw[rounded corners=3mm,fill=gray!50] (-0.7,-0.5) rectangle (3.7,1.5);
\node at (2.3,1.2) {\normalsize $D_1$};
\draw[rounded corners=3mm,fill=gray!50] (5.3,-0.5) rectangle (9.7,1.5);
\node at (6.8,1.2) {\normalsize $D_2$};
\node at (4.9,1.35) {\normalsize $t$};
\foreach \x in {0,1.5,3,4.5,6,7.5,9} {
\draw[fill=gray!20,thin] (\x,0.5) ellipse (0.4 and 0.95);
}
\foreach \x in {0,1.5,3,4.5,6,7.5} {
\draw[thick] (\x+0.5,0.5) -- +(0.5,0);
}
\node[bpoint,label=below:{$x_0$}] (v0) at (0,0.15) {};
\node[wpoint,label=above:{$x_1$}] (v1p) at (0,0.85) {};
\node[wpoint,label=below:{$x_1$}] (v1) at (1.5,0.15) {};
\node[wpoint,label=above:{$x_2$}] (v2p) at (1.5,0.85) {};
\node[wpoint,label=below:{$x_2$}] (v2) at (3,0.15) {};
\node[wpoint,label=above:{$x_3$}] (v3p) at (3,0.85) {};
\node[wpoint,label=below:{$x_3$}] (v3) at (4.5,0.15) {};
\node[wpoint,label=above:{$x_4$}] (v4p) at (4.5,0.85) {};
\node[wpoint,label=below:{$x_4$}] (v4) at (6,0.15) {};
\node[wpoint,label=above:{$x_5$}] (v5p) at (6,0.85) {};
\node[wpoint,label=below:{$x_5$}] (v5) at (7.5,0.15) {};
\node[wpoint,label=above:{$x_6$}] (v6p) at (7.5,0.85) {};
\node[wpoint,label=below:{$x_6$}] (v6) at (9,0.15) {};
\node[bpoint,label=above:{$x_7$}] (v7p) at (9,0.85) {};
\begin{scope}[semithick,shorten >= 1pt, shorten <= 1pt]\tiny
\draw[->] (v0) to node[left] {$R$} (v1p);
\draw[->] (v1)to node[left] {$S$}  (v2p);
\draw[->] (v2) to node[left] {$R$}  (v3p);
\draw[->] (v3) to node[left] {$R$}  (v4p);
\draw[->] (v4) to node[left] {$S$}  (v5p);
\draw[->] (v5) to node[left] {$R$}  (v6p);
\draw[->] (v6) to node[left] {$R$}  (v7p);
\end{scope}
\end{tikzpicture}}

\smallskip

We define recursively a set $\R$ of subtrees of $T$, a binary `predecessor' relation $\prec$ on $\R$,  and a function $\sigma$ on $\R$ indicating the splitting node. We begin by adding $T$ to $\R$. Take any $D\in \R$ that has not been split yet. If $D$  is of size~1, then $\sigma(D)$ is the only node of $D$. Otherwise, by Lemma~\ref{l:6.8}, we find a node $\nd$ in $D$ that splits it into $D_1,\dots,D_k$. We set $\sigma(D) = t$ and, for  $1\leq i\leq k$, add  $D_i$ to $\R$ and set $D_i \prec D$; then, we apply the procedure recursively to each of $D_1,\dots,D_k$.
In Example~\ref{ex:rewriting:1} with $\nd$ splitting $T$, we have $\sigma(T) = t$, $D_1 \prec T$ and
$D_2 \prec T$.

For each $D\in\R$, we recursively define a set of atoms
\begin{equation*}
\q_D \ \ = \ \ \bigl\{S(\avec{z}) \in \q \mid \avec{z} \subseteq \lambda(\sigma(D)) \bigr\} \ \cup \ \bigcup_{D' \prec D} \q_{D'}.
\end{equation*}
By the definition of tree decomposition, $\q_T = \q$. Denote by $\avec{x}_D$ the subset of  $\avec{x}$ that occurs in $\q_D$. In Example~\ref{ex:rewriting:1}, $\avec{x}_{T} = \{x_0, x_7\}$,  $\avec{x}_{D_1} = \{x_0\}$
and $\avec{x}_{D_2} = \{x_7\}$.
Let $\dD$  be the union of all $\lambda(\nd) \cap\lambda(\nd')$ for
boundary nodes $\nd$ of $D$ and its neighbours $\nd'$ in $T$ \emph{outside}~$D$.
In our example, $\partial T=\emptyset$, $\partial D_1 =\{x_3\}$ and $\partial D_2 =\{x_4\}$.

Let $\T$ be an ontology of depth $\le \od$. A \emph{type} is a partial map $\avec{w}$ from
$V$ to $\twords$; its domain is denoted by $\dom(\avec{w})$. The unique partial type with $\dom(\avec{\varepsilon}) = \emptyset$ is denoted by $\avec{\varepsilon}$.
We use types to represent how variables are mapped into $\can$, with $\avec{w}(z)=w$ indicating that $z$ is mapped to an element of the form $a w$ (for some $a \in \ind(\A)$), and
with $\avec{w}(z)=\varepsilon$ that $z$ is mapped to an individual constant.
We say that a type $\avec{w}$ is \emph{compatible} with
a bag $\nd$ if, for all $y,z\in\lambda(\nd)\cap\dom(\avec{w})$, we have
\begin{itemize}
\item[--]  if $z \in \avec{x}$, then $\avec{w}(z) = \varepsilon$;
\item[--] if $A(z) \in \q$, then either $\avec{w}(z)=\varepsilon$ or
$\avec{w}(z)= w \varrho $ with $\T \models \exists y\,\varrho (y,x) \to  A(x)$;
\item[--] if $P(y, z)\in \q$, then one of the three conditions holds:
(\emph{i}) $\avec{w}(y) = \avec{w}(z) =\varepsilon$; (\emph{ii}) $\avec{w}(y) = \avec{w}(z)$ and \mbox{$\T\models P(x,x)$}; (\emph{iii}) $\T\models \varrho(x,y) \to P(x,y)$ and either $\avec{w}(z) = \avec{w}(y) \varrho$ or
$\avec{w}(y) = \avec{w}(z) \varrho^-$.
\end{itemize}

In the sequel we abuse notation and use sets of variables in place of sequences assuming that they are ordered in some (fixed) way. For example, we use $\avec{x}_D$ for a tuple of variables in the set $\avec{x}_D$ (ordered in some way). Also, given a tuple $\avec{a}\in \ind(\A)^{|\avec{x}_D|}$  and $x\in\avec{x}_D$, we write $\avec{a}(x)$ to refer to the component of $\avec{a}$ that corresponds to $x$ (that is, the component with the same index).

We now define an NDL-rewriting of $\omq(\avec{x}) = (\T,\q(\avec{x}))$. For any $D\in\R$ and type $\tpd$ with $\dom (\tpd)=\dD$, let $\rpred^{\tpd}_D(\dD, \avec{x}_D)$ be a fresh IDB predicate with parameters~$\avec{x}_D$ (note that $\dD$ and $\avec{x}_D$ may be not disjoint). For each type
$\tpr$ with $\dom(\tpr) = \lambda(\sigma(D))$ such that $\tpr$ is compatible with $\sigma(D)$ and agrees with $\tpd$ on their common domain, the NDL program $\Pi_{\omq}^{\textsc{Log}}$ contains
\begin{equation*}
\rpred^{\tpd}_D(\dD, \avec{x}_D)  \leftarrow  \mathsf{At}^{\tpr} \ \land
\bigwedge_{D' \prec D} \rpred^{(\tpr\cup\tpd) \restr\dDp}_{D'}(\dDp,\avec{x}_{D'}),
\end{equation*}
where $(\tpr\cup\tpd) \restr\dDp$ is
the restriction of the union \mbox{$\tpr\cup\tpd$} to $\dDp$ (since
$\dom(\tpr\cup\tpd)$  covers $\dDp$,
the domain of
the restriction is $\dDp$),
and $\mathsf{At}^{\tpr}$ is the conjunction of
\begin{itemize}
\item[(a)] $A(z)$, for  $A(z)\in\q$ with $\tpr(z) = \varepsilon$, and $P(y, z)$, for  $P(y,z)\in \q$ with $\tpr(y) = \tpr(z) = \varepsilon$;
\item[(b)] $y =  z$, for $P(y,z)\in \q$ with  $\tpr(y) \ne\varepsilon$ or $\tpr(z) \ne \varepsilon$;
\item[(c)] $A_\varrho(z)$, for $z$ with $\tpr(z) = \varrho w$, for some $w$.
\end{itemize}
The conjuncts in~(a) ensure that atoms all of whose variables are assigned $\varepsilon$
hold in the data instance. The conjuncts in~(b) ensure that if one variable in a binary atom is not mapped to $\varepsilon$,
then the images of both its variables share the same initial individual.
Finally, the conjuncts in~(c) ensure that if a variable is to be mapped to $a\varrho w$,
then  $a\varrho w$ is indeed in the domain of $\can$.
\begin{example}\em\label{ex:rewriting:2}
With the query in Example~\ref{ex:rewriting:1}, consider now the following ontology $\T$:
\begin{align*}
P(x,y) & \to  S(x,y), \quad  & A_P(x) & \leftrightarrow \exists y \, P(x,y),\\
P(x,y) & \to R(y,x), \quad &
A_{P^-}(x) & \leftrightarrow \exists y \, P(y,x)
\end{align*}
(the remaining normalisation axioms are omitted).
Since $\lambda(\nd) = \{ x_3, x_4\}$, there are two types compatible with~$\nd$ that can contribute to the rewriting:
$\tpr_1 = \{ x_3\mapsto\nolinebreak \varepsilon, \ x_4\mapsto \varepsilon\}$
and
$\tpr_2 = \{ x_3 \mapsto \varepsilon, \ x_4\mapsto P^-\}$.
So we have
$\mathsf{At}^{\tpr_1}  = R(x_3, x_4)$ and
$\mathsf{At}^{\tpr_2}  = A_{P^-}(x_4) \land (x_3 = x_4)$.
Thus, the predicate $\rpred^{\avec{\varepsilon}}_{T}$ is defined by two clauses with the head $\rpred^{\avec{\varepsilon}}_{T} (x_0,x_7)$ and the following bodies:
\begin{align*}
& \rpred^{x_3\mapsto\varepsilon}_{D_1} \!(x_3, x_0)\land R(x_3, x_4)\land \rpred^{x_4\mapsto\varepsilon}_{D_2}\!(x_4, x_7),\\
& \rpred^{x_3\mapsto\varepsilon}_{D_1} \!(x_3,x_0)\land A_{P^-}(x_4)\land (x_3 =  x_4)\land\rpred^{x_4\mapsto P^-}_{D_2}\!\! (x_4, x_7),
\end{align*}
for $\tpr_1$ and~$\tpr_2$, respectively.
Although \mbox{$\{ x_3 \mapsto P, \ x_4\mapsto \varepsilon\}$} is also compatible with $\nd$, its predicate $\rpred^{x_3\mapsto P}_{D_1}$ will have no definition in the rewriting, and hence can be omitted. The same is true of the other compatible  types
$\{x_3 \mapsto \varepsilon, \ x_4\mapsto R \}$ and $\{x_3 \mapsto R^-, \ x_4\mapsto \varepsilon \}$.
\end{example}

By induction on $\prec$, one can now show that $(\Pi^{\textsc{Log}}_\omq, \rpred^{\avec{\varepsilon}}_T)$ is a rewriting of $\omq(\avec{x})$; see Appendix~\ref{appA2} for details.

%

Now fix $\od$ and $\twi$, and consider $\omq(\avec{x}) = (\T,\q(\avec{x}))$ from $\bdObtwCQ$. Let $T$ be a tree decomposition of $\q$ of tree\-width~$\leq \twi$; we may assume without loss of generality that $T$ has at most $|\q|$ nodes. We take the following weight function:  $\nu(\rpred^{\tpd}_D) = |D|$, where $|D|$ is the size of $D$, that is, the number of nodes in it. Clearly, $\nu(\rpred^{\avec{\varepsilon}}_T) \le |\omq|$.  By Lemma~\ref{l:6.8}, we have
\begin{align*}
& \wid(\Pi^{\smash{\textsc{Log}}}_\omq,\rpred^{\avec{\varepsilon}}_T)\le \max_{D} |\dD \cup \lambda(\sigma(D))| \le 3(\twi + 1),\\
&\sdep(\Pi^{\smash{\textsc{Log}}}_\omq,\rpred^{\avec{\varepsilon}}_T) \le 4 \log |T| + 2\log |\omq| \le 6 \log |\omq|.
\end{align*}
Since $|\R| \le |T|^2$ and
there are at most $|\T|^{2\od(\twi+1)}$ options for~$\tpd$, there are polynomially
many predicates $\rpred^{\tpd}_D$, and so $\Pi^{\smash{\textsc{Log}}}_\omq$ is of polynomial size. Thus, by Corollary~\ref{cor:weight}, the constructed NDL-rewriting over arbitrary data instances  can be evaluated in \LOGCFL{}. Finally, we note that a tree decomposition of treewidth $\leq \twi$ can be computed using an $\textsf{L}^{\smash{\LOGCFL}}$-transducer~\cite{DBLP:conf/icalp/GottlobLS99}, and so the NDL-rewri\-ting can also be constructed by an $\textsf{L}^{\smash{\LOGCFL}}$-transducer.

The obtained NDL-rewriting shows that answering OMQs $(\T,\q(\avec{x}))$ with $\T$ of finite depth $\od$ and $\q$ of tree\-width $\twi$ over any data instance $\A$ can be done  in time
\begin{equation}\label{eq:time1}
\textit{poly}(|\T|^{\od \twi},\,|\q|,\,|\A|^{\twi}).
\end{equation}
Indeed, we can evaluate $(\Pi_\omq^{\smash{\textsc{Log}}},\rpred^{\avec{\varepsilon}}_T(\avec{x}))$ in time polynomial in $|\Pi_\omq^{\smash{\textsc{Log}}}|$ and $|\A|^{\wid(\Pi^{\smash{\textsc{Log}}}_{\omq},\rpred^{\avec{\varepsilon}}_T)}$, which are bounded by a polynomial in
$|\T|^{2\od(\twi+1)}$, $|\q|$ and $|\A|^{2(\twi+1)}$.


\subsection{NL rewritings for OMQ($\displaystyle\od,\, 1,\, \nlf$)}\label{sec:5}

\begin{theorem}\label{thm:ndl}
Let $\od \ge 0$ and $\nlf \ge 2$ be fixed. There is an $\mathsf{L}^{\smash{\NL}}$-transducer that, given an OMQ $\omq = (\T,\q(\avec{x}))$ in \textup{$\bdOblCQ$}, constructs its polynomial-size linear NDL-rewriting of width $\le 2\nlf$.
\end{theorem}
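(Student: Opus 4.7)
The plan is to exploit the bounded-leaf structure of the tree-shaped Gaifman graph $\gfmn$ of $\q$ to process it via a single linear traversal, producing a linear NDL program that simulates this walk. Since $\gfmn$ is a tree with at most $\nlf$ leaves, it has at most $\nlf - 1$ branching nodes (vertices of degree $\geq 3$), and a depth-first walk from any fixed leaf $r$ visits each edge exactly twice. I would index the walk steps by $s=0,\dots,m$ with $m = O(|\q|)$ and, at each step $s$, define the \emph{frontier} $F_s$ as the current walk vertex together with all already-visited branching nodes whose subtrees have not yet been fully explored. A simple counting argument on tree degrees shows $|F_s| \le \nlf$ for every $s$, since each unvisited subtree hanging off a branching node on the current path contains at least one leaf distinct from $r$ and from all already-visited leaves.

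I would then adapt the type-based clause machinery of Section~\ref{sec:boundedtw} to this linear setting. IDB predicates $\rew^{\tpd}_s(F_s,\avec{x})$ are indexed by a step $s$ and by a \emph{type} $\tpd$ assigning to each $z \in F_s$ a word in $\twords$ of length $\le \od$ (with $\varepsilon$ meaning ``mapped to a constant''); the answer variables $\avec{x}$ are taken as parameters. The clauses take the form
\begin{equation*}
\rew^{\tpd'}_{s+1}(F_{s+1},\avec{x}) \leftarrow \rew^{\tpd}_{s}(F_s,\avec{x}) \land \mathsf{At}^{\tpd\cup\tpd'},
\end{equation*}
where $\mathsf{At}^{\tpd\cup\tpd'}$ is assembled, as in Section~\ref{sec:boundedtw}, from the $\q$-atoms whose variables get confirmed at step $s+1$, augmented with equalities for variables assigned the same word and with $A_\varrho(z)$ guarding the types. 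Each clause contains exactly one IDB atom in its body---so the program is linear---and its non-parameter variables all lie in $F_s \cup F_{s+1}$, giving width $\le 2\nlf - 1$ over complete data instances. An application of Lemma~\ref{linear-arbitrary} then lifts this to an NDL-rewriting over arbitrary data of width $\le 2\nlf$. Polynomial size follows because there are $O(|\q|)$ steps and, with $\od$ and $\nlf$ fixed constants, $O(|\T|^{\od\nlf})$ possible types per step.

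Correctness is proved by induction on $s$, exactly along the lines sketched for Theorem~\ref{thm:logcfl-1}: the atom $\rew^{\tpd}_s(\avec{z},\avec{a})$ holds iff there is a partial homomorphism of the part of $\q$ processed by step~$s$ into $\can$ that agrees with $\tpd$ on $F_s$ and sends $\avec{x}$ to $\avec{a}$. For the $\mathsf{L}^\NL$ bound, tree-shape and bounded-leaf checks are in $\mathsf{L}$ (Reingold's undirected connectivity), computing the depth of $\T$ is in $\NL$ (as remarked in Section~\ref{sec:prelims}), and once these have been resolved, computing the DFS walk, the frontier sequence, and then writing out the clauses can all be done in deterministic logspace. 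The main obstacle I anticipate is setting up the traversal and frontier with enough care that (i) $|F_s| \le \nlf$ throughout, (ii) every atom of $\q$ is confirmed at exactly one step $s$ with both of its variables sitting in $F_s$, and (iii) the transition between consecutive frontiers changes only a constant number of variables---so that the clause built for each $(s,\tpd,\tpd')$ is well defined and shares enough of its variables with its predecessor in the walk; the rest is a direct transcription of the construction from Section~\ref{sec:boundedtw}.
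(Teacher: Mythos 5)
Your overall strategy---replace the paper's decomposition by a linear traversal of the tree-shaped CQ, attach $\twords$-types to a bounded set of ``live'' variables, transcribe the $\mathsf{At}$-machinery of Section~\ref{sec:boundedtw} into one linear clause per traversal step, and finish with Lemma~\ref{linear-arbitrary} and Theorem~\ref{linear-nl}---is viable, and it is genuinely different from what the paper does. The paper does not walk the tree at all: it fixes a root, takes as its ``frontier'' the slice $\avec{z}^n$ of all variables at distance $n$ from the root (each slice has at most $\nlf$ variables precisely because $\q$ has at most $\nlf$ leaves), and writes one clause per pair of adjacent slices, $G^{\avec{w}}_{n} \leftarrow \mathsf{At}^{\avec{w}\cup\tpr} \land G^{\tpr}_{n+1}$. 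This buys exactly what your frontier is meant to buy---every variable is live in exactly one slice, every binary atom of $\q$ joins two adjacent slices, width $\le 2\nlf$, at most $|\q|\cdot|\T|^{2\od\nlf}$ predicates---but with no traversal order, no branching-node bookkeeping, and a mechanical induction on the slice index. Your counting argument for $|F_s|\le\nlf$ is correct (each active branching node owns an unvisited leaf distinct from $r$ and from the others), and the complexity bookkeeping and the appeal to Lemma~\ref{linear-arbitrary}/Theorem~\ref{linear-nl} are exactly as in the paper.

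The genuine gap is in the Euler-tour indexing. If the steps are the $2(|\vars(\q)|-1)$ edge-traversals of a depth-first walk, vertices re-enter the frontier during backtracking: on a path $r$--$a$--$b$--$c$ the frontiers are $\{r\},\{a\},\{b\},\{c\},\{b\},\{a\},\{r\}$, so $a$ and $b$ each occupy two non-contiguous intervals. At a backtracking step the re-entering variable occurs in the head of the clause but in no body atom (all of its $\q$-atoms were confirmed on the forward pass), which violates the requirement that head variables occur in the body; and if you re-confirm an atom to bind it, the invariant you state (``$\rew^{\tpd}_s$ holds iff there is a homomorphism of the part of $\q$ processed by step $s$ into $\can$ agreeing with $\tpd$ on $F_s$'') fails in the only-if direction at those steps, because the re-entered variable's earlier commitments were forgotten when it left the frontier. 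So the induction is not ``a direct transcription'' of Section~\ref{sec:boundedtw}: contiguity of each variable's frontier interval is exactly what makes that style of induction go through. The fix is not hard---index steps by DFS discovery times rather than by the full walk, so each variable stays in the frontier in one contiguous interval from its discovery to the discovery of its last child, and every atom incident to it is confirmed inside that interval with a single carried (constant, type) pair; alternatively, keep the Euler tour, re-confirm atoms on backtracking steps, and prove soundness from the forward pass alone with a weaker invariant. Either repair works and keeps the $\mathsf{L}^{\NL}$ bound (a tree's Euler tour and discovery order are logspace-computable, and the $\T$-entailment tests used in $\mathsf{At}$ and in the compatibility conditions are where the \NL{} oracle is needed), but as written this point is an actual hole rather than routine, and it is precisely the complication the paper's distance-slice decomposition is designed to avoid.
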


Let $\T$ be an ontology of finite depth $\od$,
and let $\q(\avec{x})$ be a tree-shaped CQ with at most $\nlf$ leaves.
Fix one of the variables of $\q$ as root, and let $M$ be the maximal distance to a leaf from the root.
For $0 \leq n \leq M$, let
$\avec{z}^n$ denote the set of all variables of $\q$ at distance $n$
from the root; clearly, $|\avec{z}^n| \le \nlf$.
We call the $\avec{z}^n$ \emph{slices} of $\q$ and observe that
they satisfy the following: for every
$P(z,z') \in \q$ with $z \neq z'$, there exists $n < M$ such that
\begin{equation*}
\text{either } z\in \avec{z}^n \text{ and } z'\in \avec{z}^{n+1} \ \ \text{ or } \ \ z'\in\avec{z}^n \text{ and } z\in \avec{z}^{n+1}.
\end{equation*}
For $0 \leq n \leq M$, let $\q_n(\avec{z}^n_\exs, \avec{x}^n)$ be the query
consisting of all atoms $S(\avec{z})$ of $\q$ such that $\avec{z} \subseteq \bigcup_{n \leq k \leq M} \avec{z}^k$,
where
$\avec{x}^n$ is the subset of $\avec{x}$ that occurs in $\q_n$ and
$\avec{z}^n_\exs = \avec{z}^n \setminus \avec{x} $.

By a \emph{type for slice} $\avec{z}^n$, we mean
a total map $\avec{w}$ from $\avec{z}^n$ to $\twords$.
Analogously to Section~\ref{sec:boundedtw},
we define the notions of types compatible with slices.
Specifically,  we call $\avec{w}$ \emph{locally compatible} with $\avec{z}^{n}$ if for every $z \in \avec{z}^n$:
\begin{itemize}
\item[--]  if $z \in \avec{x}$, then $\avec{w}(z) = \varepsilon$;
\item[--] if $A(z) \in \q$, then either $\avec{w}(z)= \varepsilon$ or
$\avec{w}(z)= w\varrho$ with $\T\models \exists y\,\varrho(y,x) \to A(x)$;
\item[--] if $P(z, z) \in \q$, then either $\avec{w}(z) = \varepsilon$ or $\T\!\models\! P(x,x)$.
\end{itemize}
If $\tpd, \tpr$ are types for $\avec{z}^{n}$ and $\avec{z}^{n+1}$, respectively,
then we say $(\tpd, \tpr)$ \emph{is compatible} with
$(\avec{z}^{n}, \avec{z}^{n+1})$ if
 $\avec{w}$ is locally compatible with $\avec{z}^{n}$,
$\tpr$ is locally compatible with $\avec{z}^{n+1}$,
\begin{itemize}
\item[--] for every
$P(z, z') \in \q$ with $z\in\avec{z}^n$ and \mbox{$z'\in\avec{z}^{n+1}$}, one of the three condition holds:
$\tpd(z) =\tpr(z') = \varepsilon$, or $\tpd(z) =\tpr(z')$ with $\T\models P(x,x)$, or \mbox{$\T\models \varrho(x,y) \to P(x,y)$} with either
$\tpr(z')= \tpd(z) \varrho$ or
$\tpd(z) = \tpr(z') \varrho^-$.
\end{itemize}

Consider the \NDL{} program $\Pi_\omq^{\textsc{Lin}}$ defined as follows.
For every $0 \leq n < M$
and every pair of types $(\avec{w}, \tpr)$ that is compatible with $(\avec{z}^n, \avec{z}^{n+1})$, we include the clause
\begin{equation*}
G^{\avec{w}}_{n}(\avec{z}^{n}_\exs, \avec{x}^n) \leftarrow
\mathsf{At}^{\avec{w}\cup\tpr}(\avec{z}^n,\avec{z}^{n+1})  \land G^{\tpr}_{n+1}(\avec{z}^{n+1}_\exs, \avec{x}^{n+1}),
\end{equation*}
where $\avec{x}^n$ are the parameters of
$G^{\avec{w}}_{n}$ and $\mathsf{At}^{\avec{w}\cup\tpr}(\avec{z}^n,\avec{z}^{n+1})$
is the conjunction of atoms~(a)--(c) as defined in Section~\ref{sec:boundedtw}, for the union $\avec{w}\cup\tpr$.
For every type $\avec{w}$ locally compatible with $\avec{z}^M$, we include the clause
\begin{equation*}
G^{\avec{w}}_{M}(\avec{z}^{M}_\exs, \avec{x}^M) \leftarrow \mathsf{At}^{\avec{w}}(\avec{z}^M).
\end{equation*}
(Recall that $\avec{z}^M$ is a disjoint union of $\avec{z}^M_\exs$ and $\avec{x}^M$.) We use $G$ with parameters $\avec{x}$ as the goal predicate and include
 $G(\avec{x}) \leftarrow G^{\avec{w}}_{0}(\avec{z}^{0}_\exs, \avec{x})$
for every predicate $G^{\avec{w}}_{0}$ occurring in the head of one of the preceding clauses.

By induction on $n$, we show in Appendix~\ref{appA3} that  $(\Pi_\omq^{\textsc{Lin}},G(\avec{x}))$
is a rewriting of $(\T, \q(\avec{x}))$ over complete data instances.
%
%
%
It should be clear that $\Pi^{\textsc{Lin}}_\omq$ is a linear NDL program
of  width $\le 2 \nlf$ and containing $\le |\q|\cdot |\T|^{\smash{2\od\nlf}}$ predicates. Moreover,
it takes only logarithmic space to store a type~$\avec{w}$, which allows us to show
that
 $\Pi^{\textsc{Lin}}_\omq$ can be computed
by an $\mathsf{L}^{\smash{\NL}}$-transducer. We apply Lemma~\ref{linear-arbitrary} to obtain
an NDL-rewriting for arbitrary data instances,
and then use Theorem~\ref{linear-nl} to conclude that the resulting
program can be evaluated in \NL.

The obtained  NDL-rewriting shows  that answering OMQs $(\T,\q(\avec{x}))$
with $\T$ of finite depth $\od$ and tree-shaped $\q$ with $\nlf$ leaves over any data $\A$ can be done  in time
\begin{equation}\label{eq:time2}
\textit{poly}(|\T|^{\od\nlf},\, |\q|,\, |\A|^{\nlf} ).
\end{equation}
Indeed, $(\Pi_\omq^{\smash{\textsc{Lin}}},\rpred(\avec{x}))$ can be evaluated  in time polynomial
in $|\Pi^{\smash{\textsc{Lin}}}_{\omq}|$ and $|\A|^{\wid(\Pi^{\smash{\textsc{Lin}}}_{\omq},G)}$, which are bounded by a polynomial in $|\T|^{2\od\nlf}$, $|\q|$ and $|\A|^{2\nlf}$.

\subsection{LOGCFL rewritings for OMQ($\displaystyle\infty,\,1,\,\nlf$)}\label{sec:boundedleaf}

Unlike the previous two classes, answering OMQs in $\blCQ$ can be  harder---\LOGCFL-complete---than evaluating their CQs, which can be done in \NL.
\begin{theorem}\label{thm:logcfl-2}
For any fixed $\nlf \ge 2$, \textup{$\blCQ$} is skinny-reducible.
\end{theorem}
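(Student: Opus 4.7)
The plan is to extend the construction of Section~\ref{sec:boundedtw} to the case of unbounded ontology depth, exploiting the bound of $\nlf$ leaves to keep the NDL-rewriting succinct and of logarithmic skinny depth. The main new obstacle compared to Section~\ref{sec:boundedtw} is that the set $\twords$ of witness words can now be infinite, so the boundary types $\tpd$ enumerated there can no longer be stored in polynomially many bits. Bounded-leaf tree-shaped CQs, however, have two structural features that make this tractable: only $O(\nlf)$ ``distinguished'' vertices (the root, the branching points, the leaves and the answer variables), with all remaining vertices lying on at most $O(\nlf)$ maximal non-branching chains.

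First, I would root $\q$ at an arbitrary vertex, identify its distinguished vertices, and contract every maximal non-branching chain into a single super-edge, obtaining a tree $\q^\circ$ of size $O(\nlf)$. Applying Lemma~\ref{l:6.8} to a natural tree decomposition of $\q^\circ$ and then expanding the chains back yields a balanced decomposition $\R$ of recursion depth $O(\log |\q|)$ in which every split node is distinguished. For each subtree $D \in \R$ I would introduce an IDB predicate $\rpred^{\tpd}_D(\dD, \avec{x}_D)$ essentially as in Section~\ref{sec:boundedtw}, but with the boundary type $\tpd$ restricted to map $\dD$ into $\rni \cup \{\varepsilon\}$: only the first letter of each anonymous excursion is recorded at the boundary, since the remainder of the excursion will be handled by a dedicated chain predicate introduced next.

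Second, for each maximal chain $\pi$ of $\q$ and each pair $\varrho, \varrho' \in \rni \cup \{\varepsilon\}$ of possible entry directions at its endpoints, I would introduce an auxiliary IDB predicate $C_\pi^{\varrho,\varrho'}(u,v)$ whose intended meaning is that $\pi$ admits a homomorphism into $\can$ mapping its two endpoints to $u$ and $v$ in the specified directions. Because the anonymous part of $\can$ is a disjoint union of trees hanging off individuals of $\A$, any such homomorphism decomposes into at most $|\pi|$ alternating up/down segments within one tree, and this shape can be captured by a balanced recursive NDL definition of $C_\pi^{\varrho,\varrho'}$: the base cases are atomic tests of the form $A_\varrho(u)$ on the completed data instance, while the inductive clauses split $\pi$ in half by guessing a single existentially quantified intermediate element shared by the two halves. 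This yields depth $O(\log |\pi|)$ and width $O(1)$ for the chain predicates.

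Third, combining all clauses into a single NDL program $\Pi_\omq$, I would use the weight function $\nu(\rpred^{\tpd}_D) = |D|$ and $\nu(C_\pi^{\varrho,\varrho'}) = |\pi|$, which gives $\nu(G) \le |\omq|$. Checking that the overall width is $O(\nlf)$ and that $\sdep(\Pi_\omq, G) = O(\log |\omq|) \le c \log |\Pi_\omq|$ is then routine, and the transducer producing $\Pi_\omq$ runs in $\mathsf{L}^{\smash{\LOGCFL}}$, relying on the tree-decomposition algorithm of \cite{DBLP:conf/icalp/GottlobLS99} for the top-level splitting and on direct enumeration for the $O(\nlf)$ chain predicates. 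Skinny-reducibility, and thus the claimed \LOGCFL{} bound, then follow from Corollary~\ref{cor:weight}. The main obstacle will be Step~2: justifying that every homomorphism of a linear chain into the anonymous tree of $\can$ can be captured by a bounded-width, logarithmic-depth NDL definition. This requires a structural lemma classifying such ``folded'' linear matches and a chain-level analogue of Lemma~\ref{l:6.8}; once these are in place, the rest of the argument follows the bookkeeping of Section~\ref{sec:boundedtw} almost verbatim.
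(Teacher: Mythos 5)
The core difficulty of this theorem is exactly the part you defer to a future ``structural lemma,'' and as stated your construction does not get past it. Your chain predicates $C_\pi^{\varrho,\varrho'}(u,v)$ are evaluated over the domain $\ind(\A)$, so the ``single existentially quantified intermediate element'' used to halve $\pi$ can only be bound to an ABox individual; but a chain whose endpoints are mapped to $\ind(\A)$ may spend its entire interior inside one anonymous tree of $\can$, descending along a word of $\twords$ of unbounded length and folding back, in which case the midpoint is not representable by any NDL variable and the balanced halving clause cannot be written. Recording more of the word in the predicate name is not available either, since $\twords$ is infinite with $\od=\infty$ --- the very obstacle you identify at the outset reappears inside Step~2 untouched. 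Relatedly, boundary types that keep only the first letter of an anonymous excursion are unsound at the distinguished vertices: if a branching vertex shared by several subqueries (or several chains) is mapped to an element $aw$ with $|w|\ge 2$, all of them must agree on the whole of $w$, yet each side can independently succeed with incompatible continuations of the recorded first letter, so your rewriting would accept instances admitting no joint homomorphism.

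The paper resolves both problems by replacing types with tree witnesses. It splits the (uncontracted) CQ at a middle vertex $z_\q$ (Lemma~\ref{PrepMiddleVertex}); if $z_\q$ is mapped to an ABox individual, it recurses on the neighbour subqueries with $z_\q$ promoted to a shared answer variable, and if $z_\q$ is mapped into the anonymous part, the entire minimal sub-CQ $\q_\t$ around it that maps into an anonymous tree --- a tree witness $\t=(\tr,\ti)$ generated by some $\varrho$ --- is certified en bloc: the unbounded-depth check is pushed into the condition ``$\t$ is a tree witness,'' verified by the $\mathsf{L}^{\LOGCFL}$ transducer at construction time, while at evaluation time it collapses to $A_\varrho(z_0)$ plus equalities identifying $\tr$, so that recursion continues only on connected components whose interface variables are ABox-mapped. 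The bounded number of leaves enters only through the polynomial bound $O(|\q_0|^{\nlf})$ on tree witnesses and subqueries and the width bound $\nlf+1$. Your proposal would be complete only if the deferred lemma on ``folded linear matches'' effectively reconstructed this tree-witness machinery (or an equivalent device coordinating deep anonymous images across subqueries), so as written the argument has a genuine gap at precisely the point where unbounded ontology depth matters.
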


For OMQs with bounded-leaf CQs and ontologies of unbounded depth,
our rewriting uses the notion of tree witness~\cite{DBLP:conf/kr/KikotKZ12}.
Consider an OMQ $\omq(\avec{x}) = (\T,\q(\avec{x}))$.
Let  $\t = (\tr, \ti)$ be a pair of disjoint sets of variables in $\q$ such that  $\ti \ne\emptyset$ but $\ti\cap \avec{x} = \emptyset$.
Set
\begin{equation*}
\q_\t \ = \ \bigl\{\, S(\avec{z}) \in \q \mid \avec{z} \subseteq \tr\cup \ti \text{ and } \avec{z}\not\subseteq \tr\,\bigr\}.
\end{equation*}
If $\q_\t$ is a minimal subset of $\q$ containing every atom of~$\q$ with a variable from $\ti$ and such that there is a homomorphism $h \colon \q_\t  \to \C_{\T, \{A_\varrho(a)\}}$ with $h^{-1}(a) = \tr$, we call $\t$ a \emph{tree witness for $\omq(\avec{x})$ generated by~$\varrho$}. Intuitively, $\t$ identifies a minimal subset of $\q$ that can be mapped to the tree-shaped part of the canonical model consisting of labelled nulls: the variables in $\tr$ are mapped to an individual constant, say, $a$, at the root of a tree and the~$\ti$ are mapped to the labelled nulls of the form $a w$, for some $w\in\twords$ that begins with $\varrho$.
Note that the same tree witness can be generated by different~$\varrho$.

The logarithmic-depth NDL-rewriting for OMQs from $\blCQ$  is based on the following observation:
\begin{lemma}[\cite{LICS14}]\label{PrepMiddleVertex}
Every tree $T$ of size $n$ has a node splitting it into subtrees of size
$\leq\! \lceil n/2 \rceil$.
\end{lemma}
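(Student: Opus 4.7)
The plan is to prove this by the classical centroid argument for trees. Specifically, I will choose a node $v$ that minimises the size of the largest component of $T \setminus \{v\}$ and show that this minimum value is bounded by $\lfloor n/2 \rfloor$, which suffices since $\lfloor n/2 \rfloor \le \lceil n/2 \rceil$.

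For the formal argument, for each node $u$ of $T$, let $m(u)$ denote the maximum size of a connected component of $T \setminus \{u\}$. I would pick $v$ to be a node minimising $m(\cdot)$, and argue by contradiction that $m(v) \le \lfloor n/2 \rfloor$. Suppose $m(v) > n/2$. Since the components of $T \setminus \{v\}$ are pairwise disjoint and collectively have $n-1$ nodes, at most one component can have size exceeding $n/2$; call this component $C$, and let $u$ be the unique neighbour of $v$ in $C$.

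The key computation is to compare $m(u)$ with $m(v)$. The components of $T \setminus \{u\}$ split into two kinds: the one containing $v$, which consists of $v$ together with all the components of $T \setminus \{v\}$ other than $C$, and hence has size $n - |C| < n/2 < |C|$; and the remaining components, each a subtree of $C$ hanging off $u$ via one of its neighbours in $C$ other than $v$, each of which has size at most $|C| - 1 < |C|$. Thus every component of $T \setminus \{u\}$ has size strictly less than $|C| = m(v)$, giving $m(u) < m(v)$ and contradicting the minimality of $v$. This yields $m(v) \le n/2$, and since $m(v)$ is an integer we get $m(v) \le \lfloor n/2 \rfloor \le \lceil n/2 \rceil$, as required.

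There is no serious obstacle here; the only delicate point is the case analysis for components of $T \setminus \{u\}$ and the observation that exactly one component of $T \setminus \{v\}$ can exceed $n/2$ in size, which is what pins down the unique neighbour $u$ to move to. The argument is standard, and an alternative presentation would be the iterative one: start at any node and, while some component of $T \setminus \{v\}$ has size $> \lceil n/2 \rceil$, move $v$ to the neighbour on the side of that large component; termination follows because the relevant component size strictly decreases at each step.
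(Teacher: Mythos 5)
Your proof is correct: it is the classical centroid (Jordan) argument, and all the steps check out — in particular the facts that at most one component of $T\setminus\{v\}$ can exceed $n/2$ (two such would total more than $n-1$), that $u$ is the unique neighbour of $v$ in that component, and that every component of $T\setminus\{u\}$ has size strictly below $|C|=m(v)$, which contradicts minimality. Note that the paper does not prove this lemma itself but imports it from~\cite{LICS14}, so there is no in-paper argument to compare against; your proof is self-contained and in fact establishes the slightly stronger bound $\lfloor n/2\rfloor$, which immediately implies the stated $\lceil n/2\rceil$ bound.
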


Let $\omq(\avec{x}_0)=(\T, \q_0(\avec{x}_0))$ be an OMQ with a tree-shaped CQ.
We will repeatedly apply Lemma~\ref{PrepMiddleVertex} to decompose the CQ into smaller and smaller subqueries.
Formally, for a tree-shaped CQ $\q$, we denote by $z_\q$ a vertex in the Gaifman graph $\gfmn$ of $\q$
that satisfies the condition of Lemma~\ref{PrepMiddleVertex};
if $|\vars(\q)| = 2$ and $\q$ has at least one existentially quantified variable, then
we assume that $z_\q$ is such.
Let $\sqset$ be the smallest set
that contains $\q_0(\avec{x}_0)$ and the following CQs,
for every $\q(\avec{x}) \in \sqset$ with existentially quantified variables:\label{page:decomposition}
\begin{itemize}
\item[--] for each $z_i$ adjacent to $z_\q$ in $\gfmn$, the CQ $\q_i(\avec{x}_i)$
comprising all binary atoms with both $z_i$ and $z_\q$,
and all atoms whose variables cannot reach $z_\q$ in $\gfmn$ without passing by $z_i$,
where $\avec{x}_i$ is the set of variables in $\avec{x} \cup \{z_\q\}$ that occur in $\q_i$;
\item[--] for each tree witness $\t$ for $(\T,\q(\avec{x}))$ with
$\tr\neq \emptyset$ and $z_\q\in\ti$, the CQs $\q_1^\t(\avec{x}_1^\t), \dots, \q_k^\t(\avec{x}_k^\t)$ that correspond
to the connected components of the set of atoms of $\q$ that are not in $\q_\t$, where each $\avec{x}_i^\t$
is the set of variables in $\avec{x} \cup \tr$ that occur in  $\q_i^\t$.
\end{itemize}
The two cases are depicted below:\\[10pt]
\centerline{%
\begin{tikzpicture}[>=latex]%
\draw[fill=gray!20] (-0.7,0) ellipse (0.7 and 0.3);
\draw[fill=gray!20,rotate=45] (0.75,0) ellipse (0.7 and 0.3);
\draw[fill=gray!20,rotate=-45] (0.75,0) ellipse (0.7 and 0.3);
\node at (-0.7,0.5) {\scriptsize $\q_1$};
\node at (1.3,0.75) {\scriptsize $\q_2$};
\node at (1.3,-0.75) {\scriptsize $\q_3$};
\node[wpoint,fill=gray] (zq) at (0,0) {};
\node (zql) at ($(zq)+(-0.3,-0.7)$) {$z_\q$};
\draw[->,thin] (zql) -- (zq);
\node[wpoint,label=left:{\scriptsize $z_1$}] (z1) at (-0.7,0) {};
\node[wpoint,label=above right:{\scriptsize $z_2$}] (z2) at (0.5,0.5) {};
\node[wpoint,label=below right:{\scriptsize $z_3$}] (z3) at (0.5,-0.5) {};
\begin{scope}[semithick]
\draw (zq) -- (z1);
\draw (zq) -- (z2);
\draw (zq) -- (z3);
\end{scope}
\node[bpoint,label=below:{$a$}] (a) at (1.75,0) {};
\draw[semithick,dashed,->] (zq) -- (a);
\begin{scope}[xshift=23mm]
\draw[rounded corners=2.5mm,fill=gray!70] (2.5,-0.65) rectangle +(2.5,0.55);
\draw[rounded corners=2.5mm,fill=gray!10] (2.5,-0.1) rectangle +(2.5,1.2);
\node at (2.9,-0.4) {$\tr$};
\node at (2.9,0.2) {$\ti$};
\begin{scope}
\clip (3,-1.1) rectangle +(2,1);
\draw[fill=gray!50,fill opacity=0.5] (4.6,-1.1) ellipse (0.3 and 0.7);
\draw[fill=gray!50,fill opacity=0.5] (4,-1.1) ellipse (0.3 and 0.7);
\end{scope}
\node at (3.5,-0.9) {\scriptsize $\q_1^\t$};
\node at (5.1,-0.9) {\scriptsize $\q_2^\t$};
\node[wpoint,label=right:{$z_\q$}] (zqp) at (4,0.8) {};
\node[wpoint] (zi6) at (2.8,0.8) {};
\node[wpoint] (zi1) at (3.4,0.2) {};
\node[wpoint] (zi2) at (4,0.2) {};
\node[wpoint] (zi3) at (4.6,0.2) {};
\node[wpoint] (zr4) at (4,-0.4) {};
\node[wpoint] (zr5) at (4.6,-0.4) {};
\begin{scope}[semithick]
\draw (zqp) -- (zi1);
\draw (zqp) -- (zi2);
\draw (zqp) -- (zi3);
\draw (zi1) -- (zi6);
\draw (zi2) -- (zr4);
\draw (zi3) -- (zr5);
\draw (zr4) -- ++(0,-0.7);
\draw (zr5) -- ++(0,-0.7);
\end{scope}
\node[bpoint,label=right:{$a$}] (ap) at (6,-0.4) {};
\node[wpoint,label=right:{\scriptsize$a\varrho$}] (ar) at (6,0.2) {};
\node[wpoint] (ar1) at (6.5,0.8) {};
\node[wpoint] (ar2) at (5.5,0.8) {};
\begin{scope}[semithick,->]
\draw (ap) -- (ar);
\draw (ar) -- (ar1);
\draw (ar) -- (ar2);
\end{scope}
\draw[semithick,dashed,->] (zr5) -- (ap);
\draw[semithick,dashed,->,out=-15,in=-165] (zqp) to (ar1);
\draw[semithick,dashed,->,out=15,in=165] (zi6) to (ar2);
\end{scope}
\end{tikzpicture}
}\\%
Note that $\tr\ne\emptyset$ ensures that part of the query without $\q_\t$ is mapped onto individual constants.

The NDL program $\Pi^{\textsc{Tw}}_{\omq}$ uses IDB predicates $\rew_\q(\avec{x})$, for $\q(\avec{x}) \in \sqset$,
whose parameters are the variables in $\avec{x}_0$ that occur in $\q(\avec{x})$. For each $\q(\avec{x}) \in \sqset$, if it has no existentially quantified variables, then we include the clause $\rew_\q(\avec{x}) \leftarrow \q(\avec{x})$.
Otherwise, we include the clause
\begin{equation*}
\rew_\q(\avec{x}) \ \ \ \leftarrow
\bigwedge_{S(\avec{z})\in\q, \ \avec{z}\subseteq\{z_\q\}}\hspace*{-2em} S(\avec{z})
 \ \ \  \land \bigwedge_{1\leq i\leq n} \rew_{\q_i}(\avec{x}_i),
\end{equation*}
where $\q_1(\avec{x}_1), \ldots, \q_n(\avec{x}_n)$ are the subqueries induced by the neighbours of $z_\q$ in $\gfmn$, and, for each tree witness $\t$ for $(\T,\q(\avec{x}))$ with $\tr\neq \emptyset$ and $z_\q \in\ti$ and for every~$\varrho$ generating $\t$, the following clause
\begin{equation*}
\rew_\q(\avec{x}) \ \ \ \leftarrow A_\varrho(z_0)\  \land \bigwedge_{z \in \tr\setminus \{z_0\}} \hspace*{-0.5em}(z=z_0) \ \
\land \bigwedge_{1\leq i \leq k} \rew_{\q_i^\t}(\avec{x}_i^\t),
\end{equation*}
where $z_0$ is any variable in $\tr$ and $\q_1^\t, \dots, \q_k^\t$ are the connected components of $\q$ without $\q_\t$.
Finally, if $\q_0$ is Boolean, then we include clauses
$\rew_{\q_0} \leftarrow A(x)$ for all unary predicates~$A$ such that $\T, \{A(a)\} \models \q_0$.

The program $\Pi^{\textsc{Tw}}_{\omq}$ is inspired by a similar construction from~\cite{LICS14}. By adapting the proof,
we can show that $(\Pi^{\textsc{Tw}}_{\omq}, \rew_{\q_0}(\avec{x}_0))$ is indeed a rewriting; see Appendix~\ref{appA4}.
%


Now fix $\nlf > 1$ and consider $\omq(\avec{x})=(\T, \q_0(\avec{x}))$ from the class $\blCQ$.
The size of the program $\Pi^{\smash{\textsc{Tw}}}_{\omq}$ is polynomially bounded in $|\omq|$ since $\q_0$ has $\smash{O(|\q_0|^{\nlf})}$ tree witnesses
and tree-shaped subqueries.
It is readily seen that
the function $\nu$ defined by setting $\nu (\rew_{\q}) = |\q|$, for each $\q\in\sqset$,
is a weight function for $(\Pi^{\smash{\textsc{Tw}}}_{\omq},\rew_{\q_0}(\avec{x}))$ with $\nu (\rew_{\q_0}) \leq |\omq|$.
Moreover, by Lemma~\ref{PrepMiddleVertex}, $\dep(\Pi^{\smash{\textsc{Tw}}}_{\omq},\rew_{\q_0}) \le \log \nu(\rew_{\q_0}) +1$; and clearly, $\wid(\Pi^{\smash{\textsc{Tw}}}_{\omq},\rew_{\q_0}) \le \nlf + 1$.
By Corollary~\ref{cor:weight}, the obtained NDL-rewritings can be
evaluated in \LOGCFL. Finally, we note that since the number of leaves is bounded,
it is in $\NL$ to decide whether a vertex satisfies the conditions of Lemma~\ref{PrepMiddleVertex},
and in \LOGCFL{} to decide whether $\T, \{A(a)\} \models \q_0$ \cite{DBLP:conf/lics/BienvenuKP15}
or whether a (logspace) representation of a possible tree witness is
indeed a tree witness. This allows us to show that $(\Pi^{\smash{\textsc{Tw}}}_{\omq},\rew_{\q_0}(\avec{x}))$ can be generated by an $\mathsf{L}^{\smash{\LOGCFL}}$-transducer.

It also follows that answering OMQs $(\T,\q(\avec{x}))$ with a tree-shaped CQ with $\nlf$ leaves over any data instance~$\A$ can be done in time
\begin{equation}\label{eq:time3}
\textit{poly}(|\T|,|\q|^{\nlf},|\A|^{\nlf}).
\end{equation}
Indeed, $(\Pi_\omq^{\smash{\textsc{Tw}}},\rpred(\avec{x}))$ can be evaluated  in time polynomial
in $|\Pi^{\smash{\textsc{Tw}}}_{\omq}|$ and $|\A|^{\wid(\Pi^{\smash{\textsc{Tw}}}_{\omq},G)}$, which are bounded by polynomials in $|\T|$, $|\q|^{\nlf}$ and $|\A|^{\nlf}$, respectively.


\section{Parameterised complexity\label{sec:param}}

The upper bounds~\eqref{eq:time1} and~\eqref{eq:time3} for the time required to evaluate NDL-rewritings of OMQs from $\mathsf{OMQ}(\od,1,\infty)$ and $\mathsf{OMQ}(\infty,1,\nlf)$ contain $\od$ and $\nlf$ in the exponent of $|\T|$ and $|\q|$. Moreover, if we allow $\od$ and $\nlf$ to grow while keeping CQs tree-shaped, the combined complexity of OMQ answering will jump to \NP; see Fig.~\ref{pic:results}(a). In this section, we regard $\od$ and $\nlf$ as parameters and show that answering tree-shaped OMQs is not fixed-parameter tractable.

\subsection{Ontology Depth}\label{sec:DepthAsPar}

Consider the following problem \pr:\\[4pt]
{\tabcolsep=3pt\hspace*{-3pt}\begin{tabular}{ll}
\textbf{Instance:} & an OMQ $\omq = (\T,\q)$ with $\T$ of finite depth and tree-shaped Boolean CQ $\q$.\\
\textbf{Parameter:} & the depth of $\T$.\\
\textbf{Problem:} & decide whether $\T,\{A(a)\} \models \q$.
\end{tabular}}
\begin{theorem}\label{thm:w2-hard}
\pr{} is $W[2]$-hard.
\end{theorem}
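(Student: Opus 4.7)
The plan is to give an FPT-reduction from the $W[2]$-complete $k$-\textsc{DominatingSet} problem: given $G=(V,E)$ with $|V|=n$ and parameter $k$, we construct in polynomial time an OMQ $\omq=(\T,\q)$ with $\T$ of depth $f(k)$ and $\q$ a tree-shaped Boolean CQ such that $\T,\{A(a)\}\models\q$ iff $G$ admits a dominating set of size $k$. Since the new parameter $\od=f(k)$ depends on $k$ only, this is a valid parameterised reduction.

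The ontology $\T$ is designed so that the canonical model $\C_{\T,\{A(a)\}}$ has the following shape. From $a$ there is a ``slot spine'' $a=a_0,a_1,\dots,a_k$, produced by axioms $A_{i-1}(x)\to \exists y\,\varrho_i(x,y)$ and $\exists x\,\varrho_i(x,y)\to A_i(y)$ using distinct level-specific predicates $\varrho_i$ (so each spine null is pinned down uniquely). At each $a_i$ ($i\ge 1$) and for every $v\in V$ there is a ``vertex-choice'' child $a_{i,v}$ labeled $V_v$, created by $A_i(x)\to \exists y\,\sigma_v(x,y)$ and $\exists x\,\sigma_v(x,y)\to V_v(y)$. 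Finally, axioms $V_v(x)\to U_u(x)$ for each $u\in N[v]$ propagate a ``$u$ is dominated'' marker to the vertex-choice children. Then $\T$ has depth $k+1$ and size polynomial in $n+k$. The Boolean CQ $\q$ has a fixed spine $x_0\xrightarrow{\varrho_1}x_1\xrightarrow{\varrho_2}\cdots\xrightarrow{\varrho_k}x_k$, so any homomorphism $h$ into $\C_{\T,\{A(a)\}}$ is forced to set $h(x_i)=a_i$. A ``chosen-vertex'' variable $t_i$ per slot with atom $\sigma(x_i,t_i)$ (where $\sigma$ is a super-role of every $\sigma_v$) constrains $h(t_i)=a_{i,v_i}$ for some $v_i\in V$, encoding a $k$-tuple $(v_1,\dots,v_k)$. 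For each $u\in V$ a ``$u$-witness'' subtree will assert that $u\in N[v_i]$ for some~$i$.

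The main obstacle is encoding the disjunction ``some $v_i$ dominates $u$'' within a tree-shaped CQ, while simultaneously preventing each $u$-witness from selecting a fresh vertex of its own (which would degenerate the reduction to the trivial ``$V$ dominates $V$''). The plan is to engineer the $u$-witness as a tree gadget that exploits the flexibility of CQ homomorphisms: a variable $y^u$ with $U_u(y^u)$ is joined to the spine by a short subtree using $\sigma$ and the per-level predicates $\varrho_i$, arranged so that $h(y^u)$ is forced to land on a vertex-choice child $a_{i,v}$ of some $a_i$. To force $h(y^u)\in\{h(t_1),\dots,h(t_k)\}$ rather than an arbitrary $a_{i,v}$, auxiliary unary markers in $\T$ (selected by the $t_i$ atoms via an extra $\sigma$-child and propagated only to those vertex-choice children that are ``used'' by the query) can be exploited to pin $y^u$ onto the same null as some $t_i$. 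Once the gadget is tuned, correctness follows routinely: a $k$-dominating set $\{v_1,\dots,v_k\}$ gives a homomorphism by setting $h(t_i)=a_{i,v_i}$ and routing each $u$-witness to a slot $i$ with $v_i\in N[u]$; conversely, from any homomorphism the $h(t_i)$'s read off $k$ vertices whose closed neighbourhoods cover $V$ (by the $U_u$ atoms on the $y^u$'s). As the construction is polynomial in $n+k$ and $\od=k+1$ is a function of $k$ only, this is an FPT-reduction from $k$-\textsc{DominatingSet}, establishing that \pr{} is $W[2]$-hard.
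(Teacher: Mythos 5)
There is a genuine gap, and it is exactly at the point you flag as ``the main obstacle.'' Your construction needs every $u$-witness variable $y^u$ to land on one of the $k$ nulls already chosen by the $t_i$'s, and you propose to achieve this with ``auxiliary unary markers \dots propagated only to those vertex-choice children that are used by the query.'' No such mechanism exists in this setting: the canonical model $\C_{\T,\{A(a)\}}$ is determined by $\T$ and the data alone and cannot depend on the query or on which atoms a homomorphism happens to use. In your canonical model \emph{every} child $a_{i,v}$ (for every slot $i$ and every $v\in V$) is present and carries $U_u$ for all $u\in N[v]$; in particular $a_{i,u}$ carries $U_u$. Hence each $y^u$ can be routed to its own fresh child $a_{i,u}$, independently of the $t_i$'s, and the query becomes satisfiable for every graph — the reduction degenerates to the trivial ``$V$ dominates $V$'' exactly as you feared. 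Since CQs cannot express the needed disjunction ``$h(y^u)=h(t_i)$ for some $i$'' by equality or by query-dependent markers, the heart of the $W[2]$-hardness argument is missing, and ``once the gadget is tuned, correctness follows routinely'' is not a proof.

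For comparison, the paper's proof (a reduction from \hs{}) solves precisely this coordination problem structurally rather than by markers. The size-$k$ selection is encoded not as $k$ independent children of a spine but as a single root-to-leaf path in a depth-$k$ tree whose vertex indices strictly increase along the path; from each level-$l$ point labelled $V_i^l$ with $v_i\in e_j$ hangs a pendant chain of length exactly $l$ labelled $E_j^l,\dots,E_j^0$, with the tree edges and pendant edges oriented oppositely. The query is a star centred at $y$ with one ray of length $k$ per hyperedge $e_j$, ending in $E_j^0$. Because every element has at most one $P$-predecessor and the pendant lengths are synchronised with the levels, a ray can only be satisfied by walking back down the unique path from $h(y)$ towards the root and branching into a pendant at some level $s$ — which is possible iff the $s$-th chosen vertex lies in $e_j$. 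The disjunction ``some chosen vertex hits $e_j$'' is thus realised by \emph{where} the ray leaves the path, all witnesses are automatically tied to the same chosen set (the path to $h(y)$), and no query-dependent labelling is needed. If you want to salvage your approach, you would have to find an analogous structural way to make all $u$-witnesses read off the same $k$ choices; the slot-spine-with-children shape does not provide one.
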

\begin{proof}
The proof is by reduction of the problem \hs{}, which is known to be $W[2]$-complete~\cite{DBLP:series/txtcs/FlumG06}:\\[4pt]
{\tabcolsep=3pt\hspace*{-3pt}\begin{tabular}{ll}
\textbf{Instance:} & a hypergraph $H=(V,E)$ and $k \in \mathbb N$.\\
\textbf{Parameter:} & $k$.\\
\textbf{Problem:} & decide whether there is $A \subseteq V$ such that $|A| = k$ and $e \cap A \neq\emptyset$, for every $e \in E$.
\end{tabular}}\\
(Such a set $A$ of vertices is called a \emph{hitting set of size $k$}.)
Suppose that $H=(V,E)$ is a hypergraph with vertices $V = \{v_1, \dots, v_n\}$ and hyperedges $E = \{e_1, \dots, e_m\}$. Let $\T^k_H$ be the (normal form of an) ontology with the following axioms, for $1 \le l \le k$:
\begin{align*}
V_i^{l-1}(x) & \to \exists z\, \bigl(P(z,x) \land V_{i'}^l(z)\bigr),
&& \text{for } 0 \leq i < i' \le n,\\
V_i^l(x) & \to 
E^l_j(x), && \text{for } v_i \in e_j,\ e_j\in E,\\
E^l_j(x) & \to \exists z \, \bigl(P(x,z) \land E^{l-1}_j(z)\bigr),\hspace*{-0.5em} && \text{for } 1 \le j \le m.
\end{align*}
Let $\q^k_H$ be a tree-shaped Boolean CQ with the following atoms, for $1 \le j \le m$: 
\begin{equation*}
P(y,z^{k-1}_j), \quad P(z_j^l,z_j^{l-1})  \text{ for } 1 \le l < k, \quad  \text{ and } E_j^0(z_j^0).
\end{equation*}
The first axiom of $\T^k_H$ generates a tree of depth $k$, with branching ranging from $n$ to $1$, such that
the points $w$ of level $k$ are labelled with subsets $X \subseteq V$ of size~$k$ that are read off the path from the root to $w$.
The CQ $\q^k_H$ is a star with rays corresponding to the hyperedges of $H$. The second and third axioms generate `pendants'
ensuring that, for any hyperedge $e$, the central point of the CQ can be mapped to a point with a label $X$
iff $X$ and $e$ have a common vertex. The canonical model of $(\T^2_H, \{V_0^0(a)\})$ and the CQ $\q^2_H$, for $H = (V,\{e_1,e_2,e_3\})$ with $V = \{ 1,2,3\}$, $e_1 = \{1,3\}$, $e_2 = \{ 2,3\}$  and $e_3=\{1,2\}$, is shown below:\\
\centerline{\begin{tikzpicture}[>=latex,xscale=1.25,yscale=0.9,vnode/.style={draw,circle,fill=white,semithick,inner sep=0pt,minimum size=3.5mm}]\small
\node at (-1.2,2.25) {$\C_{\T^2_H, \{V_0^0(a)\}}$};
\node at (-3,1.7) {$\q^2_H$};
\node at (2.25,2.4) {level};
\draw[ultra thin,gray!50,thin] (-4.15,0) -- ++(6.35,0); \node[white,fill=black] at (2.35,0) {\small 0};
\draw[ultra thin,gray!50,thin] (-4.15,1) -- ++(6.35,0); \node[white,fill=black] at (2.35,1) {\small 1};
\draw[ultra thin,gray!50,thin] (-4.15,2) -- ++(6.35,0); \node[white,fill=black] at (2.35,2) {\small 2};
\node[bpoint,label=below:{$a$}] (a) at (0.1,0) {}; 
\node[vnode] (a1) at (-1.2,1) {1};
\node[vnode,opacity=0.5,fill opacity=1] (a2) at (0.7,1) {\color{black!50}2};
\node[vnode,opacity=0.5,fill opacity=1] (a3) at (1.9,1) {\color{black!50}3};
\node[vnode] (a12) at (-2,2) {2};
\node[vnode,opacity=0.5,fill opacity=1] (a13) at (-0.4,2) {\color{black!50}3};
\node[vnode,opacity=0.5,fill opacity=1] (a23) at (1.3,2) {\color{black!50}3};
\begin{scope}[ultra thick]
\draw[<-] (a) -- (a1);
\draw[<-,opacity=0.5] (a) -- (a2);
\draw[<-,opacity=0.5] (a) -- (a3);
\draw[<-] (a1) -- (a12);
\draw[<-,opacity=0.5] (a1) -- (a13);
\draw[<-,opacity=0.5] (a2) -- (a23);
\end{scope}
\node[wpoint,label=below:{$E_1$}] (c1) at (-1.4,0) {};
\node[wpoint,opacity=0.5,label=below:{\color{black!50}\tiny$E_3$}] (c3) at (-1,0) {};
\node[wpoint] (c201) at (-2.2,1) {};
\node[wpoint] (c203) at (-1.8,1) {};
\node[wpoint,label=below:{$E_2$}] (c21) at (-2.2,0) {};
\node[wpoint,label=below:{$E_3$}] (c23) at (-1.8,0) {};
\node[wpoint,opacity=0.5] (c301) at (-0.55,1) {};
\node[wpoint,opacity=0.5] (c303) at (-0.25,1) {};
\node[wpoint,label=below:{\color{black!50}\tiny$E_1$},opacity=0.5] (c31) at (-0.55,0) {};
\node[wpoint,label=below:{\color{black!50}\tiny$E_2$},opacity=0.5] (c33) at (-0.25,0) {};
\node[wpoint,label=below:{\color{black!50}\tiny$E_1$},opacity=0.5] (cd1) at (1.75,0) {};
\node[wpoint,label=below:{\color{black!50}\tiny$E_2$},opacity=0.5] (cd3) at (2.05,0) {};
\node[wpoint,opacity=0.5] (ce1) at (1.15,1) {};
\node[wpoint,opacity=0.5] (ce3) at (1.45,1) {};
\node[wpoint,label=below:{\color{black!50}\tiny$E_1$},opacity=0.5] (ce01) at (1.15,0) {};
\node[wpoint,label=below:{\color{black!50}\tiny$E_2$},opacity=0.5] (ce03) at (1.45,0) {};
\node[wpoint,label=below:{\color{black!50}\tiny$E_3$},opacity=0.5] (d01) at (0.85,0) {};
\node[wpoint,label=below:{\color{black!50}\tiny$E_2$},opacity=0.5] (d03) at (0.55,0) {};
\begin{scope}[thick,densely dotted]
\draw[<-] (c1) -- (a1);
\draw[<-,opacity=0.5] (c3) -- (a1);
\draw[<-] (c21) -- (c201);
\draw[<-] (c23) -- (c203);
\draw[<-] (c201) -- (a12);
\draw[<-] (c203) -- (a12);
\draw[<-,opacity=0.5] (c31) -- (c301);
\draw[<-,opacity=0.5] (c33) -- (c303);
\draw[<-,opacity=0.5] (c301) -- (a13);
\draw[<-,opacity=0.5] (c303) -- (a13);
\draw[<-,opacity=0.5] (cd1) -- (a3);
\draw[<-,opacity=0.5] (cd3) -- (a3);
\draw[<-,opacity=0.5] (ce1) -- (a23);
\draw[<-,opacity=0.5] (ce3) -- (a23);
\draw[<-,opacity=0.5] (ce01) -- (ce1);
\draw[<-,opacity=0.5] (ce03) -- (ce3);
\draw[<-,opacity=0.5] (d01) -- (a2);
\draw[<-,opacity=0.5] (d03) -- (a2);
\end{scope}
\node[wpoint,label=above:{$y$}]  (y) at (-3.75,2) {};
\node[wpoint]  (z11) at (-3.95,1) {};
\node[wpoint]  (z12) at (-3.55,1) {};
\node[wpoint]  (z13) at (-2.85,1) {};
\node[wpoint,label=below:{$E_2$}]  (z01) at (-3.95,0) {};
\node[wpoint,label=below:{$E_3$}]  (z02) at (-3.55,0) {};
\node[wpoint,label=below:{$E_1$}]  (z03) at (-3.05,0) {};
\begin{scope}[thick,black]
\draw[<-] (z11) -- (y);
\draw[<-] (z12) -- (y);
\draw[<-] (z13) -- (y);
\draw[<-] (z01) -- (z11);
\draw[<-] (z02) -- (z12);
\draw[<-] (z03) -- (z13);
\end{scope}
\draw[dashed,out=30,in=150] (y) to (a12);
\end{tikzpicture}}\\
Points \raisebox{-2pt}{\begin{tikzpicture}\node[draw,circle,fill=white,semithick,inner sep=0pt,minimum size=3.5mm] at (0,0) {\small $i$};\end{tikzpicture}}\ at level $l$ belong to $V_i^l$.
In Appendix \ref{AppB.1} we prove that $\T^k_H, \{V_0^0(a)\} \models \q^k_H$ iff $H$ has a hitting set of size $k$.
In the example above, $\{1,2\}$ is a hitting set of size $2$, which corresponds to the homomorphism from $\q^2_H$ into the part of
$\C_{\T^2_H, \{V_0^0(a)\}}$ shown in black.
\end{proof}

By Theorem~\ref{thm:logcfl-1}, OMQs $(\T,\q)$ from $\mathsf{OMQ}(\od,1,\infty)$ can be answered (via NDL-rewriting) over a data instance $\A$ in time $\textit{poly}(|\T|^{\od},|\q|,|\A|)$. Theorem~\ref{thm:w2-hard} shows that no algorithm can do this in time $f(\od) \cdot \textit{poly}(|\T|,|\q|,|\A|)$, for any computable function $f$,  unless $W[2] = \text{FPT}$.


\subsection{Number of Leaves}

Next we consider the problem \blpr{}:\\[3pt]
{\tabcolsep=3pt\hspace*{-3pt}\begin{tabular}{ll}
\textbf{Instance:} & an OMQ $\omq = (\T,\q)$ with $\T$ of finite depth and tree-shaped Boolean CQ $\q$.\\
\textbf{Parameter:} & the number of leaves in $\q$.\\
\textbf{Problem:} & decide whether $\T,\{A(a)\} \models \q$.
\end{tabular}}
\begin{theorem}\label{leaves-param-w1}
\blpr\ is $W[1]$-hard.
\end{theorem}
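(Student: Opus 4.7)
The plan is to prove $W[1]$-hardness of \blpr{} by an FPT-reduction from \partclique{}, which is $W[1]$-complete: given a graph $G=(V,E)$, a partition $V=V_1\sqcup\cdots\sqcup V_k$ and a parameter $k$, decide whether there exist $v_1\in V_1,\ldots,v_k\in V_k$ that are pairwise adjacent in $G$. I will construct, in polynomial time, an OMQ $\omq = (\T,\q)$ with $\T$ of finite depth $O(k)$ and a tree-shaped Boolean CQ $\q$ whose number of leaves is $f(k) = 1+\binom{k}{2}$, such that $\T,\{A(a)\}\models\q$ iff $G$ admits a partitioned $k$-clique.

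The ontology reuses the choice-tree gadget from the proof of Theorem~\ref{thm:w2-hard}, adapted to the partition. Starting from a sentinel atom $V_0(a)$, the axioms
\[
V_{l-1}^v(x) \ \to\ \exists y\,\bigl(P(y,x)\wedge V_l^w(y)\bigr),\qquad 1\le l\le k,\ v\in V_{l-1},\ w\in V_l,
\]
generate a tree of depth $k$ in the canonical model whose root-to-leaf branches encode exactly the $k$-tuples in $V_1\times\cdots\times V_k$. The CQ $\q$ contains a $P$-backbone $y_k\to y_{k-1}\to\cdots\to y_0$ forcing $y_l$ to be mapped to the unique level-$l$ ancestor of $y_k$ in this tree, hence to a vertex carrying some label $V_l^{v_l}$.

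To enforce the clique condition, for each pair $1\le l<l'\le k$ the ontology attaches, at every $V_{l'}^w$-vertex, a fresh \emph{certificate chain} of length $l'-l$ using $P$-atoms; the chain terminates at a vertex that carries a marker predicate $\mathsf{Cert}_{l,l'}^v$ precisely for those $v\in V_l$ with $vw\in E$. The CQ includes, for each such pair, a pendant chain of $l'-l$ $P$-atoms branching off from $y_{l'}$ and ending at a fresh leaf $z_{l,l'}$; this leaf carries a single unary atom that can be satisfied only at a certificate terminal corresponding to the \emph{same} $v_l\in V_l$ that labels the backbone variable $y_l$. The resulting CQ is tree-shaped with $1+\binom{k}{2}$ leaves, a function of the parameter $k$ alone, and the ontology is polynomial in $|G|+k$.

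The main obstacle is forcing consistency between the $\binom{k}{2}$ pendant witnesses and the single backbone choice $(v_1,\ldots,v_k)$ \emph{without} sharing variables across pendants, which would turn $\q$ into a graph with cycles and destroy tree-shape. I would address this by propagating, in the ontology, each backbone label $V_l^{v_l}$ up the certificate chain via auxiliary unary predicates: adding axioms that let $V_l^v(x)$ trigger a cascade of fresh marker predicates along the chain so that the pendant leaf $z_{l,l'}$ can be labelled by the required marker only when the $v\in V_l$ witnessed at the certificate terminal coincides with the $v_l$ carried by $y_l$. As in Theorem~\ref{thm:w2-hard}, the tree structure of $\can$ and the unique-ancestor property pin down the only possible shape of the homomorphism, so a routine induction matches homomorphisms of $\q$ with partitioned $k$-cliques of $G$. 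Since the construction is polynomial and the number of leaves of $\q$ depends only on $k$, this is a valid FPT-reduction, establishing $W[1]$-hardness of \blpr{}.
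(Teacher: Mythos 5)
Your overall strategy (an FPT-reduction from \partclique{} producing a tree-shaped CQ with $O(k^2)$ leaves over a choice-tree ontology) matches the paper's, but the mechanism you propose for the crux of the argument does not work, and you have left exactly that step vague. The problem is the consistency check between the backbone choice at level $l$ and the pendant leaf $z_{l,l'}$ hanging below level $l'$. In the canonical model of an \OWLQL{} ontology, the unary predicates true at a labelled null $a\varrho_1\cdots\varrho_n$ are determined solely by the last role $\varrho_n$. Hence a certificate chain generated below the $V_{l'}^w$-node can encode $w$, the pair $(l,l')$, and the set of neighbours of $w$ in $V_l$, but it cannot ``know'' which vertex $v_l$ was chosen at level $l$ of the main branch: to propagate that choice downwards, the generating roles at levels $l+1,\dots,l'$ would themselves have to carry it, which forces role names indexed by tuples of choices, i.e.\ $\Omega(|V_1|\cdots|V_{l'}|)=n^{\Theta(k)}$ many predicates --- no longer a polynomial (or FPT) reduction. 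Since the atom at $z_{l,l'}$ is a \emph{fixed} unary predicate of the query, it can only assert something like ``$v_{l'}$ has \emph{some} neighbour in $V_l$''; it cannot correlate with the particular $v_l$ chosen on the backbone. With your gadget, $\T,\{A(a)\}\models\q$ already holds whenever every pair $(V_l,V_{l'})$ is joined by some edge, so the backward direction of correctness fails on many NO-instances of \partclique{}. Nor can the depth-$k$ tree encode the identity of the chosen vertex positionally: with only $k$ levels, distances can distinguish levels but not the up-to-$n$ candidates inside a part. The closing claim that ``a routine induction'' pins down the homomorphism also glosses over the fact that query branches can fold back and forth along edges of $\can$, which is precisely what a correct construction must control.

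The paper's proof resolves this with a positional encoding that your sketch is missing: each branch of $\C_{\T_G,\{A(a)\}}$ consists of $p$ blocks of length $2M$, each block enumerating \emph{all} of $V$ in a fixed order, so that vertex $v_j$ always sits at offset $2j,2j{+}1$ of every block; the block whose selected vertex is $v_{j_i}$ stamps $S$ at its own offset and $Y$ at the offsets of the neighbours of $v_{j_i}$ (this needs only the current block's choice, hence polynomially many roles $L_j^k$). Consistency across pairs is then pure distance arithmetic --- the $i$th query branch, anchored at the single $B$-labelled variable, checks for $i$ occurrences of $YY$ spaced exactly $2M-2$ apart followed by $SS$, and a back-and-forth $U$-padding edge at the end of the branch absorbs the unknown initial offset. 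If you want to salvage your write-up, you would have to replace the ``cascade of marker predicates'' by such a distance-based encoding (or an equivalent device); as stated, the reduction is not correct.
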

\begin{proof}
The proof is by reduction of the following $W[1]$-complete  \partclique\ problem~\cite{DBLP:journals/tcs/FellowsHRV09}:\\[4pt]
{\tabcolsep=3pt\hspace*{-2pt}\begin{tabular}{ll}
\textbf{Instance:} & a graph $G=(V,E)$ whose vertices are partitioned into $p$ sets $V_1, \ldots, V_p$.\\
\textbf{Parameter:} & $p$, the number of partitions.\\
\textbf{Problem:} & decide whether $G$ has a clique of size $p$ containing one vertex from each $V_i$.
\end{tabular}}\\[4pt]
Consider a graph $G=(V,E)$ with $V=\{v_1, \ldots, v_M\}$
partitioned into
$V_1, \dots, V_p$.
The ontology $\T_G$ will create a tree rooted at $A(a)$ whose every
branch corresponds to selecting one vertex from each $V_i$.
Each branch has length $(p \cdot 2M) + 1$ and consists of $p$ `blocks' of length $2M$,
plus an extra edge at the end (used for padding). 
Each block corresponds to an enumeration of $V$, with
positions $2j$ and $2j+1$ being associated with $v_j$.
In the $i$th block of a branch, we will select a vertex $v_{j_i}$ from $V_i$
by marking the positions $2j_i$ and $2j_i+1$
with the binary predicate~$S$;
we also mark the positions of the neighbours of $v_{j_i}$ in $G$ with  the
predicate $Y$. We use the unary predicate $B$ to mark the end of the $p$th block (square nodes in the picture below).
The left side of the picture illustrates the construction for $p=3$, where
$V_1=\{v_1, v_2\}$, $V_2=\{v_3\}$, $V_3=\{v_4,v_5\}$,
and $E=\{\{v_1,v_3\}, \{v_3,v_5\}\}$.\\
\centerline{%
\begin{tikzpicture}[>=latex,swpoint/.style={circle,inner sep=0pt,minimum size=1mm,fill=white,draw=black,thick},
Bpoint/.style={rectangle,rounded corners=0.5mm,inner sep=0pt,minimum size=2mm,fill=white,draw=black,ultra thick},
lbl/.style={draw,thin,black,rectangle,rounded corners=1mm,inner sep=2pt,fill=white},
yscale=1.4,xscale=1.4]
\draw[fill=gray!20] (-0.7,1.5) ellipse (.25 and 0.5);
\begin{scope}[xshift=2mm]
\draw[fill=gray!20] (-2.5,1.7) ellipse (0.8 and 2);
\end{scope}
\begin{scope}[ultra thin]
\draw (-1.2,0) -- (0.6,0);
\draw (-1.2,1) -- (0.6,1);
\draw (-1.2,2) -- (0.6,2);
\draw (-1.2,3) -- (0.6,3);
\end{scope}
\node[bpoint,label=below:{$a$}] (a) at (-0.3,0) {};
\node[wpoint] (v1) at (-0.7,1) {};
\node[wpoint] (v2) at (0.1,1) {};
\node[wpoint] (v13) at (-0.7,2) {};
\node[wpoint] (v23) at (0.1,2) {};
\node[Bpoint] (v134) at (-0.9,3) {};
\node[Bpoint] (v135) at (-0.5,3) {};
\node[Bpoint] (v234) at (0.3,3) {};
\node[Bpoint] (v235) at (-0.1,3) {};
\node[wpoint] (v134p) at (-0.9,3.6) {};
\node[wpoint] (v135p) at (-0.5,3.6) {};
\node[wpoint] (v234p) at (0.3,3.6) {};
\node[wpoint] (v235p) at (-0.1,3.6) {};
\begin{scope}[line width=1mm,black!50]
\draw[<-] (a) to node[pos=0.7,lbl] {\scriptsize $1$} (v1);
\draw[<-] (a) to node[pos=0.7,lbl] {\scriptsize $2$} (v2);
\draw[<-] (v1) to node[pos=0.7,lbl] {\scriptsize $3$} (v13);
\draw[<-] (v13) to node[pos=0.7,lbl] {\scriptsize $4$} (v134);
\draw[<-] (v13) to node[pos=0.7,lbl] {\scriptsize $5$} (v135);
\draw[<-] (v2) to node[pos=0.7,lbl] {\scriptsize $3$} (v23);
\draw[<-] (v23) to node[pos=0.7,lbl] {\scriptsize $5$} (v234);
\draw[<-] (v23) to node[pos=0.7,lbl] {\scriptsize $4$} (v235);
\end{scope}
\begin{scope}[semithick,black]
\draw[<->] (v134) -- (v134p);
\draw[<->] (v135) -- (v135p);
\draw[<->] (v234) -- (v234p);
\draw[<->] (v235) -- (v235p);
\end{scope}
\begin{scope}[xshift=2mm]
\node[wpoint] (l0) at (-2.5,-0.3) {};
\node[swpoint,label=right:{\scriptsize $YY$}] (n0) at (-2.65,0.1) {}; \node[left=0.5mm of n0,lbl] {\scriptsize $1$};
\node[wpoint] (l1) at (-2.5,0.5) {};
\node[swpoint] (n1) at (-2.65,0.9) {};  \node[left=0.5mm of n1,lbl] {\scriptsize $2$};
\node[wpoint] (l2) at (-2.5,1.3) {};
\node[swpoint,label=right:{\scriptsize $\boldsymbol{SS}$}] (n2) at (-2.65,1.7) {};  \node[left=0.5mm of n2,lbl] {\scriptsize $3$};
\node[wpoint] (l3) at (-2.5,2.1) {};
\node[swpoint] (n3) at (-2.65,2.5) {};  \node[left=0.5mm of n3,lbl] {\scriptsize $4$};
\node[wpoint] (l4) at (-2.5,2.9) {};
\node[swpoint,label=right:{\scriptsize $YY$}] (n4) at (-2.65,3.3) {};  \node[left=0.5mm of n4,lbl] {\scriptsize $5$};
\node[wpoint] (l5) at (-2.5,3.7) {};
\end{scope}
\begin{scope}[semithick]
\draw[<-] (l0) -- (n0);
\draw[<-] (n0) -- (l1);
\draw[<-] (l1) -- (n1);
\draw[<-] (n1) -- (l2);
\draw[<-] (l2) -- (n2);
\draw[<-] (n2) -- (l3);
\draw[<-] (l3) -- (n3);
\draw[<-] (n3) -- (l4);
\draw[<-] (l4) -- (n4);
\draw[<-] (n4) -- (l5);
\end{scope}
\draw[dashed] ($(l0)+(0.25,0.05)$)-- (v1);
\draw[dashed] ($(l5)+(0.3,-0.10)$)-- (v13);
\node[fill=white,inner sep=0pt] at (-1.25,0.5) {$V_1$};
\node[fill=white,inner sep=0pt] at (-1.25,1.5) {$V_2$};
\node[fill=white,inner sep=0pt] at (-1.25,2.5) {$V_3$};
\draw[thin] (l0) -- +(-1,0);
\draw[thin] (l5) -- +(-1,0);
\draw[<->] ($(l0)+(-0.9,0)$) to node[sloped,above] {\footnotesize $2M$ arrows} ($(l5)+(-0.9,0)$);
\node at (-1,3.9) {$\C_{\T_G,\{A(a)\}}$};
\begin{scope}[xshift=2mm]
\draw[fill=gray!20] (1.8,2.2) ellipse (.25 and 0.5);
\draw[fill=gray!20] (3.5,1.7) ellipse (0.8 and 2);
\node at (2.4,3.9) {$\q_G$};
\node[Bpoint,label=above:{$y$}] (y) at (1.3,3.5) {};
\node[wpoint] (s2) at (0.8,2.7) {};
\node[wpoint] (s1) at (1.8,2.7) {};
\node[wpoint] (s20) at (0.8,1.7) {};
\node[wpoint] (s10) at (1.8,1.7) {};
\node[wpoint] (s200) at (0.8,0.7) {};
\node[swpoint,label=right:{\scriptsize $\boldsymbol{SS}$}] (s200p) at (0.65,0.3) {};
\node[wpoint,label=right:{\small $z_2$}] (s200e) at (0.8,-0.1) {};
\node[swpoint,label=right:{\scriptsize $\boldsymbol{SS}$}] (s10p) at (1.65,1.3) {}; \node[left=0.5mm of s10p,lbl] {\scriptsize $j$};
\node[wpoint,label=right:{\small $z_1$}] (s10e) at (1.8,0.9) {};
\begin{scope}[line width=1mm,black!50]
\draw[<-,densely dotted] (s1) -- (y);
\draw[<-,densely dotted] (s2) -- (y);
\draw[<-] (s10) -- (s1);
\draw[<-] (s20) -- (s2);
\draw[<-] (s200) -- (s20);
\end{scope}
\draw[dashed,out=180,in=30] (y) to (v234);
\begin{scope}[semithick]
\draw[<-] (s200e) -- (s200p);
\draw[<-] (s200p) -- (s200);
\draw[<-] (s10e) -- (s10p);
\draw[<-] (s10p) -- (s10);
\end{scope}
\node[wpoint] (w0) at (3.5,-0.3) {};
\node[swpoint] (z0) at (3.35,0.1) {}; \node[left=0.5mm of z0,lbl] {\tiny $j\!\oplus\!1$};
\node[wpoint] (w1) at (3.5,0.5) {};
\node[swpoint] (z1) at (3.35,0.9) {}; \node[left=0.5mm of z1,lbl] {\tiny $j\!\oplus\!2$};
\node[wpoint] (w2) at (3.5,1.3) {};
\node[swpoint] (z2) at (3.35,1.7) {}; \node[left=0.5mm of z2,lbl] {\tiny $j\!\oplus\!3$};
\node[wpoint] (w3) at (3.5,2.1) {};
\node[swpoint] (z3) at (3.35,2.5) {}; \node[left=0.5mm of z3,lbl] {\tiny $j\!\oplus\!4$};
\node[wpoint] (w4) at (3.5,2.9) {};
\node[swpoint,label=right:{\scriptsize $\boldsymbol{YY}$}] (z4) at (3.35,3.3) {}; \node[left=0.5mm of z4,lbl] {\scriptsize $j$};
\node[wpoint] (w5) at (3.5,3.7) {};
\begin{scope}[semithick]
\draw[<-] (w0) -- (z0);
\draw[<-] (z0) -- (w1);
\draw[<-] (w1) -- (z1);
\draw[<-] (z1) -- (w2);
\draw[<-] (w2) -- (z2);
\draw[<-] (z2) -- (w3);
\draw[<-] (w3) -- (z3);
\draw[<-] (z3) -- (w4);
\draw[<-] (w4) -- (z4);
\draw[<-] (z4) -- (w5);
\end{scope}
\draw[dashed] ($(w0)+(-0.25,0.05)$)-- (s10);
\draw[dashed] ($(w5)+(-0.3,-0.05)$)-- (s1);
\draw[thin] (w0) -- +(1,0);
\draw[thin] (w5) -- +(1,0);
\draw[<->] ($(w5)+(0.9,0)$) to node[sloped,above] {\footnotesize $2M$ arrows} ($(w0)+(0.9,0)$);
\end{scope}
\end{tikzpicture}
}\\
Since vertices are enumerated in the same order in every block,
to check whether the selected vertex $v_{j_i}$ for $V_i$ is a neighbour of the vertices selected from $V_{i+1}, \ldots, V_p$,
it suffices to check that positions $2j_i$ and $2j_i+1$ in blocks $i+1, \ldots, p$
are marked $YY$.
Moreover, the distance between the positions of a vertex in consecutive blocks is always $2M-2$.
The idea is thus to construct a CQ $\q_G$ (right side of the picture)
which, starting from a variable labelled $B$ (mapped to the end of a $p$th block),
splits into $p-1$ branches,
with the $i$th branch checking for
a sequence of $i$ evenly-spaced $YY$ markers leading to an $SS$ marker.
The distance from 
the end of the $p$th block (marked $B$)
to the positions $2j_i$ and $2j_i+1$ in the $p$th
block (where the first $YY$ should occur) depends on the choice of $v_{j_i}$. We thus add
an outgoing
edge at the end of the $p$th block, which can be navigated in both directions, to be able to `consume' any \emph{even number}
of query atoms preceding the first~$YY$.



The Boolean CQ $\q_G$ looks as follows
(for readability, we use atoms with star-free regular expressions):
\begin{equation*}
B(y) \land  \bigwedge_{1 \leq i < p} \bigl(U^{\smash{2M-2}} \cdot (YY \cdot U^{\smash{2M-2}})^{i} \cdot S S \bigr) (y, z_i),
\end{equation*}
and the ontology $\T_G$ contains the following axioms:
\begin{align*}
A(x) & \rightarrow \exists y \, L_j^1(x,y), && \text{for } v_j\in V_1,\\ 
\exists z\,L_j^k(z,x) & \rightarrow \exists y \, L_j^{k+1}(x,y),\hspace*{-0.2em} && \text{for } 1 \leq k < 2M, \ v_j \in V,\\
\exists z\,L_j^{2M}(z,x) & \rightarrow \exists y \, L_{j'}^1(x,y), && \text{for } v_j\in V_i,\  v_{j'} \in V_{i+1},\\
L_j^k(x,y)&  \rightarrow S(y,x), && \text{for }k\in\{2j, 2j+1\},\\
L_j^k(x,y) & \rightarrow Y(y,x), && \text{for }\{v_j,v_{j'}\} \in E \text{ and } k\in \{2j', 2j' + 1\},\\
L_j^k(x,y) & \rightarrow U(y,x),&& \text{for } 1 \leq k \leq 2M,\ v_j \in V,\\
\exists z\,L_j^{2M}(z,x) & \rightarrow B(x), &&\text{for } v_j\in V_p,\\ 
B(x) & \rightarrow \exists y \, \bigl(U(x,y) \land U(y,x)\bigr).\hspace*{-5em}
\end{align*}
We prove in the appendix that $\T_G, \{A(a)\} \models \q_G$ iff $G$ has a clique 
containing one vertex from each set $V_i$.
\end{proof}

By~\eqref{eq:time3}, OMQs $(\T,\q)$ from $\mathsf{OMQ}(\infty,1,\nlf)$ can be answered (via NDL-rewriting) over a data instance $\A$ in time $\textit{poly}(|\T|,|\q|^{\nlf},|\A|^{\nlf})$. Theorem~\ref{leaves-param-w1} shows that no algorithm can do this in time $f(\nlf) \cdot \textit{poly}(|\T|,|\q|,|\A|)$, for any computable function $f$,  unless $W[1] = \text{FPT}$.

One may consider various other types of parameters that can hopefully reduce the complexity of OMQ answering. Obvious candidates are the size of ontology, the size of ontology signature or the number of role inclusions in ontologies. (Indeed, it is shown in \cite{DBLP:conf/ijcai/BienvenuOSX13} that in the absence of role inclusions,
tree-shaped OMQ answering is tractable.) Unfortunately, bounding any of these parameters does not make OMQ answering easier,
as we establish in Section \ref{sec:fixed} that already one \emph{fixed} ontology makes the problem \NP-hard for tree-shaped CQs and \LOGCFL-hard for linear ones.


\section{OMQs with a Fixed Ontology\label{sec2:fixed}}\label{sec:fixed}

In a typical OBDA scenario~\cite{DBLP:conf/semweb/KharlamovHJLLPR15},  users are provided with an ontology in a familiar signature (developed by a domain expert)
with which they formulate their queries.
Thus, it is of interest to identify the complexity of answering tree-shaped OMQs $(\T,\q)$ with a fixed $\T$ of infinite depth (see~Fig.~\ref{pic:results}).
Surprisingly,  we show that the problem is \NP-hard even when both $\T$ and $\A$ are fixed (in the database setting, answering tree-shaped CQs is in \LOGCFL\ for combined complexity).
\begin{theorem}\label{thm:NP:query}
There is an ontology $\T_\dag$ such that answering OMQs of the form $(\T_\dag,\q)$ with Boolean tree-shaped CQs $\q$ is \NP-hard for query complexity.
\end{theorem}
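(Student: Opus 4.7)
The plan is to reduce a standard NP-hard problem (such as 3-SAT or hitting set) to the OMQ answering problem for a fixed ontology $\T_\dag$ of infinite depth, with all problem-specific information carried by the input tree-shaped Boolean CQ. I would define $\T_\dag$ so that its canonical model over a small distinguished ABox $\A_\dag$ (say $\{A(a)\}$) is an infinite labelled tree with enough structure to serve as a universal matching target. A plausible candidate uses axioms of the form $A(x) \to \exists y\,(P_i(x,y) \land A(y))$ that create a small constant branching factor at every node, together with auxiliary axioms attaching labelled pendants of various lengths that act as locally observable checkpoints. The critical property is that this canonical tree admits embeddings of arbitrary problem-encoding tree patterns, while still being discriminating enough to reject embeddings corresponding to non-solutions. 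The construction should also be reminiscent of the gadget used in the proof of Theorem~\ref{thm:w2-hard}, but with the role of ontology-encoded gadgets transferred to the CQ.

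Given an instance $I$ of the source problem, I would construct in polynomial time a tree-shaped Boolean CQ $\q_I$ consisting of a central spine that encodes a candidate solution (truth values, chosen vertices, etc.) together with side branches that each verify one local constraint (clause, hyperedge, or adjacency). A homomorphism of $\q_I$ into $\C_{\T_\dag,\A_\dag}$ should then exist precisely when $I$ is a yes-instance, and the correctness argument must establish both directions: constructing a homomorphism from a solution of $I$, and extracting a solution from any homomorphism.

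The main obstacle lies in the tension between the fixedness of $\T_\dag$ and the tree-shape restriction on $\q_I$: since no problem-specific structure can be placed in the ontology, all combinatorial content must be packed into a tree-shaped CQ, even though problems like SAT or hitting set have genuinely non-tree constraint structure. The way out is to exploit two features of the OMQ setting — that homomorphisms may identify distinct CQ variables with the same canonical-model element and thus induce implicit joins beyond the syntactic tree-shape, and that the unbounded depth of $\T_\dag$ provides unboundedly many candidate target nodes so that fragments of $\q_I$ can navigate into specific encoded positions of the infinite target tree. I expect the soundness direction to be the hardest technical step: ruling out spurious homomorphisms when $I$ has no solution will require a careful combinatorial analysis of how tree patterns can land in the infinite canonical tree, showing that the labels and pendant lengths of $\T_\dag$ together with the specific shape of $\q_I$ force any successful match to correspond to a genuine solution.
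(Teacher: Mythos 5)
There is a genuine gap: what you have written is the high-level architecture of the paper's proof (a fixed infinite-depth ontology whose canonical model over $\{A(a)\}$ is an infinite tree serving as the matching target, plus a tree-shaped CQ built from a SAT-like instance with one branch per constraint), but the proof \emph{is} the concrete gadget, and the one design question that decides whether the reduction can work at all is left unanswered. Since $\q_I$ must be a tree, its constraint-checking branches cannot be syntactically joined on any variables encoding the candidate solution, so you need a mechanism that forces \emph{all} branches to read off the \emph{same} candidate assignment. Your suggested way out --- that homomorphisms may identify distinct CQ variables and that the infinite depth supplies many target nodes --- does not by itself give this: identification of variables is never forced, and nothing in the axioms you sketch ($A(x)\to\exists y\,(P_i(x,y)\land A(y))$ with unspecified ``pendants'') prevents different branches from wandering into different parts of the infinite tree that encode incompatible assignments. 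Ruling that out is exactly the soundness step you flag as hard, and without a concrete labelling scheme it cannot be carried out.

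The paper resolves this with a specific trick worth noting. The ontology $\T_\dag$ grows a binary tree from $a$ in which the two child-creating axioms put the role labels on edges \emph{directed towards the root} (one child type carries $\P\land\V$, the other $\N\land\V$), so a branch of the tree read root-wards encodes a truth assignment; the CQ $\q_\varphi$ is a star with centre $A(y)$ and one ray per clause, every ray oriented \emph{towards} the image of the ABox individual. Because each element of the non-sink part of $\C_{\T_\dag,\{A(a)\}}$ has a unique predecessor, all rays are forced along the same root-ward path from $h(y)$, hence automatically agree on one assignment --- this, not variable identification, is what replaces the missing joins. The ``don't care'' predicate $\V$ on every main edge lets a ray pass positions of variables absent from its clause, and each node additionally sprouts a pendant ($\N$-labelled after a positive step, $\P$-labelled after a negative step) followed by an infinite $B_0$-sink chain accepting all three predicates, so a ray can leave the main path and dump its remaining atoms precisely when it reaches a position where its clause is satisfied; the terminal $B_0$-atom of the ray is only realisable in such a sink. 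These three ingredients (root-ward orientation with unique predecessors, the overlapping $\P/\N/\V$ labelling, and the conditional sink branches) are what make both directions of the correctness claim go through, and your proposal would need to invent them, or an equivalent, to become a proof.
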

\begin{proof}
The proof is by reduction of SAT. Given a CNF $\varphi$ with variables $p_1,\dots,p_k$ and clauses $\chi_1,\dots,\chi_m$, take a Boolean CQ $\q_\varphi$ with $A(y)$ and, for $1 \le j \le m$, the following atoms with $z^k_j = y$:
\begin{align*}
& \P(z^l_j,z^{l-1}_j), && \text{ if }  p_l \text{ occurs in } \chi_j \text{ positively},\\
& \N(z^l_j, z^{l-1}_j), && \text{ if } p_l \text{ occurs in } \chi_j \text{ negatively},\\
& \V(z^l_j, z^{l-1}_j), && \text{ if } p_l \text{ does not occur in } \chi_j,\\
& B_0(z^0_j).
\end{align*}
Thus, $\q_\varphi$ is a star with centre $A(y)$ and $m$ rays encoding the $\chi_j$ by the binary predicates $\P$, $\N$ and $\V$.
Let $\T_\dag$ be an ontology with the axioms
\begin{align*}
& A(x) \to \exists y \, \bigl(\P(y,x) \land \V(y,x) \land B_{-}(y) \land A(y)\bigr),\\
& \hspace*{1em}B_{-}(y)  \to \exists x'\, \bigl(\N(y,x') \land B_0(x')\bigr),\\
& A(x) \to \exists y \, \bigl(\N(y,x) \land \V(y,x) \land B_{+}(y)  \land A(y)\bigr),\\
& \hspace*{1em}B_{+}(y)  \to \exists x'\, \bigl(\P(y,x') \land B_0(x') \bigr),\\
& B_0(x) \to \exists y \, \bigl(\P(x,y) \land \N(x,y) \land \V(x,y) \land B_0(y)\bigr).
\end{align*}
Intuitively, $(\T_\dag, \{A(a)\})$ generates an infinite binary tree whose nodes of depth $n$ represent all $2^n$ truth assignments to $n$ propositional variables. The CQ $\q_\varphi$ can  only be mapped along a branch of this tree towards its root $a$, with the image of $y$, the centre of the star, giving a satisfying assignment for $\varphi$. Each non-root node of the tree also starts an infinite `sink' branch of $B_0$-nodes, where the remainder of the ray for $\chi_j$ can be mapped as soon as one of its literals is satisfied.
We show in Appendix~\ref{app:thm:NP:query} that $\T_\dag,\{A(a)\} \models \q_\varphi$ iff $\varphi$ is satisfiable. To illustrate, the CQ $\q_\varphi$ for $\varphi = (p_1 \lor p_2) \land \neg p_1$ and a fragment of the canonical model $\C_{\T_\dag,\{A(a)\}}$ are shown below:\\[10pt]
\centerline{\begin{tikzpicture}[>=latex,
man/.style={draw,thin,fill=white,rectangle,rounded corners=1mm,inner sep=0pt,minimum height=2.8mm,minimum width=2.8mm,fill opacity=1},
maw/.style={draw,thin,fill=white,rectangle,rounded corners=1mm,inner sep=0pt,minimum height=4.8mm,minimum width=2.8mm,fill opacity=1},
spoint/.style={draw,semithick,fill=gray,rectangle,rounded corners=0.7mm,inner sep=0pt,minimum height=2mm,minimum width=2mm}
]\small
\newcommand{\dl}[2]{\tabcolsep=0pt\tiny\bf\begin{tabular}{c}#2\\[-1pt]#1\end{tabular}}
%
\draw[ultra thin,gray] (-1,-1.5) -- +(7.9,0);
\draw[ultra thin,gray] (-1,-2.5) -- +(7.9,0);
\node at (-1.2,-1.5) {\normalsize $p_2$};
\node at (-1.2,-2.5) {\normalsize $p_1$};
\node[wpoint,label=left:{$y$},label=right:{$A$}] (y) at (0,-1) {};
\node[wpoint,label=left:{\scriptsize$z_1^1$}] (z11) at (-0.4,-2) {};
\node[wpoint,label=right:{\scriptsize$z_2^1$}] (z21) at (0.4,-2) {};
\node[spoint,label=left:{$z_1^0$}] (z12) at (-0.4,-3) {};
\node[spoint,label=right:{$z_2^0$}] (z22) at (0.4,-3) {};
\begin{scope}[thick]
\draw[->] (y) to node[midway,man] {\tiny\bf +} (z11);
\draw[->] (z11) to node[midway,man] {\tiny\bf +} (z12);
\draw[->] (y) to node[midway,man] {\tiny\bf 0} (z21);
\draw[->] (z21) to node[midway,man] {\tiny$\boldsymbol{-}$} (z22);
\end{scope}
\begin{scope}\small
\node[rotate=0] at (-0.6,-3.5) {$p_1\!\lor\!p_2$};
\node[rotate=0] at (0.4,-3.5) {$\neg p_1$};
\end{scope}
\begin{scope}[xshift=8mm]
\node[bpoint,label=below:{$a$}] (a) at (4,-3) {};
\node[wpoint] (an) at (2.3,-2) {};
\node[wpoint,opacity=0.6] (ap) at (5.7,-2) {};
\node[wpoint,opacity=0.6] (ann) at (1.2,-1) {};
\node[wpoint] (anp) at (3.4,-1) {};
\node[wpoint,opacity=0.6] (apn) at (4.6,-1) {};
%
%
\begin{scope}[ultra thick]
\draw[->] (an) to node[pos=0.3,maw] {\dl{0}{+}}(a);
\draw[->,opacity=0.6] (ap) to node[pos=0.3,maw] {\color{black!60}\dl{0}{$\boldsymbol{-}$}}(a);
\draw[->,opacity=0.6] (ann) to node[midway,maw] {\color{black!60}\dl{0}{+}}(an);
\draw[->] (anp) to node[midway,maw] {\dl{0}{$\boldsymbol{-}$}}(an);
\draw[->,opacity=0.6] (apn) to node[midway,maw] {\color{black!60}\dl{0}{+}}(ap);
\end{scope}
\node[spoint,opacity=0.6] (zann) at ($(ann)+(0,-1)$) {};
\node[spoint,opacity=0.6] (zann1) at ($(zann)+(0,-1)$) {};
\node[spoint] (zanp) at ($(anp)+(0,-1)$) {};
\node[spoint] (zanp1) at ($(zanp)+(0,-1)$) {};
\node[spoint] (zan) at ($(an)+(0,-1)$) {};
\node[spoint,opacity=0.6] (zapn) at ($(apn)+(0,-1)$) {};
\node[spoint,opacity=0.6] (zapn1) at ($(zapn)+(0,-1)$) {};
\begin{scope}[thick, densely dotted]
\draw[opacity=0.6] (zann1) -- +(0,-0.6);
\draw (zanp1) -- +(0,-0.6);
\draw[opacity=0.6] (zapn1) -- +(0,-0.6);
\draw (zan) -- +(0,-0.6);
\draw[<-,opacity=0.6] (ap) -- +(0.3,0.3);
\draw[<-,opacity=0.6] (ann) -- +(0.3,0.4);
\draw[<-,opacity=0.6] (ann) -- +(-0.3,0.4);
\draw[<-,opacity=0.6] (anp) -- +(0.3,0.4);
\draw[<-,opacity=0.6] (anp) -- +(-0.3,0.4);
\draw[<-,opacity=0.6] (apn) -- +(0.3,0.4);
\draw[<-,opacity=0.6] (apn) -- +(-0.3,0.4);
\end{scope}
\begin{scope}[thick]
\draw[->] (an) to node[solid,midway,man] {\tiny\bf$\boldsymbol{-}$} (zan);
\draw[->,opacity=0.6] (ann) to node[solid,midway,man] {\tiny\bf$\boldsymbol{-}$} (zann);
\draw[->,opacity=0.6] (zann) to node[solid,midway,man] {} (zann1);
\draw[->] (anp) to node[solid,midway,man] {\tiny\bf+} (zanp);
\draw[->] (zanp) to node[solid,midway,man] {} (zanp1);
\draw[->,opacity=0.6] (apn) to node[solid,midway,man] {\tiny\bf$\boldsymbol{-}$} (zapn);
\draw[->,opacity=0.6] (zapn) to node[solid,midway,man] {} (zapn1);
\end{scope}
\node[fill=white,inner sep=1pt] at (4,-0.5) {\normalsize $\C_{\T_\dag,\{A(a)\}}$};
\end{scope}
\node at (0,-0.5) {\normalsize $\q_\varphi$};
\begin{scope}[dashed]
\draw[out=30,in=150,looseness=0.6] (y) to (anp);
\draw[out=-30,in=-150,looseness=0.6] (z22) to (zan);
\draw[out=-30,in=-150,looseness=0.6] (z12) to (zanp1);
\end{scope}
\end{tikzpicture}}\\[10pt]
Here, \begin{tikzpicture}\node[draw,semithick,fill=gray,rectangle,rounded corners=0.7mm,inner sep=0pt,minimum height=2mm,minimum width=2mm] at (0,0) {};\end{tikzpicture} are the points in $B_0$ and the labels on arrows indicate the subscripts of the binary predicates $P$ (the empty label means all three: $+$, $-$ and $0$); predicates $A$, $B_+$, $B_-$ are not shown in $\C_{\T_\dag,\{A(a)\}}$.
\end{proof}

The proof above uses OMQs $\omq_\varphi = (\T_\dag,\q_\varphi)$ over a data instance with a single individual constant. Thus:
\begin{corollary}\label{cor:no-poly}
No polynomial-time algorithm can construct \FO- or \NDL-rewritings for the OMQs $\omq_\varphi$ 
unless $\PTime = \NP$.
\end{corollary}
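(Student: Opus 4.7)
The plan is a direct reduction from SAT using Theorem~\ref{thm:NP:query}. I would argue by contradiction: assume some algorithm $\mathcal{R}$ that, given $\omq_\varphi = (\T_\dag,\q_\varphi)$, produces an FO- or NDL-rewriting $\q'$ in polynomial time. Since $\T_\dag$ is fixed and $|\q_\varphi| = O(|\varphi|)$, the output $\q'$ has size polynomial in $|\varphi|$.

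The second step is to evaluate $\q'$ on the single-individual data instance $\A_0 = \{A(a)\}$ featured in Theorem~\ref{thm:NP:query}. The key observation is that on a domain of size one, both FO- and NDL-evaluation are polynomial in the query size: every quantifier has the unique witness $a$, so an FO-formula of size $s$ is decided by a bottom-up sweep in $O(s)$ time, and the $a$-grounding of any NDL-query of size $s$ has size $O(s)$, whose least fixed point is computable in $\mathrm{poly}(s)$. By the defining property of a rewriting, $\I_{\A_0} \models \q'$ iff $\T_\dag,\A_0 \models \q_\varphi$, which by Theorem~\ref{thm:NP:query} is equivalent to satisfiability of $\varphi$. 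Composing $\mathcal{R}$ with this evaluation thus yields a polynomial-time SAT algorithm, contradicting $\PTime \neq \NP$.

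The main ``obstacle'' is really only a bookkeeping point: if $\mathcal{R}$ only delivers NDL-rewritings over \emph{complete} data instances, we first apply the $^*$-transformation from Section~\ref{sec:prelims}, which increases the program size by at most $|\T_\dag|^2$ (a constant, since $\T_\dag$ is fixed) and runs in polynomial time, to convert it into an NDL-rewriting over arbitrary data instances, to which the argument above applies unchanged. Note that this strengthens Theorem~\ref{thm:NP:query} in a meaningful way: the theorem leaves open the possibility that a clever, compact FO-rewriting of $\omq_\varphi$ exists but is expensive to find; the corollary rules out even the \emph{construction} of any such rewriting in polynomial time under standard complexity assumptions, independently of how small the rewriting might be.
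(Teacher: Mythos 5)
Your proof is correct and follows essentially the same route as the paper: assume a polynomial-time rewriting algorithm, evaluate the (necessarily polynomial-size) rewriting over the single-constant instance $\{A(a)\}$ — which is polynomial for both FO and NDL since the domain has one element — and conclude via Theorem~\ref{thm:NP:query} that SAT would be in $\PTime$. Your extra remark about applying the $^*$-transformation when the rewriting is only guaranteed over complete data instances is a harmless and sensible bookkeeping addition that the paper leaves implicit.
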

\begin{proof}
Indeed, if a polynomial-time algorithm could find a rewriting $\q'_\varphi$ of $\omq_\varphi$, then we would be able to check whether $\varphi$ is satisfiable in polynomial time by evaluating $\q'_\varphi$ over the data instance $\{A(a)\}$.
\end{proof}

Curiously enough, Corollary~\ref{cor:no-poly} can be complemented with the following theorem:
\begin{theorem}\label{thm:FOpoly}
The $\omq_\varphi$\! have
polynomial \FO-rewritings.
\end{theorem}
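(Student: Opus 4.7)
The plan is to case-split on whether $\varphi$ is satisfiable. In each case I exhibit a polynomial-size FO-rewriting of $\omq_\varphi$; this is consistent with Corollary~\ref{cor:no-poly} because the case distinction itself cannot be made in polynomial time (unless $\PTime = \NP$).

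In the satisfiable case, the rewriting is trivial: I take $\q'_\varphi := \exists z\, A(z)$, of constant size. Since $\T_\dag$ contains no axiom with $A$ in the head, the canonical model $\C_{\T_\dag,\A}$ contains an $A$-element iff $\A$ contains some $A(a)$; whenever it does, a fixed satisfying assignment of $\varphi$ supplies the homomorphism of $\q_\varphi$ into the anonymous tree from $a$ described in the proof of Theorem~\ref{thm:NP:query}. Conversely, the atom $A(y)$ in $\q_\varphi$ cannot be matched without an $A$-element.

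In the unsatisfiable case, the rewriting must handle homomorphisms $h$ that mix $\A$-atoms with anonymous sinks and tree-witnesses. Let $b$ be the $\A$-individual rooting the tree that $h$ exploits (possibly $b = h(y)$ itself). Writing $d$ for the depth of $h(y)$ in $b$'s tree, the $B_\pm$-labels along the path from $h(y)$ back to $b$ form a binary sequence $\sigma \in \{B_-,B_+\}^d$, and each ray $j$ is routed by either reaching $b$ and continuing through $\A$ or entering a sink after some number of up-steps; one easily checks that $d \le k$ suffices, where $k$ is the number of variables of $\varphi$. The proposed rewriting uses two fresh existential variables $u \neq v$ to symbolically encode the two possible $B_\pm$-values, plus $d$ further existential bit-variables $\beta_1,\ldots,\beta_d$ each forced to equal $u$ or $v$:
\begin{equation*}
\q'_\varphi \;:=\; \exists b\,\Bigl[A(b) \land \bigvee_{d=0}^{k} \q_d(b)\Bigr],
\end{equation*}
where $\q_0(b)$ is the ``$y = b$'' conjunction $\bigwedge_{j=1}^{m}(\psi^{\A}_j(b) \lor \psi^{\text{sink}}_j(b))$, capturing routes that stay in $\A$ or enter a sink via a $B_0$/$B_+$/$B_-$-labelled $\A$-individual, and $\q_d(b)$ for $d \ge 1$ existentially quantifies $u, v, \beta_1, \ldots, \beta_d$ (with type constraints $u\neq v$ and $\beta_i \in \{u, v\}$) and asserts, for each ray $j$ and some choice of climb-depth $l_j \in \{0, \ldots, d\}$, that the relevant bits $\beta_d, \ldots, \beta_{d-l_j+1}$ are compatible with the up-edges of $\chi_j$, while either the entry-bit $\beta_{d-l_j}$ agrees with the literal at position $k - l_j$ (when $l_j < d$) or the remaining $k - d$ edges are matched from $b$ in $\A$ (when $l_j = d$). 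Each $\psi^d_j$ has size $O(k\cdot|\T_\dag|)$, so $|\q'_\varphi| = O(m \cdot k^2 \cdot |\T_\dag|^{O(1)})$, polynomial in $|\omq_\varphi|$.

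The main obstacle is verifying soundness and completeness of the unsatisfiable-case rewriting. Soundness holds because any tuple of witnesses $(b, u, v, \beta_i, l_j)$ satisfying $\q_d(b)$ corresponds to a branch of $b$'s anonymous binary tree labelled by the $\beta_i$'s, and $\C_{\T_\dag,\A}$ contains a node for every possible $B_\pm$-sequence of every length, so any symbolic choice materialises as an actual tree-witness. Completeness: given any homomorphism $h$, one reads off $b$ (the $\A$-individual rooting the tree used by $h$), the depth $d$ of $h(y)$, the sequence $\sigma$ along $h(y)$'s path (reflected in the $\beta_i$'s), and each ray's climb-depth $l_j$ from $h$'s routing. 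The delicate point is reconciling the $\beta_i \leftrightarrow B_\pm$ bookkeeping with the fact that ``sink-entry'' at a tree-witness imposes a specific $B_\pm$-label while ``up-step'' admits two labels (one each for $\P/\V$ via $B_-$ and $\N/\V$ via $B_+$), and verifying that the restriction $d \le k$ loses no matching because deeper ancestors contribute no edges usable by any ray of length $k$ starting at $h(y)$.
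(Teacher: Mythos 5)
Your overall strategy is genuinely different from the paper's: the paper splits on the \emph{number of constants} in the data, handling data with at least two constants by invoking the known polynomial-size FO-rewriting of any such OMQ over $\geq 2$-constant data instances (\cite[Corollary~14]{DBLP:journals/ai/GottlobKKPSZ14}) and the single-constant case by hardwiring the bit $\varphi^*$, whereas you split on the \emph{satisfiability of $\varphi$} and, in the unsatisfiable case, try to build the rewriting by hand, re-deriving the two-element bit-encoding trick that underlies the cited corollary. That strategy is legitimate in principle, but as written it is not a proof: the entire correctness of the unsatisfiable-case formula is deferred (you yourself label it ``the main obstacle'' and ``the delicate point''), and this is exactly where the work lies, since the paper deliberately outsources it to an external result. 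Two smaller inaccuracies in the satisfiable case: $\T_\dag$ \emph{does} have $A$ on the right-hand side of axioms (the existential witnesses of the first and third axioms carry $A$), and for completeness over arbitrary data you should use $\exists z\,\bigvee_{\T_\dag\models\tau(x)\to A(x)}\tau(z)$ rather than $\exists z\,A(z)$, since $A$ may be entailed at an individual without an explicit $A$-atom; the conclusion of that case is otherwise fine.

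The unsatisfiable-case construction, as described, is in fact incomplete. For $d\geq 1$ your $\q_d$ requires two distinct domain elements ($\exists u\neq v$), and $\q_0$ only covers homomorphisms mapping $y$ to an ABox individual; but there are single-constant data instances on which, with $\varphi$ unsatisfiable, the certain answer is ``yes'' only via homomorphisms sending $y$ to a labelled null. Concretely, take $\varphi=p_1\wedge\neg p_1$ over variables $p_1,p_2$ (so $k=2$ and each ray starts with a $\V$-edge) and $\A=\{A(a),B_+(a),B_-(a)\}$: mapping $y$ to the null created above $a$ by $A(x)\to\exists y\,(\P(y,x)\wedge\V(y,x)\wedge B_-(y)\wedge A(y))$, both rays descend to $a$ along the $\V$-edge and then enter the $B_0$-sinks generated by $B_+(a)$ and $B_-(a)$, so $\T_\dag,\A\models\q_\varphi$; yet there is no homomorphism with $y\mapsto a$ (there is no $\V$-successor of $a$), so $\q_0(a)$ fails, and every $\q_d$ with $d\geq 1$ is false on a one-element structure. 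Hence your rewriting answers ``no'' on a ``yes''-instance. (The single-constant case is genuinely delicate here --- it is precisely where the paper concentrates the non-uniformity --- and since the signature is fixed it can be patched by a separate bounded-size disjunct enumerating the minimal single-constant ABox types with answer ``yes''; but your proposal neither notices the problem nor supplies such a disjunct, and the remaining soundness/completeness argument for $\q_d$, $d\geq1$, including the sink-entry versus climb bookkeeping and the treatment of rays that re-enter the ABox part, is still only sketched.)
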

\begin{proof}
Define $\q'_\varphi$ as the FO-sentence
\begin{equation*}
\forall xy \, \big((x=y) \land A(x) \land  \varphi^*\big) \ \lor \
\exists xy \, \big((x\ne y) \land \q^*_\varphi(x,y)  \big),
\end{equation*}
where $\varphi^*$ is $\top$ if $\varphi$ is satisfiable and $\bot$ otherwise, and $\q^*_\varphi(x,y)$ is the polynomial-size FO-rewriting of $\omq_\varphi$ over data with \emph{at least 2} constants~\cite[Corollary~14]{DBLP:journals/ai/GottlobKKPSZ14}.  Recall that the proof of Theorem~\ref{thm:NP:query}  shows that, if $\A$ has a single constant, $a$, and there is a homomorphism from $\q_\varphi$ to $\C_{\T_\dag,\A}$, then $A(a)\in \A$ and $\varphi$ is satisfiable.
Thus, the first disjunct of $\q'_\varphi$ is an FO-rewriting of $\omq_\varphi$ over data instances with a single constant; the case of at least 2 constants follows from~\cite[Corollary~14]{DBLP:journals/ai/GottlobKKPSZ14}.
\end{proof}

Whether the OMQs $\omq_\varphi$ have a polynomial-size \PE- or \NDL-rewritings remains open.
We have only managed to construct a modification $\bar \q_\varphi(x)$ of $\q_\varphi$ with the following interesting properties (details are given in Appendix~\ref{app:no-ql-rewritings}).
Let $\mathfrak T$ be the class of data instances representing finite binary trees with root $a$ whose edges are labelled with $\P$ and $\N$, and some of whose leaves are labelled with $B_0$.
Let $\mathcal{QL}$ be any query language such that, for every $\mathcal{QL}$-query $\Phi(x)$ and every $\A \in \mathfrak T$, the answer to $\Phi(a)$ over $\A$ can be computed in time polynomial in $|\Phi|$ and $|\A|$. Typical examples of $\mathcal{QL}$ are modal-like languages such as certain fragments of XPath~\cite{Koch:2006:PQT:1142351.1142382} or description logic instance queries~\cite{BCMNP03}.
\begin{theorem}\label{no-ql-rewritings}
The OMQs $(\T_\dag,\bar \q_\varphi(x))$ do not have polynomial-size rewritings in $\mathcal{QL}$ unless $\NP \subseteq \Ppoly$.
\end{theorem}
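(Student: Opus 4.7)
Plan: My plan is a non-uniform reduction from \textsc{Sat}: assuming polynomial-size $\mathcal{QL}$-rewritings $\Phi_\varphi$ of every OMQ $(\T_\dag,\bar\q_\varphi(x))$, I will build polynomial-size non-uniform advice that decides satisfiability. The first ingredient is the design of $\bar\q_\varphi(x)$ described in Appendix~\ref{app:no-ql-rewritings}: unlike the Boolean $\q_\varphi$ of Theorem~\ref{thm:NP:query}, whose hardness is extracted from the singleton data $\{A(a)\}$, the modified query $\bar\q_\varphi(x)$ must be crafted so that evaluation over nontrivial tree data instances $\A\in\mathfrak{T}$ carries additional information. Concretely, one should obtain a canonical sequence $\{\varphi_n\}_n$ of CNFs of size $\poly(n)$ such that each \textsc{Sat} instance $\psi$ of size $n$ admits a polynomial-time computable encoding $\A_\psi\in\mathfrak{T}$ of size $\poly(n)$ with
\begin{equation*}
\T_\dag,\A_\psi \models \bar\q_{\varphi_n}(a) \quad \text{iff} \quad \psi \text{ is satisfiable}.
\end{equation*}

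Once this universal encoding is in place, the $\Ppoly$-upper bound for \textsc{Sat} falls out cleanly. Assume by contradiction that every $(\T_\dag,\bar\q_\varphi(x))$ admits a $\mathcal{QL}$-rewriting $\Phi_\varphi$ with $|\Phi_\varphi|\le \poly(|\varphi|)$. Because $\mathcal{QL}$-queries are evaluable in polynomial time over $\mathfrak{T}$, the answer to $\Phi_\varphi(a)$ on any $\A\in\mathfrak{T}$ is computable in time $\poly(|\Phi_\varphi|,|\A|) = \poly(|\varphi|,|\A|)$. The $\Ppoly$-algorithm for \textsc{Sat} is then immediate: the non-uniform advice for inputs of size $n$ is $\Phi_{\varphi_n}$ (of size $\poly(n)$); on input $\psi$, the polynomial-time procedure constructs $\A_\psi$ and evaluates $\Phi_{\varphi_n}(a)$ on $\A_\psi$. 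The displayed equivalence guarantees correctness, giving $\NP\subseteq\Ppoly$.

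The main obstacle is the first step — engineering $\bar\q_\varphi$ so that the universal encoding holds. The difficulty is that the original $\q_\varphi$ embeds $\varphi$ \emph{inside} the CQ itself, so there is no room on the data side to vary additional \textsc{Sat} instances; Theorem~\ref{thm:NP:query} accordingly extracts hardness only through the trivial singleton ABox, which yields no leverage against non-uniform rewriting bounds. The modification must rearrange matters so that a polynomial-size query template, evaluated over a suitable tree instance $\A\in\mathfrak{T}$, can emulate satisfiability for arbitrary inputs — in essence, shifting the encoding of the Boolean formula from the CQ into the tree data while keeping the answer variable $x$ as a handle that makes nontrivial tree instances matter. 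Once this shift is achieved, the $\mathcal{QL}$-evaluation and $\Ppoly$-advice assembly in the second step are routine.
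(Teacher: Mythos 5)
Your second stage (advice assembly) is sound and matches the paper: the paper likewise fixes a sequence of CNFs $\varphi_n$ whose clause-deletion functions $f_{\varphi_n}$ are \NP-hard, assumes polynomial-size rewritings $\Phi_n$ of $(\T_\dag,\bar\q_{\varphi_n}(x))$, and turns the polynomial-time $\mathcal{QL}$-evaluator plus $\Phi_n$ into polynomial-size circuits (equivalently, your ``advice $=$ rewriting'' $\Ppoly$ algorithm for \textsc{Sat}), whence $\NP\subseteq\Ppoly$; instantiating your $\varphi_n$ as, say, the conjunction of all $3$-clauses over $n$ variables makes the two formulations interchangeable. But the first stage, which you explicitly defer as ``the main obstacle'', is precisely where the proof lives, so as it stands there is a genuine gap: you never construct $\bar\q_\varphi$, never exhibit the encoding of instances as trees in $\mathfrak{T}$, and never prove the correctness equivalence, and none of this is routine because the ontology $\T_\dag$ is fixed and cannot be adapted to the encoding.

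Concretely, the paper defines $\bar\q_\varphi(x)$ by (\emph{i}) attaching the answer variable to the star centre $y$ via a path of $k$ atoms $\V(y^1,x),\dots,\V(y^k,y^{k-1})$ with $y^k=y$, and (\emph{ii}) extending the $j$th clause-ray below $z_j^0$ by $\ell=\log m$ further binary atoms whose $\P/\N$ pattern spells the binary address of $j$, ending in $B_0(z_j^{-\ell})$. The data instances are the full binary $\P/\N$-trees $\A^{\avec{\alpha}}_m$ of depth $\ell$ with $A(a)$ at the root and $B_0$ exactly at the leaves $b_i$ with $\avec{\alpha}_i=1$, and the key lemma states that $\T_\dag,\A^{\avec{\alpha}}_m\models\bar\q_\varphi(a)$ iff the CNF obtained from $\varphi$ by deleting the clauses $\chi_i$ with $\avec{\alpha}_i=1$ is satisfiable. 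Proving this requires a homomorphism analysis of $\C_{\T_\dag,\A^{\avec{\alpha}}_m}$: with $x$ mapped to the root $a$, the centre $y$ must go to an anonymous point encoding a truth assignment, and the addressed suffix of the $j$th ray forces that ray either to descend through the data tree to the $j$th leaf (possible iff that leaf carries $B_0$, i.e., $\chi_j$ was deleted) or to be absorbed into a $B_0$-sink of the anonymous part, which is possible only if the chosen assignment satisfies $\chi_j$. Your proposal only states desiderata for such a query (``one should obtain\dots'', ``the modification must rearrange matters\dots'') without constructing it or verifying that the fixed $\T_\dag$ supports it, so the heart of the argument is missing.
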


To our surprise, Theorem~\ref{no-ql-rewritings} is not applicable to PE.\!\footnote{This result might be known but we could not  find it in the literature, and so provide a proof in Appendix~\ref{app:trees-data-np}.}
\begin{theorem}\label{trees-data-np}
Evaluating \PE-queries over trees in $\mathfrak T$ is \NP-hard.
\end{theorem}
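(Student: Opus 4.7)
The plan is to establish NP-hardness of the evaluation problem by a polynomial-time reduction from SAT, using a fixed three-node tree and encoding the CNF formula entirely in the PE-query (so combined complexity is what we are hitting, which is all that is needed to make Theorem~\ref{no-ql-rewritings} fail for PE). Fix the tree $\A_0 \in \mathfrak{T}$ on universe $\{a,b,c\}$ with edges $P_+(a,b)$ and $P_-(a,c)$ and with leaf labels $B_0(b)$, $B_0(c)$. Given any CNF $\varphi$ with variables $p_1,\dots,p_k$ and clauses $\chi_1,\dots,\chi_m$, I would build the polynomial-size PE-query
\begin{equation*}
\Phi_\varphi \ = \ \exists r, s_+, s_-, x_1, \ldots, x_k \; \Bigl( P_+(r, s_+) \,\land\, P_-(r, s_-) \,\land \bigwedge_{j=1}^m \bigvee_{l \in \chi_j} \alpha_l \Bigr),
\end{equation*}
where $\alpha_l$ is $P_+(r, x_i)$ if $l = p_i$ and $P_-(r, x_i)$ if $l = \neg p_i$.

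Next I would argue correctness of the reduction. The conjuncts $P_+(r,s_+)$ and $P_-(r,s_-)$ force $r$ to be a node with both a $P_+$- and a $P_-$-outgoing edge, and in $\A_0$ the only such node is $a$, so necessarily $r=a$, $s_+=b$, $s_-=c$. Each remaining existential $x_i$ then ranges over $\{a,b,c\}$, and the three possibilities carry distinct semantic content: picking $x_i=b$ makes $P_+(r,x_i)$ true and $P_-(r,x_i)$ false (encoding $p_i$ as true); picking $x_i=c$ is dual (encoding $p_i$ as false); and picking $x_i=a$ makes both $P_+(a,a)$ and $P_-(a,a)$ false, so this choice cannot help satisfy any clause through $p_i$ or $\neg p_i$. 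Consequently, each clause disjunction $\bigvee_{l \in \chi_j} \alpha_l$ holds exactly when some literal of $\chi_j$ is true under the induced truth assignment, so $\A_0 \models \Phi_\varphi$ iff $\varphi$ is satisfiable.

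Both directions of this equivalence are then routine: from a satisfying assignment $v$ one reads off a homomorphism by setting $x_i = b$ when $v(p_i)$ is true and $x_i = c$ otherwise; from any satisfying homomorphism one extracts a consistent assignment using the same correspondence (with arbitrary values wherever $x_i = a$). Since $|\Phi_\varphi|$ is linear in $|\varphi|$ and $|\A_0|$ is constant, the reduction is polynomial and yields NP-hardness of PE-evaluation over trees in $\mathfrak{T}$. The main conceptual point rather than a genuine technical obstacle is to recognise that the free interleaving of $\land$ and $\lor$ inside a PE existential prefix is by itself strong enough to embed SAT, so no elaborate tree structure on the data side is needed—a single three-node gadget with one $P_+$-child and one $P_-$-child already does the job, and the clause-conjunction over literal-disjunctions lifts verbatim into the query.
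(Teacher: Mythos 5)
Your proof is correct, and it takes a genuinely different and considerably simpler route than the paper. The paper's argument is a reduction from 3-SAT that \emph{splits the instance between query and data}: the data is a full binary tree $\A_m^{\avec{\alpha}}$ of depth $\log m$ whose leaf $B_0$-labels encode which of the $m$ clauses of the universal 3-CNF $\varphi_k$ (all possible 3-clauses over $k$ variables) are retained, while the PE-query $\q_m$ depends only on $k,m$ and navigates the tree to decode a truth assignment and check each surviving clause. You instead fix a single three-node tree and push the entire CNF into the query, exploiting directly the unrestricted interleaving of $\land$ and $\lor$ that PE permits under $\exists$: the two guard atoms $\P(r,s_+)$ and $\N(r,s_-)$ pin $r$ to the root, each $x_i$ then ranges over a ``true'' child, a ``false'' child, or the degenerate root value (which your argument correctly observes can never help any clause), and each clause disjunction fires iff the induced assignment satisfies the clause. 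Your tree sits inside $\mathfrak T$ --- in fact it is $\A_2^{(1,1)}$ up to the inessential $A(a)$ mark, so it is even inside the restricted family the paper actually works with --- and the reduction is polynomial, so the theorem follows. As for what each route buys: the paper's version keeps the query structurally uniform and pushes the 3-CNF into $\avec{\alpha}$, which integrates with the $f_\varphi$/circuit-complexity machinery assembled for Theorem~\ref{no-ql-rewritings}; yours isolates the source of intractability more sharply, showing that PE evaluation is NP-hard already over a \emph{constant-size, two-leaf} tree, with all the combinatorial weight carried by the query. Both establish the stated result; yours is the more elementary and, for this theorem in isolation, the more economical argument.
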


Finally, we consider bounded-leaf CQs (whose evaluation is \NL-complete in the database setting) with fixed ontology and data.
\begin{theorem}\label{fixed-logcfl}
There is an ontology $\T_\ddag$ such that answering OMQs of the form $(\T_\ddag,\q)$ with Boolean linear CQs $\q$ is \LOGCFL-hard for query complexity.
\end{theorem}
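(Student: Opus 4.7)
The plan is to reduce a fixed \LOGCFL-hard problem to the query-complexity version of answering linear OMQs $(\T_\ddag,\q)$, mirroring the strategy of Theorem~\ref{thm:NP:query} but replacing SAT by a sequential source problem. A convenient choice is the membership problem for a fixed context-free language $L_0$ whose word problem is \LOGCFL-hard under logspace reductions; equivalently, the acceptance problem of a fixed NAuxPDA $\mathcal N$ running in logarithmic auxiliary space and polynomial time, which captures \LOGCFL{} by Sudborough's theorem. As in the proof of Theorem~\ref{thm:NP:query}, I would use a single-constant data instance $\A = \{A(a)\}$, so that all input-dependent information is pushed into the CQ.

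For the ontology, I would design $\T_\ddag$ so that $\C_{\T_\ddag,\A}$ is an infinite tree whose unary labels encode the finite-state control and top-of-stack symbol of $\mathcal N$, and whose edges encode stack operations: moving downward along a role $R_\gamma$ corresponds to pushing the stack symbol $\gamma$, and moving upward via $R_\gamma^-$ to popping $\gamma$. Concretely, for every transition of $\mathcal N$ from state $s$ to $s'$ pushing $\gamma$, I would include an axiom
\[
Q_s(x) \to \exists y\,\bigl(R_\gamma(x,y) \land Q_{s'}(y)\bigr),
\]
and analogously for no-stack (internal) moves; starting from $A(a) = Q_{s_0}(a)$, the chase unfolds an infinite tree that realises every possible state-and-stack history of~$\mathcal N$. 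For each input $w = \sigma_1\dots\sigma_n$, the linear Boolean CQ $\q_w$ would have a Gaifman graph that is a single path, where each input symbol $\sigma_j$ is expanded into a constant-size gadget of successive (possibly inverse) role atoms decorated with unary state markers at the intermediate variables, so that the gadget, when mapped into $\C_{\T_\ddag,\A}$, realises exactly one step of $\mathcal N$ reading $\sigma_j$. Concatenating these gadgets forces any homomorphism of $\q_w$ into the canonical model to trace an accepting run of $\mathcal N$ on $w$, with the turning points of the walk in the tree corresponding to stack-height changes. A routine induction on the length of the run would give $\T_\ddag,\A\models \q_w$ iff $\mathcal N$ accepts $w$ iff $w\in L_0$, and since $\q_w$ is computable from $w$ in logarithmic space this yields the desired \LOGCFL-hardness.

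The main obstacle is ruling out \emph{spurious} homomorphisms: because inverse roles let the walk turn arbitrarily often and the canonical tree branches in many directions, there may be matches of $\q_w$ that do not correspond to any valid computation of $\mathcal N$. The standard countermeasure, already used in~\cite{DBLP:conf/lics/BienvenuKP15}, is to sprinkle extra unary ``direction'' and ``phase'' markers on the tree so that each gadget admits only the intended move type (push, pop or internal), and to colour each pushed stack symbol uniquely via its own role name $R_\gamma$ so that a pop is forced to match the most recent push. Verifying that these markers preclude all unintended matches while still admitting a genuine match whenever $\mathcal N$ has an accepting run on $w$ is the delicate combinatorial core of the argument, and is where the bulk of the engineering of $\T_\ddag$ is spent.
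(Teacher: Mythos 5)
Your overall strategy---reduce from a fixed \LOGCFL-hard context-free language, let the canonical model of a fixed ontology unfold an infinite tree whose depth plays the role of the stack, and let the linear CQ trace a walk that goes down on pushes and up on pops---is indeed the skeleton of the paper's proof, which reduces from Greibach's hardest context-free language. But the way you propose to realise the simulation has a genuine gap. Your gadgets are supposed to ``realise exactly one step of $\mathcal N$ reading $\sigma_j$'' by decorating the query with unary state markers. The accepting run of $\mathcal N$ is nondeterministic: which transition is taken on $\sigma_j$ (push or pop, which stack symbol, which successor state) is not a function of $w$, so the logspace transducer cannot write the correct state markers into $\q_w$; hard-coding any particular markers forces one state sequence and loses completeness, while dropping them leaves you with exactly the spurious-match problem you defer to ``markers''. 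Worse, the state information cannot be recovered from the canonical model either: a node of $\C_{\T_\ddag,\{A(a)\}}$ carries a fixed set of unary predicates, but the same tree node (i.e., the same stack content) is revisited by the run at different times in different control states, so labelling nodes with states, as your axioms $Q_s(x)\to\exists y\,(R_\gamma(x,y)\land Q_{s'}(y))$ do, makes the state after a pop be dictated by the label written at push time---which is wrong for a stateful PDA. Finally, the NAuxPDA framing does not help: its logarithmic work tape has polynomially many configurations (growing with $|w|$) and its runs have polynomial length, neither of which can be encoded in a \emph{fixed} ontology or in one constant-size gadget per input letter; only the detour through a fixed hard CFL is sound, and there your appeal to the marker technique of \cite{DBLP:conf/lics/BienvenuKP15} does not transfer, because those hardness constructions vary the ontology with the input, whereas here the ontology must be fixed.

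The paper avoids all of this by exploiting the specific shape of the hardest language $\mathcal{L}$: its base language $B_0$ is a two-pair bracket (Dyck-like) language, so each letter by itself determines the stack operation and the symbol pushed or popped ($a_i$ pushes $i$, $b_i$ pops $i$) and no control states are needed; each query letter $c$ is translated into a fixed pair of atoms $R_c,S_c$, and the ontology labels every tree edge with the matching down- and up-roles, while $A$ at both ends of the query forces the walk to start and end at the root (empty stack). The remaining block/choice structure of $\mathcal{L}$ (segments $x_i \# \cdots \# z_i$ from which exactly one choice per block must be selected) is handled by additional ``absorbing'' branches attached to every tree node, which swallow the discarded choices, together with a logspace check that the input word is block-formed (otherwise an unsatisfiable $E$-atom is emitted). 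Your proposal contains no mechanism for either of these two points---per-letter stack moves that need no states, and absorption of the unchosen parts of each block---and without them the claimed equivalence $w\in L_0$ iff $\T_\ddag,\{A(a)\}\models\q_w$ does not go through.
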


The proof is by reduction of the recognition problem for the hardest \LOGCFL\ language $\mathcal{L}$~\cite{DBLP:journals/siamcomp/Greibach73,Sudborough:1975:NTC:321906.321913}. We construct an ontology $\T_\ddag$ and a logspace transducer that converts the words $w$ in the alphabet of $\mathcal{L}$ to linear CQs $\q_w$ such that $w \in \mathcal{L}$ iff $\T_\ddag,\{A(a)\} \models \q_w$.

Finally, it is possible to strengthen the result of Theorem~\ref{leaves-param-w1} in the following way.  
Given an ontology $\T$, we define the problem $\blpr[\T]$ as follows:

{\tabcolsep=3pt\hspace*{-3pt}\begin{tabular}{ll}
\textbf{Instance:} & a tree-shaped Boolean CQ $\q$.\\
\textbf{Parameter:} & the number of leaves in $\q$.\\
\textbf{Problem:} & decide whether $\T,\{A(a)\} \models \q$.
\end{tabular}}

\begin{theorem}\label{fixed-w1}
There is an ontology $\T_\Box$ such that $\blpr[\T_\Box]$ is $W[1]$-hard.
\end{theorem}
Observe that Theorem~\ref{fixed-w1} does not imply Theorem~\ref{leaves-param-w1} directly, because $\T_\Box$ is not of finite depth; however, the proof of the former theorem (see Appendix~\ref{fixed-w1}) can be modified to obtain the latter.


\section{Experiments \& Conclusions}
The main positive result of this paper is the development of theoretically optimal NDL-rewritings for three classes $\bdObtwCQ$, $\bdOblCQ$, $\blCQ$ of OMQs. It was known that answering such OMQs is tractable, but the proofs employed
elaborate algorithms tailored for each of the three cases. We have shown that the optimal complexity can be achieved \emph{via NDL-rewriting}, thus reducing OMQ answering to standard query evaluation.
This result is practically relevant as many user queries are tree-shaped (see, e.g., \cite{PicalausaV:sparql-2011} for evidence in the RDF setting),
and indeed, recent tools for query formulation over ontologies (like~\cite{soylu2016}) produce tree-shaped CQs.   Moreover, the majority of important real-world \OWL{} ontologies are of finite depth; see
\cite{DBLP:journals/jair/GrauHKKMMW13} for statistics. In the context of OBDA, \OWLQL{} ontologies are often built
starting from the database schemas (bootstrapping \cite{DBLP:conf/semweb/Jimenez-RuizKZH15a}), which typically do not contain  cycles such as `every manager is managed by a manager.\!'  For example, the NPD FactPages ontology,\!\footnote{http://sws.ifi.uio.no/project/npd-v2/} designed to facilitate querying the datasets of the Norwegian Petroleum Directorate, is of depth 5.

The starting point of our research was the observation that standard query rewriting systems tend to produce suboptimal rewritings of the OMQs in these three classes. This is obviously so for UCQ-rewriters~\cite{PLCD*08,DBLP:conf/dlog/Perez-UrbinaMH09,DBLP:conf/cade/ChortarasTS11,DBLP:conf/icde/GottlobOP11,kyrie2,DBLP:journals/semweb/KonigLMT15}. However, this is also true of more elaborate PE-rewriters (which use disjunctions inside conjunctions)~\cite{DBLP:conf/semweb/Rodriguez-MuroKZ13,DBLP:conf/ijcai/Thomazo13} whose rewritings in theory can be of superpolynomial size; see~Fig.~\ref{pic:results}(b). Surprisingly, even NDL-rewriters such as Clipper~\cite{DBLP:conf/aaai/EiterOSTX12}, Presto~\cite{DBLP:conf/kr/RosatiA10} and Rapid~\cite{DBLP:conf/cade/ChortarasTS11} do not fare much better in practice. To illustrate,
we generated three sequences of OMQs in the class \textsf{OMQ}(1,\,1,\,2)  (lying in the intersection of $\bdObtwCQ$, $\bdOblCQ$ and $\blCQ$) with the  ontology from Example~\ref{ex:rewriting:2} and
linear CQs of up to 15 atoms as in Example~\ref{ex:rewriting:1} (which are associated with words from
$\{R, S\}^*$).  By~Fig.~\ref{pic:results}(a), answering these OMQs can be done in \NL. The barcharts in Fig.~\ref{fig:rewritigns} show the number of clauses in their NDL-rewritings produced by Clipper, Presto and Rapid, as well as by our algorithms \textsc{Lin}, \textsc{Log} and \textsc{Tw} from Sections~\ref{sec:boundedtw}--\ref{sec:boundedleaf}, respectively.
The first three NDL-rewritings display a clear exponential growth, with Clipper and Rapid failing to produce rewritings for longer CQs. In contrast, our rewritings grow linearly in accord with theory.

We evaluated the rewritings over a few randomly generated data instances using off-the-shelf datalog engine RDFox~\cite{DBLP:conf/semweb/NenovPMHWB15}.
The experiments (details are in the appendix) show that
our rewritings are usually executed faster than those produced by Clipper, Presto and Rapid.

\begin{figure*}
\centerline{%
\begin{tikzpicture}[xscale=0.55,yscale=0.92]
\def\ysc{0.025}
\def\cwidth{0.12}
%
\begin{scope}[yshift=24mm]
\node[draw,rounded corners=1mm,inner sep=6pt,fill=gray!20] at (-4,1.3) {$RRSRSRSRRSRRSSR$};
\clip (-0.5,-1) rectangle (16,310*\ysc);
\foreach \x in {1,...,15} {
\node at (\x,-0.27) {\scriptsize \x};
}
\draw (0.7,0) -- ++(0,150*\ysc);
\foreach \y in {10,25,50,100,200} {
\draw (0.7,\y*\ysc) -- +(-0.2,0);
\draw[ultra thin,gray] (0.7,\y*\ysc) -- +(14.8,0);
\node at (0.2,\y*\ysc) {\scriptsize \y};
}
\foreach \x/\v in {1/2,2/2,3/3,4/4,5/6,6/10,7/10,8/14,9/15,10/16,11/16,12/21,13/24,14/25,15/22} { 
\filldraw[pattern=north east lines, pattern color=black] (\x-2*\cwidth,0) -- ++(0,\v*\ysc) -- ++(\cwidth,0) -- (\x-\cwidth,0);
}
\foreach \x/\v in {1/2,2/5,3/8,4/11,5/14,6/17,7/20,8/23,9/26,10/29,11/32,12/35,13/38,14/41,15/44} { 
\filldraw (\x-\cwidth,0) -- ++(0,\v*\ysc) -- ++(\cwidth,0) -- (\x,0);
}
\foreach \x/\v in {1/1,2/2,3/5,4/8,5/12,6/16,7/20,8/24,9/27,10/32,11/36,12/40,13/45,14/47,15/51} {
\filldraw[pattern=north west lines, pattern color=black] (\x,0) -- ++(0,\v*\ysc) -- ++(\cwidth,0) -- (\x+\cwidth,0);
}
\foreach \x/\v in {1/1,2/1,3/2,4/3,5/5,6/7,7/10,8/13,9/13,10/26,11/39,12/39,13/-5,14/-5,15/-5} {
\filldraw[fill=gray!50,ultra thin] (\x+\cwidth,0) -- ++(0,\v*\ysc) -- ++(\cwidth,0) -- (\x+2*\cwidth,0);
}
\foreach \x/\v in {1/1,2/1,3/2,4/3,5/5,6/7,7/11,8/16,9/16,10/44,11/72,12/126,13/241,14/-5,15/-5} {
\filldraw[fill=gray!20,ultra thin] (\x+2*\cwidth,0) -- ++(0,\v*\ysc) -- ++(\cwidth,0) -- (\x+3*\cwidth,0);
}
\foreach \x/\v in {1/5,2/5,3/14,4/19,5/24,6/33,7/49,8/77,9/77,10/203,11/329,12/329,13/959,14/959,15/2723} {
\filldraw[fill=gray,ultra thin] (\x+3*\cwidth,0) -- ++(0,\v*\ysc) -- ++(\cwidth,0) -- (\x+4*\cwidth,0);
}
\filldraw[white,draw=white,decorate, decoration={coil,segment length=12pt,aspect=0}] (0,300*\ysc) rectangle ++(17,1);
\draw (0.5,0) -- (15.5,0);
\end{scope}
%
%
%
\begin{scope}[xshift=-65mm,yshift=48mm]
\clip (-5,-1) rectangle (21,210*\ysc);
\fill[white] (-0.1,-0.4) rectangle (15.5,130*\ysc);
\foreach \x in {1,...,15} {
\node at (\x,-0.27) {\scriptsize \x};
}
\draw (0.7,0) -- ++(0,100*\ysc);
\foreach \y in {10,25,50,100} {
\draw (0.7,\y*\ysc) -- +(-0.2,0);
\draw[ultra thin,gray] (0.7,\y*\ysc) -- +(14.8,0);
\node at (0.2,\y*\ysc) {\scriptsize \y};
}
\draw[ultra thin,gray] (10,200*\ysc) -- +(5.5,0);
\node[draw,rounded corners=1mm,inner sep=6pt,fill=gray!20] at (1,1.9) {$SRRSSRSRSRRSRRS$};
%
\foreach \x/\v in {1/1,2/2,3/3,4/5,5/6,6/7,7/14,8/8,9/10,10/17,11/20,12/23,13/25,14/27,15/29} {
\filldraw[pattern=north east lines, pattern color=black] (\x-2*\cwidth,0) -- ++(0,\v*\ysc) -- ++(\cwidth,0) -- (\x-\cwidth,0);
}
\foreach \x/\v in {1/2,2/5,3/8,4/11,5/14,6/17,7/20,8/23,9/26,10/29,11/32,12/35,13/38,14/41,15/44} { 
\filldraw (\x-\cwidth,0) -- ++(0,\v*\ysc) -- ++(\cwidth,0) -- (\x,0);
}
\foreach \x/\v in {1/1,2/4,3/5,4/8,5/10,6/15,7/18,8/21,9/27,10/33,11/37,12/42,13/46,14/51,15/52} {
\filldraw[pattern=north west lines, pattern color=black] (\x,0) -- ++(0,\v*\ysc) -- ++(\cwidth,0) -- (\x+\cwidth,0);
}
\foreach \x/\v in {1/1,2/2,3/2,4/4,5/4,6/8,7/11,8/18,9/24,10/34,11/43,12/56,13/-5,14/-5,15/-5} {
\filldraw[fill=gray!50,ultra thin] (\x+\cwidth,0) -- ++(0,\v*\ysc) -- ++(\cwidth,0) -- (\x+2*\cwidth,0);
}
\foreach \x/\v in {1/1,2/2,3/2,4/4,5/4,6/8,7/11,8/24,9/35,10/63,11/100,12/302,13/-5,14/-5,15/-5} {
\filldraw[fill=gray!20,ultra thin] (\x+2*\cwidth,0) -- ++(0,\v*\ysc) -- ++(\cwidth,0) -- (\x+3*\cwidth,0);
}
\foreach \x/\v in {1/5,2/14,3/14,4/23,5/23,6/39,7/57,8/96,9/183,10/356,11/356,12/1028,13/1712,14/1712,15/5108} {
\filldraw[fill=gray,ultra thin] (\x+3*\cwidth,0) -- ++(0,\v*\ysc) -- ++(\cwidth,0) -- (\x+4*\cwidth,0);
}
\draw (0.5,0) -- (15.5,0);
\begin{scope}
\clip (8,0) rectangle +(3,10);
\filldraw[white,draw=white,decorate, decoration={coil,segment length=12pt,aspect=0}] (8,115*\ysc) rectangle ++(9,5);
\end{scope}
\filldraw[white,draw=white,decorate, decoration={coil,segment length=12pt,aspect=0}] (8,210*\ysc) rectangle ++(9,5);
\end{scope}
%
\begin{scope}[xshift=-115mm,yshift=79mm]
\node[draw,rounded corners=1mm,inner sep=6pt,fill=gray!20] at (4.5,1.8) {$SRRRRRSRSRRRRRR$};
\clip (-0.5,-1) rectangle (16,95*\ysc);
\foreach \x in {1,...,15} {
\node at (\x,-0.27) {\scriptsize \x};
}
\draw (0.7,0) -- ++(0,85*\ysc);
\foreach \y in {10,25,50,100} {
\draw (0.7,\y*\ysc) -- +(-0.2,0);
\draw[ultra thin,gray] (0.7,\y*\ysc) -- +(14.8,0);
\node at (0.2,\y*\ysc) {\scriptsize \y};
}
\foreach \x/\v in {1/1,2/2,3/3,4/3,5/4,6/4,7/7,8/7,9/10,10/11,11/14,12/18,13/20,14/16,15/15} { 
\filldraw[pattern=north east lines, pattern color=black] (\x-2*\cwidth,0) -- ++(0,\v*\ysc) -- ++(\cwidth,0) -- (\x-\cwidth,0);
}
\foreach \x/\v in {1/2,2/5,3/8,4/11,5/14,6/17,7/20,8/23,9/26,10/29,11/32,12/35,13/38,14/41,15/44} { 
\filldraw (\x-\cwidth,0) -- ++(0,\v*\ysc) -- ++(\cwidth,0) -- (\x,0);
}
\foreach \x/\v in {1/1,2/4,3/5,4/5,5/8,6/10,7/13,8/16,9/22,10/27,11/29,12/33,13/35,14/36,15/37} {
\filldraw[pattern=north west lines, pattern color=black] (\x,0) -- ++(0,\v*\ysc) -- ++(\cwidth,0) -- (\x+\cwidth,0);
}
%
\foreach \x/\v in {1/1,2/2,3/2,4/2,5/2,6/2,7/4,8/6,9/10,10/14,11/14,12/14,13/-5,14/-5,15/-5} {
\filldraw[fill=gray!50,ultra thin] (\x+\cwidth,0) -- ++(0,\v*\ysc) -- ++(\cwidth,0) -- (\x+2*\cwidth,0);
}
\foreach \x/\v in {1/1,2/2,3/2,4/2,5/2,6/2,7/4,8/7,9/13,10/26,11/26,12/26,13/30,14/31,15/30} {
\filldraw[fill=gray!20,ultra thin] (\x+2*\cwidth,0) -- ++(0,\v*\ysc) -- ++(\cwidth,0) -- (\x+3*\cwidth,0);
}
\foreach \x/\v in {1/5,2/14,3/14,4/14,5/14,6/14,7/23,8/26,9/29,10/50,11/83,12/83,13/83,14/83,15/83} {
\filldraw[fill=gray,ultra thin] (\x+3*\cwidth,0) -- ++(0,\v*\ysc) -- ++(\cwidth,0) -- (\x+4*\cwidth,0);
}
\draw (0.5,0) -- (15.5,0);
\filldraw[white,draw=white,decorate, decoration={coil,segment length=20pt,aspect=0}] (13,95*\ysc) rectangle ++(3,1);
\end{scope}
\begin{scope}[xshift=-5mm,yshift=20mm]\footnotesize
\filldraw[pattern=north east lines, pattern color=black] (-10.7,0.5) rectangle +(\cwidth,20*\ysc);
\node[minimum height=7mm] at (-10.4,0.1) {\smash{\textsc{Tw}}};
\filldraw (-8.9,0.5) rectangle +(\cwidth,20*\ysc);
\node[minimum height=7mm] at (-8.6,0.1) {\smash{\textsc{Lin}}};
\filldraw[pattern=north west lines, pattern color=black] (-7.1,0.5) rectangle +(\cwidth,20*\ysc);
\node[minimum height=7mm] at (-6.9,0.1) {\smash{\textsc{Log}}};
\filldraw[fill=gray!50,ultra thin]  (-5.3,0.5) rectangle +(\cwidth,20*\ysc);
\node[minimum height=7mm] at (-4.9,0.1) {\smash{Rapid}};
\filldraw[fill=gray!20,ultra thin]  (-3.5,0.5) rectangle +(\cwidth,20*\ysc);
\node[minimum height=7mm] at (-2.9,0.1) {\smash{Clipper}};
\filldraw[fill=gray,ultra thin]  (-1.7,0.5) rectangle +(\cwidth,20*\ysc);
\node[minimum height=7mm] at (-1.1,0.1) {\smash{Presto}};
\end{scope}
\end{tikzpicture}%
}%
\caption{The size of NDL-rewritings produced by different algorithms.}\label{fig:rewritigns}
\end{figure*}

The version of RDFox we used did not seem to take advantage of the structure of the \NL{}/\LOGCFL{} rewritings by simply materialising all the predicates without using magic sets or optimising programs before execution. It would be interesting to see whether the nonrecursiveness and  parallelisability of our rewritings can be utilised to produce efficient execution plans.
One could also
investigate whether our rewritings can be
efficiently implemented using views in standard DBMSs.

Our rewriting algorithms are based on the same idea: pick a point splitting the given CQ into sub-CQs, rewrite the sub-CQs recursively, and then formulate rules that combine the resulting rewritings. The difference between the algorithms is in the choice of the splitting points, which determines the execution plans for OMQs and has a big impact on their performance.
The experiments show that none of the three splitting strategies  systematically outperforms
the others.
This suggests that 
execution times may be dramatically improved by employing 
an `adaptable' splitting strategy that would work similarly to query execution
planners in DBMSs and use statistical information about the relational tables to generate efficient NDL programs.
For example, one could first define a `cost function' on some set of alternative rewritings that roughly estimates their evaluation  time and then construct a rewriting minimising this function. Such a  performance-oriented approach was introduced and exploited in \cite{DBLP:journals/pvldb/BursztynGM16}, where the target language for OMQ rewritings was joins of UCQs (unions of CQs).
Other optimisation techniques for removing redundant rules or sub-queries from rewritings~\cite{DBLP:conf/kr/RosatiA10,DBLP:conf/semweb/Rodriguez-MuroKZ13,gottlob2014query,DBLP:journals/semweb/KonigLMT15}
or exploiting the emptiness of certain predicates
\cite{venetis2016rewriting} are also relevant here.
In the context of OBDA with relational databases and mappings, integrity constraints \cite{DBLP:conf/esws/Rosati12,DBLP:conf/dlog/Rodriguez-MuroKZ13} and the structure of mappings \cite{di2013optimizing} are particularly important for optimisation.


Having observed that (\emph{i}) the ontology depth and (\emph{ii}) the number of leaves in tree-shaped CQs occur in the exponent of our upper bounds for the complexity of OMQ answering algorithms, we regarded (\emph{i}) and (\emph{ii}) as parameters and investigated the parameterised complexity of the OMQ answering problem. We proved that the problem is $W[2]$-hard in the former case and $W[1]$-hard in the latter (it remains open whether these lower bounds are tight). Furthermore, we established that answering OMQs with a fixed ontology (of infinite depth) is \NP-complete for tree-shaped CQs and \LOGCFL-complete for linear CQs, which dashed hopes of taming intractability by restricting the ontology size, signature, etc. One remaining open problem is whether answering OMQs with a fixed ontology and tree-shaped CQs is fixed-parameter tractable if the number of leaves is regarded as the parameter.

A more general avenue
for future research is to extend the study of succinctness and optimality of rewritings to suitable ontology languages with predicates of higher-arity, such as linear and sticky tgds.

\section{Acknowledgements}

This work was supported by the French ANR  grant 12-JS02-007-01 `PAGODA: Practical Algorithms for Onto\-logy-Based Data Access', the UK EPSRC grant EP/M012670 `iTract: Islands of Tractability in Onto\-logy-Based Data Access', the Russian Foundation for Basic Research grant MK-7312.2016.1, and the Russian Academic Excellence Project 5-100.
We thank the developers of Clipper and Rapid for making their systems freely available and Riccardo Rosati for the opportunity to conduct experiments with Presto.


%

\appendix

\section{Proofs for Section~\ref{sec:3}}

\subsection{Lemma~\ref{linear-arbitrary}}

\indent\textsc{Lemma~\ref{linear-arbitrary}}.
{\it Fix any $\wid > 0$. There is an $\mathsf{L}^\NL$-transducer that, for any linear \NDL-rewriting $(\Pi,G(\avec{x}))$ of an OMQ $\omq(\avec{x})$ over complete data instances with $\wid(\Pi,G)\leq \wid$, computes a linear NDL-rewriting $(\Pi',G(\avec{x}))$ of $\omq(\avec{x})$ over arbitrary data instances such that $\wid(\Pi',G)\leq \wid+1$.}
\begin{proof}
Let $(\Pi, G(\avec{x}))$ be a linear \NDL-rewriting of the OMQ $\omq(\avec{x}) = (\T,\q(\avec{x}))$ over complete data instances such that $\wid(\Pi,G)\leq \wid$.
We will replace every clause $\lambda$ in $\Pi$ by a set of clauses $\lambda^*$ defined as follows.
Suppose $\lambda$ is of the form
\begin{equation*}
Q(\avec{z}) \gets I \land \textit{EQ} \land E_1 \land \ldots \land E_n,
\end{equation*}
where $I$ is the only IDB body atom in $\lambda$, $\textit{EQ}$ contains all equality body atoms, and $E_1, \ldots, E_n$ are the EDB body atoms not involving equality.
For every atom $E_i$, we define a set $\upsilon(E_i)$ of atoms by taking
\begin{align*}
\upsilon(E_i) & =
\bigl\{B(z) \mid \T\models B(x) \to A(x) \bigr\} \cup {} \\
& \hspace*{1.3em} \bigl\{ \varrho(y_i,z) \mid \T\models \exists y\,\varrho(y,x) \to A(x) \bigr\}, && \text{ if } E_i= A(z),\\
\upsilon(E_i) & = \bigl\{\varrho(z,z') \mid \T\models \varrho(x,y)\to  P(x,y)\bigr\}, && \text{ if } E_i = P(z,z'),
\end{align*}
where $y_i$ is a fresh variable not occurring in $\lambda$;
we assume $P^-(z,z')$ coincides with $P(z',z)$, for all binary predicates $P$.
Intuitively, $\upsilon(E_i)$ captures all atoms that imply $E_i$ with respect to $\T$.
Then $\lambda^*$ consists of the following clauses:
\begin{align*}
Q_0(\avec{z}_0) & \gets I,\\
Q_i(\avec{z}_i) & \gets Q_{i-1}(\avec{z}_{i-1}) \land E_i',\ \text{ for  } 1 \leq i \leq n \text{ and  }E_i' \in \upsilon(E_i),\\
Q(\avec{z}) & \gets Q_n(\avec{z}_n) \land \textit{EQ},
\end{align*}
where $\avec{z}_i$ is the restriction of $\avec{z}$ to variables occurring in  $I$ if $i = 0$ and in  $Q_{i-1}(\avec{z}_{i-1})$ and $E_i'$ except for $y_i$ if $i > 0$ (note that $\avec{z}_n = \avec{z}$).
Let $\Pi'$ be the program obtained from $\Pi$ by replacing each clause $\lambda$ by the set of clauses~$\lambda^*$.
By construction, $\Pi'$ is a linear NDL program and its width cannot exceed $\wid(\Pi,G)+1$ (the possible increase of $1$ is due to
the replacement of unary atoms $A(z)$ by binary atoms $\varrho(y_i,z)$).

We now argue that $(\Pi', G(\avec{x}))$ is a rewriting of $\omq(\avec{x})$ over arbitrary data instances.
It can be easily verified that $(\Pi', G(\avec{x}))$ is equivalent to $(\Pi'', G(\avec{x}))$, where
NDL program $\Pi''$ is obtained from $\Pi$ by replacing each clause
\mbox{$Q(\avec{z}) \gets I \land \textit{EQ} \land E_1 \land \ldots \land E_n$} by the (possibly exponentially larger) set of clauses of the form
\begin{equation*}
Q(\avec{z}) \gets I \land \textit{EQ} \land E_1' \land \ldots \land E_n',
\end{equation*}
for all $E_i' \in \upsilon(E_i)$ and $1 \leq i \leq n$.
It thus suffices to show that  $(\Pi'', G(\avec{x}))$ is a rewriting of $\omq(\avec{x})$ over arbitrary data instances.

First suppose that $\T, \A \models \q(\avec{a})$, where $\A$ is an arbitrary data instance.
Let $\A'$ be the complete data instance obtained from $\A$ by adding the ground atoms:
\begin{align*}
P(a,b) &~~\text{ if }~~ \varrho(a,b) \in \A \text{ and } \T\models \varrho(x,y) \to P(x,y);\\
A(a) &~~\text{ if }~~ B(a) \in \A \text{ and } \T\models B(x) \to A(x);\\
A(a) &~~\text{ if }~~ \varrho(a,b) \in \A \text{ and } \T\models \exists y\,\varrho(y,x) \to A(x).
\end{align*}
(We write $\varrho(a,b)\in\A$ for $P(a,b)\in\A$ if $\varrho=P$ and for $P(b,a)$ if $\varrho = P^-$.)
Clearly, $\T, \A' \models \q(\avec{a})$, so we must have $\Pi, \A' \models G(\avec{a})$.
A simple inductive argument (on the order of derivation of ground atoms)
shows that whenever a clause \mbox{$Q(\avec{z}) \gets I \land \textit{EQ}\land E_1 \land \ldots \land E_n$}
is applied using a substitution $\avec{c}$ for the variables in the body to derive $Q(\avec{c}(\avec{z}))$ using $\Pi$,
we can find a corresponding clause \mbox{$Q(\avec{z}) \gets I \land \textit{EQ}\land E_1' \land \ldots \land E_n'$} and a substitution $\avec{c}'$ extending $\avec{c}$ (on the fresh variables $y_i$)
that allows us to derive $Q(\avec{c}'(\avec{z}))$ using $\Pi''$. Indeed,
\begin{itemize}
\item[--]
if $E_i=A(z)$, then $A(\avec{c}(z)) \in \A'$, so there must exist either a unary ground atom $B(\avec{c}(z)) \in \A$ such that \mbox{$\T\models B(x) \to A(x)$}
or a binary ground atom \mbox{$\varrho(a,\avec{c}(z)) \in \A$}, for some $a\in\ind(\A)$, such that \mbox{$\T\models \exists y\,\varrho(y,x)\to A(x)$}; in the latter case, we set $\avec{c}'(y_i) = a$;
\item[--]
similarly, if \mbox{$E_i=P(z,z')$}, then there must exist a binary ground atom \mbox{$\varrho(\avec{c}(z),\avec{c}(z')) \in \A$} such that $\T\models \varrho(x,y) \to P(x,y)$.
\end{itemize}
It then suffices to choose \mbox{$Q(\avec{z}) \gets I \land \textit{EQ} \land E_1' \land \ldots \land E_n'$}
with atoms $E_i'$ whose form match that of the ground atoms in $\A$ corresponding to $E_i$.

For the converse direction, it suffices to observe that $\Pi\subseteq \Pi''$.

To complete the proof, we note that it is in \NL{} to decide whether an atom belongs to $\upsilon(E_i)$,
and thus we can construct the program $\Pi'$ by means of an $\mathsf{L}^\NL$-transducer.
\end{proof}

\subsection{Theorem~\ref{thm:LOGCFLdatalog}}

Next, we combine the transformation in Lemma~\ref{thm:NDLToSkinny} with the established complexity in Lemma~\ref{prop:SkinnyNDLEvaluation} to obtain the combined complexity upper bound:

\smallskip

\textsc{Theorem}~\ref{thm:LOGCFLdatalog}. \ \ {\em
For every $c > 0$ and $\wid > 0$, evaluation of NDL queries $(\Pi, G(\avec{x}))$
of width at most~$\wid$ and such that $\sdep(\Pi,G) \le c \log|\Pi|$
is in \LOGCFL{} for combined complexity.}

\begin{proof}
By Lemma~\ref{thm:NDLToSkinny}, $(\Pi,G)$ is equivalent to a skinny  NDL query $(\Pi',G)$ such that $|\Pi'| = O(|\Pi|^2)$, $\wid(\Pi',G) \le \wid$, and $\dep(\Pi',G) \le \sdep(\Pi, G)$. By
Lemma~\ref{prop:SkinnyNDLEvaluation}, query evaluation for $(\Pi',G)$ over $\A$ is   done by an NAuxPDA in space $\log |\Pi'| + \wid(\Pi',G) \cdot \log |\A| = O(\log |\Pi|+ \log |\A|)$ and time $2^{O(\dep(\Pi',G))}  \le |\Pi|^{O(1)}$.
\end{proof}

\subsection{\textsc{Log}-rewritings}\label{appA2}

\begin{lemma}
For any complete data instance $\A$, any $D \in \R$,  any type $\tpd$ with $\dom(\tpd) = \dD$ and any tuples \mbox{$\avec{b} \in \ind(\A)^{|\dD|}$} and $\avec{a}\in \ind(\A)^{|\avec{x}_D|}$, we have \mbox{$\Pi^{\textsc{Log}}_\omq,\A \models \rpred^{\tpd}_D(\avec{b}, \avec{a})$} iff there is a homomorphism \mbox{$h\colon \q_D \to \can$} such that
\begin{equation}\label{eq3}
h(x) = \avec{a}(x), \ \ \  \text{ for } x\in \avec{x}_D,\quad
\text{ and } \quad h(z)  = \avec{b}(z) \tpd(z), \ \ \  \text{ for  } z\in \dD.
\end{equation}
\end{lemma}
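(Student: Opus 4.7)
The proof proceeds by induction on the well-founded relation $\prec$ on $\R$ (equivalently, on $|D|$). Unfolding the definition of $\Pi^{\textsc{Log}}_\omq$, the premise $\Pi^{\textsc{Log}}_\omq,\A \models \rpred^{\tpd}_D(\avec{b}, \avec{a})$ is equivalent to the existence of a type $\tpr$ with $\dom(\tpr) = \lambda(\sigma(D))$ that is compatible with $\sigma(D)$ and agrees with $\tpd$ on $\lambda(\sigma(D)) \cap \dD$, together with a substitution $\mu$ of the remaining body variables into $\ind(\A)$, such that $\mathsf{At}^{\tpr}$ holds in $\A$ under $\mu$ extended by $\avec{b}$ and $\avec{a}$, and the subgoals $\rpred^{(\tpr \cup \tpd)\restr\dDp}_{D'}$ succeed for every $D' \prec D$. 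The base case $|D|=1$ is the degenerate instance with no subgoals.

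For ($\Rightarrow$), I would apply the IH to the successful subgoals to obtain homomorphisms $h_{D'}\colon \q_{D'}\to\can$ respecting the boundary. Then build $h\colon \q_D\to\can$ by setting $h(z) = \mu(z)\cdot\tpr(z)$ on $\lambda(\sigma(D))$ (reading $\varepsilon$ as the empty suffix) and $h = h_{D'}$ on each $\q_{D'}$. Well-definedness on the overlap $\lambda(\sigma(D))\cap\dDp$ follows from the IH's boundary condition $h_{D'}(z) = \mu(z)\cdot(\tpr\cup\tpd)(z)$ together with condition (b) of $\mathsf{At}^{\tpr}$, which forces $\mu$ to use a consistent constant prefix; the tree-decomposition property guarantees that no atom of $\q_D$ has variables in more than one bag, so each atom falls within $\lambda(\sigma(D))$ or within some $\q_{D'}$. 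Atoms inside $\q_{D'}$ are handled by $h_{D'}$; atoms inside $\lambda(\sigma(D))$ are handled by (a)--(c), which translate exactly into the canonical-model rules for unary and binary predicates (including loops $P(x,x)$ and edges into witnesses).

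For ($\Leftarrow$), given $h$, I would read off the type on $\lambda(\sigma(D))$: set $\tpr(z) = \varepsilon$ if $h(z) \in \ind(\A)$, and $\tpr(z) = w$ if $h(z) = c\cdot w$ with $c \in \ind(\A)$ and $w \in \twords$. Compatibility of $\tpr$ with $\sigma(D)$ and its agreement with $\tpd$ on the overlap are immediate from the definition of $\can$ and the hypothesis $h(z) = \avec{b}(z)\tpd(z)$ on $\dD$. Defining $\mu(z)$ as the constant prefix of $h(z)$, I would check that $\mathsf{At}^{\tpr}$ holds in $\A$ under $\mu$: here completeness of $\A$ is used in clause (a) to turn $\can\models A(\mu(z))$ into $A(\mu(z))\in\A$; (b) holds by construction of $\mu$; and (c) follows from the fact that if $h(z) = c\cdot\varrho w$ exists in $\Delta^\can$, then $\T,\A\models \exists y\,\varrho(c,y)$, hence $A_\varrho(c)\in\A$ by completeness. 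Finally, restrict $h$ to each $\q_{D'}$ and invoke the IH to fire the corresponding subgoals.

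The main obstacle is the gluing step in the ($\Rightarrow$) direction: I must verify that the constant prefix $\mu(z)$ chosen by the clause coincides on $\lambda(\sigma(D))\cap\dDp$ with the prefix implicit in $h_{D'}(z)$. This is exactly what condition (b), combined with the IH's boundary condition, guarantees, and it is the essential reason why the compatibility definition includes clause (b). The other subtlety is the careful tracking of how the word suffix in $\tpr$ composes with the prefix from $\mu$; this composition is well-defined because the recursive definition of $\twords$ ensures $\mu(z)\cdot\tpr(z)$ always lies in $\Delta^\can$ precisely when $\T,\A\models\exists y\,\varrho(\mu(z),y)$, which is exactly what clause (c) asserts via $A_\varrho(\mu(z))$.
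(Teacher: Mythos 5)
Your proof is correct and takes essentially the same route as the paper's own argument: induction on $\prec$, unfolding the defining clause of $\rpred^{\tpd}_D$, gluing the homomorphisms $h_{D'}$ supplied by the induction hypothesis with the bag assignment $z\mapsto \avec{c}(z)\cdot\tpr(z)$ in one direction, and, in the other, reading $\tpr$ and the constant prefixes off $h$, using completeness of $\A$ to get the $\mathsf{At}^{\tpr}$ conjuncts, and invoking the induction hypothesis on the restrictions of $h$ to the $\q_{D'}$. The only (harmless) imprecision is in your justification of the gluing: its well-definedness comes from the fact that all subgoal boundary types and constants are restrictions of the single $\tpr\cup\tpd$ and of the single witnessing substitution, together with the connectedness property of tree decompositions, which places any variable shared by two subtrees $D',D''\prec D$ into $\lambda(\sigma(D))\cap\dDp\cap\dDpp$; condition (b) of $\mathsf{At}^{\tpr}$ is instead what guarantees that the binary atoms inside the bag are satisfied in $\can$.
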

\begin{proof}
$(\Rightarrow)$
The proof is by induction on $\prec$. For the basis of induction, let $D$ be of size~1.
By the definition of $\Pi^{\textsc{Log}}_\omq$, there exists
a type $\tpr$  such that  \mbox{$\dom(\tpr) = \lambda(\sigma(D))$} and $\tpd$ agrees with $\tpr$ on $\dD$ and a respective tuple
\mbox{$\avec{c} \in \ind(\A)^{|\lambda(\sigma(D))|}$} such that $\avec{c}(z) = \avec{b}(z)$,
for all \mbox{$z \in \dD$},  and $\avec{c}(x) = \avec{a}(x)$, for all $x\in \avec{x}_D$, and
$\Pi^{\textsc{Log}}_\omq,\A \models \mathsf{At}^{\tpr}(\avec{c})$. Then, for any atom $S(\avec{z}) \in \q_D$, we have $\avec{z}\subseteq\lambda(\sigma(D))$, whence $\can \models S(h(\avec{z}))$ as $\tpd$ agrees with $\tpr$ on $\dD$.

\smallskip

For the inductive step, suppose that we have $\Pi^{\textsc{Log}}_\omq,\A\models \rpred^{\tpd}_D(\avec{b}, \avec{a})$. By the definition of $\Pi^{\textsc{Log}}_\omq$, there exists
a type $\tpr$  such that  $\dom(\tpr) = \lambda(\sigma(D))$ and $\tpd$ agrees with $\tpr$ on their common domain and a respective tuple
$\avec{c} \in \ind(\A)^{|\lambda(\sigma(D))|}$ such that $\avec{c}(z) = \avec{b}(z)$,
for all $z \in \dD$,  and $\avec{c}(x) = \avec{a}(x)$, for all $x\in \avec{x}_D$, and
\begin{equation*}
\Pi^{\textsc{Log}}_\omq,\A \models \mathsf{At}^{\tpr}(\avec{c}) \land
\bigwedge_{D' \prec D} \rpred^{(\tpr\cup\tpd) \restr\dDp}_{D'}(\avec{b}_{D'},\avec{a}_{D'}),
\end{equation*}
where $\avec{b}_{D'}$ and $\avec{a}_{D'}$ are the restrictions of
$\avec{b} \cup \avec{c}$ to $\dDp$ and  of $\avec{a}$ to
$\avec{x}_{D'}$, respectively.
By the induction hypothesis, for any $D'\prec D$, there is a  homomorphism $h_{D'}\colon\q_{D'}\to\can$ such that~\eqref{eq3} is satisfied.

\smallskip

Let us show that the $h_{D'}$ agree on common variables. Suppose that $z$ is shared by $\q_{D'}$ and
$\q_{D''}$ for $D' \prec D$ and $D'' \prec D$.  By
the definition of tree decomposition,
for every $z \in V$, the nodes $\{\nd\mid z \in \lambda(\nd)\} $ induce a connected subtree of~$T$, and so
\mbox{$z \in \lambda(\sigma(D)) \cap \lambda(\nd') \cap \lambda(\nd'')$}, where
$\nd'$ and $\nd''$ are the unique neighbours of $\sigma(D)$ lying in $D'$ and $D''$, respectively.
Since
$\tpd'=(\tpd\cup\tpr) \restr\dDp$ and $\tpd''=(\tpd\cup\tpr) \restr\dDpp$ are the restrictions of
$\tpd \cup\tpr$, we have
$\tpd'(z)  = \tpd''(z)$.
This implies that
\begin{equation*}
h_{D'}(z)  = \avec{c}(z) \tpd'(z) = \avec{c}(z) \tpd''(z) = h_{D''}(z).
\end{equation*}

Now we define $h$ on every $z$ in $\q_D$ by taking
\begin{equation*}
h(z) = \begin{cases}
h_{D'}(z) & \text{if }  z \in\lambda(t),\\ &\hspace*{1em} \text{ for }t\in D' \text{ and } \ D' \prec D,\\
\avec{c}(z)\cdot (\tpd\cup\tpr)(z), & \text{if } z \in\lambda(\sigma(D)).	
\end{cases}
\end{equation*}
If follows that $h$ is well defined, $h$ satisfies~\eqref{eq3} and that $h$ is a homomorphism
from $\q_D$ to $\can$.
Indeed, take an atom $S(\avec{z}) \in \q_D$. Then either $\avec{z}\subseteq\lambda(\sigma(D))$,
in which case $\can \models S(h(\avec{z}))$ since $\tpd$ is compatible
with $\sigma(D)$ and $\Pi^{\textsc{Log}}_\omq, \A \models \mathsf{At}^{\tpr}(\avec{c})$, or
$S(\avec{z}) \in \q_{D'}$ for some $D' \prec D$, in which case we use the fact
that $h$ extends a homomorphism $h_{D'}$.

\bigskip

$(\Leftarrow)$
The proof is by induction on $\prec$. Fix $D$ and $\tpd$ such that $|\tpd| = |\dD|$.
Take tuples $\avec{b} \in \ind(\A)^{|\dD|}$ and $\avec{a} \in \ind(\A)^{|\avec{x}_D|}$, and
a homomorphism \mbox{$h\colon\q_D\to\can$} satisfying~\eqref{eq3}.
Define a type $\tpr$ and a tuple \mbox{$\avec{c} \in \ind(\A)^{|\lambda(\sigma(D))|}$} by taking, for all $z\in \lambda(\sigma(D))$,
\begin{equation*}
\tpr(z) = w \ \text{ and } \  \avec{c}(z) = a, \ \ \  \text{ if } h(z) = a w, \text{ for  } a\in\ind(\A).
\end{equation*}
By definition, $\dom(\tpr) = \lambda(\sigma(D))$ and, by~\eqref{eq3}, $\tpr$ and $\tpd$ agree on the common domain.
For the inductive step, for each $D'\prec D$,  let
$h_{D'}$ be the restriction of $h$ to $\q_{D'}$ and let $\avec{b}_{D'}$ and and $\avec{a}_{D'}$  be the restrictions of
$\avec{b} \cup \avec{c}$ to $\dDp$ and  of $\avec{a}$ to
$\avec{x}_{D'}$, respectively.
By the inductive hypothesis, $\Pi^{\textsc{Log}}_\omq, \A \models \rpred^{\tpd'}_{D'}( \avec{b}_{D'},\avec{a}_{D'})$. (This argument is not needed for the basis of induction.)
Since $h$ is a homomorphism, we have
$\Pi^{\textsc{Log}}_\omq,\A \models \mathsf{At}^{\tpr}(\avec{c})$,
whence, $\Pi^{\textsc{Log}}_\omq,\A\models  \rpred^{\tpd}_D(\avec{b}, \avec{a})$.
\end{proof}

It follows that answering OMQs $\omq(\avec{x}) = (\T,\q(\avec{x}))$ with $\T$ of finite depth $\od$ and $\q$ of treewidth $\twi$ over any data instance $\A$ can be done  in time
\begin{equation}\tag{\ref{eq:time1}}
\textit{poly}(|\T|^{\od \twi},\,|\q|,\,|\A|^{\twi}).
\end{equation}
Indeed, we can evaluate $(\Pi_\omq^{\smash{\textsc{Log}}},\rpred^{\avec{\varepsilon}}_T(\avec{x}))$ in time polynomial in $|\Pi_\omq^{\smash{\textsc{Log}}}|$ and $|\A|^{\wid(\Pi^{\smash{\textsc{Log}}}_{\omq},\rpred^{\avec{\varepsilon}}_T)}$, which are bounded by a polynomial in
$|\T|^{2\od(\twi+1)}$, $|\q|$ and $|\A|^{2(\twi+1)}$.

\subsection{\textsc{Lin}-rewritings}\label{appA3}

\begin{lemma}
For any complete data instance $\A$, any predicate $G^{\tpd}_n$,
any $\avec{a}\in \ind(\A)^{|\avec{x}^n|}$ and $\avec{b} \in \ind(\A)^{|\avec{z}^n_\exists|}$,
we have
$\Pi^{\textsc{Lin}}_\omq,\A \models G^{\tpd}_n(\avec{b}, \avec{a})$  iff there is a homomorphism
$h\colon \q_n \to \can$ such that
\begin{equation}\label{nl-rewriting-eq}
h(x) = \avec{a}(x), \ \ \ \text{ for } x\in \avec{x}^n,\quad
\text{ and } \quad h(z) = \avec{b}(z) \tpd(z), \ \ \ \text{ for  } z\in\avec{z}^n_\exs.
\end{equation}
\end{lemma}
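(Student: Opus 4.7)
The plan is to prove the lemma by downward induction on $n$ from $M$ to $0$, mirroring the recursion in the definition of $\Pi^{\textsc{Lin}}_\omq$. The structure closely parallels the proof for \textsc{Log}-rewritings in Appendix A.2, but is simpler because the decomposition into slices is linear rather than tree-shaped: at level $n$ there is a single `boundary' $\avec{z}^n$ and a single recursive call on $\q_{n+1}$.

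\smallskip

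\textbf{Base case $n = M$.} The program contains the clause $G^{\avec{w}}_{M}(\avec{z}^{M}_\exs, \avec{x}^M) \leftarrow \mathsf{At}^{\avec{w}}(\avec{z}^M)$ for each $\avec{w}$ locally compatible with $\avec{z}^M$, and the query $\q_M$ contains only atoms on variables in $\avec{z}^M$. In the $(\Rightarrow)$ direction, a derivation of $G^{\tpd}_M(\avec{b},\avec{a})$ supplies, via the conjuncts (a)--(c), exactly the data needed to define a homomorphism $h(z)=\avec{c}(z)\tpd(z)$ (where $\avec{c}$ extends $\avec{a}\cup\avec{b}$) into $\can$ satisfying \eqref{nl-rewriting-eq}; the local compatibility of $\tpd$ with $\avec{z}^M$ ensures that unary and self-loop binary atoms are honoured. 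In the $(\Leftarrow)$ direction, read off $\avec{c}$ from $h$ by writing $h(z)=\avec{c}(z)\tpd(z)$; then local compatibility and the three clauses in the definition of $\can$ yield $\mathsf{At}^{\tpd}(\avec{c})$ in $\A$ (using that $\A$ is complete).

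\smallskip

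\textbf{Inductive step $n < M$.} Assume the statement holds for $n+1$. For $(\Rightarrow)$, suppose $\Pi^{\textsc{Lin}}_\omq,\A\models G^{\tpd}_n(\avec{b},\avec{a})$. Some clause of $\Pi^{\textsc{Lin}}_\omq$ fires, which supplies a type $\tpr$ for $\avec{z}^{n+1}$ such that $(\tpd,\tpr)$ is compatible with $(\avec{z}^n,\avec{z}^{n+1})$, tuples $\avec{b}'\in\ind(\A)^{|\avec{z}^{n+1}_\exs|}$ and $\avec{a}^{n+1}$ (the restriction of $\avec{a}$ to $\avec{x}^{n+1}$), such that $\Pi^{\textsc{Lin}}_\omq,\A\models \mathsf{At}^{\tpd\cup\tpr}(\avec{c}^n,\avec{c}^{n+1})\land G^{\tpr}_{n+1}(\avec{b}',\avec{a}^{n+1})$, where $\avec{c}^n$ and $\avec{c}^{n+1}$ extend the answer/boundary tuples appropriately. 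By the induction hypothesis there is a homomorphism $h_{n+1}\colon\q_{n+1}\to\can$ matching $\tpr$ on $\avec{z}^{n+1}$. Extend $h_{n+1}$ to $h\colon\q_n\to\can$ by setting $h(z)=\avec{c}^n(z)\tpd(z)$ for $z\in\avec{z}^n$; well-definedness on shared variables follows because $\tpr$ and $\tpd$ are the two restrictions of the same union $\tpd\cup\tpr$. The compatibility conditions on $(\tpd,\tpr)$ and the satisfaction of $\mathsf{At}^{\tpd\cup\tpr}$ cover precisely the three ways ((i)--(iii) in the definition of $\can$) that a binary atom $P(z,z')\in\q$ with $z\in\avec{z}^n$, $z'\in\avec{z}^{n+1}$ can be realised in $\can$, which is exactly what is needed for $h$ to be a homomorphism on the atoms straddling slices. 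Atoms within $\avec{z}^n$ are handled as in the base case.

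\smallskip

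For $(\Leftarrow)$, given $h\colon\q_n\to\can$ satisfying \eqref{nl-rewriting-eq}, define $\tpr$ on $\avec{z}^{n+1}$ and $\avec{c}^n,\avec{c}^{n+1}$ by writing each $h(z)$ as $\avec{c}(z)\cdot(\text{suffix})$; local compatibility of $\tpr$ with $\avec{z}^{n+1}$ and compatibility of the pair $(\tpd,\tpr)$ with $(\avec{z}^n,\avec{z}^{n+1})$ follow directly from the definition of $\can$ applied to each $\q$-atom. The restriction of $h$ to $\q_{n+1}$ is a homomorphism witnessing \eqref{nl-rewriting-eq} for the predicate $G^{\tpr}_{n+1}$, and the induction hypothesis yields $\Pi^{\textsc{Lin}}_\omq,\A\models G^{\tpr}_{n+1}(\avec{b}',\avec{a}^{n+1})$. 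Completeness of $\A$ gives $\mathsf{At}^{\tpd\cup\tpr}\subseteq\A$, and the corresponding clause now derives $G^{\tpd}_n(\avec{b},\avec{a})$.

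\smallskip

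The main (minor) obstacle is bookkeeping: verifying that the three conjunct types (a)--(c) of $\mathsf{At}$ correspond bijectively to the three clauses of the definition of $\can$, and that the decomposition of $h(z)$ into $\avec{c}(z)\tpd(z)$ is unambiguous whenever $z$ lies in the overlap of two adjacent slice-level constructions. Once these correspondences are fixed, both directions reduce to routine verification. The final step of the lemma --- passing to $G(\avec{x})$ via the glue clauses $G(\avec{x})\leftarrow G^{\avec{w}}_0(\avec{z}^0_\exs,\avec{x})$ --- is immediate by existentially quantifying over the type $\avec{w}$ at the root slice.
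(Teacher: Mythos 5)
Your proposal is correct and follows essentially the same route as the paper's proof in Appendix A.3: downward induction on $n$ from $M$, with the base case read off from the clause $G^{\tpd}_{M}(\avec{z}^{M}_\exs,\avec{x}^M)\leftarrow\mathsf{At}^{\tpd}(\avec{z}^M)$ and the inductive step extending (resp.\ restricting) the homomorphism across one slice, using the compatibility of $(\tpd,\tpr)$ to match the cases in the definition of $\can$ and the completeness of $\A$ for the converse direction. The only cosmetic difference is your worry about shared variables between slices, which is vacuous here since $\avec{z}^n$ and $\vars(\q_{n+1})$ are disjoint.
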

\begin{proof}
The proof is by induction on $n$.

\smallskip

For the base case ($n=M$), first suppose that we have
$\Pi^{\textsc{Lin}}_\omq,\A \models G^{\avec{w}}_M(\avec{b}, \avec{a})$.
The only rule in $\Pi^{\textsc{Lin}}_\omq$ with head predicate $G^{\tpd}_M$ is $G^{\tpd}_{M}(\avec{z}^{M}_\exs, \avec{x}^M) \leftarrow \mathsf{At}^{\tpd}(\avec{z}^M)$ with $\avec{z}^M = \avec{z}^M_\exs \uplus \avec{x}^M$, which is equivalent to
\begin{equation}\label{rule-M}
G^{\tpd}_{M}(\avec{z}^{M}_\exs, \avec{x}^M) \leftarrow \bigwedge_{z \in \avec{z}^M} \Bigl(\bigwedge_{\substack{A(z) \in \q\\ \tpd(z) = \varepsilon}} \!\!\!\!A(z)
\,\, \land \!\! \bigwedge_{\substack{P(z, z) \in \q\\ \tpd(z) = \varepsilon}} \!\!\!\! \!\! P(z, z) \,\,\land \
\bigwedge_{\substack{\tpd(z) = \varrho w}} \!\!\!A_\varrho(z)\Bigr).
\end{equation}
So the body of this rule must be satisfied when $\avec{b}$ and $\avec{a}$ are substituted for $\avec{z}^M_\exs$ and $\avec{x}^M$ respectively.
Moreover, by local compatibility of $\avec{w}$ with $\avec{z}^M$, we know that $\avec{w}(x) = \varepsilon$ for every $x \in \avec{x}^M$.
It follows that
\begin{itemize}
\item[--] $A(\avec{a}(x)) \in \A$ for every $A(x) \in \q$ such that $x \in \avec{x}^M$;
\item[--] $A(\avec{b}(z)) \in \A$ for every $A(z) \in \q$ such that $z \in \avec{z}^M_\exs$ and $\tpd(z) = \varepsilon$;
\item[--] $P(\avec{a}(x),\avec{a}(x)) \in \A$ for every $P(x, x) \in \q$ such that $x \in \avec{x}^M$;
\item[--] $P(\avec{b}(z),\avec{b}(z)) \in \A$ for every $P(z, z) \in \q$ such that $z \in \avec{z}^M_\exs$ and $\tpd(z) = \varepsilon$;
\item[--] $A_\varrho(z) \in \A$ for every $z\in \avec{z}^M$ with $\tpd(z) = \varrho w$.
\end{itemize}
Now let $h^M$ be the unique mapping from $\avec{z}^M$ to $\Delta^{\can}$ satisfying~\eqref{nl-rewriting-eq}. First note that $h^M$ is well-defined, since by the last item, if $\tpd(z) = \varrho w$, then
we have \mbox{$A_\varrho(z) \in \A$} and $\varrho w \in \twords$, so $\avec{b}(z) \varrho w$ belongs to $\Delta^{\can}$.
To show that $h^M$ is a homomorphism of $\q_M$ into $\can$, first recall that the atoms of $\q_M$ are of two types: $A(z)$ or $P(z, z)$,
with $z \in \avec{z}^M$.
Take some $A(z) \in \q_M$. If $\tpd(z) = \varepsilon$, then we immediately obtain either
$A(h^M(z))=A(\avec{a}(z)) \in \A$ or  $A(h^M(z))=A(\avec{b}(z)) \in \A$, depending on whether $z \in \avec{z}^M_\exs$ or in $\avec{x}^M$.
Otherwise, if $\tpd(z) \neq \varepsilon$, then the local compatibility of $\tpd$ with $\avec{z}^M$ means that the final letter $\varrho$ in $\tpd(z)$ is such that $\T\models \exists y\,\varrho(y,x)\to  A(x)$, hence $h^M(z) = \avec{b}(z) \tpd(z) \in A^{\can}$.
Finally, suppose that $P(z, z) \in \q$. The local compatibility of $\tpd$ with $\avec{z}^M$ ensures that either $\tpd(z) = \varepsilon$ or $\T\models P(x,x)$. In the former case,
we have either $P(\avec{a}(z),\avec{a}(z)) \in \A$ or $P(\avec{b}(z),\avec{b}(z)) \in \A$, depending again on whether
$z \in \avec{z}^M_\exs$ or $z \in\avec{x}^M$. In the latter case, $(h^M(z),h^M(z))\in P^{\can}$.

\smallskip

For the other direction, $(\Leftarrow)$, of the base case, suppose that the mapping $h^M$ given by~\eqref{nl-rewriting-eq} defines a homomorphism from $\q_M$ into $\can$.
We therefore have:
\begin{itemize}
\item[--] $\avec{a}(x) \in A^{\can}$ for every $A(x) \in \q$ with $x \in \avec{x}^M$;
\item[--] $\avec{b}(z) \tpd(z) \in A^{\can}$ for every $A(z) \in \q$ with $z \in \avec{z}^M_\exs$;
\item[--] $(\avec{a}(x),\avec{a}(x)) \in P^{\can}$ for every $P(x, x) \in \q$ such that $x \in \avec{x}^M$;
\item[--] $(\avec{b}(z),\avec{b}(z)) \in P^{\can}$ for every $P(z, z) \in \q$ such that $z \in \avec{z}^M_\exs$;
\item[--]  $\T, \A \models \exists y\,\varrho(\avec{b}(z), y)$ for every $z \in \avec{z}^M_\exs$  with $\tpd(z) = \varrho w$ (for otherwise $\avec{b}(z)\tpd(z)$
would not belong to the domain of $\can$).
\end{itemize}
The first two items, together with completeness of the data instance $\A$, ensure that all atoms in
\begin{equation*}
\bigl\{A(z) \mid A(z) \in \q, z\in\avec{z}^M, \tpd(z) = \varepsilon\bigr\}
\end{equation*}
are present in $\A$ when $\avec{b}$ and $\avec{a}$  substituted for $\avec{z}^M_\exs$ and $\avec{x}^M$, respectively.
The third and fourth items, again together with completeness of $\A$, ensure the presence of the atoms in
\begin{equation*}
\bigl\{P(z, z) \mid P(z, z) \in \q, z\in\avec{z}^M, \tpd(z) = \varepsilon\bigr\}.
\end{equation*}
Finally, the fifth item plus completeness of $\A$ ensure that $\A$ contains all atoms in
\begin{equation*}
\{A_\varrho(z) \mid z\in\avec{z}^M, \tpd(z) = \varrho w\}.
\end{equation*}
It follows that the body of the unique rule for $G^{\tpd}_{M}$
is satisfied when $\avec{b}$ and $\avec{a}$ are substituted for $\avec{z}^M_\exs$ and $\avec{x}^M$ respectively, and thus
$\Pi^{\textsc{Lin}}_\omq,\A \models G^{\tpd}_M(\avec{b}, \avec{a})$.

\bigskip

For the induction step, assume that the statement has been shown to hold for all $n \leq k+1 \leq M$,
and let us show that it holds when $n=k$. For the first direction, $(\Rightarrow)$, suppose $\Pi_\omq^{\textsc{Lin}},\A \models G^{\tpd}_k(\avec{b}, \avec{a})$.
It follows that there exists a pair of types $(\tpd, \tpr)$ compatible with $(\avec{z}^k, \avec{z}^{k+1})$
and an assignment $\avec{c}$ of individuals from $\A$ to the variables in $\avec{z}^k \cup \avec{z}^{k+1}$
such that $\avec{c}(x)=\avec{a}(x)$
for all \mbox{$x \in (\avec{z}^k \cup \avec{z}^{k+1})\cap \avec{x}$}, and $\avec{c}(z)=\avec{b}(z)$ for all $z \in \avec{z}^k_\exs$, and
such that every atom in the body of the clause
\begin{equation*}
G^{\tpd}_{k}(\avec{z}^{k}_\exs, \avec{x}^k) \leftarrow
\mathsf{At}^{\tpd\cup\tpr}(\avec{z}^k,\avec{z}^{k+1})  \land G^{\tpr}_{k+1}(\avec{z}^{k+1}_\exs, \avec{x}^{k+1})
\end{equation*}
is entailed from $\Pi_\omq^\textsc{Lin},\A$ when the individuals in $\avec{c}$ are substituted for $\avec{z}^k \cup \avec{z}^{k+1}$.
Recall that  $\mathsf{At}^{\tpd\cup\tpr}(\avec{z}^k,\avec{z}^{k+1})$ is the conjunction of the following atoms, for $z,z'\in \avec{z}^k\cup\avec{z}^{k+1}$:
\begin{itemize}
\item[--] $A(z)$, if $A(z) \in \q$ and $(\tpd\cup\tpr)(z) = \varepsilon$,
\item[--] $P(z, z')$,  if $P(z, z') \in \q$ and $(\tpd\cup\tpr)(z) = (\tpd\cup\tpr)(z') = \varepsilon$,
\item[--] $z = z'$, if $P(z, z') \in \q$ and either $(\tpd\cup\tpr)(z) \neq \varepsilon$  or $(\tpd\cup\tpr)(z') \neq \varepsilon$,
\item[--] $A_\varrho(z)$, if $(\tpd\cup\tpr)(z)$ is of the form $\varrho w$.
\end{itemize}
In particular, we have $\Pi_\omq^\textsc{Lin},\A \models  G^{\tpr}_{k+1}(\avec{c}(\avec{z}^{k+1}_\exs), \avec{c}(\avec{x}^{k+1}))$.
By the induction hypothesis, there exists a homomorphism $h^{k+1}\colon \q_{k+1} \to \can$ such that
$h^{k+1}(z) = \avec{c}(z) \tpr(z)$ for every $z \in \avec{z}^{k+1}_\exs \cup \avec{x}^{k+1}$.
Define a mapping $h^k$ from $\vars(\q_k)$ to $\Delta^{\can}$ by setting $h^k(z)=h^{k+1}(z)$ for every variable $z\in \vars(\q_{k+1})$,
setting $h^k(x) = \avec{a}(x)$ for every $x \in \avec{z}^k \cap \avec{x}$, and setting
$h^k(z)=\avec{b}(z) \tpd(z)$ for every $z \in \avec{z}^k$.
Using the same argument as was used in the base case,
we can show that $h^k$ is well-defined. For atoms from $\q_k$ involving only variables from $\q_{k+1}$, we can
use the induction hypothesis to conclude that they are satisfied under $h^k$,
and for atoms only involving variables from $\avec{z}^k$, we can argue as in the base case.
It thus remains to handle role atoms that contain one variable from $\avec{z}^k$ and one variable from $\avec{z}^{k+1}$.
Consider such an atom $P(z, z') \in \q_k$, for $z\in\avec{z}^k$ and $z'\in\avec{z}^{k+1}$.
If $\tpd(z) = \tpr(z') = \varepsilon$, then the atom $P(z, z')$ appears in the body of the clause we are considering.
It follows that $\Pi_\omq^{\textsc{Lin}},\A \models  P(\avec{c}(z), \avec{c}(z'))$,
hence $(\avec{c}(z), \avec{c}(z')) \in P^{\can}$.
It then suffices to note that $\avec{c}$ agrees with $\avec{a}$ and $\avec{b}$ on the variables in $\avec{z}^k$.
Next suppose that either $\tpd(z) \neq \varepsilon$ or $ \tpr(z') \neq \varepsilon$.
It follows that the clause body contains $z = z'$, hence $\avec{c}(z) = \avec{c}(z')$.
As $(\tpd, \tpr)$ is compatible with $(\avec{z}^k, \avec{z}^{k+1})$, one of the following must hold:
either
\begin{itemize}
\item[(a)] $\tpr(z')= \tpd(z)$ and $\T\models P(x,x)$
\item[(b)] or $\T\models \varrho(x,y) \to P(x,y)$ and either $\tpr(z')= \tpd(z) \varrho$ or
$\tpd(z) = \tpr(z') \varrho^-$.
\end{itemize}
We give the argument in the case where $z \in \avec{z}^k_\exs$ (the argument is entirely similar if $z \in \avec{x}^k$). If (a) holds, then
\begin{equation*}
(h^k(z), h^k(z'))= (\avec{b}(z) \tpd(z), \avec{c}(z') \tpr(z'))=
(\avec{b}(z) \tpd(z), \avec{c}(z') \tpd(z))\in P^{\can}
\end{equation*}
since $\T\models P(x,x)$ and $\avec{c}(z') =\avec{c}(z) = \avec{b}(z)$.
If the first option of~(b) holds, then
\begin{equation*}
(h^k(z), h^k(z'))= (\avec{b}(z) \tpd(z), \avec{c}(z') \tpr(z'))=
(\avec{b}(z) \tpd(z), \avec{c}(z') \tpd(z) \varrho)\in P^{\can}
\end{equation*}
since $\T\models \varrho(x,y) \to P(x,y)$ and $\avec{c}(z') =\avec{c}(z) = \avec{b}(z)$.
If the second option of~(b) holds,
then
\begin{equation*}
(h^k(z), h^k(z'))= (\avec{b}(z) \tpd(z), \avec{c}(z') \tpr(z'))=
(\avec{b}(z) \tpr(z') \varrho^-, \avec{c}(z') \tpr(z'))\in P^{\can}
\end{equation*}
since $\T\models \varrho(x,y) \to P(x,y)$.

\bigskip

For the converse direction, $(\Leftarrow)$, of the induction step, let $\tpd$ be a type that is locally compatible with $\avec{z}^k$,
 let  $\avec{a}\in \ind(\A)^{|\avec{x}^k|}$, $\avec{b} \in \ind(\A)^{|\avec{z}^k_\exists|}$, and
let $h^k\colon \q_k \to \can$ be a homomorphism satisfying
\begin{equation}\label{indstep-sec5}
h^k(x) = \avec{a}(x), \ \  \ \text{ for } x\in \avec{x}^k, \quad \text{ and }\quad
h^k(z) = \avec{b}(z) \tpd(z), \ \ \ \text{ for  } z\in\avec{z}^k_\exs.
\end{equation}
We let $\avec{c}$ for $\avec{z}^{k+1}$ be defined by setting $\avec{c}(z)$ equal to the unique individual $c$ such that $h(z)$ is of the form $c w$ (for some $w \in \twords$),
and let $\tpr$ be the unique type for $\avec{z}^{k+1}$ satisfying $h(z) = \avec{c}(z) \tpr(z)$ for every $z \in \avec{z}^{k+1}$; in other words,  we obtain $\avec{s}(z)$ from $h(z)$ by omitting the initial individual name $\avec{c}(z)$.
Note that since $\avec{x}^{k+1} \subseteq \avec{x}^{k}$, we have $\avec{a}(x)=\avec{c}(x)$ for every $x\in \avec{x}^{k+1}$.
It follows from the fact that $h^k$ is a homomorphism that $\tpr$ is locally compatible with $\avec{z}^{k+1}$
and that, for every role atom
$P(z, z') \in \q_k$ with $z\in\avec{z}^k$ and $z'\in\avec{z}^{k+1}$, one of the following holds:
(\emph{i}) $\tpd(z) =\tpr(z')= \varepsilon$,
(\emph{ii}) $\tpd(z) =\tpr(z')$ and $\T\models P(x,x)$,
(\emph{iii}) $\T\models \varrho(x,y) \to P(x,y)$ and either $\tpr(z')= \tpd(z) \varrho$ or
$\tpd(z) = \tpr(z') \varrho^-$.
Thus, the pair of types $(\tpd, \tpr)$ is compatible with $(\avec{z}^{k}, \avec{z}^{k+1})$,
and so the following rule appears in $\Pi_\omq^{\textsc{Lin}}$:
\begin{equation*}
G^{\tpd}_{k}(\avec{z}^{k}_\exs, \avec{x}^k) \leftarrow
\mathsf{At}^{\tpd\cup\tpr}(\avec{z}^k,\avec{z}^{k+1})  \land G^{\tpr}_{k+1}(\avec{z}^{k+1}_\exs, \avec{x}^{k+1}),
\end{equation*}
where we recall that $\mathsf{At}^{\tpd\cup\tpr}(\avec{z}^k,\avec{z}^{k+1})$ is the conjunction of the following atoms, for $z,z'\in \avec{z}^k\cup\avec{z}^{k+1}$:
\begin{itemize}
\item[--] $A(z)$, if $A(z) \in \q$ and $(\tpd\cup\tpr)(z) = \varepsilon$,
\item[--] $P(z, z')$,  if $P(z, z') \in \q$ and $(\tpd\cup\tpr)(z) = (\tpd\cup\tpr)(z') = \varepsilon$,
\item[--] $z = z'$, if $P(z, z') \in \q$ and either $(\tpd\cup\tpr)(z) \neq \varepsilon$  or $(\tpd\cup\tpr)(z') \neq \varepsilon$,
\item[--] $A_\varrho(z)$, if $(\tpd\cup\tpr)(z)$ is of the form $\varrho w$.
\end{itemize}
It follows from Equation~\eqref{indstep-sec5} and the fact that $h^k$ is a homomorphism that each of the ground atoms obtained
by taking an atom from $\mathsf{At}^{\tpd\cup\tpr}(\avec{z}^k,\avec{z}^{k+1})$ and substituting
 $\avec{a}$, $\avec{b}$, and $\avec{c}$ for  $\avec{x}^k$, $\avec{z}^k_\exs$ and $\avec{z}^{k+1}$, respectively, is present in $\A$.
By applying the induction hypothesis to the predicate $G^{\tpr}_{k+1}$ and the homomorphism $h^{k+1}\colon \q_{k+1} \to \can$
obtained by restricting $h^k$ to $\vars(\q_{k+1} )$, we obtain that  $\Pi_\omq^{\textsc{Lin}},\A \models G^{\tpr}_{k+1}(\avec{c}(\avec{z}^{k+1}_\exs), \avec{a}(\avec{x}^{k+1}))$.
Since for the considered substitution, all body atoms are entailed, we can conclude that \mbox{$\Pi_\omq^{\textsc{Lin}},\A \models G^{\tpd}_k( \avec{b},\avec{a})$}.
\end{proof}

It follows that answering OMQs $\omq(\avec{x})=(\T,\q(\avec{x}))$ with $\T$ of finite depth $\od$ and tree-shaped $\q$ with $\nlf$ leaves over any data instance $\A$ can be done  in time
\begin{equation}\tag{\ref{eq:time2}}
\textit{poly}(|\T|^{\od\nlf},\, |\q|,\, |\A|^{\nlf} ).
\end{equation}
Indeed, $(\Pi_\omq^{\smash{\textsc{Lin}}},\rpred(\avec{x}))$ can be evaluated  in time polynomial
in $|\Pi^{\smash{\textsc{Lin}}}_{\omq}|$ and $|\A|^{\wid(\Pi^{\smash{\textsc{Lin}}}_{\omq},G)}$, which are bounded by a polynomial in $|\T|^{2\od\nlf}$, $|\q|$ and $|\A|^{2\nlf}$.

\subsection{\textsc{Tw}-rewritings}\label{appA4}

\begin{lemma}
For any OMQ \mbox{$\omq(\avec{x}_0)=(\T, \q_0(\avec{x}_0))$} with a tree-shaped CQ, any complete data instance~$\A$, any $\q(\avec{x}) \in \sqset$ and $\avec{a}\in\ind(\mathcal{A})^{|\avec{x}|}$, we have
$\Pi^{\textsc{Tw}}_{\omq},\A \models \rew_\q(\avec{a})$ iff there exists a homomorphism $h\colon \q \to \can$ such that $h(\avec{x})= \avec{a}$.
\end{lemma}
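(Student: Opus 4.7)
The plan is to prove the equivalence by well-founded induction on the recursive construction of $\sqset$, since every $\q(\avec{x})\in\sqset$ either has no existentially quantified variables (base case) or decomposes at the splitting vertex $z_\q$ into strictly smaller CQs (by Lemma~\ref{PrepMiddleVertex}), and likewise the tree-witness clauses only invoke $\rew_{\q_i^\t}$ for the connected components of $\q\setminus\q_\t$, each of which is strictly smaller. In the base case, $\q$ contains only atoms over $\avec{x}$, so the clause $\rew_\q(\avec{x})\leftarrow\q(\avec{x})$ fires on $\avec{a}$ iff every atom of $\q[\avec{x}\mapsto\avec{a}]$ lies in $\A$; by completeness of $\A$, this is exactly the condition that $\q[\avec{x}\mapsto\avec{a}]$ holds in $\can$, since any homomorphism landing entirely at individual constants is captured by $\A$.

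For the inductive step ($\Rightarrow$), assume $\Pi^{\textsc{Tw}}_\omq,\A\models\rew_\q(\avec{a})$. If the splitting clause fires, then there is a witness $c\in\ind(\A)$ for $z_\q$ such that the self-loop atoms at $z_\q$ are in $\A$ and, by the inductive hypothesis, each sub-CQ $\q_i$ admits a homomorphism $h_i\colon\q_i\to\can$ with $h_i(z_\q)=c$; these agree on the only shared variable $z_\q$ and glue into $h\colon\q\to\can$ with $h(\avec{x})=\avec{a}$. If a tree-witness clause for $\t$ generated by $\varrho$ fires at $z_0\in\tr$, then $A_\varrho(c)\in\A$ for $c=\avec{a}(z_0)$ and the equality atoms force every variable in $\tr$ to map to $c$; composing the canonical tree-witness homomorphism $\q_\t\to\C_{\T,\{A_\varrho(a)\}}$ with the inclusion of this subtree into $\can$ at~$c$ handles $\q_\t$, while the inductive hypothesis gives homomorphisms for each $\q_i^\t$, and these combine into the required $h$.

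For the reverse direction ($\Leftarrow$), given $h\colon\q\to\can$ with $h(\avec{x})=\avec{a}$, I case-split on $h(z_\q)$. If $h(z_\q)\in\ind(\A)$, then each restriction $h|_{\vars(\q_i)}$ is a homomorphism and the inductive hypothesis supplies the body atoms for the splitting clause; the unary or self-loop atoms over $z_\q$ hold in $\A$ by completeness because their $h$-image is at an individual constant. The main obstacle is the remaining subcase, where $h(z_\q)$ is a labelled null of the form $a\varrho w$ for some $a\in\ind(\A)$, $\varrho\in\rni$, $w\in\rni^*$: here I need to extract from $h$ a genuine tree witness $\t$ generated by $\varrho$ with $z_\q\in\ti$. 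The idea is to take the maximal connected subset $W$ of variables whose $h$-image lies in the $\varrho$-subtree of $\can$ at $a$, set $\ti=\{z\in W\mid h(z)\neq a\}$ and $\tr=\{z\in W\mid h(z)=a\}$, and then pass, if necessary, to a minimal such pair to satisfy the definition of tree witness. Verifying that $(\tr,\ti)$ is a tree witness generated by~$\varrho$ reduces to showing that the restriction of $h$, shifted back by stripping the initial $a$, is a homomorphism from $\q_\t$ into $\C_{\T,\{A_\varrho(a)\}}$ sending exactly $\tr$ to $a$; this is essentially the construction of~\cite{LICS14}, adapted to the present notation. Once $\t$ is in hand, $A_\varrho(a)\in\A$ by completeness (the labelled null $a\varrho w$ witnesses $\exists y\,\varrho(a,y)$ in $\can$), the connected components $\q_i^\t$ of $\q\setminus\q_\t$ fall under the inductive hypothesis, and the corresponding tree-witness clause (for any choice of $z_0\in\tr$) fires on $\avec{a}$. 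The Boolean case with the auxiliary clauses $\rew_{\q_0}\leftarrow A(x)$ is handled separately, using the fact that $\T,\A\models\q_0$ whenever $A(a)\in\A$ for a unary $A$ with $\T,\{A(a)\}\models\q_0$.
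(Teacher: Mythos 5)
Your proposal is correct and follows essentially the same route as the paper's proof: induction along the decomposition of $\sqset$, the base case via completeness of $\A$, the forward direction by case analysis on which clause fired (gluing the sub-homomorphisms, and embedding the tree-witness part into $\can$ at the constant satisfying $A_\varrho$), and the backward direction by splitting on whether $h(z_\q)$ is an individual or a labelled null, extracting a tree witness in the latter case and treating the Boolean $\q_0$ separately. One small remark: the "pass to a minimal pair" step is unnecessary, since with the closure construction (take all variables reachable from $z_\q$ through null-valued variables, with the constant-valued boundary as $\tr$) every atom containing a variable of $\ti$ already lies in $\q_\t$, so minimality holds automatically — this is exactly how the paper avoids the issue.
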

\begin{proof}
An inspection of the definition of the set~$\sqset$ shows that every $\q(\avec{x}) \in \sqset$ is a tree-shaped query having at least one answer variable,
with the possible exception of the original query $\q_0(\avec{x}_0)$, which may be Boolean.

Just as we did for subtrees in Section~\ref{sec:boundedtw}, we associate a binary relation on the queries in $\sqset$ by setting
$ \q'(\avec{x}') \prec \q(\avec{x})$ whenever $\q'(\avec{x}')$ was introduced when applying one of the two decomposition conditions on p.~\pageref{page:decomposition} to $\q(\avec{x})$.
The proof is by induction on the subqueries in $\sqset$, according to $\prec$.
We will start by establishing the statement
for all queries in $\sqset$ other than $\q_0(\avec{x}_0)$,
and afterwards, we will complete the proof by giving an argument for $\q_0(\avec{x}_0)$.

\bigskip

For the basis of induction, take some $\q(\avec{x}) \in \sqset$ that is minimal in the ordering induced by $\prec$, which means that
$\vars(\q) = \avec{x}$. Indeed, if there is an existentially quantified variable, then the first decomposition rule will give rise to a `smaller' query (in particular, if $|\vars(\q)| = 2$, then although the `smaller' query may have the same atoms, the selected existential variable will become an answer variable).
For the first direction, $(\Rightarrow)$, suppose that $\Pi^{\textsc{Tw}}_{\omq},\A \models \rew_\q(\avec{a})$. By definition, $\rew_\q(\avec{x}) \gets \q(\avec{x})$ is the only clause   with head predicate~$\rew_\q$. Thus, all atoms in the ground CQ $\q(\avec{a})$ are present in $\A$, and hence the desired homomorphism exists.
For the converse direction, $(\Leftarrow)$, suppose there is a homomorphism $h\colon \q(\avec{x}) \to \can$ such that $h(\avec{x}) = \avec{a}$.
It follows that every atom in the ground CQ $\q(\avec{a})$ is entailed from $\T, \A$.
Completeness of $\A$ ensures that all of the ground atoms in $\q(\avec{a})$ are present in $\A$, and thus we can apply the clause $\rew_\q(\avec{x}) \gets \q(\avec{x})$
to derive $\rew_\q(\avec{a})$.

\bigskip

For the induction step, let $\q(\avec{x}) \in \sqset$ with $\vars(\q)\ne\avec{x}$ and
suppose that the claim holds for all $\q'(\avec{x}')  \in  \sqset$ with
$ \q'(\avec{x}') \prec \q(\avec{x})$. For the first direction, $(\Rightarrow)$, suppose $\Pi^{\textsc{Tw}}_{\omq},\A \models \rew_\q(\avec{a})$.
There are two cases, depending on which type of clause was used to derive $\rew_\q(\avec{a})$.
\begin{itemize}
\item Case 1: $\rew_\q(\avec{a})$ was derived by an application of the following clause:
\begin{equation*}
\rew_\q(\avec{z}) \leftarrow \hspace*{-0.7em}\bigwedge_{A(z_\q)\in \q}\hspace*{-1em} A(z_\q) \ \ \land\hspace*{-0.7em} \bigwedge_{P(z_\q,z_\q) \in \q} \hspace*{-1.5em}P(z_\q,z_\q)  \ \ \land \bigwedge_{1\leq i\leq n}\hspace*{-0.5em} \rew_{\q_i}(\avec{x}_i),
\end{equation*}
where $\q_1(\avec{x}_1), \ldots, \q_n(\avec{x}_n)$ are the subqueries induced by the neighbours of $z_\q$ in the Gaifman graph $\gfmn$ of $\q$.
Then there exists a substitution $\avec{c}$ for the variables in the body of this rule that coincides with $\avec{a}$ on $\avec{z}$ and is
such that the ground atoms obtained by applying $\avec{c}$ to the variables in the body are all entailed from $\Pi^{\textsc{Tw}}_{\omq},\A$.
In particular,  $\Pi^{\textsc{Tw}}_{\omq},\A  \models \rew_{\q_i}(\avec{c}(\avec{x}_i))$
for every $1 \leq i \leq n$. We can apply the induction hypothesis to the $\q_i(\avec{x}_i)$ to obtain homomorphisms $h_i\colon \q_i \to \can$ such that  $h_i(\avec{x}_i)=\avec{c}(\avec{x}_i)$.
Let $h$ be the mapping from $\vars(\q)$ to $\Delta^{\can}$ defined by taking
$h(z)=h_i(z)$, for $z \in \vars(\q_i)$.
Note that $h$ is well-defined since \mbox{$\vars(\q)=\bigcup_{i = 1}^n\vars(\q_i)$}, and
the $\q_i$ have no variable in common other than $z_\q$, which is sent to $\avec{c}(z_\q)$ by every $h_i$.
To see why $h$ is a homomorphism from~$\q$ to $\can$, observe that
\begin{equation*}
\q \ \ = \ \ \bigcup_{i =1}^n \q_i \ \ \cup \ \ \bigl\{A(z_\q)\in \q\bigr\} \ \ \cup \ \ \bigl\{P(z_\q,z_\q) \in \q\bigr\}.
\end{equation*}
By the definition of~$h$, all atoms in $\bigcup_{i = 1}^n \q_i$ hold under $h$.
If $A(z_\q)\in \q$, then $A(\avec{c}(z_\q))$ is entailed from $\Pi^{\textsc{Tw}}_{\omq},\A$, and hence is present in $\A$.
Similarly, we can show that for every $P(z_\q,z_\q) \in \q$, the ground atom $P(\avec{c}(z_\q),\avec{c}(z_\q))$ belongs to $\A$.
It follows that all of these atoms hold in $\can$ under~$h$. Finally, we recall that $\avec{c}$ coincides with $\avec{a}$
on $\avec{x}$, so we have $h(\avec{x})=\avec{a}$, as required.

\medskip

\item Case 2: $\rew_\q(\avec{a})$ was derived by an application of the following clause, for a tree witness $\t$ for $(\T,\q(\avec{x}))$ generated by $\varrho$ with $\tr\neq \emptyset$ and $z_\q\in\ti$:
\begin{equation*}
\rew_\q(\avec{x}) \leftarrow A_\varrho(z_0)\
\land \hspace*{-0.5em}\bigwedge_{z\in \tr\setminus \{z_0\}}\hspace*{-0.5em} (z=z_0) \ \ \land \  \bigwedge_{1\leq i \leq k} \rew_{\q_i^\t}(\avec{x}_i^\t),
\end{equation*}
where $\q_1^\t, \dots, \q_k^\t$ are the connected components of $\q$ without  $\q_\t$ and $z_0$ is some variable in $\tr$.
There must exist a substitution $\avec{c}$ for the variables in the body of this rule that coincides with $\avec{a}$ on $\avec{x}$ and is
such that the ground atoms obtained by applying $\avec{c}$ to the variables in the body are all entailed from $\Pi^{\textsc{Tw}}_{\omq},\A$.
In particular, for every $1 \leq i \leq k$, we have $\Pi^{\textsc{Tw}}_{\omq},\A \models \rew_{\q_i^\t}(\avec{c}(\avec{x}_i^\t))$.
We can apply the induction hypothesis to the $\q_i^\t(\avec{z}_i^\t)$ to find homomorphisms $h_1, \ldots, h_k$ of $\q_1^\t, \ldots, \q_k^\t$ into $\can$
such that $h_i(\avec{x}_i^\t)=\avec{c}(\avec{x}_i^\t)$.
Since $\t$ is a tree witness for $(\T,\q(\avec{x}))$ generated by $\varrho$, there exists a homomorphism $h_\t$ of $\q_\t$ into
$\C_{\T,\{A_\varrho(a)\}}$ with $\tr = h_\t^{-1}(a)$ and such that $h_\t(z)$ begins by $a \varrho$ for every $z \in \ti$.
Now take $z_0 \in \tr$ such that $A_\varrho(z_0)$ is the atom in the clause body  (recall that $\tr \neq \emptyset$), and so $\Pi^{\textsc{Tw}}_{\omq},\A \models A_\varrho(\avec{c}(z_0))$,
which means that $A_\varrho(\avec{c}(z_0))$ must appear in $\A$. It follows that for every element in $ \C_{\T, \{A_\varrho(a)\}}$
of the form $a \varrho w$,  there exists a corresponding element $\avec{c}(z_0) \varrho w$ in $\Delta^{\can}$.
We now define a mapping $h$ from $\vars(\q)$ to $\Delta^{\can}$ as follows:
\begin{equation*}
h(z) = \begin{cases}
h_i(z), & \text{ for every } z \in \vars(\q_i^\t),\\
\avec{c}(z_0) \varrho w, & \text{ if } z \in \ti \text{ and } h_\t(z)= a \varrho w,\\
\avec{c}(z_0) & \text{ if } z \in \tr.
\end{cases}
\end{equation*}
Every variable in $\vars(\q)$ occurs  in $\tr\cup\ti$ or in exactly one of the $\q_i^\t$, and so is assigned a unique value by $h$. Note that although $\tr\cap\vars(\q_i^\t)$ is not necessarily empty, due to the equality atoms, we have $h(z) = h(z')$, for all $z,z'\in\tr$, and so the function is well-defined.
We claim that $h$ is a homomorphism from $\q$ into $\can$. Clearly, the atoms occurring in some $\q_i^\t$ are preserved under $h$. Now consider some unary atom $A(z)$ with $z \in \ti$. Then $h(z)=\avec{c}(z_0) \varrho w$, where $h_\t(z)= a \varrho w$.
Since $h_\t$ is a homomorphism, we know that $w$ ends with a role $\sigma$ such that $\T\models \exists y\,\sigma(y,x)\to A(x)$.
It follows that $h(z)$ also ends with $\sigma$, and thus $h(z) \in A^{\can}$.
Next, consider a binary atom $P(z,z')$, where at least one of $z$ and $z'$ belongs to $\ti$.
As $h_\t$ is a homomorphism, either
\begin{itemize}
\item[--] $\T\models \sigma(x,y)\to P(x,y)$, for some $\sigma$, such that  $h_\t(z') = h_\t(z) \sigma$
or $h_\t(z)= h_\t(z') \sigma^-$,
\item[--] or $\T\models P(x,x)$ and $h_\t(z') = h_\t(z)$.
\end{itemize}
We also know that $\avec{c}(z)=\avec{c}(z_0)$ for all $z \in \tr$, hence $h(z)=h(z_0)$ for all $z\in \tr$.
It follows  that in the former case we have $h(z') = h(z) \sigma$
or $h(z)= h(z') \sigma^-$ with $\T\models \sigma(x,y)\to P(x,y)$. In the latter case, we have $h(z') = h(z)$ with $\T\models P(x,x)$. Thus, $P(z,z')$ is preserved under  $h$. Finally, since $\avec{c}$ coincides with $\avec{a}$ on $\avec{x}$, we have $h(\avec{x})=\avec{a}$.
\end{itemize}

\bigskip

For the converse direction, $(\Leftarrow)$, of the induction step, suppose that $h$ is a homomorphism of $\q$ into $\can$ such that $h(\avec{x})=\avec{a}$.
There are two cases to consider, depending on where $h$ maps the `splitting' variable $z_\q$.
\begin{itemize}
\item Case 1: $h(z_\q) \in \ind(\A)$. Let $\q_1(\avec{x}_1), \ldots, \q_n(\avec{x}_n)$ be the subqueries of $\q(\avec{x})$
induced by the neighbours of $z_\q$ in $\gfmn$. Recall that $\avec{x}_i$ consists of $z_\q$ and the variables in $\vars(\q_i)\cap\avec{x}$.
By restricting $h$ to $\vars(\q_i)$, we obtain, for each $1 \leq i \leq n$, a homomorphism of
$\q_i(\avec{x}_i)$ into $\can$ that maps $z_\q$ to $h(z_\q)$ and $\vars(\q_i)\cap \avec{x}$ to $\avec{a}(\vars(\q_i)\cap\avec{x})$. Consider $\avec{a}^*$ defined by taking  $\avec{a}^*(x)= \avec{a}(x)$ for every $x \in  \vars(\q_i)\cap \avec{x}$
and $\avec{a}^*(z_\q)=h(z_\q)$. By the induction hypothesis, for every $1 \leq i \leq n$, we have
$\Pi^{\textsc{Tw}}_{\omq},\A \models \rew_{\q_i}(\avec{a}^*(\avec{x}_i))$. Next, since $h$ is a homomorphism, we must have $h(z_\q) \in A^{\can}$ whenever $A(z_\q) \in \q$
and $(h(z_\q), h(z_\q)) \in P^{\can}$ whenever $P(z_\q,z_\q) \in \q$. Since $\A$ is a complete data instance,  $A(h(z_\q)) \in \A$
for every $A(z_\q) \in \q$ and $P(h(z_\q),h(z_\q))$ for every $P(z_\q,z_\q) \in \q$. We have thus shown that, under the substitution $\avec{a}^*$,  every atom in the body of the clause
\begin{equation*}
\rew_\q(\avec{z}) \leftarrow \hspace*{-0.7em}\bigwedge_{A(z_\q)\in \q}\hspace*{-1em} A(z_\q) \ \ \land\hspace*{-0.7em} \bigwedge_{P(z_\q,z_\q) \in \q} \hspace*{-1.5em}P(z_\q,z_\q)  \ \ \land \bigwedge_{1\leq i\leq n}\hspace*{-0.5em} \rew_{\q_i}(\avec{x}_i),
\end{equation*}
is entailed from $\Pi^{\textsc{Tw}}_{\omq},\A$. It follows that we must also have $\Pi^{\textsc{Tw}}_{\omq},\A \models \rew_\q(\avec{a})$.

\medskip

\item Case 2: $h(z_\q) \notin \ind(\A)$. Then $h(z_\q)$ is of the form $b \varrho w$, for some $\varrho$.
Let $V$ be the smallest subset of $\vars(\q)$ that contains $z_\q$
and satisfies the following closure property:
\begin{itemize}
\item[--] if $z \in V$,
$h(z) \notin \ind(\A)$ and $\q$ contains an atom with $z$ and $z'$, then $z' \in V$.
\end{itemize}
Let $V'$ consist of all variables $z$ in $V$
such that $h(z) \notin \ind(\A)$. We observe that $h(z)$ begins by $b \varrho$ for every $z \in V'$ and $h(z)=b$
for every \mbox{$z \in V \setminus V'$}.
Define $\q_V$ as the CQ comprising all atoms in $\q$
whose variables are in $V$ and which contain at least one variable from $V'$; the answer variables of $\q_V$ are $V\setminus V'$.
By replacing the initial $b$ by $a$ in the mapping $h$, we obtain a homomorphism $h_V$ of $\q_V$ into $ \C_{\T,\{A_\varrho(a)\}}$ with $V\setminus V' = h_V^{-1}(a)$.
It follows that $\t=(\tr,\ti)$ with $\tr= V \setminus V'$ and $\ti =V'$ is a tree witness for $(\T,\q(\avec{x}))$ generated by $\varrho$ (and $\q_\t = \q_V$).
Moreover, $\tr\neq \emptyset$ because $\q$ has at least one answer variable.
This means that the program $\Pi^{\textsc{Tw}}_{\omq}$ contains the following clause
\begin{equation*}
\rew_\q(\avec{x}) \leftarrow A_\varrho(z_0)\
\land \hspace*{-0.5em}\bigwedge_{z\in \tr\setminus \{z_0\}}\hspace*{-0.5em} (z=z_0) \ \ \land \  \bigwedge_{1\leq i \leq k} \rew_{\q_i^\t}(\avec{x}_i^\t),
\end{equation*}
where $\q_1^\t, \dots, \q_k^\t$ are the connected components of $\q$ without $\q_\t$ and $z_0\in\tr$.
Recall that the query $\q_i^\t$ has answer variables $\avec{x}_i^\t = \vars(\q^\t_i)\cap (\avec{x} \cup \tr)$.
Let $\avec{a}^*$ be the substitution for $\avec{x} \cup \tr$ such that \mbox{$\avec{a}^*(x)=\avec{a}(x)$} for $x \in \avec{x}$
and $\avec{a}^*(z)=h(z)$ for $z \in \tr$.
Then, for every \mbox{$1 \leq i \leq k$}, there exists a homomorphism $h_i$ from $\q_i^\t$ to $\can$ such that
\mbox{$h_i(x) = \avec{a}^*(x)$} for every $x \in \avec{x}_i^\t$. By the induction
hypothesis, we obtain \mbox{$\Pi^{\textsc{Tw}}_{\omq}, \A \models  \rew_{\q_i^\t}(\avec{a}^*(\avec{x}_i^\t))$}.
Next, since $h(z)=b$ for every $z \in \tr$, we have $\avec{a}^*(z)=\avec{a}^*(z')$ for every
$z,z' \in \tr$. Moreover, the presence of the element $b \varrho$ in $\can$ means that
$\T, \A \models A_\varrho(b)$. Since $\A$ is a complete data instance, we have \mbox{$A_\varrho(b) \in \A$.}
It follows that under the substitution $\avec{a}^*$, all atoms in the body of the clause under consideration
are entailed by $\Pi^{\textsc{Tw}}_{\omq}, \A $. Therefore, we must also have \mbox{$\Pi^{\textsc{Tw}}_{\omq}, \A  \models \rew_\q(\avec{a})$}.
\end{itemize}

\medskip

We have thus shown the lemma for all queries $\sqset$ other than $\q_0(\avec{x}_0)$.
Let us now turn to $\q_0(\avec{x}_0)$.

\smallskip

For the first direction, $(\Rightarrow)$, suppose $\Pi^{\textsc{Tw}}_{\omq},\A \models \rew_{\q_0}(\avec{a})$.
There are four cases, depending on which type of clause was used to derive $\rew_{\q_0}(\avec{a})$.
We skip the first three cases, which are identical to those considered in the base case and induction step,
and focus instead on the case in which $\rew_{\q_0}(\avec{a})$ was derived using a clause of the form
$\rew_{\q_0} \leftarrow A(x)$ with $A$ a unary predicate such that $\T, \{A(a)\} \models \q_0$.
In this case, there must exist some $b \in \ind(\A)$
such that $\T, \A \models A(b)$. By completeness of~$\A$, we obtain $A(b) \in \A$.
Since $\T, \{A(a)\} \models \q_0$, we get $\T, \A \models \q_0$, which implies the
existence of a homomorphism from $\q_0$ into $\can$.

\smallskip

For the converse direction, $(\Leftarrow)$, suppose that there is a homomorphism $h\colon \q_0 \to \can$ such that $h(\avec{x}_0)=\avec{a}$.
We focus on the case in which $\q_0$ is Boolean  ($\avec{x}_0 = \emptyset$) and none of the variables in $\q_0$ is mapped to an  individual constant (the other cases
can be handled exactly as in the induction basis and induction step). In this case, there must
exist an individual constant $b$ and some~$\varrho$ such that $h(z)$ begins by $b \varrho$ for every $z \in \vars(\q_0)$. It follows that $\T, \{A_\varrho(a)\} \models \q_0$, since the mapping $h'$
defined by setting $h'(z)=a \varrho w$ whenever $h(z) = b \varrho w$ is a homomorphism from $\q_0$ to $\C_{\T, \{A_\varrho(a)\}}$.
It follows that $\Pi^{\textsc{Tw}}_\omq$ contains the clause $\rew_{\q_0} \gets A_\varrho(x)$. Since $b \varrho$ occurs in $\Delta^{\can}$, we have $\T, \A \models A_\varrho(b)$.
By completeness of~$\A$, $A_\varrho(b) \in \A$, and so by applying the clause $\rew_{\q_0} \gets A_\varrho(x)$, we obtain $\Pi^{\textsc{Tw}}_\omq,\A \models \rew_{\q_0}$.
\end{proof}

\subsection{Rewritings Zoo}\label{zoo}

In this section, we put together the rewritings from Sections~\ref{sec:boundedtw}--\ref{sec:boundedleaf} for the OMQ given in Examples~\ref{ex:rewriting:1} and~\ref{ex:rewriting:2}.

Consider the CQ $\q(x_0, x_7)$ depicted below (black nodes represent answer variables)\\[10pt]
\centerline{%
\begin{tikzpicture}[>=latex,xscale=0.75]\scriptsize
\node[bpoint,label=below:{$x_0$}] (v0) at (0,0) {};
\node[wpoint,label=below:{$x_1$}] (v1) at (1.5,0) {};
\node[wpoint,label=below:{$x_2$}] (v2) at (3,0) {};
\node[wpoint,label=below:{$x_3$}] (v3) at (4.5,0) {};
\node[wpoint,label=below:{$x_4$}] (v4) at (6,0) {};
\node[wpoint,label=below:{$x_5$}] (v5) at (7.5,0) {};
\node[wpoint,label=below:{$x_6$}] (v6) at (9,0) {};
\node[bpoint,label=below:{$x_7$}] (v7) at (10.5,0) {};
\begin{scope}[semithick,shorten >= 1pt, shorten <= 1pt]\tiny
\draw[->] (v0) to node[above] {$R$} (v1);
\draw[->] (v1)to node[above] {$S$}  (v2);
\draw[->] (v2) to node[above] {$R$}  (v3);
\draw[->] (v3) to node[above] {$R$}  (v4);
\draw[->] (v4) to node[above] {$S$}  (v5);
\draw[->] (v5) to node[above] {$R$}  (v6);
\draw[->] (v6) to node[above] {$R$}  (v7);
\end{scope}
\end{tikzpicture}}\\[10pt]
and the following ontology $\T$ in normal form:
\begin{align*}
P(x,y) &\to  S(x,y), \quad & P(x,y) &\to  R(y,x),\\[4pt]
A_P(x) &\leftrightarrow \exists y \, P(x,y), \quad & A_{P^-}(x) & \leftrightarrow \exists y \, P(y,x), \\
A_R(x) &\leftrightarrow \exists y \, R(x,y),  \quad & A_{R^-}(x) &\leftrightarrow \exists y \, R(y,x),\\
A_S(x) & \leftrightarrow \exists y \, S(x,y) \quad & A_{S^-}(x) & \leftrightarrow \exists y \, S(y,x). 
\end{align*}

\subsubsection{UCQ rewriting}

The 9 CQs below form a UCQ rewriting of the OMQ $\omq(x_0,x_7) = (\T,\q(x_0,x_7))$ over complete data instances given as an NDL program with goal predicate $G$:
\begin{align*}
 G(x_0, x_7) \leftarrow & [R(x_0, x_1) \land S(x_1, x_2)\land R(x_2, x_3)] \land {} \\
 & [R(x_3, x_4) \land S(x_4, x_5) \land R(x_5, x_6)] \land R(x_6, x_7),\\
 G(x_0, x_7) \leftarrow & [A_{P^-}(x_0) \land R(x_0, x_3)] \land {} \\
  & [R(x_3, x_4) \land S(x_4, x_5) \land R(x_5, x_6)] \land R(x_6, x_7),\\
 G(x_0, x_7) \leftarrow & [R(x_0, x_3) \land  A_P(x_3)] \land {} \\
 & [R(x_3, x_4) \land S(x_4, x_5) \land R(x_5, x_6)] \land R(x_6, x_7),\\
 G(x_0, x_7) \leftarrow & [R(x_0, x_1) \land S(x_1, x_2) \land R(x_2, x_3)] \land {} \\
 &  [A_{P^-}(x_3) \land R(x_3, x_6)] \land R(x_6, x_7),\\
 G(x_0, x_7) \leftarrow & [R(x_0, x_1) \land S(x_1, x_2) \land R(x_2, x_3)] \land {} \\
  & [R(x_3, x_6) \land A_P(x_6)] \land R(x_6, x_7),\\
 G(x_0, x_7) \leftarrow & [A_{P^-}(x_0) \land R(x_0, x_3)] \land{}\\
 &  [A_{P^-}(x_3) \land R(x_3, x_6)] \land R(x_6, x_7),\\
 G(x_0, x_7) \leftarrow & [A_{P^-}(x_0) \land R(x_0, x_3)] \land{}\\
 & [R(x_3, x_6) \land A_P(x_6)] \land R(x_6, x_7),\\
 G(x_0, x_7) \leftarrow & [R(x_0, x_3) \land A_P(x_3)] \land{} \\
 & [A_{P^-}(x_3) \land R(x_3, x_6)] \land R(x_6, x_7),\\
 G(x_0, x_7) \leftarrow & [R(x_0, x_3) \land A_P(x_3)] \land{} \\
 & [R(x_3, x_6) \land A_P(x_6)] \land R(x_6, x_7).
\end{align*}
We note that a UCQ rewriting over all data instances would in addition
contain variants of the CQs above with each of the predicates $R$ and $S$ replaced by $P$ (with arguments swapped appropriately).

The UCQ rewriting above can be obtained by transforming the following PE-formula into UCQ form:
\begin{align*}
& \bigl[\bigl(R(x_0, x_1) \land S(x_1, x_2) \land R(x_2, x_3)\bigr)\\  & \hspace*{1em}\lor \bigl(A_{P^-}(x_0) \land R(x_0, x_3)\bigr) \lor \bigl(R(x_0, x_3) \land  A_P(x_3)\bigr)\bigr] \\
\land \ \ \ & \bigl[\bigl(R(x_3, x_4) \land S(x_4, x_5) \land R(x_5, x_6)\bigr)\\  & \hspace*{1em}\lor \bigl(A_{P^-}(x_3) \land R(x_5, x_6)\bigr) \lor \bigl(R(x_3, x_6) \land  A_P(x_6)\bigr)\bigr]\\
\land \ \ \ & R(x_6,x_7).
\end{align*}
(Intuitively, each of the two sequences $RSR$ in the query can be derived in three possible ways: from $RSR$, from $A_{P^-}R$ and from $RA_P$).

\subsubsection{\textsc{Log}-rewriting}

As explained in Example~\ref{ex:rewriting:2}, we split $T$ into $D_1$ and $D_2$ and obtain two rules:
\begin{align*}
G_T^{\boldsymbol{\varepsilon}}(x_0,x_7) & \leftarrow G_{D_1}^{x_3 \mapsto \varepsilon}(x_3,x_0) \land R(x_3,x_4) \land G_{D_2}^{x_4\mapsto \varepsilon}(x_4,x_7),\\
G_T^{\boldsymbol{\varepsilon}}(x_0,x_7) & \leftarrow G_{D_1}^{x_3 \mapsto \varepsilon}(x_3,x_0) \land A_{P^-}(x_4) \land (x_3 = x_4) \land   G_{D_2}^{x_4\mapsto P^-}\!\!(x_4,x_7).
\end{align*}
Next, we split each of $D_1$ and $D_2$ into single-atom subqueries, which yields the following rules:
\begin{align*}
G_{D_1}^{x_3 \mapsto \varepsilon}(x_3,x_0) &\leftarrow (x_0= x_1) \land A_{P^-}(x_1) \land (x_1 = x_2) \land R(x_2, x_3),\\
G_{D_1}^{x_3 \mapsto \varepsilon}(x_3,x_0) &\leftarrow R(x_0,x_1) \land (x_1 = x_2) \land A_P(x_2) \land (x_2 = x_3),\\
G_{D_1}^{x_3 \mapsto \varepsilon}(x_3,x_0) &\leftarrow R(x_0, x_1) \land S(x_1, x_2) \land R(x_2, x_3),\\[10pt]
G_{D_2}^{x_4\mapsto \varepsilon}(x_4,x_7) &\leftarrow (x_4 = x_5) \land A_P(x_5) \land (x_5 = x_6) \land R(x_6, x_7),\\
G_{D_2}^{x_4\mapsto \varepsilon}(x_4,x_7) &\leftarrow S(x_4, x_5) \land R(x_5, x_6) \land R(x_6, x_7),\\[10pt]
G_{D_2}^{x_4\mapsto P^-}(x_4,x_7) & \leftarrow A_{P^-}(x_4)\land (x_4 = x_5) \land R(x_5,x_6) \land R(x_6, x_7).
\end{align*}
Note that in each case we consider only those types that give rise to predicates that have definitions in the rewriting.
The resulting NDL rewriting with goal $G_T^{\boldsymbol{\varepsilon}}$ consists of 8~rules. Note, however, that the rewriting illustrated above is a slight simplification of the definition given in Section~\ref{sec:boundedtw}: here, for the leaves of the tree decomposition, we directly use the atoms $\mathsf{At}^{\tpr}$ instead of including a rule $\rpred^{\tpd}_D(\dD, \avec{x}_D)  \leftarrow  \mathsf{At}^{\tpr}$ in the rewriting. This simplification clearly does not affect the width of the NDL query or the choice of weight function.

\subsubsection{\textsc{Lin}-rewriting}

We assume that $x_0$ is the root, which makes $x_7$ the only leaf of the query. (Note that
we could have chosen another variable, say $x_3$, as the root, with $x_0$ and $x_7$ the two leaves.) So, the top-level rule is
\begin{align*}
G(x_0,x_7) &\leftarrow G^{x_0 \mapsto \varepsilon}_0(x_0,x_7).
\end{align*}
We then move along the query and consider the variables $x_1$, $x_2$ and $x_3$. The possible ways of mapping these variables to the canonical model give rise to the following 7 rules:
\begin{align*}
G^{x_0 \mapsto \varepsilon}_0(x_0,x_7) &\leftarrow R(x_0,x_1) \land P^{x_1\mapsto \varepsilon}_1(x_1,x_7),\\
G^{x_0 \mapsto \varepsilon}_0(x_0,x_7) &\leftarrow (x_0 = x_1) \land A_{P^-}(x_1) \land  G^{x_1 \mapsto P^-}_1\!\!(x_1,x_7),\\[8pt]
G^{x_1 \mapsto \varepsilon}_1(x_1,x_7) &\leftarrow S(x_1,x_2) \land G^{x_2 \mapsto \varepsilon}_2(x_2,x_7),\\
G^{x_1 \mapsto \varepsilon}_1(x_1,x_7) &\leftarrow (x_1 = x_2) \land A_P(x_2) \land G^{x_2\mapsto P}_2(x_2,x_7),\\
G^{x_1 \mapsto P^-}_1(x_1,x_7) & \leftarrow A_{P^-}(x_1)  \land (x_1 = x_2) \land G^{x_2 \mapsto \varepsilon}_2(x_2,x_7),\\[8pt]
G^{x_2\mapsto \varepsilon}_2(x_2,x_7) &\leftarrow R(x_2,x_3) \land G^{x_3\mapsto \varepsilon}_3(x_3,x_7),\\
G^{x_2\mapsto P}_2(x_2,x_7) & \leftarrow  A_P(x_2) \land (x_2 = x_3) \land G^{x_3 \mapsto \varepsilon}_3(x_3,x_7).
\end{align*}
Next, we move to the variables $x_4$, $x_5$ and $x_6$, which give similar 7 rules:
\begin{align*}
G^{x_3 \mapsto \varepsilon}_3(x_3,x_7) &\leftarrow R(x_3,x_4) \land P^{x_4\mapsto \varepsilon}_4(x_4,x_7),\\
G^{x_3 \mapsto \varepsilon}_3(x_3,x_7) &\leftarrow (x_3 = x_4) \land A_{P^-}(x_4) \land  G^{x_4 \mapsto P^-}_4\!\!(x_4,x_7),\\[8pt]
G^{x_4 \mapsto \varepsilon}_4(x_4,x_7) &\leftarrow S(x_4,x_5) \land G^{x_5 \mapsto \varepsilon}_5(x_5,x_7),\\
G^{x_4 \mapsto \varepsilon}_4(x_4,x_7) &\leftarrow (x_4 = x_5) \land A_P(x_5) \land G^{x_5\mapsto P}_5(x_5,x_7),\\
G^{x_4 \mapsto P^-}_4(x_4,x_7) & \leftarrow A_{P^-}(x_4)  \land (x_4 = x_5) \land G^{x_5 \mapsto \varepsilon}_5(x_5,x_7),\\[8pt]
G^{x_5\mapsto \varepsilon}_5(x_5,x_7) &\leftarrow R(x_5,x_6) \land G^{x_6\mapsto \varepsilon}_6(x_6,x_7),\\
G^{x_5\mapsto P}_5(x_5,x_7) & \leftarrow  A_P(x_2) \land (x_5 = x_6) \land G^{x_6 \mapsto \varepsilon}_6(x_6,x_7).
\end{align*}
Finally, the last variable can only be mapped to a constant in the data instance, which yields a single rule:
\begin{align*}
G^{x_6 \mapsto \varepsilon}_6(x_6,x_7) &\leftarrow R(x_6,x_7).
\end{align*}
Note that, like in the previous case, we consider only those types that give rise to predicates with definitions (and ignore the dead-ends in the construction).

\subsubsection{\textsc{Tw}-rewriting}

We begin by splitting the query roughly in the middle, that is, we choose $x_3$ and consider two subqueries:
\begin{align*}
\q_{03}(x_0,x_3) & = \exists x_1x_2\,\bigl(R(x_0,x_1) \land S(x_1,x_2) \land R(x_2,x_3)\bigr)\\
&  \text{ and } \\
\q_{37}(x_3,x_7) & = \exists x_4x_5x_6\,\bigl(R(x_3,x_4) \land S(x_4,x_5) \land{}\\
& \hspace*{10em} R(x_5,x_6) \land R(x_6,x_7)\bigr).
\end{align*}
Since there is no tree witness $\t$ for $(\T,\q(x_0,x_7))$ that contains $x_3$ in $\ti$, we have only one top-level rule:
\begin{align*}
G_{07}(x,y) &\leftarrow G_{03} (x_0, x_3)\land G_{37} (x_3, x_7).
\end{align*}
Next, we focus on $\q_{03}$ and choose $x_1$ as the splitting variable. In this case, there is a tree witness $\t^1$ with $\ti^1 = \{x_1\}$ and $\tr^1 = \{x_0,x_2\}$, and so we obtain two rules for $G_{03}$:
\begin{align*}
G_{03} (x_0,x_3) &\leftarrow R(x_0, x_1) \land G_{13} (x_1, x_3),\\
G_{03} (x_0,x_3) &\leftarrow A_{P^-} (x_0) \land (x_0 = x_2) \land R(x_2, x_3).
\end{align*}
The subquery $\q_{13}(x_1,x_3) = \exists x_2 \bigl(S(x_1,x_2)\land R(x_2,x_3)\bigr)$ contains two atoms and is split at $x_2$. Since there is a tree witness $\t^2$ for $(\T,\q_{13}(x_1,x_3))$ with $\ti^2 = \{x_2\}$ and $\tr^2 = \{x_1,x_3\}$, we obtain two rules:
\begin{align*}
G_{13} (x_1,x_3) &\leftarrow S (x_1, x_2) \land R (x_2, x_3),\\
G_{13} (x_1,x_3) &\leftarrow A_P(x_1) \land (x_1 = x_3).
\end{align*}
By applying the same procedure to $\q_{37}(x_3,x_7)$, we get the following five rules:
\begin{align*}
G_{37} (x_3,x_7) &\leftarrow G_{35} (x_3, x_5) \land G_{57} (x_5, x_7),\\
G_{37} (x_5,x_7) &\leftarrow R(x_3, x_4) \land A_P(x_4) \land (x_4 = x_6) \land R (x_6, x_7),\\
G_{35} (x_3,x_5) &\leftarrow R (x_3, x_5) \land  S (x_5, x_7),\\
G_{35} (x_3,x_5) &\leftarrow A_{P^-} (x_3) \land (x_3 = x_5),\\
G_{57} (x_3,x_5) &\leftarrow R (x_3, x_4) \land R (x_4, x_7).
\end{align*}
Note that the rewriting illustrated above is slightly simpler than the definition in Section~\ref{sec:boundedleaf}: here, we directly use the atoms of $\q(\avec{x})$ instead of including a rule $G_\q(\avec{x}) \leftarrow \q(\avec{x})$, for each $\q(\avec{x})$ without existentially quantified variables. This simplification clearly does not affect the width of the NDL query and the choice of weight function.

\section{Proofs for Section~\ref{sec:param}}

\subsection{Theorem~\ref{thm:w2-hard}}\label{AppB.1}

\indent\textsc{Theorem \ref{thm:w2-hard}.} {\it
\pr{} is $W[2]$-hard.}

\begin{proof}

We show that $\T^k_H, \{V_0^0(a)\} \models \q^k_H$ iff $H$ has a hitting set of size $k$. Denote by $\mathcal C$ the canonical model of $(\T^k_H, \{V_0^0(a)\})$. For convenience of reference to the points of the canonical model
we assume that $\T^k_H$ contains the following axioms:
\begin{align*}
V_i^{l-1}(x)  &\to \exists z\, \upsilon_{i'}^l(x,z) \text{ and }\\
\upsilon_{i'}^l(x,z) &\to P(z,x) \land V_{i'}^l(z), &&\text{for } 0 \leq i < i' \le n,\\
V_i^l(x) &\to E^l_j(z), &&\text{for } v_i \in e_j,\ e_j\in E,\\
E^l_j(x)  &\to \exists z \, \eta^l_j(x,z) \text{ and }\\
\eta^l_j(x,z) &\to P(x,z) \land E^{l-1}_j(z),&&  \text{for } 1 \le j \le m.
\end{align*}
We show that $\mathcal{C} \models \q^k_H$ iff  $H$ has a hitting set of size~$k$.

\smallskip

$(\Rightarrow)$
Suppose $h \colon \q_H^k \to \mathcal{C}$ is a homomorphism.
Note that $\mathcal{C}$ satisfies the following properties: (i) $w \in E^0_j$ iff
$w = a \upsilon^1_{i_1} \upsilon^2_{i_2} \dots \upsilon^s_{i_s}\eta^s_{j}\eta^{s-1}_{j}\dots\eta^{1}_{j}$ where
$v_{j_s} \in e_j$ and (ii) all points in $\Delta^\mathcal{C}$ have at most one $P$-predecessor.
By starting with some $E_j^0$ atom and applying first (i) and then iterating (ii), we conclude
that $h(y) = a \upsilon^1_{i_1} \dots \upsilon^k_{i_k}$ for some
$1 \leq i_1 < i_2 < \dots i_k \le n$. We claim that $\{v_{i_1}, v_{i_2}, \dots, v_{i_k}\}$ is  a hitting set in $H$.  Indeed, for every branch $j$ of $\q_H^k$, there is $1 \leq s \leq k$ such that this branch is mapped on $\mathcal{C}$ in the following way:
\begin{align*}
  h(z_j^l) &= a \upsilon^1_{i_1} \upsilon^2_{i_2} \dots \upsilon^l_{i_l},&& s \leq l \leq k-1,\\
  h(z_j^l) &= a \upsilon^1_{i_1} \upsilon^2_{i_2} \dots \upsilon^s_{i_s}\eta^s_{j}\eta^{s-1}_{j}\dots\eta^{l+1}_{j},&& 0 \leq l < s,
\end{align*}
with $v_{i_s} \in e_j$. This can be shown by induction on $l$ from $0$ to $k-1$ using
(i) to prove the base of induction and (ii) to prove the induction step.
Therefore, for every $j$, there exists $s$ such that $v_{i_{s}} \in e_j$.

\bigskip

$(\Leftarrow)$ Suppose $\{v_{i_1}, v_{i_2}, \dots, v_{i_k}\}$ is
a hitting set in $H$. We construct a homomorphism $h$
from $\q_H^k$ to $\mathcal{C}$. First, we set $h(y) = a \upsilon^1_{i_1} \dots \upsilon^k_{i_k}$.
Then, for each $1 \le j\le m$, we find $s$ such $v_{i_s} \in e_j$ and define $h$ as follows:
\begin{align*}
  h(z_j^l) &= a \upsilon^1_{i_1} \upsilon^2_{i_2} \dots \upsilon^l_{i_l},&& s \leq l \leq k-1,\\
  h(z_j^l) &= a \upsilon^1_{i_1} \upsilon^2_{i_2} \dots \upsilon^s_{i_s}\eta^s_{j}\eta^{s-1}_{j}\dots\eta^{l+1}_{j},&& 0 \leq l < s.
\end{align*}
It should be clear that $h$ is indeed a homomorphism.
\end{proof}

\subsection{Theorem~\ref{leaves-param-w1}}

\indent\textsc{Theorem \ref{leaves-param-w1}.} {\it
\blpr\ is $W[1]$-hard.}
\begin{proof}
We prove that $\T_G, \{A(a)\} \models \q_G$ iff $G$ has a clique
containing one vertex from each set $V_i$.

We start with some preliminaries. First note we assume that the final axiom in $\T_G$ (which uses the syntactic sugar $\land$)
is actually given by the following three axioms (where $P$ is a fresh binary predicate):
\begin{align*}
B(x) &\rightarrow \exists y \, P(x,y),\\
P(x,y) &\rightarrow U(x,y),\\
P(x,y) &\rightarrow U(y,x).
\end{align*}
To simplify notation, we will abbreviate $\C_{\T_G, \{A(a)\}}$ by $\C$, and for every $1 \leq j \leq M$,
we let $\w(v_j)=L_j^1 L_j^2 \ldots L_j^{2M}$.
Observe that for every $v_{j_1} \in V_1, v_{j_2} \in V_2, \ldots, v_{j_p} \in V_p$, the element
$a \w(v_{j_1}) \w(v_{j_2}) \ldots \w(v_{j_p})$ belongs to $\Delta^{\C}$.
Further, observe that if $a w \in \Delta^{\C}$ with $|w|=2M \cdot p$,
then there exist $v_{j_1} \in V_1, v_{j_2} \in V_2, \ldots, v_{j_p} \in V_p$
such that $w=\w(v_{j_1}) \w(v_{j_2}) \ldots \w(v_{j_p})$.

\smallskip

$(\Rightarrow)$ Suppose that $\T_G, \{A(a)\} \models \q_G$,
and let $h$ be a homomorphism of $\q_G$ into $\C$.
Note that because of the atom $B(y)$, the variable
$y$ must be sent by $h$ to an element occurring at the end of the $p$th block. As noted above,
every such element takes the form $$a \w(v_{j_1}) \w(v_{j_2}) \ldots \w(v_{j_p})$$
where $v_{j_1} \in V_1, v_{j_2} \in V_2, \ldots, v_{j_p} \in V_p$.
We claim that $\{v_{j_1}, \dots, v_{j_p}\}$ is a clique in
$G$. To see why, consider the $i$th branch of $\q_G$, compactly represented as follows:
$$\bigl(U^{\smash{2M-2}} \cdot (YY \cdot U^{\smash{2M-2}})^{i} \cdot S S \bigr) (y, z_i)$$
By examining the axioms, we see that starting from the first occurrence of $YY$, every $U$ and $Y$ atom
takes us one step closer to $a$ (prior to the first $YY$, we may go back and forth on the extra $P$-edge leaving
from $h(y)$). It follows that $SS$ must be mapped within the $p\text{-}i$th block of the selected branch,
and since $S$ is present only at positions $2j_{p-i}$ and $2j_{p-i}+1$ of the block,
we must have $h(z_i)= a \w(v_{j_1}) \ldots  \w(v_{j_{p-i-1}}) L_{j_{p-i}}^1 \ldots L_{j_{p-i}}^{2 j_{p-i}-1}$.
As the distance
between consecutive occurrences of $YY$
(and between the final $YY$ and the $SS$) is $2M-2$, it follows that all $YY$ blocks occur at positions
$2j_{p-i}$ and $2j_{p-i}+1$ of blocks $p-i+1, \ldots, p$, which implies that
$v_{j_{p-i+1}}, \ldots, v_{j_p}$ are neighbours of $v_{j_i}$ in $G$.
Since $\q_G$ contains branches for every $1 \leq i < p$,
the selected vertices $v_{j_{1}}, \ldots, v_{j_p}$ are all neighbours in $G$,
and $G$ contains a clique with the required properties.

\smallskip

$(\Leftarrow)$ Suppose that $v_{j_1} \in V_1, \dots, v_{j_p} \in V_p$ form a clique.
We construct a homomorphism $h$ of $\q_G$ into~$\C$.
First, set $h(y)=aw$ where $w=\w(v_{j_1} )\w(v_{j_2} )\dots\w(v_{j_p} )$ and observe
that the atom $B(y)$ is satisfied by this assignment.
We will use $w[\ell, \ell']$ to
denote the subword of $w$ beginning with the $\ell$th symbol of $w$ and ending with the $\ell'$th symbol
(note that $w=|2M\cdot p|$, so $w=w[1,2M\cdot p]$).
Next, consider the $i$th branch of the query, which connects $y$ to~$z_i$, and
let $y_0, y_1, \dots y_{2M(i+1)}$ be the variables lying between $y$ and $z_i$ with
$y_0 = y$ and $z_i =  y_{2M(i+1)}$.
For $0 \leq k \leq 2 j_{p-i}$, we set $h(y_k)=h(y)$ if $k$ is even,
and set $h(y_k)=h(y) P$ otherwise. Observe that because $P$ is included in both $U$ and $U^-$,
we satisfy all binary atoms between variables from $\{y_0, \ldots, y_{2 j_{p-i}}\}$.
For $2 j_{p-i} < k \leq 2M(i+1)$, we set $$h(y_k)=a w[1,2M \cdot p - (k-2 j_{p-i})].$$
Note that, in particular, this yields
\begin{align*}
h(y_{2M(i+1)-2}) & = aw[1, 2M(p-i-1) + 2j_{p-i}+2],\\
h(y_{2M(i+1)-1}) & =aw[1,2M(p-i-1) + 2j_{p-i}+1],\\
h(y_{2M(i+1)}) & =aw[1,2M(p-i-1) + 2j_{p-i}],
\end{align*}
so the final two $S$-atoms in the branch are satisfied by~$h$.
It is easy to see that all $U$-atoms between variables from $y_{2 j_{p-i}}, \ldots, y_{2M(i+1)}$ are also satisfied.
Finally, using the fact that vertices $v_{j_{p-i+1}}, \ldots , v_{j_p}$ are neighbours of $v_{j_{p-i}}$,
we can show that all of the $Y$-atoms in the $i$th branch are satisfied by $h$. As we have constructed
a homomorphism from $\q_G$ into $\C$, we can conclude $\T_G, \{A(a)\} \models \q_G$.
\end{proof}

\section{Proofs for Section~\ref{sec2:fixed}}

\begin{figure*}[t]\centering
\begin{tikzpicture}[>=latex,
man/.style={draw,thin,fill=white,rectangle,rounded corners=1mm,inner sep=0pt,minimum height=2.8mm,minimum width=2.8mm,fill opacity=1},
maw/.style={draw,thin,fill=white,rectangle,rounded corners=1mm,inner sep=0pt,minimum height=4.8mm,minimum width=2.8mm,fill opacity=1},
spoint/.style={draw,semithick,fill=gray,rectangle,rounded corners=0.7mm,inner sep=0pt,minimum height=2mm,minimum width=2mm},
xscale=0.9
]\small
\newcommand{\dl}[2]{\tabcolsep=0pt\tiny\bf\begin{tabular}{c}#2\\[-1pt]#1\end{tabular}}
%
\draw[ultra thin,gray] (-2.5,-1.5) -- +(7.9,0);
\draw[ultra thin,gray] (-2.5,-2.5) -- +(7.9,0);
\draw[ultra thin,gray] (-2.5,-.5) -- +(7.9,0);
\draw[ultra thin,gray] (-2.5,.5) -- +(7.9,0);

\node at (-3,.5) {\normalsize $p_4$};
\node at (-3,-.5) {\normalsize $p_3$};
\node at (-3,-1.5) {\normalsize $p_2$};
\node at (-3,-2.5) {\normalsize $p_1$};

\node[wpoint,label=left:{$y$}] (y) at (0,1) {};
\node[wpoint,label=right:{$\!z_1^3$}] (z13) at (-1.5,0) {};
\node[wpoint,label=right:{$\!z_2^3$}] (z23) at (-0.5,0) {};
\node[wpoint,label=right:{$\!z_1^2$}] (z12) at (-1.5,-1) {};
\node[wpoint,label=right:{$\!z_2^2$}] (z22) at (-0.5,-1) {};
\node[wpoint,label=right:{$\!z_3^3$}] (z33) at (.5,0) {};
\node[wpoint,label=right:{$\!z_4^3$}] (z43) at (1.5,0) {};
\node[wpoint,label=right:{$\!z_3^2$}] (z32) at (.5,-1) {};
\node[wpoint,label=right:{$\!z_4^2$}] (z42) at (1.5,-1) {};
\node[wpoint,label=right:{$\!z_1^1$}] (z11) at (-1.5,-2) {};
\node[wpoint,label=right:{$\!z_2^1$}] (z21) at (-0.5,-2) {};
\node[wpoint,label=right:{$\!z_1^0$}] (z10) at (-1.5,-3) {};
\node[wpoint,label=right:{$\!z_2^0$}] (z20) at (-0.5,-3) {};
\node[wpoint,label=right:{$\!z_3^1$}] (z31) at (.5,-2) {};
\node[wpoint,label=right:{$\!z_4^1$}] (z41) at (1.5,-2) {};
\node[wpoint,label=right:{$\!z_3^0$}] (z30) at (.5,-3) {};
\node[wpoint,label=right:{$\!z_4^0$}] (z40) at (1.5,-3) {};
\node[wpoint,label=right:{$\!z_1^{-1}$}] (z1-1) at (-1.5,-4) {};
\node[wpoint,label=right:{$\!z_2^{-1}$}] (z2-1) at (-.5,-4) {};
\node[wpoint,label=right:{$\!z_3^{-1}$}] (z3-1) at (.5,-4) {};
\node[wpoint,label=right:{$\!z_4^{-1}$}] (z4-1) at (1.5,-4) {};
\node[spoint,label=right:{$\!z_1^{-2}$}] (z1-2) at (-1.5,-5) {};
\node[spoint,label=right:{$\!z_2^{-2}$}] (z2-2) at (-.5,-5) {};
\node[spoint,label=right:{$\!z_3^{-2}$}] (z3-2) at (.5,-5) {};
\node[spoint,label=right:{$\!z_4^{-2}$}] (z4-2) at (1.5,-5) {};

\node[wpoint,label=right:{$\!y^3$}] (y3) at (3,0) {};
\node[wpoint,label=right:{$\!y^2$}] (y2) at (3,-1) {};
\node[wpoint,label=right:{$\!y^1$}] (y1) at (3,-2) {};
\node[wpoint,label=right:{$\!x$}] (x) at (3,-3) {};

\begin{scope}[thick]
\draw[->] (y) to node[midway,man] {\tiny\bf +} (z13);
\draw[->] (y) to node[midway,man] {\tiny\bf +} (z23);
\draw[->] (y) to node[midway,man] {\tiny\bf 0} (z33);
\draw[->] (y) to node[midway,man] {\tiny$\boldsymbol{-}$} (z43);
\draw[->] (z13) to node[midway,man] {\tiny$\boldsymbol{-}$} (z12);
\draw[->] (z23) to node[midway,man] {\tiny$\boldsymbol{-}$} (z22);
\draw[->] (z33) to node[midway,man] {\tiny\bf 0} (z32);
\draw[->] (z43) to node[midway,man] {\tiny$\boldsymbol{-}$} (z42);
\draw[->] (z12) to node[midway,man] {\tiny\bf 0} (z11);
\draw[->] (z22) to node[midway,man] {\tiny\bf 0} (z21);
\draw[->] (z32) to node[midway,man] {\tiny\bf 0} (z31);
\draw[->] (z42) to node[midway,man] {\tiny\bf 0} (z41);
\draw[->] (z11) to node[midway,man] {\tiny\bf +} (z10);
\draw[->] (z21) to node[midway,man] {\tiny\bf 0} (z20);
\draw[->] (z31) to node[midway,man] {\tiny\bf +} (z30);
\draw[->] (z41) to node[midway,man] {\tiny\bf 0} (z40);
\draw[->] (y) to node[midway,man] {\tiny\bf 0} (y3);
\draw[->] (y3) to node[midway,man] {\tiny\bf 0} (y2);
\draw[->] (y2) to node[midway,man] {\tiny\bf 0} (y1);
\draw[->] (y1) to node[midway,man] {\tiny\bf 0} (x);
\draw[->] (z10) to node[midway,man] {\tiny$\boldsymbol{-}$} (z1-1);
\draw[->] (z1-1) to node[midway,man] {\tiny$\boldsymbol{-}$} (z1-2);
\draw[->] (z20) to node[midway,man] {\tiny$\boldsymbol{-}$} (z2-1);
\draw[->] (z2-1) to node[midway,man] {\tiny\bf +} (z2-2);
\draw[->] (z30) to node[midway,man] {\tiny\bf +} (z3-1);
\draw[->] (z3-1) to node[midway,man] {\tiny$\boldsymbol{-}$} (z3-2);
\draw[->] (z40) to node[midway,man] {\tiny\bf +} (z4-1);
\draw[->] (z4-1) to node[midway,man] {\tiny\bf +} (z4-2);

\end{scope}
\begin{scope}\small
\node[rotate=0] at (-1.5,1.6) {$\chi_1$};
\node[rotate=0] at (-.5,1.6) {$\chi_2$};
\node[rotate=0] at (.5,1.6) {$\chi_3$};
\node[rotate=0] at (1.5,1.6) {$\chi_4$};
\end{scope}
\begin{scope}[xshift=5cm]
\node[bpoint,label=below:{$a$}] (a) at (4,-3) {};
\node[wpoint] (an) at (2.3,-2) {};
\node[wpoint,opacity=0.6] (ap) at (5.7,-2) {};
\node[wpoint,opacity=0.6] (ann) at (1.2,-1) {};
\node[wpoint] (anp) at (3.4,-1) {};
\node[wpoint,opacity=0.6] (apn) at (4.6,-1) {};
\node[wpoint] (app) at (7,-1) {};

%
%
\begin{scope}[ultra thick]
\draw[->] (an) to node[pos=0.3,maw] {\dl{0}{+}}(a);
\draw[->] (ap) to node[pos=0.3,maw] {\dl{0}{$\boldsymbol{-}$}}(a);
\draw[->] (ann) to node[midway,maw] {\dl{0}{+}}(an);
\draw[->] (anp) to node[midway,maw] {\dl{0}{$\boldsymbol{-}$}}(an);
\draw[->] (apn) to node[midway,maw] {\dl{0}{+}}(ap);
\draw[->] (app) to node[midway,maw] {\dl{0}{$\boldsymbol{-}$}}(ap);

\end{scope}
\node[spoint,opacity=0.6] (zann) at ($(ann)+(0,-1)$) {};
\node[spoint,opacity=0.6] (zann1) at ($(zann)+(0,-1)$) {};
\node[spoint,opacity=0.6] (zanp) at ($(anp)+(0,-1)$) {};
\node[spoint,opacity=0.6] (zanp1) at ($(zanp)+(0,-1)$) {};
\node[spoint,opacity=0.6] (zan) at ($(an)+(0,-1)$) {};
\node[spoint,opacity=0.6] (zapn) at ($(apn)+(0,-1)$) {};
\node[spoint,opacity=0.6] (zapn1) at ($(zapn)+(0,-1)$) {};
\node[spoint,opacity=0.6] (zap) at ($(ap)+(0,-1)$) {};
\node[spoint,opacity=0.6] (zapp) at ($(app)+(0,-1)$) {};
\node[spoint,opacity=0.6] (zapp1) at ($(zapp)+(0,-1)$) {};

\begin{scope}[thick, densely dotted]
\draw[opacity=0.6] (zann1) -- +(0,-0.4);
\draw[opacity=0.6] (zanp1) -- +(0,-0.3);
\draw[opacity=0.6] (zapn1) -- +(0,-0.3);
\draw[opacity=0.6] (zan) -- +(0,-0.4);
\draw[opacity=0.6] (zapp1) -- +(0,-0.4);
\draw[opacity=0.6] (zap) -- +(0,-0.4);

\draw[<-,opacity=0.6] (app) -- +(0.3,0.3);
\draw[<-,opacity=0.6] (app) -- +(-0.3,0.3);
\draw[<-,opacity=0.6] (ann) -- +(0.3,0.4);
\draw[<-,opacity=0.6] (ann) -- +(-0.3,0.4);
\draw[<-,opacity=0.6] (anp) -- +(0.3,0.4);
\draw[<-,opacity=0.6] (anp) -- +(-0.3,0.4);
\draw[<-,opacity=0.6] (apn) -- +(0.3,0.4);
\draw[<-,opacity=0.6] (apn) -- +(-0.3,0.4);
\end{scope}
\begin{scope}[thick]
\draw[->,opacity=0.6] (an) to node[solid,midway,man] {\tiny\bf$\boldsymbol{-}$} (zan);
\draw[->,opacity=0.6] (ann) to node[solid,midway,man] {\tiny\bf$\boldsymbol{-}$} (zann);
\draw[->,opacity=0.6] (zann) to node[solid,midway,man] {} (zann1);
\draw[->,opacity=0.6] (anp) to node[solid,midway,man] {\tiny\bf+} (zanp);
\draw[->,opacity=0.6] (zanp) to node[solid,midway,man] {} (zanp1);
\draw[->,opacity=0.6] (apn) to node[solid,midway,man] {\tiny\bf$\boldsymbol{-}$} (zapn);
\draw[->,opacity=0.6] (zapn) to node[solid,midway,man] {} (zapn1);
\draw[->,opacity=0.6] (ap) to node[solid,midway,man] {\tiny\bf+} (zap);
\draw[->,opacity=0.6] (app) to node[solid,midway,man] {\tiny\bf+} (zapp);
\draw[->,opacity=0.6] (zapp) to node[solid,midway,man] {} (zapp1);
\end{scope}
\node[fill=white,inner sep=1pt] at (4,1.5) {\normalsize $\C_{\T_\dag,\A^{\avec{\alpha}}_m}$};

\node[wpoint] (a-n) at (2.5,-4) {};
\node[wpoint] (a-nn) at (1.5,-5) {};
\node[spoint] (a-np) at (3.5,-5) {};

\node[wpoint] (a-p) at (5.5,-4) {};
\node[spoint] (a-pn) at (4.5,-5) {};
\node[wpoint] (a-pp) at (6.5,-5) {};

\draw[->] (a) to node[midway,man] {\tiny$\boldsymbol{-}$} (a-n);
\draw[->] (a-n) to node[midway,man] {\tiny$\boldsymbol{-}$} (a-nn);
\draw[->] (a-n) to node[midway,man] {\tiny\bf +} (a-np);

\draw[->] (a) to node[midway,man] {\tiny\bf +} (a-p);
\draw[->] (a-p) to node[midway,man] {\tiny$\boldsymbol{-}$} (a-pn);
\draw[->] (a-p) to node[midway,man] {\tiny\bf +} (a-pp);

\node at (4,-4) {\normalsize $\A^{\avec{\alpha}}_m$};

\end{scope}
\node at (-3,1.5) {\normalsize $\bar\q_\varphi(x)$};
\begin{scope}[dashed]
\draw[out=-30,in=-150,looseness=0.6] (x) to (a);
\end{scope}
\end{tikzpicture}
\caption{Example of $\bar\q_\varphi(x)$ and $\C_{\T_\dag, \A^{\avec{\alpha}}_m}$ for $\varphi = \chi_1 \land \dots \land \chi_4$ with $\chi_1 = (p_1 \lor \neg p_3 \lor p_4)$, $\chi_2 = (\neg p_3\land p_4)$, $\chi_3 = p_1$, $\chi_4=(\neg p_3\lor \neg p_4)$ and $\avec{\alpha} = (0,1,1,0)$}
\label{fig:size-proof}
\end{figure*}

\subsection{Theorem~\ref{thm:NP:query}\label{app:thm:NP:query}}

\indent\textsc{Theorem \ref{thm:NP:query}.} {\it
There is an ontology $\T_\dag$ such that answering OMQs of the form $(\T_\dag,\q)$ with Boolean tree-shaped CQs $\q$ is \NP-hard for query complexity.}

\begin{proof}
We assume that $\T_\dag$ consists of the following axioms:
\begin{align*}
& A(x) \to \exists y \, \upsilon_+(x,y)\\
&\upsilon_+(x,y) \to  \P(y,x) \land \V(y,x) \land B_{-}(y) \land A(y),\\
& \hspace*{1em}B_{-}(x)  \to \exists y\, \eta_-(x,y)\\
& \hspace*{1em}\eta_-(x,y) \to  \N(x,y) \land B_0(y),\\
& A(x) \to \exists y \, \upsilon_-(x,y)\\
&\upsilon_-(x,y) \to \N(y,x) \land \V(y,x) \land B_{+}(y)  \land A(y),\\
& \hspace*{1em}B_{+}(x)  \to \exists y\, \eta_+(x,y)\\
& \hspace*{1em} \eta_+(x,y) \to \P(x,y) \land B_0(y),\\
& B_0(x) \to \exists y \, \eta_0(x,y)\\
& \eta_0(x,y) \to  \P(x,y) \land \N(x,y) \land \V(x,y) \land B_0(y).
\end{align*}

Let $\C$ be the canonical model of  $(\T_\dag,\{A(a)\})$. We prove that $\C \models \q_\varphi$ iff $\varphi$ is satisfiable.

\smallskip

$(\Rightarrow)$ Suppose  $h$ is a homomorphism from $\q_\varphi$ to $\C$ and $h(z^k_j) = h(y) = a \varrho_1 \dots \varrho_n$, for some roles $\varrho_l$.
Since $A(y) \in \q_\varphi$, it follows that $\varrho_l \in \{\upsilon_+, \upsilon_-\}$. Moreover, because of the structure of $\C$, without
any loss of generality we may assume that $n=k$.
Define a valuation $\nu \colon \{p_1,\dots,p_k\} \to \{\Tr,\Fa\}$ by taking $\nu(p_l) = \Tr$ if $\varrho_l = \upsilon_-$, $\nu(p_l) = \Fa$,
if $\varrho_l = \upsilon_+$. We claim that $\nu$ makes $\varphi$ true. To verify that the clause $\chi_j$ is satisfied, consider a number $1 \leq s \leq k$, such that the $j$th branch of the query is mapped on $\mathcal{C}$ in the following way:
\begin{align*}
  h(z_j^l) = a \varrho_1 \dots \varrho_l,&& s \leq l \leq k,\\
  h(z_j^l) = a \varrho_1 \dots \varrho_s \gamma_1 \dots \gamma_{s-l},&& 0 \leq l < s,
\end{align*}
for some roles $\gamma_1 \dots \gamma_{s-l}$ with $\gamma_1 \in \{\eta_-, \eta_+\}$ and $\gamma_i =  \eta_0$ for $2 \le i \le s-l$.
Such $s$ and the roles $\gamma_i$ exist, because
the $P$-atoms in $\C$ are directed towards the root if they cover $\upsilon$-atoms, and away from the root if they cover $\eta$-atoms
($s \ge 1$ since $B_0(z_j^0) \in \q_\varphi$).
 Clearly, $\T_\dag \models \gamma_1(x,y) \to P_{+}(x,y)$ iff $\rho_s = \upsilon_-$ iff $\nu(p_s) = \Tr$ and
 $\T_\dag \models \gamma_1(x,y) \to P_{-}(x,y)$ iff $\rho_s = \upsilon_+$ iff $\nu(p_s) = \Fa$.
 It follows that either $\P(z^s_{j},z^{s-1}_j) \in \q_\varphi$ and $\nu(p_s) =  \Tr$, or $\N(z^s_{j},z^{s-1}_j) \in \q_\varphi$ and $\nu(p_s) =  \Fa$. In either case, $\chi_j$ contains a literal with $p_s$ satisfied by $\nu$.

\medskip

$(\Leftarrow)$ Suppose a valuation $\nu \colon \{p_1,\dots,p_k\} \to \{\Tr,\Fa\}$ satisfies $\varphi$. Consider the sequence of roles $\varrho_1 \dots \varrho_k$, such that for $1\le l \le k$ we have $\varrho_l = \upsilon_+$, if $\nu(p_l) = \Fa$, and $\varrho_l = \upsilon_-$, if $\nu(p_l) = \Tr$. We claim that there exists a homomorphism $h$ from $q_\varphi$ to $\C$. First, let $h(y) = a \varrho_1 \dots \varrho_k$. To map the $j$th branch of the query, consider the maximal $1 \leq s \leq k$, such that a $p_s$-literal (positive or negative) makes $\chi_j$ true. Set
\begin{align*}
  h(z_j^l) = a \varrho_1 \dots \varrho_l,&& s \leq l \leq k-1,\\
  h(z_j^l) = a \varrho_1 \dots \varrho_s \gamma_1 \dots \gamma_{s-l},&& 0 \leq l < s,
\end{align*}
where $\gamma_1 = \eta_+$ if $p_s$ occurs positively, $\gamma_1 = \eta_-$ if $p_s$ occurs negatively and  $\gamma_{i} = \eta_0$ for $i \ge 2$. That $z_j^l$, for $s \leq l \leq k-1$, are mapped correctly follows from the maximality of $s$. That $z_j^l$ is mapped correctly for $l = s-1$ follows from the fact that $p_s$ occurs in $\chi_j$ positively iff $\P(z^s_{j},z^{s-1}_j) \in \q_\varphi$ iff $\nu(p_s) = \Tr$ iff $\varrho_s = \upsilon_-$ iff $\gamma_1 = \eta_+$ (similarly for negative $p_s$). Finally, $z_j^l$ is mapped correctly for $0 \leq l < s-1$ since the sequence of roles $\gamma_2 \dots \gamma_{s-l}$ can embed any $\P$, $\N$, or $\V$ roles, and $B_0$ concept. Thus, $h$ is a homomorphism from $\q_\varphi$ to $\C$.
\end{proof}

\subsection{Theorem~\ref{no-ql-rewritings}\label{app:no-ql-rewritings}}

We need several intermediate results and definitions before we present the proof in the end of the section. Suppose $\varphi$ is a propositional formula in CNF having $k$ variables $p_1,\dots,p_k$ and $m$ clauses $\chi_1,\dots,\chi_m$. We assume that $m = 2^\ell$. 
We associate with every such $\varphi$ a CQ $\bar\q_\varphi(x)$ with one answer variable $x$ and the following atoms, where $1\le j \le m$, $1 \le l \le k$, and $z^k_j = y^k$:
\begin{align*}
& \V(y^1,x),\dots,\V(y^{k},y^{k-1}),\hspace*{-5em}\\
& \P(z_j^l,z_j^{l-1}) && \text{ if } \chi_j  \text{ contains } p_l,\\
& \N(z_j^l,z_j^{l-1}), && \text{ if } \chi_j \text{ contains } \neg p_l,\\
& \V(z_j^l,z_j^{l-1}), && \text{ if } \chi_j \text{ contains no occurrence of } p_l.
\end{align*}
Then, for $0 \leq l \leq \ell-1$,
\begin{align*}
& \N(z_j^{-l}, z_j^{-l-1}), &&  \text{ if the $l$th bit of $(j-1)_2$ is $0$},\\
& \P(z_j^{-l}, z_j^{-l-1}), && \text{ if the $l$th bit of $(j-1)_2$ is $1$},\\
& B_0(z_j^{-\ell}).
\end{align*}
See an example in Fig.~\ref{fig:size-proof}.
For any $\avec{\alpha} \in \{0,1\}^m$, define a data instance $\A_m^{\avec{\alpha}}$ as the full binary tree of depth $\ell$ (and so $m=2^\ell$ leaves) on the binary predicates $\N$ (for the left child) and $\P$ (for the right child); $\A_m^{\avec{\alpha}}$ contains  $A(a)$ for the root $a$ of the tree and, for every $i$th leaf $b_i$ of the tree, $B_0(b_i) \in \A_m^{\avec{\alpha}}$ iff $\avec{\alpha}_i = 1$.

Denote by $f_\varphi \colon \{0,1\}^m \to \{0,1\}$ the \emph{monotone}  function such that $f_\varphi(\avec{\alpha}) = 1$ iff the CNF $\varphi^{-\avec{\alpha}}$, which is obtained from $\varphi$ by removing all conjuncts $\chi_i$ with $\avec{\alpha}_i = 1$, is satisfiable. It is readily checked that we have
\begin{lemma}
For any $\avec{\alpha} \in \{0,1\}^m$,
$$
\T_\dag,\A_m^{\avec{\alpha}} \models \bar\q_\varphi(a) \quad \text{iff} \quad f_\varphi(\avec{\alpha}) = 1.
$$
\end{lemma}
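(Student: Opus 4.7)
The plan is to mirror the analysis in the proof of Theorem~\ref{thm:NP:query}, adapting it to the two new features of $\bar\q_\varphi$ compared with~$\q_\varphi$: a prepended $\V$-chain from~$x$ to~$y$ and a binary address appended below each $z_j^0$. First I would describe $\C = \C_{\T_\dag, \A_m^{\avec{\alpha}}}$: from $A(a)$ the axioms grow an infinite binary $A$-tree of $\upsilon_\pm$-witnesses above~$a$ (children generated by $\upsilon_+$ lie in $B_-$, those by $\upsilon_-$ in $B_+$); each $B_\pm$-node, as well as each data leaf $b_i$ with $\avec{\alpha}_i=1$, spawns an infinite $\eta_0$-chain of $B_0$-points; and below~$a$ sits the original data binary tree of $\N/\P$-edges leading to $b_1,\dots,b_m$. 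Crucially, every $\eta_0$-edge simultaneously realises $\P$, $\N$, and $\V$, while each $\upsilon_\pm$-step produces $\V$, $\P$, or $\N$ directed from the $\upsilon_\pm$-child up to its parent. Since $a$ has no incoming $\V$-edge in the data and no $\eta_0$-predecessor, the only way to realise the descending $\V$-chain $\V(y^1,x),\dots,\V(y^k,y^{k-1})$ is to climb the $A$-tree, forcing $h(y^l) = a\varrho_1\dots\varrho_l$ with $\varrho_l\in\{\upsilon_+,\upsilon_-\}$; this defines an assignment $\nu$ by $\nu(p_l)=\Tr$ iff $\varrho_l = \upsilon_-$.

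Next, exactly as in the proof of Theorem~\ref{thm:NP:query}, each ray $z_j^k=y,\dots,z_j^0$ must climb the $A$-tree for $k-s$ steps and then descend an $\eta$-chain for $s$ steps, for some $0\le s\le k$; no ray can leave~$a$ sideways into the data tree, because reaching any $b_i$ from~$y$ already costs $k+\ell$ edges. The up-step at position $l$ is compatible with $\T_\dag$ iff the literal of $p_l$ in $\chi_j$ is falsified by~$\nu$ (or absent), and when $s\ge 1$ the transition step at position~$s$ requires the literal of $p_s$ in $\chi_j$ to be satisfied by~$\nu$; the $\eta_0$-steps in positions $l<s$ impose no constraint. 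Thus a mapping with $s\ge 1$ is realisable iff some literal of $\chi_j$ is satisfied by~$\nu$, and a mapping with $s=0$ (so $h(z_j^0)=a$) is realisable iff no literal of $\chi_j$ is satisfied by~$\nu$.

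It remains to handle the binary address $z_j^0\to\dots\to z_j^{-\ell}$ encoding $(j-1)_2$. If $s\ge 1$, then $h(z_j^0)$ lies on an $\eta_0$-chain of $B_0$-points, so the address can be embedded by following $\ell$ further $\eta_0$-edges (each realising $\P$, $\N$, and $\V$), and $B_0(z_j^{-\ell})$ holds automatically. If $s=0$, then $h(z_j^0)=a$ and the address must traverse the data tree; by construction the path encoding $(j-1)_2$ leads precisely to $b_j$, so $B_0(z_j^{-\ell})$ holds iff $\avec{\alpha}_j=1$. Combining these observations, branch~$j$ is mappable iff $\chi_j$ is satisfied by~$\nu$ \emph{or} $\avec{\alpha}_j=1$. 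Direction $(\Rightarrow)$ then reads~$\nu$ off a given~$h$ and concludes that~$\nu$ satisfies $\varphi^{-\avec{\alpha}}$, hence $f_\varphi(\avec{\alpha})=1$. For $(\Leftarrow)$, given $\nu\models\varphi^{-\avec{\alpha}}$, one assembles~$h$ branch by branch: for each~$j$ with $\nu\models\chi_j$ take~$s$ to be the largest position of a satisfying literal and extend the address along the $\eta_0$-chain, and for each~$j$ with $\nu\not\models\chi_j$ (so $\avec{\alpha}_j=1$ by choice of~$\nu$) send the whole ray up to~$a$ and the address down the data tree to~$b_j$. The main subtlety requiring care is exhaustiveness of the two-case decomposition for a ray and consistency of the $\varrho_l$'s across branches: the $\varrho_l$'s prescribed by~$\nu$ automatically fit the up-steps of every unsatisfied $\chi_j$, since an unsatisfied literal at position~$l$ is precisely what the up-step requires there.
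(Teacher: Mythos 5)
Your argument is correct and is essentially the argument the paper has in mind: the paper dismisses this lemma with ``it is readily checked,'' and your analysis (the $\V$-chain forcing $h(y^l)=a\varrho_1\cdots\varrho_l$ and hence an assignment $\nu$, plus the per-branch dichotomy between turning onto a $B_0$-chain at a position whose literal is satisfied and descending all the way to $a$ and then deterministically through the data tree to $b_j$, where $B_0(b_j)$ holds iff $\avec{\alpha}_j=1$) is precisely the intended adaptation of the proof of Theorem~\ref{thm:NP:query}. The only blemishes are cosmetic: the up/down terminology is used inconsistently, and the edge-counting aside about rays not straying into the data tree early is superfluous --- that exclusion already follows from your own observations that $A$-tree nodes at level $\geq 1$ have no outgoing data edges and that the level decreases by exactly one per step, so a ray can reach $a$ only at $z_j^0$.
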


Let $\mathcal{QL}$ be any query language such that, for any $\mathcal{QL}$-query $\Phi(x)$ and any $\A_m^{\avec{\alpha}}$, the answer to $\Phi(a)$ over $\A_m^{\avec{\alpha}}$ can be computed in time $\textit{poly}(|\Phi|,m)$.

\begin{theorem}
The OMQ $(\T_\dag, \bar\q_\varphi(x))$ does not have a polynomial-size rewriting in $\mathcal{QL}$ unless $\NP \subseteq \Ppoly$.
\end{theorem}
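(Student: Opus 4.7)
The plan is to use a hypothesised polynomial-size $\mathcal{QL}$-rewriting as polynomial-size non-uniform advice for a polynomial-time algorithm deciding $3$-SAT, thereby yielding $\NP\subseteq\Ppoly$. The crux will be the construction, for each $n$, of a single ``universal'' CNF $\varphi_n$ of size polynomial in $n$ such that the satisfiability of every $3$-CNF on $n$ variables is encoded as a value of the monotone function $f_{\varphi_n}$ at a bit-vector computable from the instance in polynomial time.

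First I would take $\varphi_n$ to be the conjunction of all non-tautological $3$-clauses over the variables $p_1,\dots,p_n$, in some fixed canonical order, padded by repetitions of a single clause up to a power of two so that $m_n\le c\cdot n^3$ and the data instance $\A_{m_n}^{\avec{\alpha}}$ is a full binary tree of depth $\lceil\log_2 m_n\rceil$. Given any $3$-CNF $\psi$ on at most $n$ variables, define $\avec{\alpha}(\psi)\in\{0,1\}^{m_n}$ by setting its $i$th entry to $0$ iff the $i$th clause of $\varphi_n$ occurs in $\psi$ (and to $1$ otherwise). Then $\varphi_n^{-\avec{\alpha}(\psi)}=\psi$, and the preceding lemma yields
\begin{equation*}
\T_\dag,\ \A_{m_n}^{\avec{\alpha}(\psi)}\ \models\ \bar\q_{\varphi_n}(a) \quad\Longleftrightarrow\quad f_{\varphi_n}(\avec{\alpha}(\psi))=1 \quad\Longleftrightarrow\quad \psi\text{ is satisfiable}.
\end{equation*}
Both $\avec{\alpha}(\psi)$ and $\A_{m_n}^{\avec{\alpha}(\psi)}$ are computable from $\psi$ in polynomial time, since the $O(n^3)$ clauses of $\varphi_n$ can be enumerated in a fixed order and membership of each in $\psi$ checked individually.

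Now assume that for every CNF $\varphi$ the OMQ $(\T_\dag,\bar\q_\varphi(x))$ admits a $\mathcal{QL}$-rewriting $\Phi_\varphi$ of size polynomial in $|\varphi|$; applied to $\varphi_n$ this yields a $\mathcal{QL}$-query $\Phi_n$ of size polynomial in $n$. Because $\mathcal{QL}$-queries evaluate on any $\A_m^{\avec{\alpha}}$ in time polynomial in $|\Phi|$ and $m$, evaluating $\Phi_n(a)$ on $\A_{m_n}^{\avec{\alpha}(\psi)}$ takes time polynomial in $n$. Treating the sequence $\{\Phi_n\}$ as polynomial-size advice, a polynomial-time Turing machine then decides $3$-SAT as follows: on input $\psi$, pick $n$ large enough to accommodate its variables, compute $\avec{\alpha}(\psi)$, build $\A_{m_n}^{\avec{\alpha}(\psi)}$, evaluate $\Phi_n$ on it, and return the answer. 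This places $3$-SAT, and therefore $\NP$, in $\Ppoly$.

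The argument is technically light; the main point to get right is that $\varphi_n$ is genuinely polynomial-sized while remaining universal for $3$-CNF clause-embedding. Once $\varphi_n$ is in place, the monotonicity of $f_\varphi$ (exploited via clause-removal in the lemma) makes the encoding $\psi\mapsto\avec{\alpha}(\psi)$ correct, and the polynomial-time evaluability of $\mathcal{QL}$ on the tree instances $\A_m^{\avec{\alpha}}$ finishes the proof with no further obstacles.
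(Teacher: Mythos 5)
Your proposal is correct and follows essentially the same route as the paper: the paper imports a polynomial-size sequence $\varphi_n$ with $f_{\varphi_n}$ \NP-hard (precisely the all-3-clauses CNF you construct explicitly) and turns a hypothetical polynomial-size $\mathcal{QL}$-rewriting plus the polynomial-time evaluation guarantee on the tree instances into polynomial-size non-uniform advice, concluding $\NP \subseteq \Ppoly$. Your only deviations are cosmetic — you use the rewriting directly as advice for a $\Ppoly$ algorithm deciding 3-SAT instead of phrasing the conclusion as circuits computing $f_{\varphi_n}$, and your claim $\varphi_n^{-\avec{\alpha}(\psi)}=\psi$ holds only up to duplicated clauses from the padding, which does not affect satisfiability.
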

\begin{proof}
Take any sequence of CNFs $\varphi_n$ of polynomial size in $n$ such that $f_{\varphi_n}$ is \NP-hard~\cite[Sec.~3]{DBLP:journals/ai/GottlobKKPSZ14}. Suppose there is a $\mathcal{QL}$-rewriting $\Phi_n$ of $(\T_\dag, \bar\q_\varphi(x))$ of polynomial size. By adapting the proof of $\PTime \subseteq \Ppoly$~\cite[Theorem~6.6]{Arora&Barak09} to the algorithm that checks $\A_m^{\avec{\alpha}} \models \Phi_n(a)$, we obtain a sequence of polynomial-size circuits computing $f_{\varphi_n}$, from which $\NP \subseteq \Ppoly$.
\end{proof}

\subsection{Theorem~\ref{trees-data-np}\label{app:trees-data-np}}

\indent\textsc{Theorem \ref{trees-data-np}.} {\it Evaluating \PE-queries over trees in $\mathfrak T$ is \NP-hard.}

\bigskip

More precisely, we are going to prove:
\begin{theorem}
The evaluation problem for PE-queries over data instances of the form $\A_m^{\avec{\alpha}}$ is $\NP$-hard.
\end{theorem}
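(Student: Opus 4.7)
The plan is to reduce \textsc{Sat} in polynomial time to the evaluation problem, building a PE-query of size polynomial in the input CNF while using a constant-size data instance of the required form. Since every $\A_m^{\avec{\alpha}}$ is admissible, I will fix the smallest non-trivial one: take $\ell = 1$ (so $m = 2$) and $\avec{\alpha} = (0,0)$, giving the depth-one binary tree
\[
\A \;=\; \A_2^{(0,0)} \;=\; \{A(a),\; \N(a, b_1),\; \P(a, b_2)\},
\]
with root $a$, left child $b_1$ reached via $\N$, right child $b_2$ reached via $\P$, and no $B_0$ labels.

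Given a CNF $\varphi = \chi_1 \land \cdots \land \chi_n$ over propositional variables $p_1, \dots, p_k$, I will construct the PE-query
\[
\Phi_\varphi(x) \;=\; \exists x_1 \cdots \exists x_k\, \bigwedge_{j=1}^{n} \Bigl(\,\bigvee_{p_i \in \chi_j} \P(x, x_i) \ \lor\ \bigvee_{\neg p_i \in \chi_j} \N(x, x_i)\,\Bigr),
\]
and the central claim to establish is that $\A \models \Phi_\varphi(a)$ iff $\varphi$ is satisfiable. The intuition is that in $\A$, the node $b_2$ is the unique $\P$-successor of $a$ and $b_1$ is the unique $\N$-successor of $a$, so instantiating $x_i$ by $b_2$ encodes $p_i = \top$ (thereby discharging positive literals $p_i$) while instantiating $x_i$ by $b_1$ encodes $p_i = \bot$ (discharging negative literals $\neg p_i$); each clause's disjunction then literally spells out the standard clause-satisfaction condition.

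For the forward direction, from a satisfying valuation $v$ I will set $x_i := b_2$ if $v(p_i) = \top$ and $x_i := b_1$ otherwise; any literal of $\chi_j$ true under $v$ then yields a matching atom in the disjunction that holds in $\A$. For the converse, starting from any witnesses $(x_1, \dots, x_k) \in \{a, b_1, b_2\}^k$ for $\Phi_\varphi(a)$, each clause must be discharged by some atom $\P(a, x_i)$ or $\N(a, x_i)$, which in $\A$ force $x_i = b_2$ or $x_i = b_1$ respectively; defining $v(p_i) := \top$ iff $x_i = b_2$, and arbitrarily whenever $x_i = a$ (which can occur only for variables contributing to no satisfied disjunct), yields a valuation satisfying every clause. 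Because $|\Phi_\varphi| = O(|\varphi|)$ and $\A$ is fixed, the reduction is polynomial-time, and NP-hardness of \textsc{Sat} transfers. The only mildly delicate step is the book-keeping in the converse: handling witnesses mapped to $a$, which is harmless precisely because $\P(a,a), \N(a,a) \notin \A$ and hence such witnesses contribute to no clause's disjunction.
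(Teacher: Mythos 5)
Your proof is correct, but it takes a genuinely different route from the paper's. You collapse the data side entirely: you fix the single constant-size instance $\A_2^{(0,0)}=\{A(a),\N(a,b_1),\P(a,b_2)\}$ and encode the whole CNF in the query, reading the two children $b_1,b_2$ of the root as the two truth values; both directions of your equivalence are sound, and the one delicate point (witnesses mapped to $a$) is handled correctly since $\P(a,a),\N(a,a)\notin\A_2^{(0,0)}$. This establishes the stated \NP-hardness -- indeed it shows hardness already in query complexity over one fixed member of the family, which is enough to conclude that PE violates the polynomial-time evaluation hypothesis of Theorem~\ref{no-ql-rewritings} unless $\PTime=\NP$. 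The paper's reduction is oriented the other way: it fixes, for each number $k$ of propositional variables, the 3-CNF $\varphi_k$ consisting of all $m=O(k^3)$ possible clauses and constructs one polynomial-size query $\q_m(x)$ (the subqueries $\avec{r},\avec{s},\avec{t}$ navigate the depth-$\ell$ tree and use disjunctions of $\P$ and $\N$ together with the $B_0$-atoms) such that $\A_m^{\avec{\alpha}}\models\q_m(a)$ iff $\varphi_k^{-\avec{\alpha}}$ is satisfiable; the 3-SAT instance is thus encoded in the leaf labelling $\avec{\alpha}$ while the query is generic for $k$. What that buys is a sharper fact: a single polynomial-size PE query computes, from the data alone, the \NP-hard monotone function $f_{\varphi_k}(\avec{\alpha})$ -- exactly the function family underlying the non-uniform ($\NP\subseteq\Ppoly$) argument of Theorem~\ref{no-ql-rewritings} and the construction of $\bar\q_\varphi$ -- thereby pinpointing which step of that argument becomes unavailable for PE, rather than only refuting its uniform polynomial-time hypothesis. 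Your construction is simpler and fully suffices for the theorem as stated; the paper's construction is the one that interfaces most directly with the surrounding succinctness machinery.
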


\begin{proof}
Let $\varphi_k$, $k \ge 1$, be the 3-CNF with all possible $m = O(k^3)$ clauses of $k$ variables. Without loss of generality, we will assume that the number of clauses in $\varphi_k$  is actually $m = 2^\ell$, for some $\ell$. We construct a PE-query $\q_m(x)$ such that, for any $\avec{\alpha} \in \{0,1\}^m$, we have $\A_m^{\avec{\alpha}} \models \q_m(a)$ iff the CNF $\varphi^{-\avec{\alpha}}_k$ is satisfiable, and the size of $\q_m$ is polynomial in $m$ (and $k$).

The query $\q_m(x)$ takes the form
$$
\q_m(x) = \exists \avec{z} \, \big(\avec{r}(x,\avec{z}) \land \avec{s}(x,\avec{z}) \land \avec{t}(x,\avec{z})\big),
$$
where the subqueries (without quantified variables) $\avec{r}$, $\avec{s}$ and  $\avec{t}$ and the variables $\avec{z}$ are defined as follows.
Among the variables $\avec{z}$, there are variables $z_1, \dots, z_m$ corresponding to the leaves of $\A_m^{\avec{\alpha}}$, variables $x_1, \ldots, x_k$ corresponding to the propositional variables of $\varphi_k$, and variables $x_1^\prime, \ldots, x_k^\prime$ corresponding to their negations (there are other auxiliary variables which will be introduced later on).

Now we will describe the subqueries $\avec{r}, \avec{s}, \avec{t}$ of $\q_m$. The subquery $\avec{r}$ expresses that the variables $z_1, \ldots, z_m$ indeed correspond to the clauses of $\varphi_k$; it takes the form
$
\avec{r} = \bigwedge_{i=1}^{m} \avec{r}_i.
$
Each $\avec{r}_i$ corresponds to a leaf of $\A_m^{\avec{\alpha}}$. Consider a path from the root $a$ to this $i$th leaf. Let $P_1, \ldots,P_\ell$ be the sequence of labels on the edges of this path, that is, each $P_i$ is either $\N$ or $\P$. Then
$$
\avec{r}_{i} = P_1(x,y_i^1) \land P_2(y_i^1,y_i^2) \land \ldots \land P_\ell(y_i^{\ell-1},z_i),
$$
where $y_i^1,\ldots, y_i^{\ell-1}$ are variables among $\avec{z}$.

\newcommand{\np}{P_{\pm}}

The subquery $\avec{s}$ encodes that the variables $x_1\ldots, x_k$ and $x_1^\prime, \ldots, x_k^\prime$ correspond to an arbitrary Boolean assignment. It is of the form
$
\avec{s} = \bigwedge_{i=1}^{k} \avec{s}_i,
$
and each $\avec{s}_i$ is the following:
\begin{multline*}
 \np(x,u_i^1) \land \np(u_i^1,u_i^2) \land \dots \land
\np(u_i^{\ell-2},u_i^{\ell-1})\land{}  \\
 \big[\left( \np(u_i^{\ell-1},x_i) \land \np(x_i^\prime, u_i^{\ell-1}) \land B_0(x_i)\right) \lor{} \\
 \left( \np(u_i^{\ell-1},x_i^\prime) \land \np(x_i, u_i^{\ell-1}) \land B_0(x_i^\prime)\right) \big],
\end{multline*}
where $u_i^1,\ldots, u_i^{\ell-1}$ are variables among $\avec{z}$ and \mbox{$\np(x,y) = \N(x,y)\lor \P(x,y)$}.

The last subquery $\avec{t}$ encodes that the assignment given by $x_1,\ldots,x_k$ and $x_1^\prime, \ldots, x_k^\prime$ satisfies the CNF given by $z_1,\ldots, z_m$.
The formula $\avec{t}$ has the following form:
$
\avec{t} = \bigwedge_{i=1}^m \avec{t}_i.
$
Suppose the clause $z_i$ is a disjunction of literals $l_{i,1}, l_{i,2}$ and $l_{i,3}$, where each $l_{i,n}$ is among $x_1,\ldots,x_k$ and $x_1^\prime, \ldots, x_k^\prime$.
Then
$$
\avec{t}_i = B_0(z_i) \lor B_0(l_{i,1}) \lor B_0(l_{i,2}) \lor B_0(l_{i,3}).
$$

It is easy to see that $\q_m$ is satisfiable over a given $\A_m^{\avec{\alpha}}$ iff $\A_m^{\avec{\alpha}}$ corresponds to a satisfiable 3-CNF $\varphi_k^{-\avec{\alpha}}$. Thus we have reduced the 3-SAT problem to the problem of evaluating $\q_m$ over $\A_m^{\avec{\alpha}}$. Since 3-SAT is $\NP$-complete, we thus have shown $\NP$-hardness of our query evaluation problem.
\end{proof}

\subsection{Theorem~\ref{fixed-logcfl}\label{app:fixed-logcfl}}

\indent\textsc{Theorem \ref{fixed-logcfl}.} {\it There is an ontology $\T_\ddag$ such that answering OMQs of the form $(\T_\ddag,\q)$ with Boolean linear CQs $\q$ is \LOGCFL-hard for query complexity.}
\begin{proof}
  Our proof encodes the hardest \LOGCFL{} language $\mathcal L$~\cite{DBLP:journals/siamcomp/Greibach73} as formulated in~\cite{Sudborough:1975:NTC:321906.321913}.    The language $\mathcal{L}$ enjoys the following property:
  for every language $\mathcal{L}'$ over the alphabet $\Sigma'$ in \LOGCFL{}, there exists a logspace transducer $\tau$ converting words over $\Sigma'$ to the words over the alphabet $\Sigma$ of $\mathcal{L}$ in the sense that $w \in \mathcal{L}'$ iff $\tau(w) \in \mathcal{L}$. We construct an ontology $\T_\ddag$ and a logspace transducer that converts the words $w\in \Sigma^*$ to linear Boolean CQs $\q_w$ such that
  \begin{equation*}
  w \in \mathcal{L} \ \ \ \text{ iff } \ \ \ \T_\ddag,\{A(a)\} \models \q_w.
  \end{equation*}

To explain the construction, we begin with a simpler context-free language. Let $\Sigma_0 = \{a_1, b_1, a_2, b_2 \}$ be an alphabet and $B_0$ be the context-free language generated by the following grammar:
 \begin{align*}
 S & \to SS, & S & \to \epsilon, &
 S & \to a_1 S b_1, & S & \to a_2 S b_2.
\end{align*}
With each word $w = c_0\dots c_n$ over $\Sigma_0$ we associate conjunction $\gamma_w(u_0,v_0,\dots,u_n,v_n,u_{n+1})$ of the following atoms:
    \begin{equation*}
    R_{c_0}(u_0, v_0), S_{c_0}(v_0, u_1), R_{c_1}(u_1, v_1), S_{c_1}(v_1, u_2),\dots,
     R_{c_n}(u_n, v_n), S_{c_n}(v_n, u_{n+1}),
    \end{equation*}
where $R_c$ and $S_c$ are binary predicates, for $c\in \Sigma_0$.
Let $\T_0$ contain the following axioms, for $i = 1,2$:
\begin{equation}
\label{eq:LB:main}
D(x)  \to \exists y\,\bigl(R_{a_i}(x,y) \land S_{b_i}(y,x) \land
 \exists z\,\bigl(S_{a_i}(y,z) \land R_{b_i}(z,y)\land D(z)\bigr)\bigr).
\end{equation}
An initial part of the canonical model of $(\T_0,\{A(a), D(a)\})$ encoded by these axioms is shown below:\\
\centerline{\begin{tikzpicture}[xscale=0.95,nd/.style={inner sep=0pt,minimum size=2.5mm,thick,draw,circle,fill=gray},
nd2/.style={inner sep=0pt,minimum size=2mm,thick,draw,circle,fill=white},>=latex]
\node[nd,label=above:{$a\colon A$}] (a) at (0,0) {};
\node[nd2] at (-1.3,-1) (c1) {};
\node[nd] at (-2.6,-2) (w1) {};
\node[nd2] at (1.3,-1) (c2) {};
\node[nd] at (2.6,-2) (w2) {};
\node[nd2] at (-3.3,-3) (c11) {};
\node[nd] at (-4,-4) (w11) {};
\node[nd2] at (-1.9,-3) (c12) {};
\node[nd] at (-1.2,-4) (w12) {};
\node[nd2] at (1.9,-3) (c21) {};
\node[nd] at (1.2,-4) (w21) {};
\node[nd2] at (3.3,-3) (c22) {};
\node[nd] at (4,-4) (w22) {};
\draw[->,thick] ($(a)-(0.15,0)$) to node[midway,above,sloped] {\scriptsize $a_1$} ($(c1)+(0,0.075)$);
\draw[->,thick,dashed] ($(c1)-(0.15,0)$) to node[midway,above,sloped] {\scriptsize $a_1$} ($(w1)+(0,0.075)$);
\draw[->,thick] ($(w1)+(0.15,0)$) to node[midway,below,sloped] {\scriptsize $b_1$} ($(c1)+(0,-0.075)$);
\draw[->,thick,dashed] ($(c1)+(0.15,0)$) to node[midway,below,sloped] {\scriptsize $b_1$} ($(a)+(0,-0.075)$);
\draw[->,thick] ($(a)+(0.15,0)$) to node[midway,above,sloped] {\scriptsize $a_2$} ($(c2)+(0,0.075)$);
\draw[->,thick,dashed] ($(c2)+(0.15,0)$) to node[midway,above,sloped] {\scriptsize $a_2$} ($(w2)+(0,0.075)$);
\draw[->,thick] ($(w2)+(-0.15,0)$) to node[midway,below,sloped] {\scriptsize $b_2$} ($(c2)+(0,-0.075)$);
\draw[->,thick,dashed] ($(c2)+(-0.15,0)$) to node[midway,below,sloped] {\scriptsize $b_2$} ($(a)+(0,-0.075)$);
\draw[->,thick] ($(w1)-(0.1,0)$) to node[midway,above,sloped] {\scriptsize $a_1$} ($(c11)+(0,0.1)$);
\draw[->,thick,dashed] ($(c11)-(0.1,0)$) to node[midway,above,sloped] {\scriptsize $a_1$} ($(w11)+(0,0.1)$);
\draw[->,thick] ($(w11)+(0.1,0)$) to node[midway,below,sloped] {\scriptsize $b_1$} ($(c11)+(0,-0.1)$);
\draw[->,thick,dashed] ($(c11)+(0.1,0)$) to node[pos=0.5,below,sloped] {\scriptsize\rule{0pt}{3pt}$\smash{b_1}$} ($(w1)+(0,-0.1)$);
\draw[->,thick] ($(w1)+(0.1,0)$) to node[midway,above,sloped] {\scriptsize $a_2$} ($(c12)+(0,0.1)$);
\draw[->,thick,dashed] ($(c12)+(0.1,0)$) to node[midway,above,sloped] {\scriptsize $a_2$} ($(w12)+(0,0.1)$);
\draw[->,thick] ($(w12)+(-0.1,0)$) to node[midway,below,sloped] {\scriptsize $b_2$} ($(c12)+(0,-0.1)$);
\draw[->,thick,dashed] ($(c12)+(-0.1,0)$) to node[pos=0.4,below,sloped] {\scriptsize\rule{0pt}{3pt}$\smash{b_2}$} ($(w1)+(0,-0.1)$);
\draw[->,thick] ($(w2)-(0.1,0)$) to node[midway,above,sloped] {\scriptsize $a_1$} ($(c21)+(0,0.1)$);
\draw[->,thick,dashed] ($(c21)-(0.1,0)$) to node[midway,above,sloped] {\scriptsize $a_1$} ($(w21)+(0,0.1)$);
\draw[->,thick] ($(w21)+(0.1,0)$) to node[midway,below,sloped] {\scriptsize $b_1$} ($(c21)+(0,-0.1)$);
\draw[->,thick,dashed] ($(c21)+(0.1,0)$) to node[pos=0.5,below,sloped] {\scriptsize\rule{0pt}{3pt}$\smash{b_1}$} ($(w2)+(0,-0.1)$);
\draw[->,thick] ($(w2)+(0.1,0)$) to node[midway,above,sloped] {\scriptsize $a_2$} ($(c22)+(0,0.1)$);
\draw[->,thick,dashed] ($(c22)+(0.1,0)$) to node[midway,above,sloped] {\scriptsize $a_2$} ($(w22)+(0,0.1)$);
\draw[->,thick] ($(w22)+(-0.1,0)$) to node[midway,below,sloped] {\scriptsize $b_2$} ($(c22)+(0,-0.1)$);
\draw[->,thick,dashed] ($(c22)+(-0.1,0)$) to node[pos=0.4,below,sloped] {\scriptsize\rule{0pt}{3pt}$\smash{b_2}$} ($(w2)+(0,-0.1)$);
\draw[densely dotted,thick] (w11) -- +(-0.3,-0.7);
\draw[densely dotted,thick] (w11) -- +(0.3,-0.7);
\draw[densely dotted,thick] (w12) -- +(-0.3,-0.7);
\draw[densely dotted,thick] (w12) -- +(0.3,-0.7);
\draw[densely dotted,thick] (w21) -- +(-0.3,-0.7);
\draw[densely dotted,thick] (w21) -- +(0.3,-0.7);
\draw[densely dotted,thick] (w22) -- +(-0.3,-0.7);
\draw[densely dotted,thick] (w22) -- +(0.3,-0.7);
\end{tikzpicture}}\\
    (each large gray node belongs to $D$, each solid arrow with label $c$ belongs to $R_c$ and each dashed arrow with label $c$ to $S_c$, for $c \in \Sigma_0$).
    Let $\q_w^A$ be the following linear Boolean CQ:
\begin{equation*}
A(u_0) \land \gamma_w(u_0,v_0,\dots,u_n,v_n,u_{n+1}) \land A(u_{n+1}).
\end{equation*}
    The following claim can  readily be verified:
    \begin{proposition}\label{prop:LB:main}
      For every $w \in \Sigma_0^*$, we have \mbox{$w \in B_0$}  iff  $\T_0, \{A(a), D(a)\} \models \q_w^A$.
    \end{proposition}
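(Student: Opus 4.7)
The first step is to pin down the structure of the canonical model $\C = \C_{\T_0, \{A(a), D(a)\}}$. Unwinding the axiom \eqref{eq:LB:main}, $\C$ consists of an infinite binary tree of $D$-nodes rooted at $a$: each $D$-node $x$ has, for each $i\in\{1,2\}$, an $a_i$-child $z \in D$, linked to $x$ through a unique intermediate non-$D$ node $y$ by the four edges $R_{a_i}(x,y)$, $S_{a_i}(y,z)$, $R_{b_i}(z,y)$, $S_{b_i}(y,x)$, and these are the \emph{only} $R_c$- and $S_c$-edges incident to $x$ and $z$. In particular, $A$ is satisfied only at $a$, every $R_c$-edge has source in $D$, and every $S_c$-edge has target in $D$.

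For the $(\Leftarrow)$ direction, suppose $h\colon\q_w^A \to \C$ is a homomorphism. Since $A$ holds only at $a$, we have $h(u_0)=h(u_{n+1})=a$. The atoms $R_{c_i}(u_i,v_i)$ and $S_{c_i}(v_i,u_{i+1})$ then force, by the structural observation, each $h(u_i)$ to be a $D$-node and each $h(v_i)$ to be the unique intermediate node in some bracket meeting $h(u_i)$. Inspecting the four edge labels in a bracket, a letter $a_i$ at position $i$ forces $h(u_{i+1})$ to be the $a_i$-child of $h(u_i)$, while a letter $b_i$ forces $h(u_i)$ to be the $a_i$-child of $h(u_{i+1})$ (i.e.\ we ascend an $a_i$-edge). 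Thus $w$ labels a closed walk on the $D$-tree starting and ending at $a$ in which $a_i$ descends an $a_i$-edge and $b_i$ ascends an $a_i$-edge; the existence of such a walk is equivalent to $w$ being balanced over the bracket pairs $\{(a_1,b_1),(a_2,b_2)\}$, i.e.\ $w\in B_0$.

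For the $(\Rightarrow)$ direction, assume $w\in B_0$ and build $h$ by induction on a derivation of $w$ in the given grammar. For $w=\epsilon$, set $h(u_0)=a$. For $w=a_i w' b_i$ with $w'\in B_0$, the induction hypothesis gives a homomorphism $h'$ for $\q_{w'}^A$ sending both endpoints to $a$; extend it by mapping the outer $u_0,v_0,\dots,v_n,u_{n+1}$ through the $a_i$-$b_i$-bracket at $a$ and running a shifted copy of $h'$ rooted at the $a_i$-child of $a$ (by homogeneity of $\C$ at $D$-nodes). For $w=w_1 w_2$ with $w_1,w_2\in B_0$, concatenate the homomorphisms for $w_1$ and $w_2$ at their common endpoint $a$. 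Each case respects the edge labels by direct inspection of the axioms.

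The main obstacle is the $(\Leftarrow)$ direction: one must rule out that a homomorphism `cheats' by sending some $v_i$ onto a $D$-node or by jumping between branches of the tree. This is handled precisely by the structural observation above (each $R_c$-edge goes from a $D$-node to an intermediate non-$D$ node, and each $D$-node has a unique parent edge of each label), which rigidifies the image of the walk and reduces the problem to combinatorics of matching brackets. Everything else is a straightforward induction that could, if desired, be recast as a proof by symbol-by-symbol simulation of a pushdown automaton for $B_0$ inside $\C$.
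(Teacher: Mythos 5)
Your proof is correct and follows essentially the argument the paper intends: the paper only asserts Proposition~\ref{prop:LB:main} ``can readily be verified'' next to the picture of $\C_{\T_0,\{A(a),D(a)\}}$, and your verification — rigidifying any homomorphism to a root-to-root walk on the $D$-tree (each $a_i$ forced to descend through the unique $i$-bracket, each $b_i$ forced to ascend through a matching one, since $A$ holds only at $a$ and every $R_c$/$S_c$-edge lies in exactly one bracket) together with the induction on the grammar derivation for the converse — is precisely that verification. The two points you leave implicit (the equivalence between closed root walks and well-matched words over the pairs $(a_1,b_1),(a_2,b_2)$, and the degenerate case $w=\epsilon$) are standard and harmless.
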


  The language $B_0$ is, however, not \LOGCFL{}-hard. We now reproduce the definition of the hardest \LOGCFL{} language $\mathcal{L}$ from~\cite{Sudborough:1975:NTC:321906.321913}, which uses $B_0$ as a basis of the construction.
 Let $\Sigma = \Sigma_0 \cup \{ [, ], \# \}$, for distinct symbols $[$, $]$, and $\#$ not in $\Sigma_0$. Then set
  \begin{multline*}
  \mathcal{L} = \bigl\{[x_1 y_1 z_1][x_2 y_2 z_2] \dots [x_k y_k z_k]  \mid  k \geq 1,\\
  x_i \in (\Sigma_0 \cup \{\#\})^* \{\#\} \cup \{\epsilon\} \text{ and }\hspace*{4em}\\
   z_i \in \{ \epsilon \} \cup \{\#\}(\Sigma_0 \cup \{\# \})^*, \text{ for all } i \leq k,
 \text{ and } y_1 y_2 \dots y_k \in B_0 \bigr\}.
  \end{multline*}
  To explain the intuition, following \cite{Sudborough:1975:NTC:321906.321913}, let a string of symbols of the form $[w_1 \# w_2 \# \dots \# w_n]$, where $w_i \in \Sigma^*$ for all $i$, be called a \emph{block} and let each of the substrings $w_i$ be called a \emph{choice}. Then, $\mathcal{L}$ is the set of all strings of blocks such that there exists a sequence of choices, one from each block, which is in the base language $B_0$. The reader should notice that a choice (possibly of the empty string) must be made from each block. For example,
  \begin{align}
    [a_1 a_2 \# b_2 b_1] & \notin {\mathcal L}, \label{eq:not-in-L1}\\
    [a_1 a_2 \# b_2 b_1][b_2 b_1] & \in {\mathcal L},\label{eq:in-L1}\\
    [a_1 a_2 \# b_2 b_1][a_1 b_1] & \notin {\mathcal L},\label{eq:not-in-L2}\\
    [\# a_1 a_2 \# b_2 b_1][a_1 b_1] & \in {\mathcal L}.\label{eq:in-L2}
  \end{align}

We say that a word $w$ over $\Sigma$ is \emph{block-formed} if the following conditions are satisfied:
    \begin{itemize}
      \item[--] the word begins with $[$ and ends with $]$,
      \item[--] after each $[$ there is no $[$ before~$]$;
      \item[--] each non-final $]$ is followed immediately by $[$;
      \item[--] between each pair of matching $[$ and $]$ there is at least one symbol.
    \end{itemize}

   With these definitions at hand, we first describe a log\-space transducer that, given a word $w$ over $\Sigma$,  returns a linear Boolean CQ $\q_w$  with binary predicates $R_c$ and $S_c$, for $c\in \Sigma$,  and unary predicates $A$ and $E$.
If the word $w = c_0 \dots c_n$ is block-formed, then $\q_w$ consists of the following atoms:
    \begin{equation*}
A(u_0) \land \gamma_w(u_0,v_0,\dots,u_n,v_n,u_{n+1}) \land A(u_{n+1}).
    \end{equation*}
Otherwise, the transducer returns a query that consists of a prefix of $A(u_0)\land  \gamma_w(u_0,v_0,\dots,u_n,v_n,u_{n+1})$ and ends in $E(u_i)$, for some $i$, which will indicate an error (as all  queries containing $E$ will be false in $\T_\ddag,\{A(a)\}$).
    It is straightforward to verify that the required transducer can be implemented in~$\mathsf{L}$.

\smallskip

    Let $\T_\ddag$ contain the two axioms~\eqref{eq:LB:main} and the following axioms:
    \begin{align}
\label{eq:LB:A}
A(x) & \to D(x),\\
    \label{eq:LB:1}
       D(x) & \to \exists y \bigl(R_{[}(x,y) \land S_{[}(y,x)\bigr),\\
    \label{eq:LB:b}
       D(x) & \to \exists y \bigl(R_{[}(x,y) \land S_{\#}(y,x) \land \exists z\,\bigl(S_{[}(y,z) \land R_{\#}(z,y) \land F(z)\bigr)\bigr),\\
    \label{eq:LB:n}
       D(x) & \to \exists y \bigl(R_{]}(x,y) \land S_{]}(y,x)\bigr),\\
    \label{eq:LB:e}
       D(x) & \to \exists y \bigl(R_{\#}(x,y) \land S_{]}(y,x) \land \exists z\,\bigl(S_{\#}(y,z) \land R_{]}(z,y) \land F(z)\bigr)\bigr),\\
 \label{eq:LB:c}
       F(x) & \to \exists y\,\bigl(R_{c}(x,y) \land S_{c}(y,x)\bigr), & \text{ for } c \in \Sigma_0 \cup \{\#\}.
    \end{align}
    The four additional branches of the canonical model of $(\T_\ddag, \{A(a)\})$ at each point in $D$ are shown below:\\
\centerline{\begin{tikzpicture}[nd/.style={inner sep=0pt,minimum size=2.5mm,thick,draw,circle,fill=gray},
nd2/.style={inner sep=0pt,minimum size=2mm,thick,draw,circle,fill=white},>=latex]
\node[nd,label=above:{$D$}] (a) at (0,0) {};
\node[nd2] at (-1,-1) (c1) {};
\node[nd,fill=white,label=left:{$F$}] at (-2,-2) (w1) {};
\node[nd2] at (1,-1) (c2) {};
\node[nd,fill=white,label=right:{$F$}] at (2,-2) (w2) {};
\node[nd2] at (-2,-3) (c11) {};
\node[nd2] at (-2,0) (c12) {};
\node[nd2] at (2,-3) (c21) {};
\node[nd2] at (2,0) (c22) {};
\draw[->,thick] ($(a)-(0.15,0)$) to node[pos=0.7,above,sloped] {\scriptsize $[$} ($(c1)+(0,0.075)$);
\draw[->,thick,dashed] ($(c1)-(0.15,0)$) to node[midway,above,sloped] {\scriptsize $[$} ($(w1)+(0,0.075)$);
\draw[->,thick] ($(w1)+(0.15,0)$) to node[midway,below,sloped] {\scriptsize $\#$} ($(c1)+(0,-0.075)$);
\draw[->,thick,dashed] ($(c1)+(0.15,0)$) to node[pos=0.3,below,sloped] {\scriptsize $\#$} ($(a)+(0,-0.075)$);
\draw[->,thick] ($(a)+(0.15,0)$) to node[pos=0.7,above,sloped] {\scriptsize $\#$} ($(c2)+(0,0.075)$);
\draw[->,thick,dashed] ($(c2)+(0.15,0)$) to node[midway,above,sloped] {\scriptsize $\#$} ($(w2)+(0,0.075)$);
\draw[->,thick] ($(w2)+(-0.15,0)$) to node[midway,below,sloped] {\scriptsize $]$} ($(c2)+(0,-0.075)$);
\draw[->,thick,dashed] ($(c2)+(-0.15,0)$) to node[pos=0.3,below,sloped] {\scriptsize $]$} ($(a)+(0,-0.075)$);
\draw[->,thick] ($(w1)+(-0.1,-0.1)$) to node[midway,below,sloped] {\scriptsize $c$} ($(c11)+(-0.1,0.1)$);
\draw[->,thick,dashed] ($(c11)+(0.1,0.1)$) to node[midway,below,sloped] {\scriptsize $c$} ($(w1)+(0.1,-0.1)$);
\draw[->,thick] ($(w2)+(-0.1,-0.1)$) to node[midway,below,sloped] {\scriptsize $c$} ($(c21)+(-0.1,0.1)$);
\draw[->,thick,dashed] ($(c21)+(0.1,0.1)$) to node[midway,below,sloped] {\scriptsize $c$} ($(w2)+(0.1,-0.1)$);
\draw[->,thick] ($(a)+(-0.1,-0.1)$) to node[pos=0.7,below,sloped] {\scriptsize $[$} ($(c12)+(0.1,-0.1)$);
\draw[->,thick,dashed] ($(c12)+(0.1,0.1)$) to node[pos=0.3,above,sloped] {\scriptsize $[$} ($(a)+(-0.1,0.1)$);
\draw[->,thick] ($(a)+(0.1,-0.1)$) to node[pos=0.7,below,sloped] {\scriptsize $]$} ($(c22)+(-0.1,-0.1)$);
\draw[->,thick,dashed] ($(c22)+(-0.1,0.1)$) to node[pos=0.3,above,sloped] {\scriptsize $]$} ($(a)+(0.1,0.1)$);
\end{tikzpicture}}\\
    (the labels $D$ and $F$ are indicated next to the nodes, and, as before, each solid arrow with label $c$ belongs to $R_c$ and each dashed arrow with label $c$ to $S_c$, for $c \in \Sigma_0$; to avoid clutter, only one pair of $c$-arrows is shown at the bottom).

   Let $\q^D_w$ be defined identically to $\q^A_w$ except that  the two occurrences of $A$ are replaced by~$D$. The following property is established similarly to Proposition~\ref{prop:LB:main}:
    \begin{proposition}\label{prop:LB:xz}
    For any block-formed word \mbox{$w]\in \Sigma^*$},
 \begin{equation*}
  w = [x, \text{ for } x\in (\Sigma_0 \cup \{\#\})^* \{\#\} \cup \{\epsilon\},\qquad
  \text{ iff }\qquad \{\eqref{eq:LB:1}, \eqref{eq:LB:b}, \eqref{eq:LB:c}\},\{D(d)\}\models \q^D_w.
\end{equation*}
    For any block-formed word $[w\in \Sigma^*$,
 \begin{equation*}
w = z], \text{ for  } z\in  \{\epsilon\} \cup \{\#\}(\Sigma_0 \cup \{\#\})^*,\qquad
\text{ iff } \qquad \{\eqref{eq:LB:n}, \eqref{eq:LB:e}, \eqref{eq:LB:c}\}, \{D(d)\}\models \q^D_w.
\end{equation*}
   \end{proposition}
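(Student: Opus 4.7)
The strategy is to analyse the canonical model $\C'$ of the small KB $(\T',\{D(d)\})$ where $\T' = \{\eqref{eq:LB:1}, \eqref{eq:LB:b}, \eqref{eq:LB:c}\}$ and then reason directly about homomorphisms of the linear query $\q^D_w$. First I describe $\C'$ explicitly: the only element satisfying $D$ is $d$; the only elements satisfying $F$ are the $z$-witnesses created by axiom \eqref{eq:LB:b}. The model decomposes into gadgets attached to $d$: (i) from \eqref{eq:LB:1}, a pendant $y$ with $R_{[}(d,y)$ and $S_{[}(y,d)$; (ii) from \eqref{eq:LB:b}, a pair $(y,z)$ with $R_{[}(d,y)$, $S_{\#}(y,d)$, $S_{[}(y,z)$, $R_{\#}(z,y)$, and $F(z)$; (iii) from each such $z$ and each $c \in \Sigma_0 \cup \{\#\}$, a $cc$-pendant produced by \eqref{eq:LB:c}.

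Since $\q^D_w$ is linear with $D$-atoms at both endpoints and $d$ is the unique $D$-node, any homomorphism $h$ must send $u_0$ and $u_{n+1}$ to $d$, so its image is a closed walk at $d$. Block-formedness of $w]$ forces $w=[x$ with $x\in(\Sigma_0\cup\{\#\})^*$ (the initial $[$ has no matching $[$ inside the block), so the walk begins with an $R_{[}$-edge into some $y$-witness.

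For the $(\Rightarrow)$ direction I case-split on which axiom produced this $y$. If $y$ came from \eqref{eq:LB:1}, then $S_{[}(y,u_1)$ forces $u_1=d$, and because no further $R_{[}$-edge could be matched (there are no more $[$'s in $w$) and no other $R$-edge leaves $d$, the walk must terminate at $u_1$, yielding $w=[$. Otherwise $y$ came from \eqref{eq:LB:b} and $u_1=z$ is an $F$-node; from $z$ the only outgoing $R$-edges are the $cc$-pendants from \eqref{eq:LB:c} (each consumes one character of $w$ and returns to $z$) and the single $R_{\#}$-edge back to $y$. To reach $u_{n+1}=d$ the walk must eventually take $R_{\#}(z,y)$ followed by $S_{\#}(y,d)$, so the last character of $w$ is $\#$ and the characters between the initial $[$ and the final $\#$ form an arbitrary word in $(\Sigma_0\cup\{\#\})^*$. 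For the $(\Leftarrow)$ direction I construct the homomorphism explicitly: for $w=[$, use a single \eqref{eq:LB:1}-gadget; for $w=[x\#$ with $x=c_1\dots c_k$, enter a \eqref{eq:LB:b}-gadget, then use one \eqref{eq:LB:c}-pendant at $z$ for each $c_i$ in order, and finally exit along $R_{\#}(z,y)\,S_{\#}(y,d)$.

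The second claim is the exact dual, obtained by swapping the roles of $[$/$]$ and of the initial/final $\#$: the canonical model of $(\{\eqref{eq:LB:n},\eqref{eq:LB:e},\eqref{eq:LB:c}\},\{D(d)\})$ has the mirror gadget structure (\eqref{eq:LB:n} gives $]]$-pendants; \eqref{eq:LB:e} gives $\#]$-structures with $F$-pendants), and the identical walk analysis gives $w=\,]$ or $w=\#y\,]$ with $y\in(\Sigma_0\cup\{\#\})^*$. The main obstacle is the $(\Rightarrow)$ case analysis: one has to rule out alternative walks that momentarily revisit $d$ in the middle, and it is precisely the single-block structure enforced by the block-formedness hypothesis (no interior $[$ or $]$ in $w$) that prevents such returns and makes the very restricted gadget inventory of $\T'$ sufficient to characterise the entailment.
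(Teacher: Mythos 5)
Your approach — work out the canonical model $\C'$ of $(\{\eqref{eq:LB:1},\eqref{eq:LB:b},\eqref{eq:LB:c}\},\{D(d)\})$ explicitly, observe that $d$ is the unique $D$-element so any homomorphism of the linear CQ $\q^D_w$ is a closed walk at $d$, and then read off the allowable label sequences from the gadget inventory — is the natural argument and is presumably what the paper is hiding behind ``established similarly to Proposition~\ref{prop:LB:main},'' which it in turn leaves to the reader. Your description of the canonical model (the \eqref{eq:LB:1}-pendant, the \eqref{eq:LB:b}-gadget with its $F$-node $z$, and the $cc$-pendants from \eqref{eq:LB:c}), the case split on where the first $R_{[}$-edge lands, the analysis of exits from $z$, and the explicit homomorphism for the $(\Leftarrow)$ direction are all correct, and the second claim is indeed the mirror image.

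There is, however, one step in the $(\Rightarrow)$ direction that would fail as written. You assert that ``block-formedness of $w]$ forces $w=[x$ with $x\in(\Sigma_0\cup\{\#\})^*$,'' and you rely on the same claim again at the end when attributing the termination of the walk to ``the single-block structure enforced by the block-formedness hypothesis (no interior $[$ or $]$ in $w$).'' Block-formedness of $w]$ says only that $w]$ is a concatenation of blocks; it does \emph{not} prevent $w$ from containing interior $]$'s and $[$'s — for example, $w=[a_1][b_1$ has $w]=[a_1][b_1]$ block-formed but is not of the form $[x$ with $x$ over $\Sigma_0\cup\{\#\}$. The conclusion $w=[x$ does hold under the $(\Rightarrow)$ hypothesis, but for a different reason that you never state: none of \eqref{eq:LB:1}, \eqref{eq:LB:b}, \eqref{eq:LB:c} introduces an $R_{]}$- or $S_{]}$-edge, so $\C'$ has no $]$-labelled edge whatsoever, hence any $w$ for which $\q^D_w$ maps into $\C'$ contains no occurrence of $]$; only \emph{then} does block-formedness of $w]$ force $w]=[x]$ to be a single block with $x\in(\Sigma_0\cup\{\#\})^*$. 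Once that observation is inserted, the rest of your walk analysis goes through, and the dual observation (no $[$-labelled edges in the canonical model of $\{\eqref{eq:LB:n},\eqref{eq:LB:e},\eqref{eq:LB:c}\}$) is the corresponding missing ingredient in the second claim.
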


  With these properties established,  it can readily be verified that $\T_\ddag,\{A(a)\} \models \q_w$ iff $w \in \mathcal{L}$.
 Consider a block-formed word $w\in\Sigma^*$. Let $[w_1 \# w_2 \# \dots \# w_n]$ be its $m$-th block and $w_j = y_m$ (that is, $w_j$ is the segment of the $B_0$-word in this block). By Proposition~\ref{prop:LB:xz}, the subtree generated by~\eqref{eq:LB:b} matches the (translation of) $[w_1 \# \dots \# w_{j-1} \#$, whereas the subtree generated by~\eqref{eq:LB:e} matches $\# w_{j+1}\# \dots \# w_n]$. By Proposition~\ref{prop:LB:main}, the $w_j$ itself is mapped into the main tree generated by~\eqref{eq:LB:main}. Note that~\eqref{eq:LB:1} and~\eqref{eq:LB:n} are needed for the case when $j = 1$ and $j = n$, respectively. Finally, observe that (the translation of) $w$ has to be mapped starting from $a$ (the root of the tree) and ending at $a$, and that the tree of the canonical model does not contain concept $E$, so only a block-formed $w$ can be mapped to the canonical model. In particular, $\T_\ddag, \{A(a)\} \not \models \q_w$ for $w$ of~\eqref{eq:not-in-L1} and~\eqref{eq:not-in-L2}, and $\T_\ddag, \{A(a)\} \models \q_w$ for $w$ of~\eqref{eq:in-L1} and~\eqref{eq:in-L2}.
\end{proof}

\subsection{Theorem~\ref{fixed-w1}}
\label{sec:fixed-w1}

\begin{theorem}
There is an ontology $\T_\Box$ such that $\blpr[\T_\Box]$ is $W[1]$-hard.
\end{theorem}

\begin{proof}
The proof is by reduction of the $W[1]$-hard problem \textit{SquareTiling}~\cite{DBLP:series/txtcs/FlumG06}, which is defined as follows:\\[4pt]
\hspace*{1em}{\tabcolsep=3pt\hspace*{-3pt}\begin{tabular}{lp{161mm}}
{\small\textsf{Instance:}} & a set $\mathfrak T$ of tile types painted in colours from a set $\mathfrak C$, a positive integer $k$,\\
{\small\textsf{Parameter:}} & $k$,\\
{\small\textsf{Problem:}} & decide whether $\mathfrak T$ tiles a $k \times k$-grid.
\end{tabular}}\\[3pt]
Suppose $\mathfrak C = \{0, \dots, n\}$, for $n \geq 1$, and $\mathfrak T = \{\mathcal{S}_1, \dots,  \mathcal{S}_m\}$. Denote by $\mathit{right}(t)$, $\mathit{left}(t)$, $\mathit{top}(t)$ and $\mathit{bottom}(t)$ the right, left, top and bottom colour of $\mathcal S_t$, respectively.
%
%
Given a binary predicate name $R$, we denote by $R^i$ a sequence of $i$-many predicates $R$. We represent each colour $c \le n$ by the following two sequences of binary predicates:
\begin{align*}
\mathtt{enc}_c & = P^{3n-c} F^c &&\text{ of length } 3n,\\
\mathtt{cne}_c & = N^{2c} M N^{2(n-c)} && \text{ of length } 2n + 1.
\end{align*}
Examples of $\mathtt{enc}_c$ and $\mathtt{cne}_c$, for $n=4$ and $c=3$, are shown in Figs.~\ref{fig:ontology-path}a and~\ref{fig:ontology-path}d, respectively. Each tile $\mathcal S_t$ is represented by the following sequence of binary predicates:
\begin{equation*}
\mathtt{tile}_t = B\, \mathtt{cne}_{\mathit{right}(t)}\, \mathtt{enc}_{\mathit{left}(t)} \, \mathtt{cne}_{\mathit{top}(t)} \, \mathtt{enc}_{\mathit{bottom}(t)} E\quad \text{ of length } 10n + 4.
\end{equation*}
Since the length of $\mathtt{enc}_c$ does not depend on $c$, we use $|\mathtt{enc}|$ to denote the length of some (any) $\mathtt{enc}_c$, and similarly for $|\mathtt{cne}|$ and $|\mathtt{tile}|$.
We shall also require the following sequences
\begin{align*}
\mathtt{segm} & = \mathtt{tile}_1\, \mathtt{tile}_2 \dots \mathtt{tile}_m\, S && \text{ of length } m\cdot |\mathtt{tile}| + 1,\\
\mathtt{spring} & =  (X Y I)^n && \text{ of length } |\mathtt{enc}|,\\
%
%
\mathtt{probelt} & = I^{|\mathtt{cne}| + |\mathtt{enc}| + 2} \, \mathtt{spring} \, I^{|\mathtt{tile}|+1}I^{2n} M^-\hspace*{-1em} && \text{ of length } 2\cdot |\mathtt{tile}| + 1,\\
\mathtt{probedn} & = I^2 \, \mathtt{spring} \, I^{(|\mathtt{tile}| + 1)k}I^{2n} M^- && \text{ of length } 5n + 3 + (|\mathtt{tile}| + 1)k.
\end{align*}
The first sequence will be called a \emph{segment} and the second a \emph{spring}.

%

The Boolean CQ $\q$ (see Fig.~\ref{fig:q-prime-tiling}) is now defined by taking the following set of atoms (assuming that all variables are existentially quantified):
 \begin{multline*}
\{ A(x_0) \} \cup \mathtt{segm}(x_{0}, x_{1,1}) \cup \hspace*{-1mm}\bigcup_{\substack{2 \leq  i \leq k\\1 \leq j \leq k}} \hspace*{-1mm}  \mathtt{segm}(x_{i-1,j}, x_{i,j}) \cup
  \bigcup_{2 \leq j \leq k} \hspace*{-1mm} \mathtt{segm}(x_{k, j-1}, x_{1,j}) \cup {} \\
  \bigcup_{\substack{2 \leq i \leq k\\1 \leq j \leq k}} \mathtt{probelt}(x_{i,j}, y_{i,j}) \cup \bigcup_{\substack{1 \leq i \leq k\\2 \leq j \leq k}} \mathtt{probedn}(x_{i,j}, z_{i,j}),
\end{multline*}
where $R_1 \dots R_l(x,y)$ stands for $\{R_1(x,x_1), R_2(x_1,x_2), \dots, R_l(x_{l-1},y)\}$ with fresh variables $x_1,\dots,x_{l-1}$.
%
It can be seen that $\q$  is a tree-shaped CQ with $2 (k-1) k$ leaves.

 \begin{figure}
  \centering
\begin{tikzpicture}[>=latex,scale=0.9,nd/.style={inner sep=1pt}]
\begin{scope}[line width=1.5pt]\small
\node (x0) at (12,0.5) {$x_0$};
\foreach \y/\w in {2/1,4/2,6/3,8.5/k} {
\foreach \x/\l in {0/1,3/2,6/3,9/k-1,12/k} {
   \node[nd] (x\l\w) at (\x,\y) {$x_{\l,\w}$};
}
\draw[->] (x1\w) -- (x2\w) node[above,midway] {\scriptsize$\mathtt{segm}$};
\node[nd] (y2\w) at (1.2,\y+0.8) {$y_{2,\w}$};
\draw[Green,->,out=90,in=0] (x2\w) to  node[below,pos=0.8] {\scriptsize$\mathtt{probelt}$} (y2\w);
\draw[->] (x2\w) -- (x3\w) node[below,midway] {\scriptsize$\mathtt{segm}$};
\node[nd] (y3\w) at (4.2,\y+0.8) {$y_{3,\w}$};
\draw[Green,->,out=90,in=0] (x3\w) to node[below,pos=0.8] {\scriptsize$\mathtt{probelt}$} (y3\w) ;
\draw[->] (xk-1\w) -- (xk\w) node[below,midway] {\scriptsize$\mathtt{segm}$};
\node[nd] (yk\w) at (10.2,\y+0.8) {$y_{k,\w}$};
\draw[Green,->,out=90,in=0] (xk\w) to node[below,pos=0.8] {\scriptsize$\mathtt{probelt}$}  (yk\w);
}
\draw[->] (x0) -- (x11) node[above,pos=0.4,sloped] {\scriptsize$\mathtt{segm}$};
\draw[->] (xk1) -- (x12) node[above,pos=0.4,sloped] {\scriptsize$\mathtt{segm}$};
\draw[->] (xk2) -- (x13) node[above,pos=0.4,sloped] {\scriptsize$\mathtt{segm}$};
\draw[-] (xk3) -- (6,7);
\draw[-] (6,7.5) -- (x1k);
\draw[dotted] (6,7.1) -- (6,7.4);
\foreach \x/\l in {0/1,3/2,6/3,9/k-1,12/k} {
\node[nd] (z\l2) at (\x,3.2) {$z_{\l,2}$};
\draw[->,RoyalBlue] (x\l2) -- (z\l2) node[left=3pt,midway,inner sep=0pt,fill=white] {\scriptsize$\mathtt{probedn}$};
}
\foreach \x/\l in {0/1,3/2,6/3,9/k-1,12/k} {
\node[nd] (z\l3) at (\x,5.2) {$z_{\l,3}$};
\draw[->,RoyalBlue] (x\l3) -- (z\l3) node[left=3pt,midway,inner sep=0pt,fill=white] {\scriptsize$\mathtt{probedn}$};
}
\foreach \x/\l in {0/1,3/2,6/3,9/k-1,12/k} {
\node[nd] (z\l k) at (\x,7.7) {$z_{\l,k}$};
\draw[->,RoyalBlue] (x\l k) -- (z\l k) node[left=3pt,midway,inner sep=0pt,fill=white] {\scriptsize$\mathtt{probedn}$};
}
\end{scope}
\end{tikzpicture}
\caption{The structure of the CQ $\q$.
}\label{fig:q-prime-tiling}
\end{figure}
%
%
 %
%

Let $\T_\Box$ be an ontology with the following axioms:
\begin{align*}
  A(x) \to & \ \exists y\,\bigl(BI(x,y)\land  \mathit{Right}(y)\bigr),  &
  A(x) \to& \  \exists u\,\mathit{Sink}(x,u), \\
  \mathit{Right}(x) \to& \ \exists y\,\bigl(\mathit{NI}(x,y)  \land \mathit{Right}(y)\bigr), &
  \mathit{Right}(x) \to & \ \exists z\,\bigl(\mathit{MI}(x,z) \land \mathit{Right}'(z)\bigr),\\
  \mathit{Right}'(x) \to& \ \exists y\,\bigl(\mathit{NI}(x,y) \land \mathit{Right}'(y)\bigr),  &
  \mathit{Right}'(x)\to & \ \mathit{Left}(x),\\
  \mathit{Left}(x) \to& \ \exists y\,\bigl(\mathit{PI}(x,y) \land \mathit{Left}(y)\bigr), &
  \mathit{Left}(x) \to& \ \exists z\,\bigr(\mathit{FI}(x,z) \land \mathit{Left}'(z)\bigr),\\
  \mathit{Left}'(x) \to& \ \exists y\,\bigl(\mathit{FI}(x,y) \land \mathit{Left}'(y)\bigr), &
  \mathit{Left'}(x) \to & \ \mathit{Top}(x),\\
  \mathit{Top}(x) \to& \ \exists y\,\bigl(\mathit{NI}(x,y)  \land \mathit{Top}(y)\bigr), &
  \mathit{Top}(x) \to & \ \exists z\,\bigl(\mathit{MI}(x,z) \land \mathit{Top}'(z)\bigr),\\
  \mathit{Top}'(x) \to& \ \exists y\,\bigl(\mathit{NI}(x,y) \land \mathit{Top}'(y)\bigr), &
  \mathit{Top}'(x) \to & \  \mathit{Bot}(x),\\
  \mathit{Bot}(x) \to& \ \exists y\,\bigl(\mathit{PI}(x,y) \land \mathit{Bot}(y)\bigr), &
  \mathit{Bot}(x) \to & \ \exists z\,\bigl(\mathit{FI}(x,z) \land \mathit{Bot}'(z)\bigr),\\
  \mathit{Bot}'(x) \to& \  \exists y\,\bigl(\mathit{FI}(x,y) \land \mathit{Bot}'(y)\bigr), &
   \mathit{Bot}'(x) \to & \ \exists  z\,\bigl(\mathit{EI}(x,z)  \land  A_2(z)\bigr), \\
   A_2(z) \to & \ \exists x\,\bigl(\mathit{SI}(z, x) \land A(x)\bigr), &
   A_2(z) \to &\ \exists u\,\mathit{Sink}(z,u),\\
  \mathit{Left}'(x)  \to & \ \exists y\,\mathit{X\!\bar{Y}}(x, y), &
  \mathit{Bot}'(x)  \to & \ \exists y\,\mathit{X\!\bar{Y}}(x, y),\\
  %
 P(x,y) \to & \ X(y, x), &
  P(x,y) \to & \ Y(y, x),\\
  \mathit{X\!\bar{Y}}(x, y) \to & \ X(x,y), &
  \mathit{X\!\bar{Y}}(x, y) \to & \ Y(y,x),
\end{align*}
%
%
where $C(x) \to \exists y\,\bigl(Q(x,y) \land D(y)\bigr)$ abbreviates three axioms
\begin{equation*}
C(x) \to \exists y\,Q_D(x,y),\qquad Q_D(x,y) \to Q(x,y)\quad \text{ and }\quad Q_D(x,y) \to D(y).
\end{equation*}
In addition, $\T_\Box$ contains the axioms
\begin{itemize}
\item  $\mathit{QI}(x,y) \to Q(x,y)$ and $\mathit{QI}(x,y)\to I(y,x)$, for all predicates of the form $\mathit{QI}$;

\item  $\mathit{Sink}(x,y) \to Q(x,y)$ and $\mathit{Sink}(x,y) \to Q(y,x)$, for all binary predicates $Q$ except~$I$ and $S$.
\end{itemize}
A path in the canonical model $\mathcal{C}_{\T_\Box, \{ A(a) \}}$ where $\q$ can be homomorphically mapped is shown in Fig.~\ref{fig:ontology-path} sandwiched between $\mathsf{segm}(x_0,x_{1,1}) \cup \mathsf{segm}(x_{1,1},x_{2,1})$ on the left and bottom and $\mathsf{probelt}(x_{2,1},y_{2,1})$ on the right and top (most predicate names are omitted).

\begin{figure}
\centering%
\begin{tikzpicture}[>=latex,nd/.style={circle,draw,fill,inner sep=0pt,minimum size=1.5mm}]
\filldraw[ultra thin,draw=black,fill=gray!20,dashed] (0.3,-0.4) rectangle +(2.7,-2.2);
\node[rotate=90] at (0.8,-1.5) {\small$\mathtt{cne}_c$};
\node at (2.6,-0.7) {\LARGE \textcolor{gray}{$\mathtt{R}$}};
\filldraw[ultra thin,draw=black,fill=gray!5,dashed] (0.3,-2.6) rectangle +(2.7,-2);
\node[rotate=90] at (0.8,-3.6) {\small$\mathtt{enc}$}; 
\node at (2.6,-3.6) {\LARGE \textcolor{gray}{$\mathtt{L}$}};
\filldraw[ultra thin,draw=black,fill=gray!5,dashed] (0.3,-4.6) rectangle +(2.7,-2.4);
\node[rotate=90] at (0.8,-5.8) {\small$\mathtt{cne}$}; 
\node at (2.6,-5.8) {\LARGE \textcolor{gray}{$\mathtt{T}$}};
\filldraw[ultra thin,draw=black,fill=gray!5,dashed] (0.3,-7) rectangle +(2.7,-2);
\node[rotate=90] at (0.8,-8) {\small$\mathtt{enc}$}; 
\node at (2.6,-8) {\LARGE \textcolor{gray}{$\mathtt{B}$}};
\filldraw[ultra thin,draw=black,fill=gray!5,dashed] (3.1,-11.8) rectangle +(1.2,2.7);
\node at (3.7,-11.3) {\small$\mathtt{cne}$}; 
\node at (3.7,-9.4) {\LARGE \textcolor{gray}{$\mathtt{R}$}};
\filldraw[ultra thin,draw=black,fill=gray!20,dashed] (4.3,-11.8) rectangle +(2.3,2.7);
\node at (5.45,-11.3) {\small$\mathtt{enc}_c$}; 
\node at (5,-9.4) {\LARGE \textcolor{gray}{$\mathtt{L}$}};
\filldraw[ultra thin,draw=black,fill=gray!5,dashed] (6.6,-11.8) rectangle +(1.2,2.7);
\node at (7.2,-11.3) {\small$\mathtt{cne}$}; 
\node at (7.2,-9.4) {\LARGE \textcolor{gray}{$\mathtt{T}$}};
\filldraw[ultra thin,draw=black,fill=gray!5,dashed] (7.8,-11.8) rectangle +(1,2.7);
\node at (8.3,-11.3) {\small$\mathtt{enc}$}; 
\node at (8.3,-9.4) {\LARGE \textcolor{gray}{$\mathtt{B}$}};
\node[nd,black,label=right:{$v_0$},label=above:{$A$}] at (1.5,0) {};
\node[nd,Green] at (0.5,0.05) {};
\node  (x01) at (0,-1) {$x_0$};
\draw[thin,->]  (x01) -- (0.4,-0.2);
\node[nd,black,label=below:{$A$}] (w0w1w2) at (10.4,-9.45) {};
\node (lbl) at (9.6,-8.2) {$v_0 w_{1,1} S w_{2,1}S$};
\draw[->,thin] (lbl) -- (w0w1w2);
\node[nd,Green,label=left:{$x_{2,1}$}] at (11.4,-9.45) {};
\node[nd,black,label=above:{\hspace*{2em}$v_0 w_{1,1} S$},label=left:{$A$}] at (2.7,-10.6) {};
\node[nd,Green] at (2.65,-11.6) {};
\node (x11) at (1.5,-11.6) {$x_{1,1}$};
\draw[thin,->] (x11) -- (2.5,-11.6);
\draw[rounded corners=1mm] (4.7,-10) rectangle +(1.9,2);
\node at (5.65,-8.4) {$\small\mathtt{spring}$};
\begin{scope}[thick]
\draw[Green] (0.45,0.05) -- +(-0.65,-0.05) node[midway,above] {$\mathsf{tile}_1$};
\draw[Green,->] (-0.2,0) -- +(0.65,-0.05) node[pos=0.3,below] {$\dots$}; 
\draw[->, OrangeRed] (0.5,-0.05) -- +(0,-0.35);
\draw[->, Green] (0.5,-0.4) -- +(0,-0.4);
\draw[dotted, Green] (0.5,-0.8) -- +(0,-0.2);
\draw[->, Green] (0.5,-1) -- +(0,-0.4);
\draw[->, Violet] (0.5,-1.4) -- +(0,-0.4);
\draw[->, Green] (0.5,-1.8) -- +(0,-0.4);
\draw[->, Green] (0.5,-2.2) -- +(0,-0.4);
%
%
\draw[Green] (0.5,-2.6) -- +(0,-2);
%
\draw[Green] (0.5,-4.6) -- +(0,-2.4);
%
\draw[Green] (0.5,-7) -- +(0,-2);
\draw[->, OrangeRed] (0.5,-9) -- +(0,-0.4);
%
\node[rotate=90,Green] at (0,-4.5) {$\mathsf{tile}_{t_{1,1}}$};
\draw[Green] (0.5,-9.35) -- +(-0.7,-0.05) node[pos=0.7,above] {$\dots$}; 
\draw[Green,->] (-0.2,-9.4) -- +(0.7,-0.05) node[pos=0.4,below] {$\mathsf{tile}_m$};
\draw[OrangeRed,->] (0.5,-9.45) --  (2.6,-11.6) node[midway,sloped,below] {}; 
\draw[Green] (2.65,-11.65) -- +(0.05,-0.65) node[midway,left] {$\mathsf{tile}_1$};
\draw[Green,->] (2.7,-12.3)-- +(0.05,0.65) node[midway,right] {$\dots$}; 
\draw[->, OrangeRed] (2.75,-11.6) -- +(0.35,0);
\draw[Green] (3.1,-11.6) -- +(1.2,0);
%
\draw[->, Green] (4.3,-11.6) -- +(0.4,0);
\draw[dotted,Green] (4.7,-11.6) -- +(0.3,0);
\draw[->, Green] (5,-11.6) -- +(0.4,0);
\draw[->, Violet] (5.4,-11.6) -- +(0.4,0);
\draw[->, Violet] (5.8,-11.6) -- +(0.4,0);
\draw[->, Violet] (6.2,-11.6) -- +(0.4,0);
%
\draw[Green] (6.6,-11.6) -- +(1.2,0);
\draw[Green] (7.8,-11.6) -- +(1,0);
\draw[->, OrangeRed] (8.8,-11.6) -- +(0.4,0);
\node[Green] at (6.25,-12) {$\mathsf{tile}_{t_{2,1}}$};
%
%
\draw[Green] (9.15,-11.6) -- +(0.05,-0.7) node[midway,left] {$\dots$}; 
\draw[Green,->] (9.2,-12.3)-- +(0.05,0.7) node[midway,right] {$\mathsf{tile}_m$};
\draw[OrangeRed,->] (9.25,-11.6) -- +(2.1,2.1) node[midway,sloped,below] {}; 
\end{scope}
\begin{scope}[thick]\scriptsize
\draw[Cerulean,->] (1.45,0) -- ++(-0.7,0) node[pos=0.6,above] {$\mathit{Sink}$};
\draw[->, OrangeRed] (1.5,-0.05) -- +(0,-0.35) node[midway,left] {$B$};
\draw[->, Green] (1.5,-0.4) -- +(0,-0.4) node[midway,right] {$N$};
\draw[dotted, Green] (1.5,-0.8) -- +(0,-0.2);
\draw[->, Green] (1.5,-1) -- +(0,-0.4) node[midway,right] {$N$};
\draw[->, Violet] (1.5,-1.4) -- +(0,-0.4) node[midway,left] {$M$};
\draw[->, Green] (1.5,-1.8) -- +(0,-0.4) node[midway,right] {$N$};
\draw[->, Green] (1.5,-2.2) -- +(0,-0.4) node[midway,right] {$N$};
%
\draw[->, Green] (1.5,-2.6) -- +(0,-0.4) node[midway,right] {$P$};
\draw[dotted, Green] (1.5,-3) -- +(0,-0.2);
\draw[->, Green] (1.5,-3.2) -- +(0,-0.4) node[midway,right] {$P$};
\draw[->, Violet] (1.5,-3.6) -- +(0,-0.4) node[midway,left] {$F$};
\draw[->, Orange] (1.5,-4) -- +(0.4,0) node[pos=1.1,above] {$X$};
\draw[dotted, Violet] (1.5,-4) -- +(0,-0.2);
\draw[->, Violet] (1.5,-4.2) -- +(0,-0.4) node[midway,left] {$F$};
\draw[->, Orange] (1.5,-4.6) -- +(0.4,0) node[pos=1.1,above] {$X$};
%
\draw[->, Green] (1.5,-4.6) -- +(0,-0.4) node[midway,right] {$N$};
\draw[dotted, Green] (1.5,-5) -- +(0,-0.2);
\draw[->, Green] (1.5,-5.2) -- +(0,-0.4) node[midway,right] {$N$};
\draw[->, Violet] (1.5,-5.6) -- +(0,-0.4) node[midway,left] {$M$};
\draw[->, Green] (1.5,-6) -- +(0,-0.4) node[midway,right] {$N$};
\draw[dotted, Green] (1.5,-6.4) -- +(0,-0.2);
\draw[->, Green] (1.5,-6.6) -- +(0,-0.4) node[midway,right] {$N$};
%
\draw[->, Green] (1.5,-7) -- +(0,-0.4) node[midway,right] {$P$};
\draw[dotted, Green] (1.5,-7.4) -- +(0,-0.2);
\draw[->, Green] (1.5,-7.6) -- +(0,-0.4) node[midway,right] {$P$};
\draw[->, Violet] (1.5,-8) -- +(0,-0.4) node[midway,left] {$F$};
\draw[->, Orange] (1.5,-8.4) -- +(0.4,0) node[pos=1.1,above] {$X$};
\draw[dotted, Violet] (1.5,-8.4) -- +(0,-0.2);
\draw[->, Violet] (1.5,-8.6) -- +(0,-0.4) node[midway,left] {$F$};
\draw[->, Orange] (1.5,-9) -- +(0.4,0) node[pos=1.1,above] {$X$};
\draw[->, OrangeRed] (1.5,-9) -- +(0,-0.4) node[midway,left] {$E$};
\draw[->, OrangeRed] (1.5,-9.4) -- +(1.15,-1.15) node[midway,below,sloped] {$S$};
\draw[Cerulean,->] (1.5,-9.4) -- ++(-0.7,0) node[midway,below] {$\mathit{Sink}$};
\draw[Cerulean,->] (2.7,-10.65) -- ++(0,-0.7) node[pos=0.6,below,sloped] {$\mathit{Sink}$};
\draw[->, OrangeRed] (2.75,-10.6) -- +(0.35,0) node[midway,below] {$B$};
\draw[Green] (3.1,-10.6) -- +(1.2,0);
%
\draw[->, Green] (4.3,-10.6) -- +(0.4,0) node[midway,above] {$P$};
\draw[dotted,Green] (4.7,-10.6) -- +(0.3,0);
\draw[->, Green] (5,-10.6) -- +(0.4,0) node[midway,above] {$P$};
\draw[->, Violet] (5.4,-10.6) -- +(0.4,0) node[midway,below] {$F$};
\draw[->, Orange] (5.8,-10.6) -- +(0,0.4) node[pos=1.1,left] {$X$};
\draw[->, Violet] (5.8,-10.6) -- +(0.4,0)  node[midway,below] {$F$};
\draw[->, Orange] (6.2,-10.6) -- +(0,0.4) node[pos=1.1,left] {$X$};
\draw[->, Violet] (6.2,-10.6) -- +(0.4,0) node[midway,below] {$F$};
\draw[->, Orange] (6.6,-10.6) -- +(0,0.4) node[pos=1.1,left] {$X$};
%
\draw[Green] (6.6,-10.6) -- +(1.2,0);
\draw[Green] (7.8,-10.6) -- +(1,0);
\draw[->, OrangeRed] (8.8,-10.6) -- +(0.4,0) node[midway,below] {$E$};
\draw[Cerulean,->] (9.2,-10.6) -- ++(0,-0.7) node[midway,above,sloped] {$\mathit{Sink}$};
\draw[->, OrangeRed] (9.2,-10.6) -- +(1.15,1.15) node[midway,below,sloped] {$S$};
\draw[dotted, OrangeRed] (10.4,-9.4) -- +(0,1);
\begin{scope}
\draw[ultra thin] (9.2,-9.8) -- +(0,1);
\draw[thin,<->] (9.2,-8.9) -- (6.6,-8.9) node[midway,above] {$|\mathtt{cne}| + |\mathtt{enc}| + 1$};
\draw[thin,<-] (4.7,-8.1) -- (4,-8.1);
\draw[thin,->] (4,-8.1) -- (4,-1.8) node[midway,below,sloped] {$|\mathtt{tile}| + 2n + 1$};
\draw[ultra thin] (4.1,-1.8) -- +(-1.8,0);
\draw[ultra thin] (3.3,-2.6) -- +(-1,0);
\draw[thin,->] (3.2,-8.9) -- (3.2,-2.6)  node[midway,below,sloped] {$|\mathtt{tile}| + 1$};
\draw[ultra thin] (4.3,-8.8) -- +(0,-1);
\draw[thin,->] (3.2,-8.9) -- (4.3,-8.9);
\end{scope}
\node[Violet,nd,label=above:{$y_{2,1}$}] at (2.3,-1.4) {};
\draw[<-, Violet] (2.3,-1.45) -- +(0,-0.35) node[midway,right] {$M^-$};
\draw[<-,gray] (2.3,-1.8) -- (2.3,-9.4);
\draw[gray] (2.3,-9.4) -- +(0.4,-0.4);
\draw[gray] (2.7,-9.8) -- (4.7,-9.8);
\draw[black,<-] (4.7,-9.8) -- +(0.4,0);
\draw[black,dotted] (5.1,-9.8) -- +(0.3,0);
\draw[black,<-] (5.4,-9.8) -- +(0.35,0);
\draw[black,<-] (5.85,-9.8) -- +(0.3,0);
\draw[black,<-] (5.75,-9.8) -- +(+0.05,0.5);
\draw[black,->] (5.85,-9.8) -- +(-0.05,0.5);
\draw[black,<-] (6.25,-9.8) -- +(0.3,0);
\draw[black,<-] (6.15,-9.8) -- +(+0.05,0.5);
\draw[black,->] (6.25,-9.8) -- +(-0.05,0.5);
\draw[black,<-] (6.5,-9.8) -- +(+0.05,0.5);
\draw[black,->] (6.6,-9.8) -- +(-0.05,0.5);
\draw[gray,<-] (6.6,-9.8) -- (9.2,-9.8);
\draw[gray,out=90,in=90,<-] (9.2,-9.8) to (11.4,-9.4);
\end{scope}
\begin{scope}[line width=1.2pt,xshift=80mm,yshift=7mm,yscale=0.8]
\coordinate (d1) at (-1.5,-1.5);
\coordinate (d2) at (-1.5,-10.5);
\filldraw[ultra thin,draw=black,fill=gray!15,dashed] (d1) rectangle +(4.75,-9);
%
\draw[ultra thin] (-1.5,-7.5) -- +(4.75,0);
\draw[ultra thin] (-1.5,-8.5) -- +(4.75,0);
\draw[thin,<->] (-0.9,-10.5) -- +(0,2) node[midway,above,sloped] {\scriptsize $2n-2c$};
\draw[thin,<->] (-0.9,-7.5) -- +(0,6) node[midway,above,sloped] {\scriptsize $2c$};
\node at (0,-1) {d)};
\foreach \i in {1,...,6} {
\draw[Green,->] (0,-\i-0.5) -- ++(0,-1) node[left,midway] {\scriptsize $N$};
}
\foreach \i in {7} {
\draw[Violet,->] (0,-\i-0.5) -- ++(0,-1) node[left,midway] {\scriptsize $M$};
}
\foreach \i in {8,...,9} {
\draw[Green,->] (0,-\i-0.5) -- ++(0,-1) node[left,midway] {\scriptsize $N$};
}
\node at (1.25,-1) {e)};
\foreach \i in {1,...,6} {
\draw[Green,->] (1.25,-\i-0.5) -- ++(0,-1) node[left,midway] {\scriptsize $N$} node[midway,right] {\scriptsize$I^-$};
}
\foreach \i in {7} {
\draw[Violet,->] (1.25,-\i-0.5) -- ++(0,-1) node[left,midway] {\scriptsize $M$} node[midway,right] {\scriptsize$I^-$};
}
\foreach \i in {8,...,9} {
\draw[Green,->] (1.25,-\i-0.5) -- ++(0,-1) node[left,midway] {\scriptsize $N$} node[midway,right] {\scriptsize$I^-$};
}
\node at (2.5,-1) {f)};
\foreach \i/\l in {8/I,9/I} {
\draw[gray,<-] (2.5,-\i-0.5) -- ++(0,-1) node[right,midway] {\scriptsize $\l$};
}
\draw[Violet,<-] (2.5,-7.55) -- ++(0,-0.95) node[right,midway] {\scriptsize $M^-$};
\node[Violet,nd,label=above:{$y_{2,1}$}] at (2.5,-7.5) {};
\end{scope}
\draw[gray] (3,-0.4) -- (d1);
\draw[gray] (3,-2.6) -- (d2);
\begin{scope}[line width=1.2pt,xshift=-12mm,yshift=-103mm,xscale=0.95]
\coordinate (e1) at (1.5,-2.1);
\coordinate (e2) at (13.5,-2.1);
\filldraw[ultra thin,draw=black,fill=gray!15,dashed] (e1) rectangle +(12,-4);
 \begin{scope}[line width=1.2pt]
 \node at (1.2,-5.4) {a)};
 \foreach \i in {1,...,9} {
 \draw[->,Green] (\i+0.5,-5.4) -- ++(1,0) node[above,midway] {\scriptsize$P$};
 }
 \foreach \i in {10,...,12} {
 \draw[->,Violet] (\i+0.5,-5.4) -- ++(1,0) node[above,midway] {\scriptsize$F$};
 }
 \node at (1.2,-4) {b)};
 \foreach \i in {1,...,9} {
 \draw[->,Green] (\i+0.5,-4) -- ++(1,0) node[above,midway] {\scriptsize$P$} node[midway,below] {\scriptsize\begin{tabular}{c}$I^-$\\[-2pt]$X^-$\\[-2pt]$Y^-$\end{tabular}};
 }
 \foreach \i in {10,...,12} {
 \draw[->,Violet] (\i+0.5,-4) -- ++(1,0) node[above,midway] {\scriptsize$F$} node[midway,below] {\scriptsize\begin{tabular}{c}$I^-$\end{tabular}};
 \draw[->,YellowOrange] (\i+1.5,-4) -- ++(0,0.7) node [midway,above,sloped] {\scriptsize $X$} node[pos=0.7,below,sloped] {\scriptsize $Y^-$};
 }
 \node at (1.2,-3) {c)};
 \foreach \i in {10,...,12} {
 \draw[->,black] (\i+1.55,-3) -- ++(0,0.7) node [midway,below,sloped] {\scriptsize $X$};
 \draw[->,black] (\i+1.45,-2.3) -- ++(0,-0.7) node [midway,below,sloped] {\scriptsize $Y$};
 \draw[->,black] (\i+1.45,-3) -- ++(-0.9,0) node [midway,below] {\scriptsize $I$};
 }
 \foreach \i/\l in {7/I,8/Y,9/X} { 
 \draw[->,black] (\i+1.5,-3) -- ++(-1,0) node [midway,below] {\scriptsize $\l$};
 }
 %
 %
 %
 \draw[ultra thin] (10.5, -2.1) -- +(0,-4);
 \draw[<->,thin] (10.5,-4.8) -- ++(3,0) node[midway,above] {\scriptsize $c$};
 \draw[<->,thin] (1.5,-5.9) -- ++(12,0) node[midway,above] {\scriptsize $3n$};
 %
 \draw[ultra thin] (7.5,-3) -- +(0,0.9);
 \draw[thin,<->] (7.5,-2.2) -- +(6,0) node[pos=0.3,below] {\scriptsize $3n-2c$};
 \draw[thin,<->] (1.5,-2.7) -- +(6,0) node[midway,below] {\scriptsize $2c$};
 \end{scope}
\end{scope}
\draw[gray] (4.3,-11.8) -- (e1);
\draw[gray] (6.6,-11.8) -- (e2);
\end{tikzpicture}
  \caption{Matching the first two segments, $\mathtt{segm}(x_0,x_{1,1})$ and $\mathtt{segm}(x_{1,1},x_{2,1})$, of $\q$ and $\mathtt{probelt}(x_{2,1}, y_{2,1})$ in  the canonical model $\C$, and
 the magnified fragments for $\mathtt{enc}_c$ and $\mathtt{cne}_c$ with $n=4$ and $c=3$: a) subsequence $\mathtt{enc}_c$ of $\mathtt{tile}_t$ in $\q$; b) a path in $\C$ where $\mathtt{enc}_c$ is mapped; c) matching the $\mathtt{spring}$ subquery of $\mathtt{probelt}$ in  $\C$;
d) subsequence $\mathtt{cne}_c$ of $\mathtt{tile}_t$ in~$\q$; e) a path in $\C$ where $\mathtt{cne}_c$ is mapped; f) mapping variable $y_{2,1}$ of the subquery $\mathtt{probelt}(x_{2,1}, y_{2,1})$ in  $\C$.}\label{fig:ontology-path}
\end{figure}

We show that $\T_\Box, \{ A(a) \} \models \q$ iff $\mathfrak T$ tiles a $k \times k$-grid.
Here, we only prove $(\Rightarrow)$ and leave the converse direction to the reader.
For two sequences $w = R_1 \dots R_l$ and $w' = R'_1 \dots R'_l$ of binary predicate names, we write $w \sqsubseteq w'$  if $\T_\Box \models R_i(x,y) \to R'_i(x,y)$, for all $i$ $(1 \le i \le l)$.

Let $h$ be a homomorphism from $\q$ to $\C = \mathcal{C}_{\T_\Box, \{ A(a) \}}$ such that $h(x_0) = v_0$. Then $v_0 \in A^\C$ and \mbox{$h(x_{1,1}) = v_{1,1}  \in A^\C$} with $v_{1,1}$ of the form $v_0 w_{1,1}S$, for some $w_{1,1}$ that begins with $B$ but does not contain~$S$. Since $(v_0,v_0  \mathit{Sink})\notin S^{\C}$ and $|\mathtt{tile}|$ is even, it follows that there is a unique tile $t_{1,1}$ such that
\begin{itemize}
\item[--] $|w_{1,1}| = |\mathtt{tile}|$ and $w_{1,1} \sqsubseteq \mathtt{tile}_{t_{1,1}}$,
\item[--] the subquery of $\mathtt{segm}(x_0, x_{1,1})$ for the sequence $\mathtt{tile}_1 \dots \mathtt{tile}_{t_{1,1}-1}$ is mapped to the $\mathit{Sink}$ arrow at $v_0$ (i.e., forwards and backwards between $v_0$ and $v_0  \mathit{Sink}$);
\item[--] the subquery for the sequence $\mathtt{tile}_{t_{1,1}+1} \dots \mathtt{tile}_m$ is mapped to a $\mathit{Sink}$ arrow at $v_0 w_{1,1}$ (see Fig.~\ref{fig:ontology-path}).
\end{itemize}
Consider now any subquery $\mathtt{segm}(x_{i-1, 1}, x_{i,1})$, for $2 \leq i \leq k$. By the same argument, we obtain $h(x_{i,1}) = v_{i,1}$, for $v_{i,1}=v_{i-1,1} w_{i,1} S$, and $w_{i,1} \sqsubseteq \mathtt{tile}_{t_{i,1}}$, for a unique $1 \leq t_{i,1} \leq m$. Next, $h(x_{1,2}) = v_{1,2}$ for $v_{1,2} = v_{k,1} w_{1,2}\,S$ with $w_{1,2} \sqsubseteq \mathtt{tile}_{t_{1,2}}$ and, eventually, every subquery $\mathtt{segm}(x_{i-1, j}, x_{i,j})$, for $1 \leq i \leq k$ and $1 \leq j \leq k$, is mapped in such a way that
\begin{equation*}
h(x_{i,j}) = \begin{cases} v_{i-1,j} w_{i,j} S, & \text{ if } i \geq 2,\\ v_{k,j-1} w_{i,j} S, & \text{ if } i = 1, j \geq 2,\\ v_0 w_{1,1} S, & \text{ if } i = 1, j = 1. \end{cases}  \ \ \text{ with } \ \ w_{i,j} \sqsubseteq \mathtt{tile}_{t_{i,j}}, \text{ for a unique } t_{i,j},
\end{equation*}
We prove now that the tiles $\mathcal{S}_{t_{i,j}}$ placed at $(i,j)$ of the $k\times k$-grid form a tiling.

First, we show that $\mathit{left}(t_{i,j}) = \mathit{right}(t_{i-1,j})$, for $2 \leq i \leq k$ and $1\leq j\leq k$. As we observed above, $h(x_{i,j})$ is of the form $v\, w_{i-1,j} \, S \,w_{i,j} \, S$. Consider the subquery $\mathtt{probelt}(x_{i,j}, y_{i,j})$ and recall that
\begin{equation*}
w_{i,j} \sqsubseteq B\, \mathtt{cne}_{\mathit{right}(t_{i,j})} \, \mathtt{enc}_{\mathit{left}(t_{i,j})}  \, \mathtt{cne}_{\mathit{top}(t_{i,j})}  \, \mathtt{enc}_{\mathit{bottom}(t_{i,j})}\, E.
\end{equation*}
By the structure of $\T_\Box$, the subqueries $I^{|\mathtt{enc}| +|\mathtt{cne}| +2}(x_{i,j}, v_{i,j})$ and $\mathtt{spring}(v_{i,j}, u_{i,j})$ of $\mathtt{probelt}(x_{i,j}, y_{i,j})$
are mapped by $h$ in such a way (see Fig.~\ref{fig:ontology-path}) that
\begin{align*}
h(v_{i,j}) & \sqsubseteq v \, w_{i-1,j} \, S\,B \, \mathtt{cne}_{\mathit{right}(t_{i,j})} \, \mathtt{enc}_{\mathit{left}(t_{i,j})},\\
h(u_{i,j}) & \sqsubseteq v \, w_{i-1,j}\, S\, B \, \mathtt{cne}_{\mathit{right}(t_{i,j})} \, P^{2c}, && \text{ for } c = \mathit{left}(t_{i,j}).
\end{align*}
On the other hand,  the last element in $\mathtt{probelt}$ is $M^-$, and so we must have
\begin{equation*}
h(y_{i,j}) \sqsubseteq v\, B \, N^{2c'}, \qquad \text{ for } c' = \mathit{right}(t_{i-1,j}),
\end{equation*}
which is only possible if $\mathit{right}(t_{i-1,j}) =  \mathit{left}(t_{i,j})$; see Fig.~\ref{fig:ontology-path}.

That $\mathit{down}(t_{i,j}) = \mathit{up}(t_{i,j-1})$, for $2 \leq j \leq k$ and $1\leq i \leq k$, is proved similarly by considering the mapping of the subquery $\mathtt{probedn}(x_{i,j}, z_{i,j})$.
\end{proof}

\section{Experiments}

\subsection{Computing rewritings}

We computed 6 types of rewritings for linear
queries similar to those in Example~\ref{ex:rewriting:1}
and a fixed ontology from Example~\ref{ex:rewriting:2}.
The first three rewritings were obtained by running executables of  Rapid~\cite{DBLP:conf/cade/ChortarasTS11}, Clipper~\cite{DBLP:conf/aaai/EiterOSTX12} and Presto~\cite{DBLP:conf/kr/RosatiA10}
with a 15 minute timeout on a desktop machine.
The other three rewritings are rewritings \textsc{Lin}, \textsc{Log} and \textsc{Tw}
described in Sections~\ref{sec:5}, \ref{sec:boundedtw} and \ref{sec:boundedleaf} respectively.

We considered the following three sequences: 
\begin{align*}
\tag{Sequence 1}&RRSRSRSRRSRRSSR,\\
\tag{Sequence 2} &SRRRRRSRSRRRRRR,\\
\tag{Sequence 3} &SRRSSRSRSRRSRRS.
\end{align*}
For each of the three sequences, we consider the line-shaped
queries with 1--15 atoms formed by their prefixes. Table~\ref{tab:size} presents the sizes of the different types of rewritings.

\begin{table*}[t]
\caption{The size (number of clauses) of different types of rewritings for the three sequences of queries ( -- indicates timeout after 15 minutes)}   \centering\tabcolsep=4pt
\scalebox{0.80}{
    \begin{tabular}{|c|r|r|r|r|r|r|r|r|r|r|r|r|r|r|r|r|r|r|}
    \hline
    no. & \multicolumn{6}{|c|}{Sequence~1} & \multicolumn{6}{|c|}{Sequence~2} & \multicolumn{6}{|c|}{Sequence~3} \\
    of & \multicolumn{6}{|c|}{\footnotesize\!$RRSRSRSRRSRRSSR$\!} & \multicolumn{6}{|c|}{\footnotesize\!$SRRRRRSRSRRRRRR$\!} & \multicolumn{6}{|c|}{\footnotesize\!$SRRSSRSRSRRSRRS$\!} \\
    \cline{2-19}
    \!atoms\! & \!Rapid\! & \!\!Clipper\!\! & \!Presto\! & \textsc{Lin}   & \textsc{Log} & \textsc{Tw} & Rapid & \!\!Clipper\!\! &
	\!Presto\! & \textsc{Lin}   & \textsc{Log} & \textsc{Tw}
	& Rapid & \!\!Clipper\!\! & \!Presto\! & \textsc{Lin}   & \textsc{Log} & \textsc{Tw}    \\
    \hline
    1 & 1 & 1          &   5&2 & 1 & 1& 1  & 1 &   5 &2 & 1  & 1& 1  & 1 &   5& 2  & 1  & 1\\
    2 & 1 & 1          &   5&5 & 2 & 0& 2  & 2 &  14 &5 & 4  & 2& 2  & 2 &  14& 5  & 4  & 2\\
    3 & 2 & 2          &  14&8 & 5 & 3& 2  & 2 &  14 &8 & 5  & 3& 2  & 2 &  14& 8  & 5  & 3\\
    4 & 3  & 3         &  19&11& 8 & 4& 2  & 2 &  14 &11& 6  & 3& 4  & 4 &  23& 11 & 8  & 5\\
    5 & 5 & 5          &  24&14& 12& 6& 2  & 2 &  14 &14& 8  & 4& 4  & 4 &  23& 14 & 10 & 6\\
    6 & 7  & 7         &  33&17& 16&10& 2  & 2 &  14 &17& 10 & 4& 8  & 8 &  39& 17 & 15 & 7\\
    7 & 10 & 11        &  49&20& 20&10& 4  & 4 &  23 &20& 13 & 7& 11 & 11&  57& 20 & 18 &14\\
    8 & 13 & 16        &  77&23& 24&14& 6  & 7 &  29 &23& 16 & 7& 18 & 24&  96& 23 & 21 & 8\\
    9 & 13 & 16        &  77&26& 27&15& 10 & 13&  50 &26& 22 &10& 24 & 35& 183& 26 & 27 &10\\
    10    & 26 & 44    & 203&29& 32&16& 14 & 26&  83 &29& 27 &11& 34 & 63& 356& 29 & 33 &17\\
    11    & 39 & 72    & 329&32& 36&16& 14 & 26&  83 &32& 29 &14& 43 &100& 356& 32 & 37 &20\\
    12    & 39 & 126   & 329&35& 40&21& 14 & 26&  83 &35& 33 &18& 56 &302&1028& 35 & 42 &23\\
    13    & -- & 241   & 959&38& 45&24& -- & 30&  83 &38& 35 &20& -- & --&1712& 38 & 46 &25\\
    14    & -- & --    & 959&41& 47&25& -- & 31&  83 &41& 36 &16& -- & --&1712& 41 & 51 &27\\
    15    & -- & -- &   2723&44& 51&22& -- & 30&  83 &44& 37 &15& -- & --&5108& 44 & 52 &29\\
    \hline
    \end{tabular}%
	}
\label{tab:size}%
\end{table*}%

\begin{table*}
  \centering\tabcolsep=6pt
  \caption{Generated datasets}
    \begin{tabular}{|c|r|r|r|r|r|}
    \hline
    dataset & \multicolumn{1}{|c|}{$V$} &\multicolumn{1}{|c|}{$p$} & \multicolumn{1}{|c|}{$q$} & {\tabcolsep=0pt\begin{tabular}{c}avg. degree\\[-2pt] of vertices\end{tabular}} & no.\ of atoms\\
    \hline
    1.ttl & 1\,000  & 0.050  & 0.050  & 50 &61\,498\\
    2.ttl & 5\,000  & 0.002 & 0.004 & 10 &64\,157\\
    3.ttl & 10\,000 & 0.002 & 0.004 & 20 &256\,804\\
    4.ttl & 20\,000 & 0.002 & 0.010  & 40 &1\,027\,028\\
    \hline
    \end{tabular}%
  \label{tab:datasets}%
\end{table*}%

\pagebreak

\subsection{Datasets}\label{sec:datasets}

We used  Erd\"os-R\`enyi random graphs with independent
parameters $V$ (number of vertices), $p$ (probability of an $R$-edge)
and $q$ (probability of concepts $A$ and $B$ at a given vertex).
Note that we intentionally did not introduce any $S$-edges.
The last parameter, the average degree of a vertex,
is $V\cdot p$. Table~\ref{tab:datasets} summarises the parameters of the datasets.


\subsection{Evaluating rewritings}

We evaluated all obtained rewritings
on the datasets in Section~\ref{sec:datasets} using RDFox triplestore~\cite{DBLP:conf/semweb/NenovPMHWB15} with
999-second timeout. The materialisation time and other relevant statistics are given in
Tables~\ref{tab:rdfox}, \ref{tab:rdfoxtwo}, and \ref{tab:rdfoxthree}.

\begin{table*}
  \centering
  \caption{Evaluating rewritings on RDFox - 1}
  \scalebox{0.60}{
    \begin{tabular}{|c|r|r|r|r|r|r|r|r|r|r|r|r|r|r|r|r|}
    \hline
    data- & query  & \multicolumn{7}{c|}{evaluation time (sec)} & \multicolumn{1}{|c|}{no.\ of}
 & \multicolumn{7}{c|}{no.\ of generated tuples} \\\cline{3-9}\cline{11-17}
set   & \multicolumn{1}{|c|}{size} & \multicolumn{1}{|c|}{Rapid} & \multicolumn{1}{|c|}{Clipper} &\multicolumn{1}{|c|}{Presto} & \multicolumn{1}{|c|}{\textsc{Lin}} &  \multicolumn{1}{|c|}{\textsc{Log}} &\multicolumn{1}{|c|}{\textsc{Tw}}&\multicolumn{1}{|c|}{\textsc{Tw*}} & \multicolumn{1}{|c|}{answers} & \multicolumn{1}{|c|}{Rapid} & \multicolumn{1}{|c|}{Clipper} &\multicolumn{1}{|c|}{Presto} & \multicolumn{1}{|c|}{\textsc{Lin}} & \multicolumn{1}{|c|}{\textsc{Log}} & \multicolumn{1}{|c|}{\textsc{Tw}} &\multicolumn{1}{|c|}{\textsc{Tw*}} \\
\hline
      & 1   & 0.021 & 0.019 & 0.034 & 0.049 & 0.017 & 0.016 & 0.01  & 61390   &  61390 & 61390 & 122780 & 61449 & 61390 & 61390 & 61390 \\
      & 2   & 0.675 & 0.694 & 0.706 & 0.898 & 0.505 & 0.652 & 0.698 & 976789  &  976789 & 976789 & 1038179 & 1041822 & 1038179 & 976789 & 976789 \\
      & 3   & 0.058 & 0.053 & 0.125 & 0.013 & 0.112 & 0.01 & 0.012  & 2956    &  2956 & 2956 & 64394 & 3054 & 64394 & 3004 & 3004 \\
      & 4   & 0.204 & 0.201 & 0.314 & 0.087 & 0.675 & 0.76 & 0.12   & 212213  &  212213 & 212213 & 273710 & 283409 & 1314797 & 1189061 & 212272 \\
      & 5   & 0.12 & 0.114 & 0.314 & 0.014 & 0.576 & 0.696 & 0.064  & 2956    &  2956 & 2956 & 64453 & 3150 & 1105636 & 976837 & 3004 \\
      & 6   & 0.266 & 0.248 & 0.685 & 0.093 & 0.266 & 0.768 & 0.124 & 212213  &  212213 & 212213 & 273710 & 292815 & 337479 & 1198455 & 218710 \\
      & 7   & 0.271     & 0.242 &  1.11     & 0.008    & 0.243 &0.687 & 0.05& 2\,956       & 2\,956  & 2\,956     &64453    & 3\,246        & 125\,361    &982797  & 3148  \\
      & 8   & 0.412     & 0.377 &  1.406    & 0.084    & 0.904 &0.944 & 0.186& 212\,213     & 212\,213& 212\,213   &273710   & 302\,221      & 1\,659\,409&1410727 & 431100    \\
      & 9   & 3.117     & 3.337 &  12.713   & 3.376    & 2.941 &2.405 & 1.633& 998\,945     & 998\,945& 998\,945   &1060442  & 2\,927\,979   & 2\,684\,359&2435551 & 1455913    \\
1.ttl & 10  & 1.079     & 1.102 &  18.432   & 0.012    & 0.607 &0.76  & 0.166& 8\,374       & 8\,374  & 10\,760    &69871    & 12\,573       & 1\,178\,714&1203649 & 224057    \\
      & 11  & 2.246     & 1.984 &  48.311   & 0.385    & 0.945 &1.075 & 0.371& 436\,000     & 436\,000& 436\,000   &497497   & 836\,876      & 1\,618\,743&1663534 & 664174    \\
      & 12  & 13.693    & 30.032&  >999     & 8.129    & 6.867 &5.922 & 5.28& 999\,998     & 999\,998& 1\,000\,000&--       & 5\,311\,314   & 4\,439\,352 &3217262 & 2241208    \\
      & 13  & --        & 6.810 &  560.206  & 0.027    & 0.616 &0.946 & 0.274& 20\,985      & --      & 24\,839    &82482    & 38\,200       & 553\,821   &1234421 & 254888  \\
      & 14  & --        & --    &  913.387  & 0.013    & 0.358 &0.819 & 0.27& 0            & --      & --         &61497    & 48        	 & 312\,723   &1201459 & 228307 \\
      & 15  & --        & --    &  >999     & 0.032    & 0.394 &0.994 & 0.33& 2\,000       & --      & --         &--       & 70\,277       & 376\,313    &1417786 & 442579  \\
\hline
      & 1   & 0.02 & 0.022 & 0.039 & 0.02 & 0.019 & 0.017 & 0.008   &64103  & 64103 & 64103 & 128206 & 64125 & 64103 & 64103 & 64103 \\
      & 2   & 0.273 & 0.305 & 0.321 & 0.29 & 0.297 & 0.275 & 0.466  &809731 & 809731 & 809731 & 873834 & 874112 & 873834 & 809731 & 809731 \\
      & 3   & 0.03 & 0.028 & 0.06 & 0.011 & 0.058 & 0.01 & 0.013    &427    & 427 & 427 & 64561 & 489 & 64561 & 458 & 458 \\
      & 4   & 0.057 & 0.054 & 0.103 & 0.032 & 0.448 & 0.315 & 0.035 &8778   & 8778 & 8778 & 72934 & 74004 & 947164 & 818531 & 8800 \\
      & 5   & 0.05 & 0.046 & 0.128 & 0.014 & 0.423 & 0.301 & 0.03   &427    & 427 & 427 & 64583 & 551 & 938875 & 809762 & 458 \\
      & 6   & 0.08 & 0.074 & 0.27 & 0.035 & 0.084 & 0.316 & 0.038   &8778   & 8778 & 8778 & 72934 & 75103 & 77253 & 819648 & 9490 \\
      & 7   & 0.089     & 0.080  & 0.378  & 0.008      & 0.078 &0.295 & 0.024& 427          & 427         & 427       &64583  & 613           & 68\,546    &810647 & 551\\
      & 8   & 0.136     & 0.125  & 0.467  & 0.029      & 0.434 &0.322 & 0.037& 8\,778       & 8\,778      & 8\,778    &72934  & 76\,202       & 1\,085\,362&828448 & 18334\\
      & 9   & 0.202     & 0.254  & 1.179  & 0.369      & 0.554 &0.391 & 0.102& 105\,853     & 105\,853    & 105\,853  &170009 & 1\,020\,363   & 1\,190\,249&933295 & 123190\\
2.ttl & 10  & 0.174     & 0.204  & 2.341  & 0.011      & 0.461 &0.321 & 0.052& 11           & 11          & 438       &64167  & 506           & 943\,097   &819428 & 9354\\
      & 11  & 0.192     & 0.259  & 4.726  & 0.036      & 0.473 &0.336 & 0.053& 651          & 651         & 9\,396    &64807  & 74\,922       & 944\,210   &820354 & 11271\\
      & 12  & 0.244     & 0.699  & 24.778 & 0.396      & 1.034 &0.509 & 0.15& 8\,058       & 8\,058      & 113\,179  &72214  & 1\,004\,735   & 1\,940\,300 &934269 & 124420\\
      & 13  & --        & 0.629  & 20.555 & 0.015      & 0.244 &0.458 & 0.084& 0            & --          & 438       &64156  & 502           & 209\,915   &820373 & 10321\\
      & 14  & --        & --     & 25.243 & 0.014      & 0.153 &0.350 & 0.081& 0            & --          & --        &64156  & 31            & 200\,962   &820106 & 10722\\
      & 15  & --        & --     & 66.916 & 0.032      & 0.172 &0.335 & 0.072& 0            & --          & --        &64156  & 64\,543       & 265\,087   &828884 & 19522\\
\hline
      & 1   & 0.131 & 0.094 & 0.225 & 0.101 & 0.096 & 0.14 & 0.032 & 256699 & 256699 & 256699 & 513398 & 256756 & 256699 & 256699 & 256699 \\
      & 2   & 2.933 & 2.946 & 3.017 & 2.955 & 3.053 & 2.929 & 3.039& 6379932& 6379932 & 6379932 & 6636631 & 6638150 & 6636631 & 6379932 & 6379932\\
      & 3   & 0.206 & 0.175 & 0.519 & 0.03 & 0.499 & 0.029 & 0.034 & 1217   & 1217 & 1217 & 257963 & 1311 & 257963 & 1264 & 1264 \\
      & 4   & 0.399 & 0.424 & 0.927 & 0.171 & 4.003 & 3.419 & 0.231& 67022  & 67022 & 67022 & 323825 & 327716 & 6961626 & 6447011 & 67079 \\
      & 5   & 0.36 & 0.357 & 1.112 & 0.036 & 4.133 & 3.396 & 0.179 & 1217   & 1217 & 1217 & 258020 & 1405 & 6895915 & 6379979 & 1264 \\
      & 6   & 0.632 & 0.57 & 1.806 & 0.169 & 0.836 & 3.425 & 0.228 & 67022  & 67022 & 67022 & 323825 & 331647 & 363640 & 6450931 & 69782 \\
      & 7   & 0.631     & 0.581 &  2.981   & 0.035     & 0.756&3.255 & 0.156& 1\,217       & 1\,217      & 1\,217      &258020 & 1\,499        & 296\,711     &6382460 & 1405   \\
      & 8   & 0.925     & 0.876 &  3.739   & 0.159     & 4.377&3.405 & 0.278& 67\,022      & 67\,022     & 67\,022     &323825 & 335\,578      & 7\,546\,184  &6518010 & 136975   \\
      & 9   & 1.949     & 2.275 &  14.564  & 4.063     & 5.251&4.169 & 1.169& 1\,678\,668  & 1\,678\,668 & 1\,678\,668 &1935471 & 8\,613\,829   & 9\,225\,201 &8196944 & 1815899   \\
3.ttl & 10  & 1.24      & 1.377 &  35.109  & 0.049     & 4.731&3.571 & 0.342& 60           & 60          & 1\,277      &256863 & 1\,389        & 6\,936\,178  &6449555 & 68557   \\
      & 11  & 1.403     & 1.798 &  60.858  & 0.249     & 4.846&3.607 & 0.343& 11\,498      & 11\,498     & 77\,811     &268301 & 341\,459      & 6\,949\,160  &6462905 & 85267   \\
      & 12  & 1.697     & 5.413 &  572.53  & 4.355     &10.128&6.693 & 1.645& 305\,640     & 305\,640    & 1\,951\,654&562443 & 8\,780\,232   & 15\,626\,926  &8438115 & 2058532    \\
      & 13  & --        & 4.382 &  484.969 & 0.082     & 1.762&4.926 & 0.599& 0            & --    	   & 1\,277       &256803 & 1\,377     	 & 917\,117       &6453717 & 72776 \\
      & 14  & --        & --    &  575.487 & 0.063     & 1.115&3.972 & 0.584& 0            & --          & --         &256803 & 47      	     & 850\,309   &6452195 & 73900   \\
      & 15  & --        & --    &  >999    & 0.177     & 1.011&3.585 & 0.501& 0            & --    	   & --           &--    & 257\,974      & 1\,107\,065    &6519217 & 140979 \\
\hline
      & 1   & 0.433 & 0.451 & 1.037 & 0.495 & 0.439 & 0.456 & 0.165       & 1026526 & 1026526 & 1026526 & 2053052 & 1026774 & 1026526 & 1026526 &1026526 \\
      & 2   & 27.549 & 28.088 & 28.329 & 27.011 & 29.532 & 32.331 & 31.34 & 49364886& 49364886 & 49364886 & 50391412 & 50404311 & 50391412 & 49364886 & 49364886\\
      & 3   & 2.067 & 2.409 & 3.657 & 0.159 & 4.087 & 0.161 & 0.162       & 13103   & 13103 & 13103 & 1039882 & 13613 & 1039882 & 13356 & 13356 \\
      & 4   & 4.866 & 5.438 & 9.511 & 1.37 & 38.919 & 31.188 & 2.746      & 1286991 & 1286991 & 1286991 & 2314018 & 2353661 & 52718280 & 50652125 & 1287239 \\
      & 5   & 4.061 & 4.032 & 10.374 & 0.209 & 42.943 & 33.064 & 2.142    & 13103   & 13103 & 13103 & 1040130 & 14119 & 51444898 & 49365139 & 13356 \\
      & 6   & 6.909 & 7.133 & 16.249 & 1.443 & 7.767 & 36.268 & 2.782     & 1286991 & 1286991 & 1286991 & 2314018 & 2393145 & 2952225 & 50691250 & 1313261 \\
      & 7   & 6.614     & 6.277   & 23.7 & 0.243      & 8.586  &29.098& 2.02 & 13\,103      & 13\,103     & 13\,103        &1040130  & 14\,625        & 1\,665\,376  &49391598  & 14115\\
      & 8   & 11.441    & 10.923  & 29.1 & 1.880      & 54.813 &29.426& 3.669& 1\,286\,991  & 1\,286\,991 & 1\,286\,991    &2314018  & 2\,432\,629    & 56\,098\,445 &51978489  & 2600996\\
      & 9   & 46.704    & 50.668  & 193  & 76.169     & 102.055&66.464& 33.63& 58\,753\,514 & 58\,753\,514& 58\,753\,514  &59780541  & 114\,973\,160  & 114\,837\,395&110717131 & 61339643\\
4.ttl & 10  & 14.348    & 15.503  & 462  & 0.375      & 43.347 &30.008& 4.694& 19\,966      & 19\,966     & 33\,014        &1046993  & 35\,359        & 52\,103\,362 &50698955  & 1321716\\
      & 11  & 19.593    & 20.907  & 821  & 2.843      & 44.410 &31.061& 5.319& 1\,872\,159  & 1\,872\,159 & 3\,051\,184    &2899186  & 4\,397\,556    & 53\,986\,724 &52602849  & 3224788\\
      & 12  & 71.354    & 182.499 & >999 & 172.822    & 237.478&179.12& 90.04& 79\,939\,048 & 79\,939\,048& 120\,229\,590 &--     & 199\,083\,489& 242\,500\,074     &189429768 & 140064931\\
      & 13  & --        & 54.497  & >999 & 0.562      & 22.345 &44.427& 7.105& 22\,474      & --          & 53\,717        &--     & 58\,826        & 5\,686\,759    &50759705  & 1382714\\
      & 14  & --        & --      & >999 & 0.550      & 12.462 &36.259& 7.493& 0            & --          & --            &--     & 253            & 4\,356\,739     &50704606  & 1353393\\
      & 15  & --        & --      & >999 & 1.211      & 11.315 &30.709& 7.028& 12\,165      & --          & --             &--     & 1\,064\,542    & 5\,395\,902    &52014512  & 2652797\\
    \hline
    \end{tabular}%
	}
  \label{tab:rdfox}%
\end{table*}%


\begin{table*}
  \centering
  \caption{Evaluating rewritings on RDFox - 2}
  \scalebox{0.60}{
    \begin{tabular}{|c|r|r|r|r|r|r|r|r|r|r|r|r|r|r|r|r|}
    \hline
    data- & query  & \multicolumn{7}{c|}{evaluation time (sec)} & \multicolumn{1}{|c|}{no.\ of}
 & \multicolumn{7}{c|}{no.\ of generated tuples} \\\cline{3-9}\cline{11-17}
set   & \multicolumn{1}{|c|}{size} & \multicolumn{1}{|c|}{Rapid} & \multicolumn{1}{|c|}{Clipper} &\multicolumn{1}{|c|}{Presto} & \multicolumn{1}{|c|}{\textsc{Lin}} &  \multicolumn{1}{|c|}{\textsc{Log}} &\multicolumn{1}{|c|}{\textsc{Tw}} &\multicolumn{1}{|c|}{\textsc{Tw*}}& \multicolumn{1}{|c|}{answers} & \multicolumn{1}{|c|}{Rapid} & \multicolumn{1}{|c|}{Clipper} &\multicolumn{1}{|c|}{Presto} & \multicolumn{1}{|c|}{\textsc{Lin}} & \multicolumn{1}{|c|}{\textsc{Log}} & \multicolumn{1}{|c|}{\textsc{Tw}}&\multicolumn{1}{|c|}{\textsc{Tw*}} \\
\hline
      & 1   & 0.009 & 0.005 & 0.005 & 0.005 & 0.005 & 0.005 & 0.007 & 0  & 0 & 0 & 0 & 48 & 0 & 0 & 0 \\
      & 2   & 0.009 & 0.008 & 0.021 & 0.05 & 0.012 & 0.008 & 0.007 & 59  & 59 & 59 & 61508 & 64406 & 118 & 59 & 59 \\
      & 3   & 0.083 & 0.058 & 0.077 & 0.9 & 0.093 & 0.732 & 0.058 & 3584  & 3584 & 3584 & 65033 & 1092161 & 65033 & 980373 & 3584 \\
      & 4   & 2.363 & 4.049 & 2.301 & 8.32 & 0.11 & 0.723 & 0.073 & 57571  & 57571 & 57571 & 119020 & 2204964 & 119079 & 1034419 & 57630 \\
      & 5   & 97    & 92   &102   & 13.599   & 2    &14.272& 2.718& 59000     & 59000 & 59000 & 120449 & 3265393 & 1097297 &2035848 & 59059\\
      & 6   & >999  & >999 &>999  & 17.882   & 19   &13.881& 42.914& 59000     & --    & --    & --    & 4324393 & 1162212 &2039373 & 62584 \\
      & 7   & 129   & 122  &>999  & 0.384    & 0.25 &0.749& 0.344& 2832      & 2832  & 2832 & --    & 156824 & 132259      &1030122 & 6464    \\
      & 8   & >999  & >999 &>999  & 10.963   & 2    &1.82& 21.399& 55991     & --    & --    & --    & 3352724 & 304347    &1302623 & 268322  \\
      & 9   & 162   & 158  &>999  & 0.395    & 0.21 &0.722& 0.344& 2832      & 2832  & 2832 & --    & 156920 & 187155      &1040255 & 5895    \\
1.ttl & 10  & >999  & >999 &>999  & 11.118   & 2    &12.928& 39.21& 55991     & --    & --    & --    & 3362130 & 1220806  &2104667 & 68937 \\
      & 11  & >999  & >999 &>999  & 20.217   & 4    &14.611&  >999 & 59000     & --    & --    & --    & 5920653 & 2251570 &2342243 & --    \\
      & 12  & >999  & >999 &>999  & 31.648   & 21   &19.079&  >999 & 59000     & --    & --    & --    & 8714382 & 3361965 &4165789 & --    \\
      & 13  & --    & >999 &>999  & 34.395   & 46   &193.512&  >999 & 59000     & --    & --    & --    & 9783393 &3429574 &4198870 & --    \\
      & 14  & --    & >999 &>999  & 39.818   & 223  &190.334&  >999 & 59000     & --    & --    & --    & 10842393&1509563 &4130571 & --   \\
      & 15  & --    & >999 &>999  & 49.391   & 232  &226.827&  >999 & 59000     & --    & --    & --    & 11901393&1594164 &4420495 & --   \\
\hline
      & 1   & 0.007 & 0.007 & 0.005 & 0.007 & 0.007 & 0.005 & 0.004 & 0  & 0 & 0 & 0 & 31 & 0 & 0 & 0 \\
      & 2   & 0.01 & 0.01 & 0.028 & 0.027 & 0.011 & 0.008 & 0.008 & 22 &  22 & 22 & 64147 & 64543 & 44 & 22 & 22 \\
      & 3   & 0.025 & 0.025 & 0.041 & 0.345 & 0.046 & 0.313 & 0.024 & 256  & 256 & 256 & 64381 & 879372 & 64381 & 809987 & 256 \\
      & 4   & 0.135 & 0.136 & 0.169 & 4.798 & 0.055 & 0.297 & 0.023 & 3300  & 3300 & 3300 & 67425 & 9329702 & 67447 & 813053 & 3322 \\
      & 5   & 1.314    & 1.278  &1.824 & 39.195  & 0.513 & 4.714& 0.122& 34474  & 34474 & 34474 & 98599 & 33935400 & 908352    & 9240858 & 34496   \\
      & 6   & 13.597   & 13.652 &19.52 & 119.212 & 0.698 & 4.606& 0.178& 106742 & 106742& 106742 & 170867 & 59117304 & 1044957 & 9313360 & 106998  \\
      & 7   & 1.396    & 1.34   &18.91 & 0.116   & 0.102 & 0.326& 0.028& 248    & 248   & 248 & 64404 & 214761 & 129190        & 815625 & 535       \\
      & 8   & 1.572    & 1.987  &20.58 & 2.518   & 0.095 & 0.364& 0.069& 3478   & 3478  & 3478 & 67634 & 2968573 & 199843      & 825309 & 12300     \\
      & 9   & 1.397    & 1.554  &35.15 & 0.118   & 0.076 & 0.333& 0.033& 248    & 248   & 248 & 64404 & 214823 & 132187        & 813759 & 728       \\
2.ttl & 10  & 1.636    & 2.634  &233   & 2.591   & 0.639 & 4.45 & 0.069& 3478   & 3478  & 3478 & 67634 & 2969672 & 976875      & 9245685 & 4871     \\
      & 11  & 1.677    & 12.024 &895   & 30.575  & 0.98  & 4.434& 0.66& 35382  & 35382 & 35382 & 99538 & 26328037 & 1823608    & 9285127 & 44313  \\
      & 12  & 2.009    & 143    &>999  & 128.532 & 1.756 & 5.666& 7.999& 106895 & 106895& 106895 & --    & 71017728 & 2184441  & 10358119 & 1010563 \\
      & 13  & --       & >999   &>999  & 243.656 & 2.559 &47.098& 5.121& 110000 & --    & --    & --    & 115653199 & 2742932  & 34483363 & 145486 \\
      & 14  & --       & >999   &>999  & 325.755 & 2.866 &50.997& 12.028& 110000 & --    & --    & --    & 151038934 & 1448087 & 35282112 & 111224 \\
      & 15  & --       & >999   &>999  & 433.438 & 26.903&54.518& 133.512& 110000 & --    & --    & --    & 176515562 & 9102348& 35442252 & 118515 \\
\hline
      & 1   & 0.009 & 0.01 & 0.009 & 0.011 & 0.009 & 0.011 & 0.009 & 0 &  0 & 0 & 0 & 47 & 0 & 0 & 0 \\
      & 2   & 0.023 & 0.02 & 0.115 & 0.145 & 0.022 & 0.019 & 0.019 & 57 &  57 & 57 & 256813 & 257974 & 114 & 57 & 57 \\
      & 3   & 0.123 & 0.127 & 0.249 & 3.364 & 0.315 & 3.212 & 0.136 & 1462  & 1462 & 1462 & 258218 & 6668549 & 258218 & 6381394 & 1462 \\
      & 4   & 1.992 & 1.93 & 3.072 & 85.844 & 0.345 & 3.21 & 0.122 & 36260  & 36260 & 36260 & 293016 & 86686553 & 293073 & 6416249 & 36317 \\
      & 5   & 47    & 56   &76.8 & 967    & 7.09 &70.117 & 1.898& 452502 & 452502   & 452502 & 709258 & 187656175 & 7089247 & 86439255 & 452559  \\
      & 6   & >999  & >999 &>999 & >999   & 9.996&73.99  & 3.965& 570000 & -        & --    & --    & --    & 7464849       & 86558158 & 571462     \\
      & 7   & 47    & 51   &>999 & 1.591  & 0.736&3.47   & 0.181& 2125   & 2125      & 2125 & --    & 883690 & 518306       & 6413768  & 3634      \\
      & 8   & 77    & 99   &>999 & 60.365 & 0.667&3.601  & 1.842& 53191  & 53191     & 53191 & --    & 22657990 & 862422    & 6536462  & 120327   \\
      & 9   & 50    & 56   &>999 & 1.885  & 0.473&3.496  & 0.223& 2125   & 2125      & 2125 & --    & 883784 & 553583       & 6419638  & 3446      \\
3.ttl & 10  & 79    & 142  &>999 & 59.019 & 7.999&67.145 & 2.083& 53191  & 53191    & 53191 & --    & 22661921 & 7401781    & 86497805 & 58664  \\
      & 11  & 81    & >999 &>999 & >999   &10.862&68.956 & 50.812& 516631 & 516631& --    & --    & --    & 14275796        & 87027128 & 587987    \\
      & 12  & 116   & >999 &>999 & >999   &26.218&112.098& 306.304& 570000 & 570000& --    & --    & --    & 16280643       & 95308112 & 8298971    \\
      & 13  & --    & >999 &>999 & >999   & 45.19&>999   & 785.247& 570000 & --    & --    & --    & --    & 27255415       & --        & 1026838    \\
      & 14  & --    & >999 &>999 & >999   &74.691&>999   & >999   & 570000 & --    & --    & --    & --    & 9092721        & --        & --        \\
      & 15  & --    & >999 &>999 & >999   &>999  &>999   & >999   & --     & --    & --    & --    & --    & --             & --        & --          \\
\hline
      & 1   & 0.026 & 0.027 & 0.027 & 0.035 & 0.026 & 0.047 & 0.029 & 0 &  0 & 0 & 0 & 253 & 0 & 0 & 0 \\
      & 2   & 0.068 & 0.067 & 0.5 & 0.543 & 0.078 & 0.069 & 0.07 & 248 &  248 & 248 & 1027022 & 1040241 & 496 & 248 & 248 \\
      & 3   & 0.992 & 0.99 & 1.483 & 33.62 & 1.98 & 30.768 & 0.976 & 12651  & 12651 & 12651 & 1039425 & 51050537 & 1039425 & 49377537 & 12651 \\
      & 4   & 60.836 & 69.126 & 65.671 & M   & 2.175 & 30.532 & 1.272 & 609193 & 609193 & 609193 & 1635967 & --      & 1636215 & 49974327 & 609441 \\
      & 5   & >999    & >999 &>999   & >999   & 85  &>999  & 60.335  &4947136 & --& --    & --    & --    & 55339044   & --       & 4947384  \\
      & 6   & >999    & >999 &>999   & >999   & 287 &>999  & 261.562  &4960000 & --& --    & --    & --    & 56390837   & --       & 4972651  \\
      & 7   & >999    & >999 &>999   & 63     & 5   &31.839& 3.118  &62572   & --& --    & -- & 10949093 & 2141879    & 50070886 & 75476 \\
      & 8   & >999    & >999 &>999   & >999   & 13  &37.121& 273.336  &2435666 & --& --    & --    & --    & 6151203    & 53696984 & 3723153 \\
      & 9   & >999    & >999 &>999   & 61     & 5   &31.899& 5.725  &62572   & --& --    & -- & 10949599 & 2739031    & 50050255 & 76176 \\
4.ttl & 10  & >999    & >999 &>999   & >999   & 131 &>999  & 319.902  &2435666 & --& --    & --    & --    & 58829172   & --       & 2487953  \\
      & 11  & >999    & >999 &>999   & M      & 214 &>999 & --   & 4960000 & --& --    & --    & --    & 111363802  & --       &   --   \\
      & 12  & >999    & >999 &>999   & M      & >999&>999 & --   & --     & -- & --    & --    & --    & --         & --       &   --   \\
      & 13  & --      & >999 &>999   & M      & >999&>999 & --   & --     & -- & --    & --    & --    & --         & --       &   --   \\
      & 14  & --      & >999 &>999   & M      & >999&>999 & --   & --     & -- & --    & --    & --    & --         & --       &   --   \\
      & 15  & --      & >999 &>999   & M      & >999&>999 & --   & --     & -- & --    & --    & --    & --         & --       &   --   \\
    \hline
    \end{tabular}%
	}
  \label{tab:rdfoxtwo}%
\end{table*}%

\begin{table*}
  \centering
  \caption{Evaluating rewritings on RDFox - 3}
  \scalebox{0.65}{
    \begin{tabular}{|c|r|r|r|r|r|r|r|r|r|r|r|r|r|r|r|r|}
    \hline
    data- & query  & \multicolumn{7}{c|}{evaluation time (sec)} & \multicolumn{1}{|c|}{no.\ of}
 & \multicolumn{7}{c|}{no.\ of generated tuples} \\\cline{3-9}\cline{11-17}
set   & \multicolumn{1}{|c|}{size} & \multicolumn{1}{|c|}{Rapid} & \multicolumn{1}{|c|}{Clipper}& \multicolumn{1}{|c|}{Presto} & \multicolumn{1}{|c|}{\textsc{Lin}} &  \multicolumn{1}{|c|}{\textsc{Log}} &\multicolumn{1}{|c|}{\textsc{Tw}} & \multicolumn{1}{|c|}{\textsc{Tw*}} & \multicolumn{1}{|c|}{answers} & \multicolumn{1}{|c|}{Rapid} & \multicolumn{1}{|c|}{Clipper} & \multicolumn{1}{|c|}{Presto}& \multicolumn{1}{|c|}{\textsc{Lin}} & \multicolumn{1}{|c|}{\textsc{Log}} & \multicolumn{1}{|c|}{\textsc{Tw}} & \multicolumn{1}{|c|}{\textsc{Tw*}} \\
\hline
&1 & 0.004 & 0.003 & 0.003 & 0.004 & 0.003 &         0.021&0.003 & 0    & 0 & 0 & 0 & 48 & 0                               & 0     &0 \\
&2 & 0.006 & 0.006 & 0.017 & 0.022 & 0.008 &         0.014&0.005 & 59   & 59 & 59 & 61508 & 64406 & 118                    & 59    &59 \\
&3 & 0.053 & 0.06 & 0.065 & 0.849 & 0.087  &          0.69&0.053 & 3584 & 3584 & 3584 & 65033 & 1092161 & 65033           & 980373 &3584 \\
&4 & 0.012 & 0.01 & 0.074 & 0.01 & 0.009   &         0.008&0.008 & 2    & 2 & 2 & 61499 & 3176 & 168                      & 109    &109 \\
&5 & 0.011 & 0.009 & 0.07 & 0.008 & 0.009  &         0.008&0.009 & 0    & 0 & 0 & 61497 & 48 & 166                        & 59     &59 \\
&6 & 0.018 & 0.015 & 0.139 & 0.023 & 0.09  &         0.677&0.055 & 2    & 2 & 2 & 61499 & 64560 & 65203                  & 980434  &3704 \\
&7 & 0.017 & 0.015 & 0.145 & 0.01 & 0.087  &          0.68&0.057 & 0    & 0 & 0 & 61497 & 144 & 65190                     & 980480 &3691 \\
1.ttl&8 & 0.025 & 0.034 & 0.339 & 0.026    & 0.044 & 0.009&0.008 & 2    & 2 & 135 & 61499 & 73966 & 129565                & 170    &286 \\
&9 & 0.025 & 0.034 & 0.433 & 0.01 & 0.034  &         0.009&0.008 & 0    & 0 & 2 & 61497 & 240 & 65530                     & 109    &214 \\
&10& 0.035 & 0.086 & 0.549 & 0.029 & 0.026 &         0.015&0.015 & 2    & 2 & 135 & 61499 & 83372 & 67690                 & 12950  &13114 \\
&11& 0.034 & 0.086 & 4.445 & 1.164 & 0.54  &         0.765&0.221 & 133  & 0 & 2 & 61630 & 1684864 & 1095576             & 1227962  &251278 \\
&12& 0.048 & 0.223 & 4.877 & 0.013 & 0.137 &         0.699&0.062 & 2    & 2 & 135 & 61499 & 4082 & 192211               & 983694   &4115 \\
&13& -     & -     & 13.007 & 0.141 & 0.153&          0.79&0.175 & 133  & -     & -     & 61630 & 380205 & 226874         & 1228297&229396 \\
&14& -     & -     & 382.922 & 3.738 &0.878&         1.166&0.318 & 1967 & -       & -       & 63464 & 3842746 & 1282299  & 1809813 &270081 \\
&15& -     & -      & 307.184 & 0.017& 0.36&         0.771&0.224 & 11   & -      & -      & 61508 & 16542 & 242156      & 1228610  &252720 \\
\hline
&1 & 0.004 & 0.004 & 0.004 & 0.004 & 0.004 &       0.004 &0.003 & 0    & 0 & 0 & 0 & 31 & 0                                 & 0    &0 \\
&2 & 0.006 & 0.006 & 0.02 & 0.023 & 0.009  &        0.006&0.006 & 22   & 22 & 22 & 64147 & 64543 & 44                     & 22     &22 \\
&3 & 0.022 & 0.019 & 0.04 & 0.339 & 0.045  &        0.29 &0.019 & 256  & 256 & 256 & 64381 & 879372 & 64381               & 809987 &256 \\
&4 & 0.013 & 0.011 & 0.047 & 0.01 & 0.01   &        0.009&0.009 & 0    & 0 & 0 & 64156 & 490 & 75                        & 53      &53 \\
&5 & 0.012 & 0.011 & 0.044 & 0.008 & 0.009 &        0.01 &0.009 & 0    & 0 & 0 & 64156 & 31 & 75                          & 22     &22 \\
&6 & 0.02 & 0.016 & 0.081 & 0.027 & 0.042  &        0.304&0.021 & 0    & 0 & 0 & 64156 & 64543 & 64456               & 810009      &300 \\
2.ttl&7 & 0.018 & 0.015 & 0.094 & 0.011 & 0.041 & 0.297  &0.024 & 0    & 0 & 0 & 64156 & 93 & 64465                     & 810040   &309 \\
&8 & 0.025 & 0.036 & 0.182 & 0.027 & 0.053 &     0.01    &0.009 & 0    & 0 & 0 & 64156 & 65642 & 129037                  & 75      &119 \\
&9 & 0.026 & 0.037 & 0.215 & 0.012 & 0.03  &      0.009  &0.009 & 0    & 0 & 0 & 64156 & 155 & 64611                     & 53      &106 \\
&10& 0.038 & 0.091 & 0.327 & 0.028 & 0.029 &     0.014   &0.013 & 0    & 0 & 0 & 64156 & 66741 & 65120                 & 1393      &1468 \\
&11& 0.036 & 0.09 & 1.467 & 0.345 & 0.358  &      0.314  &0.055 & 0    & 0 & 0 & 64156 & 906286 & 879949                & 818896   &9218 \\
&12& 0.052 & 0.268 & 1.868 & 0.014 & 0.106 &     0.294   &0.03  & 0    & 0 & 0 & 64156 & 494 & 193096                  & 810468    &385 \\
&13& -     & -     & 4.579 & 0.032 & 0.119 &     0.359   &0.051 & 0    & -     & -     & 64156 & 74216 & 193944         & 819532   &10495 \\
&14& -     & -     & 26.213 & 0.38 & 0.454 &     0.37    &0.123 & 0    & -       & -       & 64156 & 995998 & 1008466    & 819319  &9523 \\
&15& -     & -      & 26.689 & 0.017&0.209 &   0.352     &0.063 & 0    & -      & -      & 64156 & 502 & 198540         & 819067   &9293 \\
\hline
&1 & 0.009 & 0.009 & 0.009 & 0.01 & 0.007  &       0.008&0.008 & 0    & 0 & 0 & 0 & 47 & 0 & 0                              0        & 0\\
&2 & 0.019 & 0.017 & 0.104 & 0.111 & 0.02  &       0.017&0.016 & 57   & 57 & 57 & 256813 & 257974 & 114                   & 57       & 57\\
&3 & 0.11 & 0.135 & 0.233 & 3.244 & 0.274  &       3.109&0.113 & 1462 & 1462 & 1462 & 258218 & 6668549 & 258218            & 6381394 & 1462 \\
&4 & 0.038 & 0.034 & 0.277 & 0.034 & 0.026 &      0.027 &0.028 & 0    & 0 & 0 & 256803 & 1314 & 161                        & 104     & 104 \\
&5 & 0.036 & 0.036 & 0.275 & 0.025 & 0.024 &      0.031 &0.027 & 0    & 0 & 0 & 256803 & 47 & 161                           & 57     & 57  \\
&6 & 0.063 & 0.056 & 0.663 & 0.128 & 0.298 &      3.122 &0.133 & 0    & 0 & 0 & 256803 & 257974 & 258379                   & 6381451 & 1576 \\
3.ttl&7 & 0.061 & 0.062 & 0.709 & 0.032 & 0.287  & 3.101&0.132 & 0    & 0 & 0 & 256803 & 141 & 258369                      & 6381498 & 1566 \\
&8 & 0.094 & 0.153 & 1.425 & 0.138 & 0.297 &        0.03&0.027 & 0    & 0 & 0 & 256803 & 261905 & 516433                   & 161     & 275 \\
&9 & 0.098 & 0.15 & 1.819 & 0.037 & 0.156  &        0.03&0.027 & 0    & 0 & 0 & 256803 & 235 & 258660                     & 104      & 208\\
&10& 0.143 & 0.399 & 2.478 & 0.15 & 0.148  &       0.049&0.048 & 0    & 0 & 0 & 256803 & 265836 & 259504                  & 5473     & 5634\\
&11& 0.141 & 0.368 & 12.374 & 3.343 & 3.315&       3.397&0.384 & 0    & 0 & 0 & 256803 & 6866425 & 6670079                & 6452693  & 72865\\
&12& 0.21 & 1.136 & 15.915 & 0.051 & 0.576 &       3.133&0.171 & 0    & 0 & 0 & 256803 & 1326 & 773580                    & 6382718  & 1730\\
&13& -     & -    & 35.05 & 0.194 & 0.623  &       3.521&0.341 & 0    & -     & -     & 256803 & 329484 & 776449          & 6451652  & 74135\\
&14& -     & -    & 399.257 & 3.948 & 3.982&       3.344&0.558 & 0    & -       & -       & 256803 & 8461907 & 7190771    & 6463879  & 74581\\
&15& -     & -    & 388.289 & 0.06 & 1.34  &       3.213&0.378 & 0    & -      & -      & 256803 & 1377 & 803755          & 6452448  & 73026\\
\hline
&1 & 0.026 & 0.025 & 0.025 & 0.039 & 0.025 &      0.024&0.025 & 0    & 0 & 0 & 0 & 253 & 0                                       & 0  & 0        \\
&2 & 0.064 & 0.069 & 0.471 & 0.522 & 0.064 &       0.07&0.064 & 248  & 248 & 248 & 1027022 & 1040241 & 496                   & 248    & 248    \\
&3 & 0.929 & 0.938 & 1.404 & 28.325 & 1.857&     28.103&0.945 & 12651& 12651 & 12651 & 1039425 & 51050537 & 1039425        & 49377537 & 12651  \\
&4 & 0.198 & 0.173 & 1.617 & 0.157 & 0.095 &      0.129&0.135 & 4    & 4 & 4 & 1027031 & 13800 & 753                       & 505      & 505  \\
&5 & 0.182 & 0.174 & 1.617 & 0.144 & 0.094 &      0.143&0.138 & 0    & 0 & 0 & 1027027 & 253 & 749                           & 248    & 248    \\
&6 & 0.327 & 0.312 & 4.729 & 0.64 & 1.913  &     28.148&1     & 4    & 4 & 4 & 1027031 & 1040479 & 1040182                 & 49377789 & 13151  \\
4.ttl&7 & 0.308 & 0.325 & 4.721 & 0.222 & 1.98  &27.908&1.106 & 0    & 0 & 0 & 1027027 & 759 & 1040183                     & 49378038 & 13152  \\
&8 & 0.504 & 0.778 & 9.217 & 0.675 & 1.278 &      0.158&0.129 & 4    & 4 & 236 & 1027031 & 1079963 & 2080575                 & 757    & 1249    \\
&9 & 0.522 & 0.835 & 12.456 & 0.266 & 0.705&       0.14&0.131 & 0    & 0 & 4 & 1027027 & 1265 & 1041493                       & 505   & 1002     \\
&10& 0.782 & 2.174 & 15.698 & 0.738 & 0.66 &      0.288&0.253 & 4    & 4 & 236 & 1027031 & 1119447 & 1055223                 & 52295  & 53040    \\
&11& 0.76 & 2.077 & 93.286 & 30.477&30.641 &     29.507&3.476 & 232  & 0 & 4 & 1027259 & 54927712 & 51065747               & 50689528 & 1325139  \\
&12& 1.083 & 6.03 & 114.063 & 0.354 & 3.362&     28.046&1.329 & 4    & 4 & 236 & 1027031 & 15222 & 3107857                & 49391554  & 14314 \\
&13& -     & -     & 253.131 & 1.64 & 3.442&     30.217&3.913 & 232  & -     & -     & 1027259 & 2499217 & 3173640        & 50730474  & 1353321 \\
&14& -     & -     & >999 & 74.607 & 35.483&     30.531&5.52  & 10972& -       & -       & -    & 117902759 & 53931133    & 52556376  & 1368984 \\
&15& -     & -      & >999 & 0.454 & 10.929&     29.497&3.763 & 1    & -      & -      & -    & 35953 & 3754770           & 50690218  & 1326126 \\
    \hline
    \end{tabular}%
	}
  \label{tab:rdfoxthree}%
\end{table*}%

\subsection{Discussion}

Note that the  three types of rewritings suggested in this paper give rise to three different rewriting strategies for linear queries.
Let us compare how the execution time depends on the exact rewriting strategy. We see in Table \ref{tab:rdfox} that for most queries in Sequence 1 the
\textsc{Lin} rewriting shows the best performance, while for Sequences 2 and 3 algorithms  \textsc{Log} and  \textsc{Tw*}
are the winners (Tables \ref{tab:rdfoxtwo} and \ref{tab:rdfoxthree}).  Note also that even within a single sequence the results may vary with the number of atoms.

All three rewriting algorithms are based upon a common idea: given a query, pick a point (or a set of points) that would split the query into subqueries,
then rewrite these subqueries recursively, and then include rules that join the results into the rewriting of the initial query. However, there is
a liberty in the choice of this point, and our rewritings are essentially different in this strategy. Thus, different rewritings generate NDL programs which are related to
each other like different execution plans for CQs. Taking into account that we use highly unbalanced data (empty $S$ versus dense $R$) and that
RDFox just materialises  all of the predicates of the program without using magic sets or optimising the program before executions, the performance naturally depends
on how we split the query into subqueries in the rewriting algorithm.

In the paper, we described three simple complexity-motivated splitting strategies. Our experiments show that none of them is always the best and the execution time 
may be dramatically improved by using an `adaptable' splitting strategy which would work similarly to a query execution
planner in database management systems and use statistical information about the data to generate a quickly executable NDL program.

The difference in performance between different types of optimal
rewritings made us investigate its causes.
For example, we noticed that the $\textsc{Tw}$-rewriting of the query with 3 atoms of Sequence~3
\begin{align*}
G (x,y) &\leftarrow S (x,z) \land P_{13} (z, y), \\
P_{13}(x,y) &\leftarrow R (x, z)\land R (z, y),\\
G (x,y) &\leftarrow A_P(x)\land R(x, y)
\end{align*}
takes as long as 28 seconds to execute on the fourth dataset because
it needs so much time to materialise $P_{13}$, which has around
$6\cdot 10^6$ triples. On the other hand, if we remove this predicate
by substituting its definition into the first rule, we obtain
the rewriting
\begin{align*}
G (x,y) &\leftarrow S (x, z)\land R(x,v)\land R (v, y),\\
G (x,y) &\leftarrow A_P (x)\land R(x, y),
\end{align*}
which is executed in 0.945 seconds. This substitution could
be done automatically by a clever NDL engine, but not performed by RDFox.
Thus, we made an attempt to `improve' the  $\textsc{Tw}$-rewriting by getting rid
in this fashion of all predicates that are defined by a single rule and occur
not more than twice in the bodies of the rules.
However, though the rewriting $\textsc{Tw*}$ thus obtained shows a much better performance on Sequences~1 and~3 (see Tables \ref{tab:rdfox} and \ref{tab:rdfoxthree}),
it is not always so on Sequence 2 (Table \ref{tab:rdfoxtwo}). This observation suggests that our rewriting could
be executed faster on a more advanced NDL engine than RDFox which would
carry out such substitutions depending on the cardinality of EDBs.

\bibliographystyle{abbrv}


\end{document}